\documentclass[11pt]{article}

\usepackage[T1]{fontenc}
\usepackage{lmodern}
\usepackage[margin=1in]{geometry}
\usepackage{amsthm,amsmath,amsfonts,amssymb}
\usepackage[authoryear,round]{natbib}
\usepackage{mathtools,bm,booktabs,graphicx,placeins,enumitem,microtype}
\usepackage[colorlinks=true,citecolor=blue,urlcolor=blue,linkcolor=blue]{hyperref}
\hypersetup{
  pdftitle={Elliptical Regularized Hotelling Testing for High Dimensional Data},
  pdfauthor={Long Feng, Le Zhou, and Xiaoyi Wang},
  pdfkeywords={Cauchy combination, elliptical distribution, high-dimensional location testing, regularized Hotelling statistic, spatial median, spatial-sign covariance matrix}
}

\numberwithin{equation}{section}
\allowdisplaybreaks[3]
\theoremstyle{plain}
\newtheorem{theorem}{Theorem}[section]
\newtheorem{proposition}[theorem]{Proposition}
\newtheorem{corollary}[theorem]{Corollary}
\newtheorem{lemma}{Lemma}[section]

\theoremstyle{definition}
\newtheorem{assumption}{Assumption}

\newcommand{\R}{\mathbb R}
\newcommand{\E}{\mathbb E}
\newcommand{\Pp}{\mathbb P}
\newcommand{\Var}{\operatorname{Var}}
\newcommand{\Cov}{\operatorname{Cov}}
\newcommand{\tr}{\operatorname{tr}}
\newcommand{\diag}{\operatorname{diag}}
\newcommand{\rank}{\operatorname{rank}}
\newcommand{\op}{\mathrm{op}}
\newcommand{\F}{\mathrm F}
\newcommand{\I}{\mathbf I}
\newcommand{\1}{\mathbf 1}
\newcommand{\0}{\mathbf 0}
\newcommand{\btheta}{\bm\theta}
\newcommand{\bDelta}{\bm\Delta}
\newcommand{\bdelta}{\bm\delta}

\newcommand{\bg}{\bm g}

\newcommand{\mA}{\mathbf A}
\newcommand{\mB}{\mathbf B}
\newcommand{\mC}{\mathbf C}
\newcommand{\mD}{\mathbf D}
\newcommand{\mE}{\mathbf E}
\newcommand{\mF}{\mathbf F}
\newcommand{\mG}{\mathbf G}
\newcommand{\mGamma}{\boldsymbol\Gamma}
\newcommand{\mH}{\mathbf H}
\newcommand{\mJ}{\mathbf J}
\newcommand{\mK}{\mathbf K}
\newcommand{\mL}{\mathbf L}
\newcommand{\mM}{\mathbf M}
\newcommand{\mLambda}{\boldsymbol\Lambda}
\newcommand{\mP}{\mathbf P}
\newcommand{\mPi}{\boldsymbol\Pi}
\newcommand{\mQ}{\mathbf Q}
\newcommand{\mR}{\mathbf R}
\newcommand{\mS}{\mathbf S}
\newcommand{\mT}{\mathbf T}
\newcommand{\mU}{\mathbf U}
\newcommand{\mV}{\mathbf V}

\newcommand{\mX}{\mathbf X}
\newcommand{\mY}{\mathbf Y}
\newcommand{\mOmega}{\boldsymbol\Omega}

\newcommand{\wh}{\widehat}
\newcommand{\wt}{\widetilde}
\newcommand{\calF}{\mathcal F}

\newcommand{\calG}{\mathcal G}
\newcommand{\norm}[1]{\left\lVert#1\right\rVert}
\newcommand{\abs}[1]{\left\lvert#1\right\rvert}
\newcommand{\dd}{\,\mathrm d}

\newcommand{\dto}{\stackrel{d}{\to}}
\newcommand{\pto}{\xrightarrow{\Pp}}

\title{Elliptical Regularized Hotelling Testing for High Dimensional Data}
\author{Long Feng$^{1}$, Le Zhou$^{2}$, and Xiaoyi Wang$^{3}$\\[5pt]
\small $^{1}$School of Statistics and Data Science, Nankai University\\
\small $^{2}$Department of Mathematics, Hong Kong Baptist University\\
\small $^{3}$Advanced Institute of Natural Sciences, Beijing Normal University\\[3pt]
\small \texttt{flnankai@nankai.edu.cn; lezhou@hkbu.edu.hk; wangxy059@bnu.edu.cn}}
\date{}

\begin{document}
\maketitle

\begin{abstract}
We consider one-sample testing of a high-dimensional location parameter under elliptically symmetric distributions with heavy tails and pervasive cross-sectional dependence. We propose an elliptical regularized Hotelling test with Cauchy combination (ERHT--CC), based on the sample spatial median and the spatial-sign covariance matrix centered at that median. We derive its null asymptotic normality, consistent estimators of the centering and variance, and an explicit local power function. Since the power-optimal ridge parameter depends on the unknown alternative, we aggregate fixed-ridge $p$-values over a deterministic grid using the Cauchy rule. We establish a finite-grid joint Gaussian limit, justify the analytic combined $p$-value without estimating cross-ridge correlations, and characterize its local power. Simulation studies and an empirical analysis demonstrate the favorable finite-sample performance of ERHT--CC under heavy tails and pervasive dependence.
\end{abstract}

\noindent\textbf{Keywords:} Cauchy combination; elliptical distribution; high-dimensional location testing; regularized Hotelling statistic; spatial median; spatial-sign covariance matrix.

\medskip
\noindent\textbf{MSC 2020:} Primary 62H15; secondary 62H11, 62G35, 62E20.

\section{Introduction}\label{sec:introduction}

Testing a high-dimensional mean or location vector is a fundamental problem in modern multivariate analysis.  It is encountered when a large collection of moment restrictions must be assessed jointly, as in asset-pricing models, portfolio evaluation, large cross-sectional panels, genomics, and imaging studies.  The main statistical difficulties are now well understood: the dimension may be comparable to or larger than the sample size, the coordinates can be strongly dependent, and the observations can have heavy radial tails.  These features make the classical Hotelling test unavailable or unstable and can invalidate procedures whose calibration relies strongly on empirical second and fourth moments.  Under an elliptically symmetric distribution, the center remains a meaningful target even when ordinary moments are weak or nonexistent, so robust location testing provides a natural extension of high-dimensional mean testing.  A broad overview of the latter literature is given by \citet{HuangLiLiYang2022}.

One major line of research develops quadratic-form or sum-type tests that avoid inverting the sample covariance matrix.  The trace-based approach dates back to \citet{Dempster1958}, and modern developments include \citet{BaiSaranadasa1996}, \citet{SrivastavaDu2008}, \citet{Srivastava2009}, \citet{ChenQin2010}, \citet{ParkAyyala2013}, \citet{SrivastavaKatayamaKano2013}, and \citet{GregoryCarrollBaladandayuthapaniLahiri2015}.  Scale-standardized and bias-corrected extensions were developed for unequal marginal variances and unequal population covariance matrices; see, among others, \citet{FengSun2015}, \citet{FengZouWangZhu2015}, and \citet{FengZouWangZhu2017}.  These methods aggregate many coordinatewise contributions and are therefore particularly effective against dense alternatives.  Their construction is also computationally attractive in large dimensions.  Nevertheless, off-diagonal dependence is typically used only in the variance calibration, or is replaced by diagonal standardization.  Consequently, the tests may fail to exploit favorable covariance directions when the dependence is strong, and procedures based directly on sample means and covariance moments may be sensitive to heavy-tailed observations.

A second line seeks to use the dependence structure more directly.  Random-projection and random-subspace procedures first reduce the dimension and then apply a low-dimensional Hotelling statistic; representative examples are \citet{LopesJacobWainwright2011} and \citet{Thulin2014}.  Regularized Hotelling tests instead replace the unstable covariance inverse by a ridge inverse, as in \citet{ChenPaulPrenticeWang2011} and \citet{LiAuePaulPengWang2020}.  Precision-transformed and thresholded methods improve detection of sparse signals under dependence; see \citet{ZhongChenXu2013}, \citet{CaiLiuXia2014}, and \citet{ChenLiZhong2019}.  Factor-adjusted procedures remove a low-dimensional common component before testing, as in \citet{MaLanWang2015} and \citet{HeZhangZhangZhou2020}.  These approaches demonstrate that incorporating covariance geometry can substantially increase power.  Their validity, however, generally requires a reliably estimable covariance, precision, or factor structure together with sufficiently regular moments.  For ridge-based tests, the tuning parameter creates a further difficulty: its power-optimal value depends on both the population spectrum and the unknown orientation of the alternative, so a single data-independent choice need not perform uniformly well.

A related literature addresses adaptation to signal sparsity.  Foundational detection-boundary results for sparse normal means and correlated signals motivate higher-criticism, maximum-type, and thresholding procedures; see \citet{DonohoJin2004}, \citet{HallJin2010}, and \citet{AriasCastroCandesPlan2011}.  In high-dimensional mean testing, multi-thresholding, sum--maximum combinations, and other adaptive constructions seek to bridge sparse and dense regimes; see \citet{ZhongChenXu2013}, \citet{XuLinWeiPan2016}, \citet{ChenLiZhong2019}, and the general developments reviewed by \citet{HuangLiLiYang2022}.  Robust versions based on ranks and spatial signs have also been proposed.  In particular, \citet{ZhangFeng2024} combine high-dimensional rank-based sum and maximum statistics, and \citet{LiuFengZhaoWang2027} develop a spatial-sign max--sum test for location parameters.  These methods reduce sensitivity to an unknown sparsity pattern, but their main objective is sparsity adaptation rather than covariance-aware normalization of dense signals under pervasive dependence.

Robust high-dimensional location testing has developed in parallel through spatial signs, spatial ranks, and inverse-distance weights.  The spatial median and spatial-sign covariance matrix remove or strongly attenuate radial magnitudes while preserving multivariate directional information; see \citet{MottoneOja1995}, \citet{VisuriKoivunenOja2000}, and \citet{Oja2010}.  High-dimensional sign asymptotics and systematic comparisons among mean-, sign-, and rank-based tests were developed by \citet{PaindaveineVerdebout2016} and \citet{ChakrabortyChaudhuri2017}.  For high-dimensional mean and location problems, \citet{WangPengLi2015} proposed a nonparametric spatial-sign test, and \citet{FengSun2016} developed a scalar-invariant one-sample spatial-sign procedure under elliptical distributions.  The effect of estimating location in two-sample sign procedures was treated through leave-one-out constructions by \citet{FengZouWang2016}, while \citet{FengZhangLiu2020} developed a high-dimensional spatial-rank test.  Inverse-norm weighting was used by \citet{FengLiuMa2021} for the one-sample problem and by \citet{HuangLiuZhouFeng2023} for the two-sample problem.  More recently, \citet{YanFengZhang2025} studied joint robust estimation of location and scatter for high-dimensional elliptical data.  This body of work establishes that directional procedures can retain useful efficiency while being considerably less sensitive to heavy radial tails.

Important limitations nevertheless remain.  Most existing robust location tests use scalar or diagonal standardization, or aggregate pairwise directional inner products without applying a regularized inverse of a full robust scatter matrix.  They therefore do not fully exploit the dependence geometry that motivates Hotelling-type methods.  Moreover, conventional normal calibrations for sum-type spatial-sign statistics are commonly derived under conditions that prevent a small number of eigen-directions from dominating.  Such conditions exclude pervasive dependence structures, including compound symmetry with a fixed nonzero correlation.  The recent analysis of \citet{ZhaoFeng2026} shows that the one-sample spatial-sign statistic of \citet{WangPengLi2015} can have a non-Gaussian limiting component when leading eigenvalues remain influential, which makes the strong-dependence issue substantive rather than technical.

To address these limitations, this paper develops a robust regularized Hotelling procedure.  The proposed statistic combines the sample spatial median with the spatial-sign covariance matrix centered at that median and applies a ridge inverse to the latter.  The shape matrix is allowed to contain a fixed number of pervasive eigenvalues proportional to the dimension, while the remaining spectrum forms a bounded bulk.  This formulation includes fixed-correlation compound symmetry and permits a diverging condition number.  The use of spatial quantities protects the procedure from large radial observations, whereas the regularized inverse retains covariance-direction information that is absent from purely scalar- or diagonal-standardized sign tests.

The paper makes four main contributions.  First, it derives the null asymptotic normal distribution of the robust regularized Hotelling statistic when the dimension and sample size increase proportionally under a genuine elliptical model with pervasive dependence.  The analysis explicitly retains the angular dependence of inverse distances and the first-order effect of centering the spatial-sign covariance matrix at the sample spatial median.  Fully data-driven estimators of the asymptotic centering and variance are constructed and shown to be consistent.  Second, the paper obtains a local power function for dense alternatives and identifies the signal-to-noise criterion that determines the oracle ridge parameter.  Third, because this oracle parameter depends on the unknown alternative, the ridge-specific tests are evaluated over a fixed deterministic grid and combined by the Cauchy rule of \citet{LiuXie2020}.  A finite-grid joint limit and the validity of the analytic combined p-value are established without estimating cross-ridge correlations.  Fourth, the theory accommodates pervasive spikes rather than treating strong common dependence as a negligible perturbation.

The main technical contribution is a proof strategy tailored to the simultaneous presence of an estimated spatial center, heavy-tailed elliptical radii, and pervasive eigenvalues.  Under pervasive dependence, the angular denominators do not concentrate at a constant, and the perturbation created by centering the spatial-sign covariance matrix at the sample spatial median is not negligible in operator norm.  We derive a quadratic-form-accurate expansion of the spatial median, isolate the estimated-center effect into a rank-two correction and factor-induced finite-rank terms, and show that ridge resolvents suppress the latter along the pervasive eigenspace.  This reduces the statistic to weighted companion-resolvent functionals for a separable random matrix with random column scales.  We then combine deterministic equivalents for these functionals with a conditional Rademacher quadratic-form central limit theorem.  The resulting hybrid argument yields the joint fixed-ridge Gaussian limit and the $O_{\Pp}(n^{-1})$ accuracy required for feasible centering; neither a standard spatial-median Bahadur expansion nor a classical linear spectral-statistic central limit theorem is sufficient for these conclusions.

The Monte Carlo study indicates that the proposed procedure can improve on several existing dense high-dimensional tests, especially when heavy tails or pervasive dependence weaken covariance-based or diagonally standardized competitors.  An analysis of paired lung-cancer gene-expression data further illustrates the behavior of the method in an application with many correlated coordinates.  

The remainder of the paper is organized as follows.  Section~\ref{sec:model} presents the elliptical model, the statistic, its null calibration, and the local power analysis.  Section~\ref{sec:cauchy} develops fixed-grid Cauchy aggregation and its joint asymptotic theory.  Section~\ref{sec:simulation} reports the simulation results, Section~\ref{sec:realdata} presents the empirical application, and Section~\ref{sec:discussion} concludes.  Auxiliary results and detailed proofs are provided in Appendices~\ref{app:angular}--\ref{app:cauchy}.

\section{Methodology and asymptotic theory}\label{sec:model}

\paragraph{Notation.}
For a vector, $\norm{\cdot}$ denotes the Euclidean norm.  For a matrix, $\norm{\cdot}_{\op}$, $\norm{\cdot}_{\F}$, $\tr(\cdot)$, and $\rank(\cdot)$ denote the operator norm, Frobenius norm, trace, and rank, respectively.  The symbols $\lambda_{\min}(\cdot)$ and $\lambda_{\max}(\cdot)$ denote the smallest and largest eigenvalues of a symmetric matrix, and $\delta_x$ denotes the unit point mass at $x$.  We write $O_{\Pp}(a_n)$ and $o_{\Pp}(a_n)$ for stochastic orders, $a_n\asymp b_n$ when their ratio is bounded above and away from zero, and $\dto$ and $\pto$ for convergence in distribution and probability, respectively.  The symbol $\Phi$ denotes the standard normal distribution function and $z_q=\Phi^{-1}(q)$ its $q$th quantile.

\subsection{Elliptical location model}

Let $\bm X_1,\ldots,\bm X_n\in\R^p$ be independent observations with location parameter $\btheta_p$.  We test
\[
 H_0:\btheta_p=\btheta_{0,p}
 \qquad\text{against}\qquad
 H_1:\btheta_p\ne\btheta_{0,p},
\]
where $\btheta_{0,p}$ is specified.  We assume the elliptically symmetric representation
\begin{equation}\label{eq:model}
 \bm X_i
 =\btheta_p+\sqrt p\,R_i\mOmega_p^{1/2}
 \frac{\bm G_i}{\norm{\bm G_i}},
 \qquad \bm G_i\sim N(\0,\I_p),
\end{equation}
where $R_i>0$ is independent of $\bm G_i$.  The positive-definite shape matrix is normalized by $p^{-1}\tr(\mOmega_p)=1$.  This normalization fixes the otherwise unidentified product of radial and shape scales.

For later use, define
\begin{align*}
 h_{i,p}&=p^{-1}\bm G_i^\top\bm G_i,
 &q_{i,p}&=p^{-1}\bm G_i^\top\mOmega_p\bm G_i,\\
 \ell_{i,p}&=(h_{i,p}/q_{i,p})^{1/2},
 &\xi_i&=R_i^{-1}.
\end{align*}
The population spatial sign and the scaled inverse Euclidean distance are
\begin{equation}\label{eq:elliptical_sign_weight}
 \bm U_i=
 \frac{\mOmega_p^{1/2}\bm G_i}
 {(\bm G_i^\top\mOmega_p\bm G_i)^{1/2}},
 \qquad
 w_i=\frac{\sqrt p}{\norm{\bm X_i-\btheta_p}}
 =\xi_i\ell_{i,p}.
\end{equation}
Thus $\norm{\bm U_i}=1$ and
\[
 \bm X_i-\btheta_p=\sqrt p\,w_i^{-1}\bm U_i.
\]
Although $\xi_i$ is independent of $\bm U_i$, the Euclidean inverse-distance weight $w_i$ is not, because $\ell_{i,p}$ is angular.

\subsection{Spatial estimators and regularized quadratic form}

For $\bm x\in\R^p$, define $\mathbf U(\bm x)=\bm x/\norm{\bm x}$ when $\bm x\ne\0$ and $\mathbf U(\0)=\0$.  The sample spatial median and its centered spatial-sign covariance matrix are
\begin{align*}
 \wh\btheta
 &\in\arg\min_{\bm t\in\R^p}
 \sum_{i=1}^n\norm{\bm X_i-\bm t},\\
 \wh{\bm Y}_i
 &=\sqrt p\,\mathbf U(\bm X_i-\wh\btheta),
 &\wh\mR_n&=\frac1n\sum_{i=1}^n
 \wh{\bm Y}_i\wh{\bm Y}_i^\top.
\end{align*}
We refer to $\wh\mR_n$ as the centered spatial-sign covariance matrix (SSCM).  Under the distributional conditions imposed below, the observations are not contained in an affine line with probability one for all sufficiently large $n$; hence the sample spatial median is unique almost surely.  Its score equation gives $n^{-1}\sum_i\wh{\bm Y}_i=\0$ and $\tr(\wh\mR_n)=p$.  For $\rho>0$, define
\begin{equation*}
 T_n(\rho)
 =n(\wh\btheta-\btheta_{0,p})^\top
 (\wh\mR_n+\rho\I_p)^{-1}
 (\wh\btheta-\btheta_{0,p}).
\end{equation*}
A deterministic location shift changes $\wh\btheta$ by the same shift and leaves $\wh\mR_n$ and all residual-based nuisance estimators unchanged; the test statistic itself additionally contains the resulting location-signal quadratic and cross terms.

\subsection{Null distribution and feasible calibration}\label{sec:main}

The following primitive conditions are imposed only for the distribution theory.

\begin{assumption}\label{ass:model}
The representation \eqref{eq:model} holds and the following conditions are satisfied.
\begin{enumerate}[label=(\roman*),leftmargin=1.8em]
\item The pairs $(R_i,\bm G_i)$ are independent and identically distributed across $i$.  The variables $R_i$ are independent of all Gaussian directions, their common distribution does not depend on $(n,p)$, and, with $\xi_i=R_i^{-1}$,
\begin{equation*}
 0<\xi_i\le \bar\xi\quad\text{almost surely},
 \qquad m_1:=\E\xi_i\ge m_->0.
\end{equation*}
Put $m_k=\E\xi_i^k$ for $k=0,1,2$, with $m_0=1$.
\item
\[
 0<c_-\le c_n:=p/n\le c_+<\infty,
 \qquad c_n\to c\in(0,\infty),
\]
and $\rho\in[\rho_0,\rho_1]$ for fixed $0<\rho_0<\rho_1<\infty$.
\item The deterministic shape matrix has the spectral decomposition
\begin{equation*}
 \mOmega_p
 =\mP_{F,p}\mLambda_{F,p}\mP_{F,p}^\top
 +\mP_{B,p}\mLambda_{B,p}\mP_{B,p}^\top.
\end{equation*}
Here $r\in\{0,1,\ldots\}$ is fixed, $\mP_{F,p}\in\R^{p\times r}$ and $\mP_{B,p}\in\R^{p\times(p-r)}$ have orthonormal columns spanning the pervasive and bulk eigenspaces, respectively.  When $r=0$, $\mP_{F,p}$ is the empty $p\times0$ matrix, $\mPi_{F,p}=\0$, $\mP_{B,p}$ is orthogonal, and every sum, block, and assertion indexed by a factor coordinate is understood to be vacuous.  In all cases,
\[
 \mLambda_{F,p}=\diag(\lambda_{1,p},\ldots,\lambda_{r,p}),
 \qquad
 \mLambda_{B,p}=\diag(\lambda_{r+1,p},\ldots,\lambda_{p,p}),
\]
and
\[
 \mPi_{F,p}=\mP_{F,p}\mP_{F,p}^\top,
 \qquad
 \mPi_{B,p}=\mP_{B,p}\mP_{B,p}^\top=\I_p-\mPi_{F,p}
\]
are the associated orthogonal projectors.  The eigenvalues satisfy
\[
 0<\tau_-\le\tau_{a,p}:=\lambda_{a,p}/p
 \le\tau_+<1,
 \qquad 1\le a\le r,
\]
and
\[
 0<\omega_-\le\lambda_{j,p}\le\omega_+<\infty,
 \qquad r<j\le p.
\]
Writing
\[
 \mu_{0,p}=p^{-1}\sum_{j=r+1}^p\lambda_{j,p}
 =1-\sum_{a=1}^r\tau_{a,p},
\]
there is a constant $\mu_->0$ such that $\mu_{0,p}\ge\mu_-$.
\item The empirical bulk spectral distribution
\[
 H_{B,p}=\frac1{p-r}\sum_{j=r+1}^p\delta_{\lambda_{j,p}}
\]
converges weakly to a probability distribution $H_B$ supported on $[\omega_-,\omega_+]$, and $\tau_{a,p}\to\tau_a\in[\tau_-,\tau_+]$.
\end{enumerate}
\end{assumption}

Assumption~\ref{ass:model} permits
$\lambda_{\max}(\mOmega_p)/\lambda_{\min}(\mOmega_p)=O(p)$ and includes fixed-correlation compound symmetry.  Since $\lambda_{\min}(\mOmega_p)\ge\omega_-$ for all sufficiently large $p$, \eqref{eq:elliptical_sign_weight} implies
\begin{equation*}
 0<w_i\le \bar w:=\bar\xi\,\omega_-^{-1/2}
 \qquad\text{almost surely}.
\end{equation*}
No positive moment of the upper radial tail $R_i$ is required.

\subsubsection{Theoretical null distribution}\label{subsec:theoretical_null}

Under $H_0$, define
\begin{equation*}
 \bm Y_i=\sqrt p\,\bm U_i,
 \qquad \mY=(\bm Y_1,\ldots,\bm Y_n),
 \qquad \mB_n=n^{-1}\mY\mY^\top,
 \qquad \mQ_n=(\mB_n+\rho\I_p)^{-1}.
\end{equation*}
The companion matrix is
\begin{equation*}
 \mA_n=\frac1n\mY^\top\mQ_n\mY
 =\I_n-\rho\left(\frac1n\mY^\top\mY+\rho\I_n\right)^{-1}.
\end{equation*}
Put
\begin{align*}
 \kappa_n&=\frac1n\tr(\mA_n),
 &e_n&=\frac1n\sum_{i=1}^nw_i,
 &t_n&=\frac1n\sum_{i=1}^nw_i^2,\\
 b_{1,n}&=\frac1n\sum_{i=1}^n(\mA_n)_{ii}w_i,
 &b_{2,n}&=\frac1n\sum_{i=1}^n(\mA_n)_{ii}w_i^2.
\end{align*}
The conditional denominator and centering are
\begin{equation}\label{eq:Dn_mun_ell}
 D_n=(e_n-b_{1,n})^2+\kappa_n(t_n-b_{2,n}),
 \qquad \mu_n=\frac{\kappa_n}{D_n}.
\end{equation}

For $a,b\in\{0,1,2\}$, define the weighted off-diagonal companion functionals
\begin{equation*}
 \psi_{ab,n}
 =\frac1n\sum_{i\ne j}(\mA_n)_{ij}^2w_i^aw_j^b.
\end{equation*}
They satisfy $\psi_{ab,n}=\psi_{ba,n}$.  Let
\begin{equation}\label{eq:Gamma_conditional}
 \mGamma_n=
 \begin{pmatrix}
 2\psi_{00,n}&2\psi_{01,n}&2\psi_{11,n}\\
 2\psi_{01,n}&\psi_{02,n}+\psi_{11,n}&2\psi_{12,n}\\
 2\psi_{11,n}&2\psi_{12,n}&2\psi_{22,n}
 \end{pmatrix}
\end{equation}
and
\begin{equation}\label{eq:g_conditional}
 \bg_n=D_n^{-2}
 \begin{pmatrix}
 (e_n-b_{1,n})^2\\
 2\kappa_n(e_n-b_{1,n})\\
 \kappa_n^2
 \end{pmatrix},
 \qquad
 \sigma_{D,n}^2=\bg_n^\top\mGamma_n\bg_n.
\end{equation}

For a nonzero vector $\bm y$, let $\mathfrak o(\bm y)$ be the unique member of $\{\bm y,-\bm y\}$ whose first nonzero coordinate is positive, and set
\[
 \calF_n^0=\sigma\{w_i,\mathfrak o(\bm Y_i):1\le i\le n\},
\]
where $\sigma\{\cdot\}$ denotes the sigma-field generated by its arguments.  Conditional on $\calF_n^0$, the unrecorded global signs of the $\bm Y_i$ are independent Rademacher variables, each taking the values $-1$ and $1$ with probability $1/2$.

\begin{theorem}\label{thm:theoretical_null}
Under Assumption~\ref{ass:model} and $H_0$, for every fixed $\rho\in[\rho_0,\rho_1]$,
\begin{equation}\label{eq:oracle_null_main}
 \sup_{x\in\R}
 \left|
 \Pp\left\{
 \left.
 \frac{T_n(\rho)-n\mu_n}{\sqrt{n\sigma_{D,n}^2}}
 \le x\right|\calF_n^0
 \right\}-\Phi(x)
 \right|\pto0
\end{equation}
Hence the same convergence holds unconditionally.  Moreover, there are constants $0<c_1<c_2<\infty$ such that
\[
 \Pp\{c_1\le D_n,\mu_n,\sigma_{D,n}^2\le c_2\}
 \to1.
\]
Consequently, $n\mu_n$ and $n\sigma_{D,n}^2$ are the conditional centering and variance scale in the Gaussian approximation, and both are of order $n$.
\end{theorem}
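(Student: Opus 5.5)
The plan is to work conditionally on $\calF_n^0$ and treat the global signs $s_1,\ldots,s_n$ as the only source of randomness. Write $\bm Y_i=s_i\,\mathfrak o(\bm Y_i)$.

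Under this representation, $\mB_n$, $\mQ_n$, the weights $w_i$, the diagonal entries $(\mA_n)_{ii}$ and the squared entries $(\mA_n)_{ij}^2$ are all $\calF_n^0$-measurable. The only sign-dependent part of $(\mA_n)_{ij}$ is the factor $s_is_j$. Hence $\kappa_n,e_n,t_n,b_{1,n},b_{2,n},D_n,\mu_n,\psi_{ab,n}$ and $\sigma_{D,n}^2$ are fixed under the conditioning. The target is a representation
\[
T_n(\rho)=n\,F(\bm Z_n)+o_{\Pp}(n^{1/2}),
\]
where $F$ is a smooth function with $\nabla F=\bg_n$ at the centering point. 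The vector $\bm Z_n$ collects three Rademacher forms
\[
Z_{ab,n}=\frac1n\sum_{i,j}s_is_j(\mA_n)_{ij}\,w_i^aw_j^b,\qquad (a,b)\in\{(0,0),(0,1),(1,1)\}.
\]
The diagonal parts of these forms are $\kappa_n$, $b_{1,n}$ and $b_{2,n}$. Combined with $e_n,t_n$, they produce $D_n$ and $\mu_n=\kappa_n/D_n$.

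\textbf{Step 1: expansion of the spatial median.} I would derive a quadratic-form-accurate expansion of the score equation $\sum_i\mathbf U(\bm X_i-\wh\btheta)=\0$. Use $\bm X_i-\btheta_0=\sqrt p\,w_i^{-1}\bm U_i$ and expand $\norm{\bm X_i-\wh\btheta}^{-1}$ to second order in $\bDelta=\wh\btheta-\btheta_0$; the remainders are controlled by $w_i\le\bar w$. This gives $\sqrt{n/p}\,\bDelta$ as $\bigl(e_n-\text{correction}\bigr)^{-1} n^{-1/2}\sum_i w_i^{0}s_i\mathfrak o(\bm Y_i)$, where the correction involves the projection of $\bDelta$ onto the signs, plus terms of size $o(n^{-1/2})$ in the $\mQ_n$-weighted norm.

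\textbf{Step 2: replacing $\wh\mR_n$ by $\mB_n$.} Next I would replace $\wh\mR_n$ by $\mB_n$. The centering at $\wh\btheta$ changes each sign by a term linear in $\bDelta$ with weight $w_i$. This yields a rank-two perturbation (directions $\bDelta$ and $n^{-1}\sum_i w_i\bm Y_i$), which Sherman--Morrison--Woodbury handles explicitly; this is where the $b_{1,n},b_{2,n}$ corrections in $D_n$ arise. The remaining finite-rank factor-induced terms lie in $\mathrm{span}(\mP_{F,p})$. Along that span the resolvent satisfies $\norm{\mQ_n\mPi_{F,p}}_{\op}=O_{\Pp}(p^{-1})$ because the spikes are of order $p$, so these terms contribute $o_{\Pp}(n^{1/2})$ to $T_n$. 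Substituting Step 1 into the quadratic form and using $\mA_n=n^{-1}\mY^\top\mQ_n\mY$ gives the ratio structure $T_n\approx n\,Z_{00}\,/\,(\text{quadratic in } Z_{01},Z_{11})$. This matches $\bg_n$: the squared denominator $D_n^2$ and the coefficients $(e_n-b_{1,n})^2$, $2\kappa_n(e_n-b_{1,n})$, $\kappa_n^2$ come from linearizing this ratio.

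\textbf{Step 3: conditional CLT.} Apply a CLT for Rademacher quadratic forms (de Jong or Götze--Tikhomirov type) to the off-diagonal parts of an arbitrary linear combination of $\bm Z_n$, then use Cramér--Wold. The conditional covariance of $\sqrt n$ times the off-diagonal parts is exactly $\mGamma_n$. The negligibility conditions hold for the following reasons:
\begin{itemize}[leftmargin=1.8em]
\item $0\preceq\mA_n\prec\I_n$ and $w_i\le\bar w$, so the maximal row influence satisfies $\max_i\sum_j(\mA_n)_{ij}^2=O(1)$;
\item $\tr(\mA_n^4)=O(n)$;
\item $\psi_{00,n}\asymp1$, so the total variance is of order $n$.
\end{itemize}
The delta method then gives the conditional statement \eqref{eq:oracle_null_main}. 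Uniformity in $x$ follows by Pólya's theorem, and the unconditional version follows by dominated convergence.

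\textbf{Step 4: order bounds.} Using $0\preceq\mA_n\preceq\I_n$ and $w_i\le\bar w$, all five quantities $\kappa_n,e_n,t_n,b_{1,n},b_{2,n}$ are bounded above. For lower bounds, diagonal of an inverse dominates the inverse of the diagonal, so
\[
1-(\mA_n)_{ii}=\rho\bigl[(n^{-1}\mY^\top\mY+\rho\I_n)^{-1}\bigr]_{ii}\ge\frac{\rho}{c_n+\rho}.
\]
Together with $t_n\ge e_n^2\ge$ const (the weights $w_i$ are bounded below in probability since $\ell_{i,p}$ concentrates away from $0$), this gives $D_n\ge\kappa_n(t_n-b_{2,n})\gtrsim\kappa_n$. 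I would bound $\kappa_n$ below via a deterministic equivalent of $n^{-1}\tr\mA_n$ for the separable model $\mY=\mOmega_p^{1/2}\times$(normalized Gaussians). I would bound $\psi_{00,n}$ below by $n^{-1}\norm{\mA_n}_\F^2-n^{-1}\sum_i(\mA_n)_{ii}^2$, using that the bulk spectrum of $\mA_n$ is nondegenerate. Positive-definiteness of $\mGamma_n$ in the direction $\bg_n$ then gives $\sigma_{D,n}^2\ge c_1$.

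I expect Steps 1--2 to be the main obstacle. Under pervasive spikes the angular factors $\ell_{i,p}$ do not concentrate, so errors in $\wh\btheta$ must be controlled in the $\mQ_n$-norm to accuracy $o(n^{-1/2})$ rather than in the Euclidean norm. There I would first bound the second-order score remainder by splitting it along $\mPi_{F,p}$ and $\mPi_{B,p}$, exploiting the $p^{-1}$ damping of $\mQ_n$ on the factor span.
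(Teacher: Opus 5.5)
Your overall architecture matches the paper's: conditioning on the signs given $\calF_n^0$, a rank-two Woodbury reduction that produces $D_n$, de~Jong's CLT for the off-diagonal Rademacher forms with covariance $\mGamma_n$, and the delta method with gradient $\bg_n$. However, Step~2 rests on a false rate and an incomplete list of remainders, and these are exactly the terms that sit at the critical scale $\sqrt n$.

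First, $\norm{\mQ_n\mPi_{F,p}}_{\op}$ is only $O_{\Pp}(p^{-1/2})$, not $O_{\Pp}(p^{-1})$. The block $\mPi_{B,p}\mB_n\mPi_{F,p}$ has norm of order $\sqrt p$, so the sample spike directions tilt into the bulk by an angle of order $p^{-1/2}$. Only the compression $\mP_{F,p}^\top\mQ_n\mP_{F,p}$ is $O_{\Pp}(p^{-1})$.

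Second, the factor-induced part of $\wh\mR_n$ does not lie in $\operatorname{span}(\mP_{F,p})$. It has the form $\mP_{F,p}\mA_{F,n}\mP_{F,p}^\top+\mP_{F,p}\mV_{F,n}^\top+\mV_{F,n}\mP_{F,p}^\top$, where the bulk matrix $\mV_{F,n}$ has norm of order one. Write $\bm Z_n=n^{-1/2}\sum_i\bm Y_i$ for the sign sum (not your vector of forms); then $\norm{\mQ_n\bm Z_n}$ is of order $\sqrt p$. With correct operator bounds you therefore only get $|\bm Z_n^\top\mQ_n\mV_{F,n}\mP_{F,p}^\top\mQ_n\bm Z_n|=O_{\Pp}(\sqrt p)$, which is not $o_{\Pp}(n^{1/2})$. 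Your argument closes only because of the wrong $p^{-1}$ rate.

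The same issue appears in Step~1. The sample Jacobian $\mJ_n=n^{-1}\sum_iw_i(\I_p-\bm U_i\bm U_i^\top)$ is not $e_n\I_p$ up to a scalar correction: its factor block differs from $e_n\I_r$ by a matrix of order one. The resulting median term $n^{-1/2}(\mJ_n^{-1}-e_n^{-1}\I_p)\bm Z_n$ therefore has $\mQ_n$-norm of order $n^{-1/2}$, not $o(n^{-1/2})$. Cauchy--Schwarz then bounds its cross term with the leading term only by $O_{\Pp}(\sqrt n)$, so your criterion for discarding errors cannot be met for this term.

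Third, you omit the pure-bulk remainder entirely. The bulk--bulk block of $\wh\mR_n-\mB_n$ minus the rank-two centre correction (call it $\mE_n$) comes from the terms $-w_i\bm U_i\bm U_i^\top\bDelta$ and the Hessian part of the sign expansion. It is not of finite rank and has operator norm of order $p^{-1/2}$. The naive bound on $\bm Z_n^\top\mQ_n\mE_n\mQ_n\bm Z_n$ is again $O_{\Pp}(\sqrt p)$, and neither finite rank nor spike damping applies.

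The missing idea is that all these critical terms are small only through cancellation over the conditional Rademacher signs, not through resolvent damping:
\begin{enumerate}
\item For the factor terms, use $\E_s\norm{\mP_{F,p}^\top\mQ_n\bm Z_n}^2\le\tr(\mP_{F,p}^\top\mQ_n\mP_{F,p})=O_{\Pp}(p^{-1})$, which follows from $\mQ_n\mB_n\mQ_n\preceq\mQ_n$.
\item The Jacobian cross term equals $\bm s^\top\mC\bm s$ for an $\calF_n^0$-measurable $\mC$ of rank at most $2r$ with bounded operator norm, hence is $O_{\Pp}(1)$.
\item The pure-bulk remainder is a bounded-degree Walsh chaos in the signs. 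The paper bounds its coefficient energy by $O_{\Pp}((\log p)^2)$, which gives $O_{\Pp}(\log p)$.
\end{enumerate}
Without arguments of this kind, Step~2 does not deliver $T_n(\rho)=nF+o_{\Pp}(n^{1/2})$.

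Smaller points:
\begin{enumerate}
\item As written, $Z_{ab,n}$ counts the signs twice: with $(\mA_n)_{ij}=s_is_ja_{ij}$, your expression is $\calF_n^0$-measurable.
\item $\ell_{i,p}$ does not concentrate under pervasive spikes. Only $e_n$ is bounded below, via $\E\ell_{i,p}\ge1$; together with $t_n\ge e_n^2$ this still gives what you need.
\item A nondegenerate spectrum of $\mA_n$ does not bound $\psi_{00,n}$ below; a diagonal $\mA_n$ would give $\psi_{00,n}=0$. You need a deterministic equivalent for the off-diagonal energy.
\item For $\sigma_{D,n}^2$, use $\sigma_{D,n}^2\ge2g_{1,n}^2\psi_{00,n}$, which follows from the nonnegativity of the entries of $\bg_n$ and of the $w_i$.
\end{enumerate}
Your bound $1-(\mA_n)_{ii}\ge\rho/(c_n+\rho)$ is correct and gives a neat elementary route to $D_n\gtrsim\kappa_n$.
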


\subsubsection{Feasible centering and variance}\label{sec:feasible}

All quantities in \eqref{eq:Dn_mun_ell}--\eqref{eq:g_conditional} have direct sample analogues.  Define
\[
 \wh w_i=\frac{\sqrt p}{\norm{\bm X_i-\wh\btheta}},
 \qquad
 \wh\mQ_n=(\wh\mR_n+\rho\I_p)^{-1},
 \qquad
 \wh\mA_n=\frac1n\wh\mY^\top\wh\mQ_n\wh\mY,
\]
where $\wh\mY=(\wh{\bm Y}_1,\ldots,\wh{\bm Y}_n)$.  Put
\begin{align*}
 \wh\kappa_n&=n^{-1}\tr(\wh\mA_n),
 &\wh e_n&=n^{-1}\sum_i\wh w_i,
 &\wh t_n&=n^{-1}\sum_i\wh w_i^2,\\
 \wh b_{1,n}&=n^{-1}\sum_i(\wh\mA_n)_{ii}\wh w_i,
 &\wh b_{2,n}&=n^{-1}\sum_i(\wh\mA_n)_{ii}\wh w_i^2.
\end{align*}
Then
\[
 \wh D_n=(\wh e_n-\wh b_{1,n})^2
 +\wh\kappa_n(\wh t_n-\wh b_{2,n}),
 \qquad
 \wh\mu_n=\frac{\wh\kappa_n}{\wh D_n}.
\]
For $a,b\in\{0,1,2\}$, set
\[
 \wh\psi_{ab,n}
 =\frac1n\sum_{i\ne j}(\wh\mA_n)_{ij}^2
 \wh w_i^a\wh w_j^b,
\]
form $\wh\mGamma_n$ by replacing $\psi_{ab,n}$ with $\wh\psi_{ab,n}$ in \eqref{eq:Gamma_conditional}, and define
\[
 \wh\bg_n=\wh D_n^{-2}
 \begin{pmatrix}
 (\wh e_n-\wh b_{1,n})^2\\
 2\wh\kappa_n(\wh e_n-\wh b_{1,n})\\
 \wh\kappa_n^2
 \end{pmatrix},
 \qquad
 \wh\sigma_{D,n}^2=\wh\bg_n^\top\wh\mGamma_n\wh\bg_n.
\]
The fully feasible statistic is
\begin{equation*}
 Z_n(\rho)=
 \frac{T_n(\rho)-n\wh\mu_n}
 {\sqrt{n\wh\sigma_{D,n}^2}}.
\end{equation*}

\begin{theorem}\label{thm:feasible}
Under Assumption~\ref{ass:model} and $H_0$, for every fixed $\rho\in[\rho_0,\rho_1]$,
\begin{equation}\label{eq:feasible_CLT}
 Z_n(\rho)\dto N(0,1).
\end{equation}
More precisely,
\[
 \wh\mu_n-\mu_n=O_{\Pp}(n^{-1}),
 \qquad
 \wh\sigma_{D,n}^2-\sigma_{D,n}^2
 =O_{\Pp}(n^{-1/2}),
\]
and $\wh\sigma_{D,n}^2$ is bounded away from zero and infinity with probability tending to one.  
\end{theorem}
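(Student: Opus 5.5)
The plan is to reduce Theorem~\ref{thm:feasible} to Theorem~\ref{thm:theoretical_null} by Slutsky's lemma. Write
\[
 Z_n(\rho)=\frac{T_n(\rho)-n\mu_n}{\sqrt{n\sigma_{D,n}^2}}\cdot\frac{\sigma_{D,n}}{\wh\sigma_{D,n}}
 -\frac{n(\wh\mu_n-\mu_n)}{\sqrt{n\wh\sigma_{D,n}^2}} .
\]
By Theorem~\ref{thm:theoretical_null}, the first ratio converges to $N(0,1)$. Moreover, $\sigma_{D,n}^2\in[c_1,c_2]$ with probability tending to one. It therefore suffices to prove $\wh\mu_n-\mu_n=O_{\Pp}(n^{-1})$ and $\wh\sigma_{D,n}^2-\sigma_{D,n}^2=O_{\Pp}(n^{-1/2})$. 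These rates make the second term $O_{\Pp}(n^{-1/2})$ and the variance ratio $1+o_{\Pp}(1)$. The lower bound on $\wh\sigma^2_{D,n}$ then follows from that on $\sigma^2_{D,n}$.

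\textbf{Step 1 (spatial-median expansion).} Under $H_0$, I would use a quadratic-form-accurate expansion $\wh\btheta-\btheta_p=\bdelta$ with $\norm{\bdelta}=O_{\Pp}(1)$. This holds because $\sqrt p\,w_i^{-1}$ is of order $\sqrt p$, so $\bdelta$ has the size of $n^{-1}\sum_i w_i^{-1}\cdot$ (unit vectors scaled by $\sqrt p$), divided by $e_n$; that is, $\bdelta\approx (n e_n)^{-1}\sqrt p\sum_i \bm U_i$. Expanding $\wh{\bm Y}_i=\sqrt p\,(\bm X_i-\btheta_p-\bdelta)/\norm{\cdot}$ gives
\[
 \wh{\bm Y}_i=\bm Y_i-p^{-1/2}w_i\bigl(\I-\bm U_i\bm U_i^\top\bigr)\bdelta+O(p^{-1}w_i^2\norm{\bdelta}^2).
\]
Similarly, $\wh w_i=w_i\bigl(1+p^{-1/2}w_i\bm U_i^\top\bdelta\bigr)+O_{\Pp}(p^{-1})$.

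\textbf{Step 2 (centering at rate $n^{-1}$).} Consequently, $\wh\mR_n-\mB_n$ is the sum of a rank-two term in $\bdelta$ and $n^{-1}\sum_i\bm Y_i$ (the score identity kills the mean), factor-induced finite-rank terms, and a remainder of operator norm $O_{\Pp}(n^{-1/2})$ whose trace-type contributions are $O_{\Pp}(n^{-1})$. The finite-rank perturbations move $\kappa_n=n^{-1}\tr(\mA_n)$ by $O(n^{-1})$, via the Woodbury identity and the fact that $\rank$ is bounded, with $\rho\ge\rho_0$ keeping resolvents bounded. The same argument controls the weighted diagonal sums $b_{1,n},b_{2,n}$. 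For $\wh e_n,\wh t_n$, the linear term $n^{-1}\sum_i w_i^2\bm U_i^\top\bdelta/\sqrt p$ is a product of two $O_{\Pp}(n^{-1/2})$ quantities, thanks to the symmetry of $\bm U_i$ given $\calF_n^0$, and is therefore $O_{\Pp}(n^{-1})$. Combining these, $D_n$ and $\kappa_n$ are perturbed by $O_{\Pp}(n^{-1})$. Since $D_n\ge c_1$, the delta method gives $\wh\mu_n-\mu_n=O_{\Pp}(n^{-1})$.

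\textbf{Step 3 (variance at rate $n^{-1/2}$).} The functionals $\psi_{ab,n}$ are Lipschitz in the entries of $\mA_n$ and in the weights, after normalization by $n^{-1}\sum_{i\ne j}$. Using $\norm{\mA}_{\op}\le1$ and $\norm{\wh\mA_n-\mA_n}_{\F}^2=O_{\Pp}(1)$ (from the finite-rank and $O(n^{-1/2})$-operator-norm perturbation), Cauchy--Schwarz gives $|\wh\psi_{ab,n}-\psi_{ab,n}|\le n^{-1}\norm{\mA_n}_{\F}\norm{\wh\mA_n-\mA_n}_{\F}\bar w^{4}+\dots=O_{\Pp}(n^{-1/2})$. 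Boundedness of $\wh w_i$ uses $\norm{\bdelta}=O_{\Pp}(1)$ together with $\norm{\bm X_i-\btheta_p}\gtrsim\sqrt p$. Combined with Step 2 and the boundedness of $\bg_n$, this yields the claimed rate for $\wh\sigma^2_{D,n}$.

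The main obstacle is Step 2. Under pervasive spikes, the rank-two and factor corrections are not small in operator norm. One must show that their effect on the traces is exactly $O(n^{-1})$ rather than $O(n^{-1/2})$. My first attempt would apply Woodbury and show that the resolvent suppresses the spike directions, via $\mP_{F,p}^\top\mQ_n\mP_{F,p}=O_{\Pp}(p^{-1})$, so that the rank-$O(1)$ correction changes $\tr(\mA_n)$ only by $O_{\Pp}(1)$.
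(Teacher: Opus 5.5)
Your Slutsky reduction, your treatment of $\wh e_n,\wh t_n$, and the Frobenius-norm argument for $\wh\psi_{ab,n}$ in Step~3 match what is needed. Step~2, however, has a genuine gap at exactly the point the theorem depends on.

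Write $\mK_n=n^{-1}\mY^\top\mY$ and $\wh\mK_n=n^{-1}\wh\mY^\top\wh\mY$, so that $\mA_n=\I_n-\rho(\mK_n+\rho\I_n)^{-1}$, and similarly for $\wh\mA_n$. Let $\mE_{K,n}$ be the full-rank part of $\wh\mK_n-\mK_n$ generated by the bulk residual columns $\mPi_{B,p}(\wh{\bm Y}_i-\bm Y_i+w_i\bm d_{0,n})$, where $\bm d_{0,n}=e_n^{-1}n^{-1}\sum_i\bm Y_i$. You assert that a remainder of operator norm $O_{\Pp}(n^{-1/2})$ contributes only $O(n^{-1})$ to the normalized traces. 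The operator norm alone gives only
$n^{-1}\abs{\tr\{\mD_w^k(\mK_n+\rho\I_n)^{-1}\mE_{K,n}(\mK_n+\rho\I_n)^{-1}\}}\le C\norm{\mE_{K,n}}_{\op}=O_{\Pp}(n^{-1/2})$, with $\mD_w=\diag(w_1,\ldots,w_n)$. The matrix $\mE_{K,n}$ has Frobenius norm of order one, and an $n\times n$ matrix of that size can have trace of order $\sqrt n$. With only this rate, $\sqrt n(\wh\mu_n-\mu_n)=O_{\Pp}(1)$ is a non-vanishing shift of $Z_n(\rho)$, and the Slutsky step fails.

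Your Step~1 expansion hides this problem. Since $\bm X_i-\btheta_p=\sqrt p\,w_i^{-1}\bm U_i$, the correct first-order term is $\wh{\bm Y}_i-\bm Y_i=-w_i(\I_p-\bm U_i\bm U_i^\top)\bdelta+O_{\Pp}(p^{-1/2})$. This vector is of order one, not $p^{-1/2}$. So the bulk residual columns have total Frobenius norm of order $\sqrt n$, and no crude bound yields $O(n^{-1})$. Even the operator-norm bound on $\mE_{K,n}$ that you take for granted requires showing that these columns form a matrix of operator norm $O_{\Pp}(1)$. The paper gets this from an explicit factorization of the Taylor residual.

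The missing ingredient is a conditional Rademacher cancellation. First discard a rank-$(2r+2)$ piece. The second resolvent identity then splits the weighted trace into a first-order term $\tr\{\mD_w^k(\mK_n+\rho\I_n)^{-1}\mE_{K,n}(\mK_n+\rho\I_n)^{-1}\}$ and a second-order term bounded by $Cn\norm{\mE_{K,n}}_{\op}^2=O_{\Pp}(1)$. Inside the first-order term, the residual--residual product is bounded crudely by $Cn^{-1}\sum_i\norm{\bm r_{i,n}}^2=O_{\Pp}(1)$. The cross term between the centered bulk signs $\mPi_{B,p}(\bm Y_i-w_i\bm d_{0,n})$ and the residuals cannot be bounded this way. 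The paper handles it as follows:
\begin{enumerate}
\item condition on $\calF_n^0$ and write $\bm Y_i=s_i\widetilde{\bm Y}_i$;
\item write the linearized median (the inverse sample score Jacobian applied to $n^{-1}\sum_i\bm Y_i$) as $\sum_js_j\bm a_{j,n}$ with $\calF_n^0$-measurable coefficients;
\item replace the exact residual by its polynomial Taylor version, at a cost of $O_{\Pp}(p^{-1/2})$ in the trace;
\item expand the cross term as a bounded-degree Walsh polynomial in the signs and bound its total coefficient energy by $O_{\Pp}(1)$, using $\norm{\sum_i\bm U_{B,i}\bm U_{B,i}^\top}_{\op}=O_{\Pp}(n/p)$ and $\norm{\sum_j\bm a_{j,n}\bm a_{j,n}^\top}_{\op}=O_{\Pp}(1)$.
\end{enumerate}
That energy includes the empty-set coefficient, i.e.\ the conditional mean, which a variance bound alone would miss.

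By contrast, the obstacle you singled out is the easy part. Work in the $n\times n$ companion space, which you need anyway: the weighted diagonals $b_{a,n}$ depend on the perturbed columns $\wh{\bm Y}_i$, so a Woodbury expansion of $(\wh\mR_n+\rho\I_p)^{-1}$ alone does not reach them. There, a rank-$(2r+2)$ perturbation of the positive semidefinite Gram matrix changes $\mA_n$ by a rank-$(2r+2)$ difference of two contractions. Every trace with bounded weights therefore moves by $O(1)$, with no spike suppression needed.
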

According to Theorem \ref{thm:feasible}, we rejects the null hypothesis when $Z_n(\rho)>z_{1-\alpha}$.

\subsection{Local alternatives, power, and the ridge parameter}\label{sec:power}

Let
\[
 \mu_0=1-\sum_{a=1}^r\tau_a=\int t\,\dd H_B(t),
 \qquad
 D^0=\left(\mu_0+\sum_{a=1}^r\tau_aG_a^2\right)^{-1},
\]
where $G_1,\ldots,G_r$ are independent $N(0,1)$ variables, and let $F_D$ denote the distribution of $D^0$.  For $\rho\in[\rho_0,\rho_1]$, let $(\delta_\rho,\wt\delta_\rho)$ be the unique positive solution of
\begin{align}
 \delta_\rho
 &=c\int\frac{t}{\rho(1+\wt\delta_\rho t)}\,\dd H_B(t),\label{eq:limit_delta}\\
 \wt\delta_\rho
 &=\int\frac{d}{\rho(1+\delta_\rho d)}\,\dd F_D(d).\label{eq:limit_tdelta}
\end{align}
Define
\begin{align*}
 \lambda_\rho(d)&=\frac{\delta_\rho d}{1+\delta_\rho d},
 &\kappa_\rho&=\int\lambda_\rho(d)\,\dd F_D(d),\\
 s_a&=\int d^{a/2}\,\dd F_D(d),
 &b_{a,\rho}&=\int d^{a/2}\lambda_\rho(d)\,\dd F_D(d),
 \qquad a=1,2,
\end{align*}
and
\begin{align*}
 u_\rho&=c\int\frac{t^2}{(1+\wt\delta_\rho t)^2}\,\dd H_B(t),
 &v_\rho&=\int\frac{d^2}{(1+\delta_\rho d)^2}\,\dd F_D(d),\\
 a_{k,\rho}&=\int\frac{d^{1+k/2}}{(1+\delta_\rho d)^2}\,\dd F_D(d),
 &&k=0,1,2.
\end{align*}
The deterministic denominator is
\begin{equation}\label{eq:D_rho_ell}
 D_\rho=m_1^2(s_1-b_{1,\rho})^2
 +m_2\kappa_\rho(s_2-b_{2,\rho}).
\end{equation}
For $a,b\in\{0,1,2\}$, put
\begin{equation}\label{eq:Psi_limit}
 \Psi_{ab}(\rho)
 =m_am_b\frac{u_\rho a_{a,\rho}a_{b,\rho}}
 {\rho^2-u_\rho v_\rho},
 \qquad m_0=1.
\end{equation}
Let $\mGamma_\rho$ be the matrix in \eqref{eq:Gamma_conditional} with $\psi_{ab,n}$ replaced by $\Psi_{ab}(\rho)$, and define
\begin{equation}\label{eq:sigma_E_limit}
 \bg_\rho=D_\rho^{-2}
 \begin{pmatrix}
 m_1^2(s_1-b_{1,\rho})^2\\
 2\kappa_\rho m_1(s_1-b_{1,\rho})\\
 \kappa_\rho^2
 \end{pmatrix},
 \qquad
 \sigma_E^2(\rho)=\bg_\rho^\top\mGamma_\rho\bg_\rho.
\end{equation}
Appendix~\ref{app:det_equiv} shows that $D_n\to D_\rho$ and $\sigma_{D,n}^2\to\sigma_E^2(\rho)$ in probability.  The stability inequality $u_\rho v_\rho<\rho^2$ makes \eqref{eq:Psi_limit} finite, and $\sigma_E^2(\rho)>0$.

\subsubsection{Bulk local alternatives and power}

Let $\bm p_{j,p}$ be the eigenvector of $\mOmega_p$ associated with the bulk eigenvalue $\lambda_{j,p}$, $j>r$.  Consider deterministic $\bm d_p$ such that
\[
 \mPi_{F,p}\bm d_p=\0,
 \qquad 0<d_-\le\norm{\bm d_p}\le d_+<\infty,
\]
and define
\begin{equation}\label{eq:signal_measure}
 \calG_p=\sum_{j=r+1}^p
 (\bm p_{j,p}^\top\bm d_p)^2\delta_{\lambda_{j,p}}.
\end{equation}
Assume $\calG_p\Rightarrow\calG$, where $\calG$ is a nonzero finite measure on $[\omega_-,\omega_+]$.  The local alternative is
\begin{equation}\label{eq:local_alternative}
 H_{1,n}(\calG):
 \qquad \btheta_p=\btheta_{0,p}+n^{-1/4}\bm d_p.
\end{equation}
Define
\begin{equation}\label{eq:q_Lambda_ell}
 q_\rho(\calG)=\frac1\rho
 \int\frac{1}{1+\wt\delta_\rho t}\,\dd\calG(t),
 \qquad
 \Lambda_\rho(\calG)=\frac{q_\rho(\calG)}{\sigma_E(\rho)}.
\end{equation}

\begin{theorem}\label{thm:power}
Suppose Assumption~\ref{ass:model} holds.  Under \eqref{eq:local_alternative}, for every fixed $\rho\in[\rho_0,\rho_1]$,
\begin{equation}\label{eq:local_normal_limit}
 Z_n(\rho)\dto N\{\Lambda_\rho(\calG),1\}.
\end{equation}
Consequently, the limiting power of the level-$\alpha$ test is
\begin{equation}\label{eq:power_function}
 \beta_\alpha(\rho;\calG)
 =1-\Phi\{z_{1-\alpha}-\Lambda_\rho(\calG)\}.
\end{equation}
More generally, under
\begin{equation}\label{eq:H1_model}
 \bm X_i=\btheta_{0,p}+\bDelta_p
 +\sqrt p\,R_i\mOmega_p^{1/2}
 \frac{\bm G_i}{\norm{\bm G_i}},
\end{equation}
if
\begin{equation}\label{eq:primitive_power}
 \sqrt n\left\{
 \norm{\mPi_{B,p}\bDelta_p}^2
 +p^{-1}\norm{\mPi_{F,p}\bDelta_p}^2
 \right\}\to\infty,
\end{equation}
then $\Pp_{H_1}\{Z_n(\rho)>z_{1-\alpha}\}\to1$.
\end{theorem}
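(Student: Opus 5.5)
The plan is to reduce the alternative to the null by exploiting location equivariance. Under \eqref{eq:H1_model}, write $\wh\btheta=\btheta_p+(\wh\btheta-\btheta_p)$, where $\wh\btheta-\btheta_p$ has exactly its null law. The residuals $\bm X_i-\wh\btheta$, and hence $\wh\mR_n$, $\wh w_i$, $\wh\mA_n$, $\wh\mu_n$ and $\wh\sigma_{D,n}^2$, are unchanged by the shift. By Theorem~\ref{thm:feasible}, applied to the centered data, $\wh\mu_n-\mu_n=O_{\Pp}(n^{-1})$ and $\wh\sigma^2_{D,n}\to\sigma_E^2(\rho)$ in probability. Writing $\wh\mQ_n=(\wh\mR_n+\rho\I_p)^{-1}$, expand
\[
T_n(\rho)=T_n^0(\rho)+2n\bDelta_p^\top\wh\mQ_n(\wh\btheta-\btheta_p)+n\bDelta_p^\top\wh\mQ_n\bDelta_p,
\]
where $T_n^0$ is the statistic computed at the true center. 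Dividing by $\sqrt{n\wh\sigma^2_{D,n}}$, the first term yields $N(0,1)$ by Theorem~\ref{thm:feasible}. It then remains to show that the quadratic signal term $\sqrt n\,\bDelta_p^\top\wh\mQ_n\bDelta_p\to q_\rho(\calG)$ in probability and that the cross term is $o_{\Pp}(1)$; Slutsky's lemma then gives \eqref{eq:local_normal_limit} and \eqref{eq:power_function}.

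For the signal term, take $\bDelta_p=n^{-1/4}\bm d_p$, so the term equals $\bm d_p^\top\wh\mQ_n\bm d_p$. I would first replace $\wh\mQ_n$ by $\mQ_n$, using the appendix decomposition of $\wh\mR_n-\mB_n$ into a rank-two correction plus factor-induced finite-rank terms. Since $\mPi_{F,p}\bm d_p=\0$ and the resolvent suppresses the factor terms along the pervasive eigenspace, a Woodbury argument should show that these perturbations change the bilinear form $\bm d_p^\top\mQ\bm d_p$ by $o_{\Pp}(1)$. Next, the deterministic equivalent for the separable model from Appendix~\ref{app:det_equiv} gives $\mQ_n\approx\rho^{-1}(\I_p+\wt\delta_\rho\mOmega_p)^{-1}$ in bilinear forms against deterministic unit vectors. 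The factor part of $\mOmega_p$ is orthogonal to $\bm d_p$, and $\bm d_p$ lies in the bulk eigenspace, so
\[
\bm d_p^\top\mQ_n\bm d_p\to\rho^{-1}\int(1+\wt\delta_\rho t)^{-1}\dd\calG(t)=q_\rho(\calG),
\]
by $\calG_p\Rightarrow\calG$ and continuity of the integrand on $[\omega_-,\omega_+]$.

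The main obstacle is the cross term $2n^{3/4}\bm d_p^\top\wh\mQ_n(\wh\btheta-\btheta_p)$, which must be $o_{\Pp}(\sqrt n)$. A naive bound using $\norm{\wh\btheta-\btheta_p}=O_{\Pp}(\sqrt{p/n})=O_{\Pp}(1)$ gives only $O_{\Pp}(n^{3/4})$, so the fluctuation of a fixed-direction projection is needed. I would use the quadratic-form-accurate expansion of the spatial median, $\wh\btheta-\btheta_p\approx (\sqrt p/(n m_\ast))\sum_i \bm U_i$ plus remainders. Conditionally on $\calF_n^0$, the leading part is a Rademacher sum $n^{-1}\sum_i\varepsilon_i\,\bm d_p^\top\wh\mQ_n\bm Y_i\cdot(\text{weight})$, with $\wh\mQ_n$ replaced by $\mQ_n$, which is sign-invariant. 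Its conditional variance is $n^{-2}\sum_i(\bm d_p^\top\mQ_n\bm Y_i)^2=O_{\Pp}(n^{-1})$, because $\bm d_p^\top\mQ_n\mB_n\mQ_n\bm d_p=O(1)$. This makes the cross term $O_{\Pp}(n^{3/4}n^{-1/2})=o_{\Pp}(n^{1/2})$. The dependence between $\mQ_n$ and the weights attached to each $\bm Y_i$ would be handled through the leave-one-out identity $\mQ_n\bm Y_i=\mQ_{n,-i}\bm Y_i/(1+n^{-1}\bm Y_i^\top\mQ_{n,-i}\bm Y_i)$, and the expansion remainders would be bounded crudely.

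For the consistency statement, I would drop the local scaling. With $\bDelta_p$ general, lower-bound the signal term by $n\bDelta_p^\top\wh\mQ_n\bDelta_p$. On the bulk part, $\lambda_{\max}(\wh\mR_n)=O_{\Pp}(1)$ there after factor removal, so the resolvent is $\gtrsim1$ and contributes $\gtrsim n\norm{\mPi_{B,p}\bDelta_p}^2$. Along the factor directions, $\wh\mR_n$ has eigenvalues $\asymp p$, so the contribution is $\gtrsim n p^{-1}\norm{\mPi_{F,p}\bDelta_p}^2$. After dividing by $\sqrt n$, condition \eqref{eq:primitive_power} makes the signal diverge. The cross term is bounded by Cauchy--Schwarz as $2\{n\bDelta_p^\top\wh\mQ_n\bDelta_p\}^{1/2}\{T_n^0\}^{1/2}$, with $T_n^0=O_{\Pp}(n)$. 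If $S$ denotes the signal term, the standardized statistic is then at least
\[
\frac{S-2\sqrt{S}\,O_{\Pp}(\sqrt n)-n\wh\mu_n+T_n^0}{\sqrt{n\wh\sigma^2_{D,n}}}.
\]
Since $T_n^0-n\wh\mu_n=O_{\Pp}(\sqrt n)$, the whole expression is $\gtrsim S/\sqrt n-O_{\Pp}(\sqrt{S}/\sqrt{n}\cdot\sqrt n)/\sqrt n$. To conclude, I would refine the Cauchy--Schwarz step by the conditional Rademacher argument above, giving a cross term of size $O_{\Pp}(\sqrt{S})$; then $(S-O_{\Pp}(\sqrt S))/\sqrt n\to\infty$, so the power tends to one.
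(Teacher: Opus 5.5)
Your architecture is the paper's. It uses translation equivariance (Proposition~\ref{prop:translation}), so that $\wh\mR_n$ and all nuisance estimators are those of the noise sample, and splits $T_n(\rho)-n\wh\mu_n$ into the null statistic, the signal quadratic and the cross term. The signal is handled by a bilinear deterministic equivalent for $\bm d_p^\top\wh\mQ_n\bm d_p$, the cross term by a conditional Rademacher bound. For consistency, the signal is bounded below by $g_p(\bDelta_p)^2=p^{-1}\|\mPi_{F,p}\bDelta_p\|^2+\|\mPi_{B,p}\bDelta_p\|^2$, and the cross term is bounded by $O_{\Pp}[\{n^{-1}\bDelta_p^\top\wh\mQ_n\bDelta_p\}^{1/2}]$. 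The gap is in the step you flag as the main obstacle, where two assertions fail as written.

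First, a scalar Bahadur term $\bm Z_n/(m_\ast\sqrt n)$ is not Euclidean-accurate under pervasive dependence. The score Jacobian has a different, random factor block, and the discarded piece $n^{-1/2}\mG_{F,n}\bm Z_n$ has norm $O_{\Pp}(\sqrt{p/n})=O_{\Pp}(1)$ along $\operatorname{span}(\mP_{F,p})$. Bounding it crudely by $\|\wh\mQ_n\bm d_p\|\cdot O_{\Pp}(1)$ gives back exactly the naive cross-term rate $n^{3/4}$ you set out to beat. Using the matrix term $(m_1\sqrt n)^{-1}\mJ_{E,p}^{-1}\bm Z_n$ of Theorem~\ref{thm:median} instead would make the remainder $O_{\Pp}(p^{-1/2})$, which is harmless.

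Second, and this affects the leading term regardless, $\wh\mQ_n$ cannot be swapped for the sign-invariant $\mQ_n$ for free. $\wh\mR_n-\mB_n$ contains the rank-two center correction $\mH_n\mC_n\mH_n^\top$ and the factor term $\mF_n$, both of operator norm of order one; the bulk columns $\mV_{F,n}$ of $\mF_n$ are genuinely $O_{\Pp}(1)$ (Lemma~\ref{lem:F_factorization}). So $\|\wh\mQ_n-\mQ_n\|_{\op}$ is not small, and a norm bound again costs $O_{\Pp}(1)$ in $\bm d_p^\top\wh\mQ_n\wh\btheta_\varepsilon$.

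Both problems are repaired by exploiting structure rather than norms, which is the content of Lemma~\ref{lem:signal_noise_cross_new}:
\begin{itemize}
\item The ridge resolvents suppress the pervasive directions, $\|\bm d^\top\mQ_{H,n}\mP_{F,p}\|=O_{\Pp}\{g_p(\bm d)p^{-1/2}\}$ (spike gap plus Wedin). This neutralizes the $O_{\Pp}(\sqrt p)$ factor part of $\mG_{F,n}\bm Z_n$.
\item The rank-two update is controlled by $\bm d^\top\mQ_n\mH_n=O_{\Pp}\{g_p(\bm d)n^{-1/2}\}$ (itself a conditional Rademacher bound) together with $\|\mH_n^\top\mQ_n\bm Z_n\|=O_{\Pp}(\sqrt p)$.
\item Every summand of $\mF_n=\mP_{F,p}\mA_{F,n}\mP_{F,p}^\top+\mP_{F,p}\mV_{F,n}^\top+\mV_{F,n}\mP_{F,p}^\top$ carries a factor projection.
\item $\|\mE_n\|_{\op}=O_{\Pp}(p^{-1/2})$ handles the pure-bulk remainder.
\end{itemize}
For consistency, all of this must be carried with the scale $g_p(\bDelta_p)$, since $\|\bDelta_p\|$ can be as large as $\sqrt p\,g_p(\bDelta_p)$ and any bound involving $\|\bDelta_p\|$ is useless. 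Incidentally, no leave-one-out step is needed for the Rademacher sum: given $\calF_n^0$, $\mQ_n$ and the $w_i$ are fixed.

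Two smaller points. The results of Appendix~\ref{app:det_equiv} are trace statements, and a bilinear separable-model deterministic equivalent applies only to the bulk proxy, not to the spiked $\mB_n$. You therefore also need the Schur-complement step showing $\bm d_p^\top\mQ_n\bm d_p-\bm d_{B,p}^\top\mQ_{B,n}\bm d_{B,p}=O_{\Pp}(p^{-1})$, with $\bm d_{B,p}=\mP_{B,p}^\top\bm d_p$ and $\mQ_{B,n}=(\mP_{B,p}^\top\mB_n\mP_{B,p}+\rho\I_{p-r})^{-1}$. Orthogonality of $\bm d_p$ to the factor space does not by itself decouple the blocks of $\mQ_n$. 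Also, the bulk and factor contributions to $\bDelta_p^\top\wh\mQ_n\bDelta_p$ are not additive, because $\wh\mQ_n$ has an off-diagonal block. The lower bound should come from the Loewner inequality $\wh\mR_n+\rho\I_p\preceq C(p\,\mPi_{F,p}+\mPi_{B,p})$, which follows from the factor and bulk diagonal-block bounds.
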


For a diffuse signal with $\calG=s^2H_B$,
\[
 q_\rho(\calG)=\frac{s^2}{\rho}
 \int\frac{1}{1+\wt\delta_\rho t}\,\dd H_B(t),
\]
whereas concentration in a bulk eigenspace with eigenvalue $t_0$ gives
\[
 q_\rho(s^2\delta_{t_0})
 =\frac{s^2}{\rho(1+\wt\delta_\rho t_0)}.
\]
Corollary~\ref{cor:compound_symmetry} gives $H_B=\delta_{1-\varrho}$ and
$D^0=(1-\varrho+\varrho G^2)^{-1}$, so all integrals in
\eqref{eq:D_rho_ell}--\eqref{eq:q_Lambda_ell} reduce to one-dimensional Gaussian expectations.

\subsubsection{Oracle choice of the ridge parameter}

Since $\Phi$ is increasing, an oracle ridge parameter for signal profile $\calG$ is
\begin{equation}\label{eq:rho_opt}
 \rho_{\calG}^{\star}
 \in\arg\max_{\rho\in[\rho_0,\rho_1]}
 \frac{q_\rho(\calG)}{\sigma_E(\rho)}.
\end{equation}
The objective is continuous and the compact interval ensures existence.  Multiplying $\calG$ by a positive constant leaves the optimizer unchanged.  At an interior maximizer,
\[
 2\frac{q_\rho'(\calG)}{q_\rho(\calG)}
 =\frac{\{\sigma_E^2(\rho)\}'}{\sigma_E^2(\rho)}.
\]
The canonical derivatives solve
\begin{equation}\label{eq:canonical_derivative_system_main}
 \begin{pmatrix}\rho&u_\rho\\v_\rho&\rho\end{pmatrix}
 \begin{pmatrix}\delta_\rho'\\\wt\delta_\rho'\end{pmatrix}
 =-\begin{pmatrix}\delta_\rho\\\wt\delta_\rho\end{pmatrix},
\end{equation}
and
\[
 q_\rho'(\calG)
 =-\frac{q_\rho(\calG)}\rho
 -\frac{\wt\delta_\rho'}\rho
 \int\frac{t}{(1+\wt\delta_\rho t)^2}\,\dd\calG(t).
\]
The remaining derivatives are finite one-dimensional integrals; explicit formulas are given in Lemma~\ref{lem:power_derivatives} in Appendix~\ref{app:power}.  Thus \eqref{eq:rho_opt} can be evaluated by a one-dimensional numerical search.

\section{Cauchy aggregation over a fixed ridge grid}\label{sec:cauchy}

The power calculation in Section~\ref{sec:power} identifies an oracle ridge value $\rho_{\calG}^{\star}$, but the optimizer depends on the unknown signal measure $\calG$.  Estimating $\calG$ under a local alternative is difficult and can introduce a second tuning problem.  We therefore replace selection of one ridge value by aggregation over a deterministic finite set
\[
 \mathcal R_K=\{\rho^{(1)},\ldots,\rho^{(K)}\}\subset[\rho_0,\rho_1],
 \qquad \rho_0\le\rho^{(1)}<\cdots<\rho^{(K)}\le\rho_1,
\]
where $K$ is fixed as $n,p\to\infty$.  For each $\rho^{(k)}$, write
\[
 Z_{n,k}=Z_n(\rho^{(k)}),
 \qquad p_{n,k}=1-\Phi(Z_{n,k}),
 \qquad 1\le k\le K.
\]
Let $\varpi_1,\ldots,\varpi_K$ be fixed positive weights satisfying $\sum_{k=1}^K\varpi_k=1$.  The Cauchy statistic and its analytic combined $p$-value are
\[
 T_{\mathrm{CC},n}
 =\sum_{k=1}^K\varpi_k
 \tan\!\left[\pi\left\{\frac12-p_{n,k}\right\}\right],
 \qquad
 p_{\mathrm{CC},n}
 =\frac12-\frac1\pi\arctan(T_{\mathrm{CC},n}).
\]
The implementation below uses equal weights $\varpi_k=K^{-1}$.  We call the resulting procedure ERHT--CC.

The marginal limit in Theorem~\ref{thm:feasible} is not sufficient to justify aggregation, because the statistics at different ridge values are computed from the same spatial median, signs, and inverse distances.  Appendix~\ref{app:cauchy} therefore develops the cross-ridge weighted companion functionals.  For $\rho,\eta\in[\rho_0,\rho_1]$, it defines a deterministic cross-covariance matrix $\mGamma_{\rho,\eta}$ and
\[
 r(\rho,\eta)
 =\frac{\bg_\rho^\top\mGamma_{\rho,\eta}\bg_\eta}
 {\sigma_E(\rho)\sigma_E(\eta)}.
\]
Put $\mC_K=\{r(\rho^{(k)},\rho^{(\ell)})\}_{k,\ell=1}^K$.  The matrix $\mC_K$ is positive semidefinite, has unit diagonal, and has nonnegative entries.  Appendix~\ref{app:cauchy} also defines its finite-$p$ deterministic counterpart $\mC_{K,p}^{\star}=\{r_{p,k\ell}^{\star}\}_{k,\ell=1}^K$.

\begin{theorem}\label{thm:cauchy_joint}
Under Assumption~\ref{ass:model} and $H_0$, for every fixed deterministic grid $\mathcal R_K$,
\[
 (Z_{n,1},\ldots,Z_{n,K})^\top
 \dto N_K(\0,\mC_K).
\]
\end{theorem}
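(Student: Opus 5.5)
By the Cramér--Wold device, it suffices to show that for every fixed $\bm a=(a_1,\ldots,a_K)^\top\in\R^K$,
\[
\sum_k a_kZ_{n,k}\dto N(0,\bm a^\top\mC_K\bm a).
\]
We would not redo the single-ridge analysis. Instead we reuse its reduction, which underlies Theorems~\ref{thm:theoretical_null} and~\ref{thm:feasible}. That reduction has three parts. First, the expansion of the spatial median and the rank-two and finite-rank corrections give $T_n(\rho)-n\mu_n$ as a conditional Rademacher quadratic form plus $o_{\Pp}(\sqrt n)$. Second, the form has coefficients built from $(\mA_n)_{ij}$ and the weights $w_i^aw_j^b$, and it is linearised through the gradient $\bg_n$. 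Third, $\wh\mu_n-\mu_n=O_{\Pp}(n^{-1})$ and $\wh\sigma_{D,n}^2\to\sigma_E^2(\rho)$.

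The point is that these remainder bounds hold for each fixed $\rho$ separately, and $K$ is fixed. Hence, simultaneously for $k=1,\ldots,K$,
\[
Z_{n,k}=\frac{\sum_{i\ne j}\varepsilon_i\varepsilon_j\,c^{(k)}_{ij}}{\sqrt n\,\sigma_E(\rho^{(k)})}+o_{\Pp}(1).
\]
Here the $\varepsilon_i$ are the Rademacher signs conditional on $\calF_n^0$. The coefficients $c^{(k)}_{ij}$ are $\calF_n^0$-measurable and combine $(\mA_n(\rho^{(k)}))_{ij}$ with the weights $1,w_i,w_j,w_iw_j$ according to $\bg_n(\rho^{(k)})$. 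Crucially, the sign vector $\varepsilon$ is the same for all $k$.

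The linear combination $\sum_ka_kZ_{n,k}$ is therefore, up to $o_{\Pp}(1)$, a single Rademacher chaos $\sum_{i\ne j}\varepsilon_i\varepsilon_jC_{ij}$ with $C=\sum_ka_kc^{(k)}/\{\sqrt n\,\sigma_E(\rho^{(k)})\}$. We would apply the same conditional quadratic-form CLT already used for Theorem~\ref{thm:theoretical_null}, for instance de Jong's criterion for degenerate $U$-statistics or a martingale CLT. This needs two things: that $\max_i\sum_jC_{ij}^2\to0$, and that $\tr(C^4)/\{\tr(C^2)\}^2\to0$ in probability. Both follow from the corresponding single-ridge bounds by the triangle inequality, using $\norm{\mA_n(\rho)}_{\op}\le1$ and bounded weights.

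The conditional variance is $2\sum_{i\ne j}C_{ij}^2$. Its cross terms involve the cross-ridge functionals
\[
\psi^{\rho,\eta}_{ab,n}=\frac1n\sum_{i\ne j}(\mA_n(\rho))_{ij}(\mA_n(\eta))_{ij}\,w_i^aw_j^b .
\]
The conditional variance then equals $\sum_{k,\ell}a_ka_\ell\,\bg_n(\rho^{(k)})^\top\mGamma_n(\rho^{(k)},\rho^{(\ell)})\bg_n(\rho^{(\ell)})/\{\sigma_E\sigma_E\}$.

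The main obstacle is the deterministic equivalent $\psi^{\rho,\eta}_{ab,n}\pto\Psi_{ab}(\rho,\eta)$, which defines $\mGamma_{\rho,\eta}$. We would write
\[
(\mA_n(\rho))_{ij}=n^{-1}\bm Y_i^\top\mQ_n(\rho)\bm Y_j .
\]
Using leave-two-out resolvents and the Sherman--Morrison formula, each sum then becomes a weighted trace
\[
n^{-2}\sum_{i\ne j}w_i^aw_j^b\,\bm Y_i^\top\mQ_{ij}(\rho)\bm Y_j\bm Y_j^\top\mQ_{ij}(\eta)\bm Y_i ,
\]
together with scalar correction factors $\{1+\delta_\rho d_i\}^{-1}\{1+\delta_\eta d_j\}^{-1}$. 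Here the angular denominators $d_i$ carry the factor components $G_a$, and these remain random in the limit, which is why the law $F_D$ appears.

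Conditioning on the scales and applying quadratic-form concentration for the Gaussian directions (handled as in the single-ridge appendix, where the pervasive eigenspace is controlled because resolvents suppress it) gives a two-resolvent trace $p^{-1}\tr\{\mOmega\mQ(\rho)\mOmega\mQ(\eta)\}$. This trace satisfies a linear self-consistent equation whose solution generalises $u_\rho a_{a,\rho}a_{b,\rho}/(\rho^2-u_\rho v_\rho)$ to the mixed form $u_{\rho\eta}a_{a,\rho}a_{b,\eta}/(\rho\eta-u_{\rho\eta}v_{\rho\eta})$. We expect this to be the hardest step: one must verify the stability bound $u_{\rho\eta}v_{\rho\eta}<\rho\eta$, which follows from the diagonal ones by Cauchy--Schwarz, and one must obtain concentration uniformly over the finite grid.

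Finally, the conditional characteristic functions converge to $\exp(-\bm a^\top\mC_K\bm a\,t^2/2)$ in probability. Dominated convergence then yields the unconditional joint limit $N_K(\0,\mC_K)$. The diagonal entries of $\mC_K$ equal $1$ by the convergence $\sigma_{D,n}^2\to\sigma_E^2(\rho)$ established in Appendix~\ref{app:det_equiv}.
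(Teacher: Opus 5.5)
Your architecture is the same as the paper's. You reuse the single-ridge reduction at each grid point, with the $O_{\Pp}(1+\log p)$ remainders and the feasible rates made simultaneous because $K$ is fixed. You use that $\calF_n^0$ and the sign vector do not depend on $\rho$, so a fixed linear combination is a single Rademacher chaos handled by de Jong's theorem. You then identify the cross-ridge covariance through $\psi^{\rho,\eta}_{ab,n}$, using leave-two-out Schur complements, a linear self-consistent equation for the mixed trace, and Cauchy--Schwarz stability. This is exactly Lemmas~\ref{lem:cross_ridge_DE} and~\ref{lem:finite_grid_conditional_clt}.

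The one explicit formula you commit to in the hardest step is wrong, however, and it changes $\mC_K$. Each companion entry carries its own pair of Schur denominators. Hence $(\mA_n(\rho))_{ij}(\mA_n(\eta))_{ij}$ acquires the four factors $\gamma_{i,\rho}\gamma_{j,\rho}\gamma_{i,\eta}\gamma_{j,\eta}$, with $\gamma_{i,\rho}=(1+\delta_\rho d_i)^{-1}$, and not the two factors $\{1+\delta_\rho d_i\}^{-1}\{1+\delta_\eta d_j\}^{-1}$ you wrote. After taking the conditional expectation $d_id_j\tr(\cdots)$ of the two bilinear forms, index $i$ contributes $n^{-1}\sum_i d_i^{1+a/2}\gamma_{i,\rho}\gamma_{i,\eta}$. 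The limit is therefore
\[
 m_am_b\,\frac{u_{\rho\eta}A_{a,\rho\eta}A_{b,\rho\eta}}{\rho\eta-u_{\rho\eta}v_{\rho\eta}},
 \qquad
 A_{a,\rho\eta}=\int\frac{d^{1+a/2}}{(1+\delta_\rho d)(1+\delta_\eta d)}\,\dd F_D(d),
\]
with $u_{\rho\eta}$ and $v_{\rho\eta}$ the correspondingly mixed integrals. It is not a multiple of $a_{a,\rho}a_{b,\eta}$.

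A quick check exposes the slip. Both companion matrices are symmetric, so $\psi^{\rho,\eta}_{ab,n}=\psi^{\rho,\eta}_{ba,n}$ exactly. By contrast, $a_{a,\rho}a_{b,\eta}\ne a_{b,\rho}a_{a,\eta}$ in general for $\rho\ne\eta$ once $F_D$ is non-degenerate. The two expressions coincide when $r=0$, because $F_D$ is then a point mass, but they differ when $r\ge1$. Your version would therefore give wrong cross-ridge correlations precisely in the pervasive-factor regime the theorem targets.

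A smaller omission concerns degenerate directions. Since $\mC_K$ is only positive semidefinite, Cram\'er--Wold directions with $\bm a^\top\mC_K\bm a=0$ can occur, and de Jong's normalized criterion does not apply to them. There you need the separate, easy observation that the conditional variance tends to zero, so the projection vanishes in probability. The paper handles this by splitting on $\{V_n<\eta\}$ and then letting $\eta\downarrow0$.
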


Theorem~\ref{thm:cauchy_joint} also defines a theoretical fixed-level benchmark.  Let $\bm V=(V_1,\ldots,V_K)^\top\sim N_K(\0,\mC_K)$ and define
\[
 T_{\mathrm{CC},\infty}^{0}
 =\sum_{k=1}^K\varpi_k
 \tan\!\left[\pi\{\Phi(V_k)-1/2\}\right].
\]
Let $c_{\alpha,K}$ be the unique value satisfying
$\Pp(T_{\mathrm{CC},\infty}^{0}>c_{\alpha,K})=\alpha$; its existence and uniqueness follow from Lemma~\ref{lem:cauchy_tail}.  This critical value is used only to characterize the oracle joint-limit benchmark.  The implemented analytic Cauchy procedure uses the standard-Cauchy cutoff $\cot(\pi\alpha)$ and requires no estimate of $\mC_K$.

\begin{theorem}\label{thm:cauchy_size}
Under the conditions of Theorem~\ref{thm:cauchy_joint},
\[
 \Pp\{T_{\mathrm{CC},n}>c_{\alpha,K}\}\to\alpha.
\]
For the analytic Cauchy $p$-value, at every fixed $\alpha\in(0,1/2)$,
\[
 \Pp(p_{\mathrm{CC},n}\le\alpha)
 \to
 \Pp\{T_{\mathrm{CC},\infty}^{0}\ge\cot(\pi\alpha)\}.
\]
Furthermore,
\[
 \lim_{\alpha\downarrow0}\lim_{n\to\infty}
 \frac{\Pp(p_{\mathrm{CC},n}\le\alpha)}{\alpha}=1.
\]
Thus the oracle joint-limit benchmark is asymptotically exact at an ordinary fixed level, whereas the implemented analytic Cauchy $p$-value has the dependence-robust small-tail validity associated with the Cauchy combination principle.
\end{theorem}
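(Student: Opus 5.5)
Theorem~\ref{thm:cauchy_joint} gives joint convergence of $\bm Z_n=(Z_{n,1},\ldots,Z_{n,K})^\top$ to $\bm V\sim N_K(\0,\mC_K)$. The plan is to read all three claims off this limit by the continuous mapping theorem, and then to do a tail analysis of the limiting Cauchy-type variable. The map
\[
 f(\bm z)=\sum_{k}\varpi_k\tan[\pi\{\Phi(z_k)-1/2\}]
\]
is continuous from $\R^K$ to $\R$, because $\Phi(z_k)\in(0,1)$ keeps the argument of $\tan$ inside $(-\pi/2,\pi/2)$. Since $p_{n,k}=1-\Phi(Z_{n,k})$ gives $\tfrac12-p_{n,k}=\Phi(Z_{n,k})-\tfrac12$, we have $T_{\mathrm{CC},n}=f(\bm Z_n)$. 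The continuous mapping theorem then yields $T_{\mathrm{CC},n}\dto T^0_{\mathrm{CC},\infty}=f(\bm V)$.

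For the first claim, I will use that $T^0_{\mathrm{CC},\infty}$ has a continuous distribution function. Each diagonal entry of $\mC_K$ is one, so $V_1$ is a nondegenerate normal. Conditioning on the remaining coordinates, or invoking Lemma~\ref{lem:cauchy_tail}, shows that $f(\bm V)$ has no atoms. Hence $\Pp\{T_{\mathrm{CC},n}>c\}\to\Pp\{f(\bm V)>c\}$ at every $c$, and taking $c=c_{\alpha,K}$ gives $\alpha$.

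For the second claim, note that $\alpha\in(0,1/2)$ and $\arctan$ is strictly increasing. Therefore
\[
 \{p_{\mathrm{CC},n}\le\alpha\}=\{T_{\mathrm{CC},n}\ge\tan(\pi/2-\pi\alpha)\}=\{T_{\mathrm{CC},n}\ge\cot(\pi\alpha)\}.
\]
Weak convergence together with continuity of the limit at $\cot(\pi\alpha)$ gives the stated limit.

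The third claim reduces to showing
\[
 \Pp\{f(\bm V)\ge t\}\sim \frac{1}{\pi t}\qquad\text{as } t\to\infty,
\]
because $\cot(\pi\alpha)\sim(\pi\alpha)^{-1}$. Each summand $C_k=\tan[\pi\{\Phi(V_k)-1/2\}]$ is marginally standard Cauchy, since $\Phi(V_k)$ is uniform. The dependence is through a Gaussian copula with unit-diagonal correlation matrix $\mC_K$, whose off-diagonal entries lie in $[0,1]$.

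This is the tail result of \citet{LiuXie2020}, and I expect Lemma~\ref{lem:cauchy_tail} to record it. I would argue directly as follows. Fix $\epsilon\in(0,1)$ and $\delta>0$.

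\textbf{Upper bound.} If $f(\bm V)\ge t$, then either some $C_k\ge \delta t$ for every $k$ in a set carrying weight close to one, or a single $C_k$ is large. Concretely, I will use the union
\[
 \{f\ge t\}\subset\bigcup_k\Big\{C_k\ge (1-\epsilon)t,\ \max_{j\ne k}C_j\le\epsilon t\Big\}\cup\big\{\text{two coordinates exceed }\epsilon t\big\}\cup E_t.
\]
Here $E_t$ is the event where negative summands are large enough to matter.

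\textbf{Pairwise exceedances.} For a bivariate normal with correlation $r<1$, $\Pp(V_k>s,V_j>s)=o(\Pp(V_k>s))$. This handles the case where pairs are jointly large but one of them is not large enough. The difficulty is the case $r=1$: if $r(\rho^{(k)},\rho^{(\ell)})=1$, then $V_k=V_\ell$ almost surely. I will handle it by merging such coordinates into a single coordinate with combined weight, which leaves the sum of weights at one.

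\textbf{Lower-tail control.} Negative Cauchy tails are heavy too, so I need the probability that a large positive $C_k$ co-occurs with a large negative $C_j$. With $r\ge0$ this is $o(1/t)$, by the bivariate normal tail with $r>-1$.

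\textbf{Conclusion of the upper bound.} These bounds give
\[
 \limsup_t\, t\,\Pp(f\ge t)\le \sum_k\frac{\varpi_k}{\pi(1-\epsilon)}+O(\epsilon),
\]
since $\Pp\{C_k\ge(1-\epsilon)t/\varpi_k\}\sim\varpi_k/\{\pi(1-\epsilon)t\}$.

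\textbf{Lower bound.} A symmetric argument gives the matching bound. Letting $\epsilon\to0$ yields the asymptotic equivalence and hence the limit one.

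The main obstacle is this last tail equivalence under arbitrary Gaussian-copula dependence, which the paper attributes to the Cauchy combination principle. Within it, the delicate points are the degenerate perfectly correlated blocks and the control of negative large summands. I would cite \citet{LiuXie2020} and Lemma~\ref{lem:cauchy_tail} for this step, and verify only that their hypotheses hold: bivariate normality of every pair and correlations strictly below one after merging.
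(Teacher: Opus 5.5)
Your proposal follows the paper's proof. $T_{\mathrm{CC},n}=f(Z_{n,1},\ldots,Z_{n,K})$ converges in law by Theorem~\ref{thm:cauchy_joint} and continuous mapping, and atomlessness of $T_{\mathrm{CC},\infty}^{0}$ from Lemma~\ref{lem:cauchy_tail} justifies evaluating at $c_{\alpha,K}$ and at $\cot(\pi\alpha)$. The small-$\alpha$ claim then reduces to $\Pp(T_{\mathrm{CC},\infty}^{0}>t)\sim(\pi t)^{-1}$, which the paper proves as you sketch: merge perfectly correlated coordinates, then apply Gaussian orthant bounds to same-sign and opposite-sign joint exceedances.

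Two small cautions. Your alternative no-atom argument, conditioning on the remaining coordinates, fails when $\mC_K$ is singular (for example, when two grid statistics are perfectly correlated), so use the lemma's decomposition along a nonnegative Perron eigenvector instead. Also, the opposite-sign bound is needed for the lower tail bound, not the upper one.
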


The distinction in Theorem~\ref{thm:cauchy_size} is important.  The critical value $c_{\alpha,K}$ depends on the unknown matrix $\mC_K$ and is retained only as a theoretical benchmark.  Except under special dependence structures, the analytic standard-Cauchy cutoff need not be exactly level $\alpha$ for a fixed conventional value such as $0.05$.  It is used because it is closed form and avoids estimating cross-ridge dependence; the Monte Carlo study evaluates its finite-sample behavior at the 5\% level.

The same finite-grid argument gives an explicit power function.  Put
\[
 \bm\Lambda_{\mathcal R}(\calG)
 =\{\Lambda_{\rho^{(1)}}(\calG),\ldots,
 \Lambda_{\rho^{(K)}}(\calG)\}^\top
\]
and, for $\bm V\sim N_K(\0,\mC_K)$, define
\[
 T_{\mathrm{CC},\infty}(\calG)
 =\sum_{k=1}^K\varpi_k
 \tan\!\left[
 \pi\{\Phi(V_k+\Lambda_{\rho^{(k)}}(\calG))-1/2\}
 \right].
\]

\begin{theorem}\label{thm:cauchy_power}
Suppose Assumption~\ref{ass:model} holds.  Under the local alternative \eqref{eq:local_alternative},
\[
 (Z_{n,1},\ldots,Z_{n,K})^\top
 \dto
 N_K\{\bm\Lambda_{\mathcal R}(\calG),\mC_K\}.
\]
Consequently, the limiting power of the analytic ERHT--CC test is
\[
 \beta_{\mathrm{CC},\alpha}^{\mathrm A}(\calG)
 =\Pp\{T_{\mathrm{CC},\infty}(\calG)
 \ge\cot(\pi\alpha)\},
\]
and the limiting power of the oracle joint-limit benchmark is
\[
 \beta_{\mathrm{CC},\alpha}^{\mathrm J}(\calG)
 =\Pp\{T_{\mathrm{CC},\infty}(\calG)>c_{\alpha,K}\}.
\]
Under the general shifted model \eqref{eq:H1_model}, if \eqref{eq:primitive_power} holds, then
\[
 p_{\mathrm{CC},n}\pto0,
 \qquad
 \Pp_{H_1}(p_{\mathrm{CC},n}\le\alpha)\to1.
\]
\end{theorem}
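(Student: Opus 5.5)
The proof of Theorem~\ref{thm:cauchy_power} runs in three steps: a joint limit under the local alternative, the continuous mapping theorem for the two limiting powers, and a consistency argument.

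\textbf{Step 1: joint limit under $H_{1,n}(\calG)$.} For each $\rho^{(k)}$, decompose $T_n(\rho^{(k)})$ by writing
\[
 \wh\btheta-\btheta_{0,p}=(\wh\btheta-\btheta_p)+n^{-1/4}\bm d_p .
\]
Because $\wh\mR_n$, $\wh w_i$ and all nuisance estimators are shift invariant, $\wh\mu_n$ and $\wh\sigma_{D,n}^2$ have the same joint law as under $H_0$. Hence Theorem~\ref{thm:feasible} gives their consistency uniformly over the finite grid. This yields
\[
 Z_{n,k}=Z_{n,k}^0+\frac{n^{1/2}\bm d_p^\top\wh\mQ_n\bm d_p}{\sqrt{n}\,\wh\sigma_{D,n}}
 +\frac{2n^{3/4}\bm d_p^\top\wh\mQ_n(\wh\btheta-\btheta_p)}{\sqrt{n}\,\wh\sigma_{D,n}},
\]
where $Z_{n,k}^0$ has exactly the null law.

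\textbf{Step 2: the signal and cross terms.} I would take the deterministic equivalents from the proof of Theorem~\ref{thm:power}; the grid is finite, so no uniformity issue arises. They give $\bm d_p^\top\wh\mQ_n\bm d_p\pto q_{\rho^{(k)}}(\calG)$, since the bulk-only signal avoids the pervasive directions. They also give that the cross term is $o_{\Pp}(1)$, since $\wh\btheta-\btheta_p=O_{\Pp}(\sqrt{p/n})$ in norm while its projection on a fixed direction is $O_{\Pp}(n^{-1/2})$. It follows that
\[
 Z_{n,k}=Z_{n,k}^0+\Lambda_{\rho^{(k)}}(\calG)+o_{\Pp}(1)
\]
jointly in $k$. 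By Theorem~\ref{thm:cauchy_joint} and Slutsky, the vector $(Z_{n,k})_k$ converges to $N_K\{\bm\Lambda_{\mathcal R}(\calG),\mC_K\}$. The main obstacle is that Theorem~\ref{thm:cauchy_joint} is stated for the null statistic computed from the \emph{same} sample. I would handle this by noting that under the model the data equal a shifted null sample, so $Z_{n,k}^0$ is literally the null statistic of the unshifted data $\bm X_i-n^{-1/4}\bm d_p$. The cross term would be controlled by the quadratic-form-accurate spatial-median expansion already used for Theorem~\ref{thm:power}.

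\textbf{Step 3: limiting powers.} The map
\[
 \bm z\mapsto\sum_k\varpi_k\tan[\pi\{\Phi(z_k)-1/2\}]
\]
is continuous on $\R^K$, and $T_{\mathrm{CC},n}$ equals this map applied to $(Z_{n,k})_k$. The continuous mapping theorem therefore gives $T_{\mathrm{CC},n}\dto T_{\mathrm{CC},\infty}(\calG)$. The event $\{p_{\mathrm{CC},n}\le\alpha\}$ equals $\{T_{\mathrm{CC},n}\ge\cot(\pi\alpha)\}$. The limit law puts no mass at $\cot(\pi\alpha)$ or at $c_{\alpha,K}$, because it is a continuous function of a Gaussian vector with positive-variance coordinates whose level sets are Lebesgue-null; the tail argument of Lemma~\ref{lem:cauchy_tail} supplies this. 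Portmanteau then gives both $\beta^{\mathrm A}_{\mathrm{CC},\alpha}$ and $\beta^{\mathrm J}_{\mathrm{CC},\alpha}$.

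\textbf{Step 4: consistency under \eqref{eq:H1_model}.} The proof of Theorem~\ref{thm:power} shows that under \eqref{eq:primitive_power} each $Z_{n,k}\pto+\infty$. Then $\tan[\pi\{1/2-p_{n,k}\}]\pto+\infty$ for every $k$. Since the weights are positive and no term can be negative, because all terms diverge to $+\infty$, we get $T_{\mathrm{CC},n}\pto\infty$. This implies $p_{\mathrm{CC},n}\pto0$, and hence the rejection probability tends to one.
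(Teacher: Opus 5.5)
Your proposal is correct and follows the paper's proof essentially step for step. The paper likewise uses translation invariance and the expansion \eqref{eq:local_Z_expansion} from the proof of Theorem~\ref{thm:power} to write $Z_{n,k}=Z_{\varepsilon,n}(\rho^{(k)})+\Lambda_{\rho^{(k)}}(\calG)+o_{\Pp}(1)$ uniformly over the finite grid, applies Theorem~\ref{thm:cauchy_joint} to the noise statistics, uses continuous mapping together with the shifted-continuity part of Lemma~\ref{lem:cauchy_tail}, and obtains consistency from $T_{\mathrm{CC},n}\ge\min_k\tan[\pi\{1/2-p_{n,k}\}]$.

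Two of your side justifications should be dropped in favour of the results you cite. First, the cross term needs Lemma~\ref{lem:signal_noise_cross_new}, because $\wh\mQ_n\bm d_p$ is a data-dependent direction, so the ``fixed direction'' heuristic does not apply. Second, the absence of atoms comes from the Perron--Frobenius monotone-direction part of Lemma~\ref{lem:cauchy_tail}, which uses that $\mC_K$ is entrywise nonnegative, not from its tail argument. Lebesgue-null level sets are not enough when $\mC_K$ is singular, since the Gaussian limit then has no density on $\R^K$.
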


\section{Monte Carlo evidence}\label{sec:simulation}

We compare four one-sample location tests.  The fixed-ridge statistic $T_n(\rho)$ is referred to as the elliptical regularized Hotelling (ERHT) statistic.  The proposed ERHT--CC procedure combines its ten marginal $p$-values over $\mathcal R_{10}=\{0.1,0.2,\ldots,1.0\}$ with equal Cauchy weights.  CQ denotes the one-sample specialization of the diagonal-deleted quadratic $U$-statistic of \citet{ChenQin2010}.  WPL denotes the spatial-sign test of \citet{WangPengLi2015}; following its numerical implementation, each coordinate is standardized by its sample standard deviation before the spatial signs are formed.

RHT--CC is included as a covariance-based ridge benchmark.  For every $\rho\in\mathcal R_{10}$, we compute the regularized Hotelling statistic of \citet{ChenPaulPrenticeWang2011} from the sample mean and ordinary sample covariance matrix, convert it to a marginal $p$-value using its random-matrix calibration, and aggregate the ten $p$-values by exactly the same equal-weight Cauchy rule used for ERHT--CC.  Thus the comparison between ERHT--CC and RHT--CC isolates the effect of replacing the sample mean and covariance matrix by the sample spatial median and its centered spatial-sign covariance matrix; it is not driven by different ridge-selection rules.

All observations are generated from
\[
 \bm X_i=\btheta_p+\bm\varepsilon_i,\qquad i=1,\ldots,n,
\]
where $\mOmega_p$ is standardized by $\tr(\mOmega_p)=p$.  We consider the bounded-spectrum AR(1) matrix and the pervasive compound-symmetry matrix
\[
 \Omega_{p,jk}=0.5^{|j-k|},
 \qquad
 \mOmega_p=0.5\I_p+0.5\1_p\1_p^\top,
\]
respectively.  The latter has largest eigenvalue $1+0.5(p-1)$ and therefore directly examines the strong-dependence regime in Assumption~\ref{ass:model}.

Three centered elliptical error distributions are used:
\begin{align*}
 \text{Gaussian:}\quad
 &\bm\varepsilon_i\sim N_p(\0,\mOmega_p),\\
 \text{Multivariate t-distribution:}\quad
 &\bm\varepsilon_i\sim t_{5,p}\!\left(\0,\frac35\mOmega_p\right),\\
 \text{normal mixture:}\quad
 &\bm\varepsilon_i\sim
 0.8N_p\!\left(\0,\frac1{2.6}\mOmega_p\right)
 +0.2N_p\!\left(\0,\frac9{2.6}\mOmega_p\right).
\end{align*}
For the multivariate $t$ distribution, the second argument denotes its scale matrix, so its covariance is $\mOmega_p$.  The normal mixture also has covariance $\mOmega_p$.  Consequently, the three designs differ in radial-tail behavior while sharing the same covariance normalization.

For the size study, $H_0:\btheta_p=\0$, $n\in\{100,200\}$, and $p\in\{100,200,400\}$.  Both ridge-based methods use $\mathcal R_{10}$, and ERHT--CC uses the Bartlett center correction $a=8n/p$ implemented in the accompanying program.  The nominal level is $0.05$, and every entry is based on 1,000 Monte Carlo replications.

For the power study, we fix $(n,p)=(100,200)$ and draw a new location vector independently in each replication.  Three signal designs are considered:
\begin{align*}
 \text{isotropic dense:}\quad
 &\btheta_p\sim N_p(\0,c\I_p),\\
 \text{covariance-aligned dense:}\quad
 &\btheta_p\sim N_p(\0,c\mOmega_p),\\
 \text{four-sparse:}\quad
 &\theta_{p,j}=c\xi_j\,1\{j\in\mathcal S_p\},
 \qquad |\mathcal S_p|=4,
\end{align*}
where $\mathcal S_p$ is sampled uniformly from all four-element subsets of $\{1,\ldots,p\}$ and the nonzero signs $\xi_j$ are independent Rademacher variables.  For the two dense designs,
\[
 c\in\{0,0.00025,0.00050,\ldots,0.00250\},
\]
whereas for the sparse design,
\[
 c\in\{0,0.05,0.10,\ldots,0.50\}.
\]
The different grids account for the fact that $c$ is a variance multiplier in the dense alternatives and a coordinate amplitude in the sparse alternative.  Each power point is based on 2,000 replications; hence its Monte Carlo standard error is at most $\{0.25/2000\}^{1/2}=0.0112$.

\begin{table}[!htbp]
\centering
\caption{Empirical sizes (\%) under AR(1) dependence with $\varrho=0.5$.}\label{tab:size_ar}
\small
\setlength{\tabcolsep}{7pt}
\begin{tabular}{cc l ccc}
\toprule
$n$ & $p$ & Method & Gaussian & Multivariate $t_5$ & Normal mixture \\
\midrule
100 & 100 & ERHT--CC & 5.1 & 3.3 & 5.4 \\
100 & 100 & CQ & 6.4 & 5.6 & 5.2 \\
100 & 100 & RHT--CC & 7.5 & 4.4 & 5.4 \\
100 & 100 & WPL & 7.9 & 6.3 & 6.5 \\
\midrule
100 & 200 & ERHT--CC & 6.7 & 5.3 & 6.4 \\
100 & 200 & CQ & 6.3 & 5.4 & 6.8 \\
100 & 200 & RHT--CC & 9.0 & 3.3 & 5.6 \\
100 & 200 & WPL & 8.6 & 7.4 & 7.4 \\
\midrule
100 & 400 & ERHT--CC & 7.3 & 5.8 & 7.2 \\
100 & 400 & CQ & 6.4 & 4.6 & 4.8 \\
100 & 400 & RHT--CC & 7.8 & 2.5 & 5.3 \\
100 & 400 & WPL & 10.3 & 7.7 & 8.1 \\
\midrule
200 & 100 & ERHT--CC & 4.1 & 3.9 & 4.9 \\
200 & 100 & CQ & 6.7 & 6.8 & 6.6 \\
200 & 100 & RHT--CC & 5.9 & 4.9 & 4.7 \\
200 & 100 & WPL & 6.8 & 6.5 & 6.2 \\
\midrule
200 & 200 & ERHT--CC & 4.6 & 4.5 & 5.6 \\
200 & 200 & CQ & 6.6 & 4.7 & 5.7 \\
200 & 200 & RHT--CC & 6.5 & 4.4 & 5.2 \\
200 & 200 & WPL & 7.4 & 5.8 & 6.6 \\
\midrule
200 & 400 & ERHT--CC & 6.1 & 4.8 & 5.4 \\
200 & 400 & CQ & 6.0 & 4.8 & 6.4 \\
200 & 400 & RHT--CC & 7.1 & 2.8 & 5.0 \\
200 & 400 & WPL & 7.9 & 7.1 & 6.9 \\
\bottomrule
\end{tabular}
\end{table}

\begin{table}[!htbp]
\centering
\caption{Empirical sizes (\%) under compound-symmetry dependence with $\varrho=0.5$.}\label{tab:size_cs}
\small
\setlength{\tabcolsep}{7pt}
\begin{tabular}{cc l ccc}
\toprule
$n$ & $p$ & Method & Gaussian & Multivariate $t_5$ & Normal mixture \\
\midrule
100 & 100 & ERHT--CC & 7.0 & 4.8 & 5.1 \\
100 & 100 & CQ & 7.0 & 6.3 & 7.1 \\
100 & 100 & RHT--CC & 7.9 & 2.6 & 3.8 \\
100 & 100 & WPL & 6.9 & 6.5 & 6.2 \\
\midrule
100 & 200 & ERHT--CC & 6.7 & 5.6 & 7.8 \\
100 & 200 & CQ & 6.8 & 6.8 & 5.4 \\
100 & 200 & RHT--CC & 7.9 & 2.7 & 4.4 \\
100 & 200 & WPL & 6.7 & 6.7 & 6.2 \\
\midrule
100 & 400 & ERHT--CC & 6.2 & 5.3 & 6.9 \\
100 & 400 & CQ & 6.4 & 5.8 & 6.6 \\
100 & 400 & RHT--CC & 7.1 & 1.2 & 3.6 \\
100 & 400 & WPL & 6.2 & 6.3 & 6.7 \\
\midrule
200 & 100 & ERHT--CC & 4.3 & 4.8 & 4.9 \\
200 & 100 & CQ & 7.6 & 7.0 & 5.9 \\
200 & 100 & RHT--CC & 7.1 & 4.9 & 4.1 \\
200 & 100 & WPL & 7.3 & 6.5 & 7.0 \\
\midrule
200 & 200 & ERHT--CC & 6.0 & 5.3 & 5.2 \\
200 & 200 & CQ & 7.3 & 6.8 & 6.2 \\
200 & 200 & RHT--CC & 6.8 & 3.5 & 3.4 \\
200 & 200 & WPL & 7.5 & 6.3 & 6.8 \\
\midrule
200 & 400 & ERHT--CC & 6.8 & 5.8 & 4.5 \\
200 & 400 & CQ & 7.4 & 6.5 & 6.9 \\
200 & 400 & RHT--CC & 7.4 & 1.6 & 2.9 \\
200 & 400 & WPL & 7.6 & 6.1 & 7.0 \\
\bottomrule
\end{tabular}
\end{table}

Tables~\ref{tab:size_ar} and~\ref{tab:size_cs} show that ERHT--CC remains reasonably close to the 5\% target across the three radial distributions, both dependence structures, and all six $(n,p)$ combinations.  CQ and WPL are mildly liberal in a number of settings, with the largest distortion occurring for WPL when $p/n$ is large under AR(1).  RHT--CC is also liberal under Gaussian sampling but becomes conservative under the two heavy-tailed distributions, particularly under compound symmetry.  Since ERHT--CC and RHT--CC use the same ridge grid and aggregation rule, their different behavior under heavy tails reflects the spatial-median and spatial-sign construction rather than tuning-parameter selection.

Figures~\ref{fig:power_ar} and~\ref{fig:power_cs} report the power curves under AR(1) and compound-symmetry dependence, respectively.  In each figure, the rows correspond to $N_p(\0,\mOmega_p)$, $t_{5,p}(\0,3\mOmega_p/5)$, and the standardized normal mixture, while the columns correspond to the isotropic dense, covariance-aligned dense, and four-sparse alternatives.

\begin{figure}[htbp]
\centering
\includegraphics[width=\textwidth]{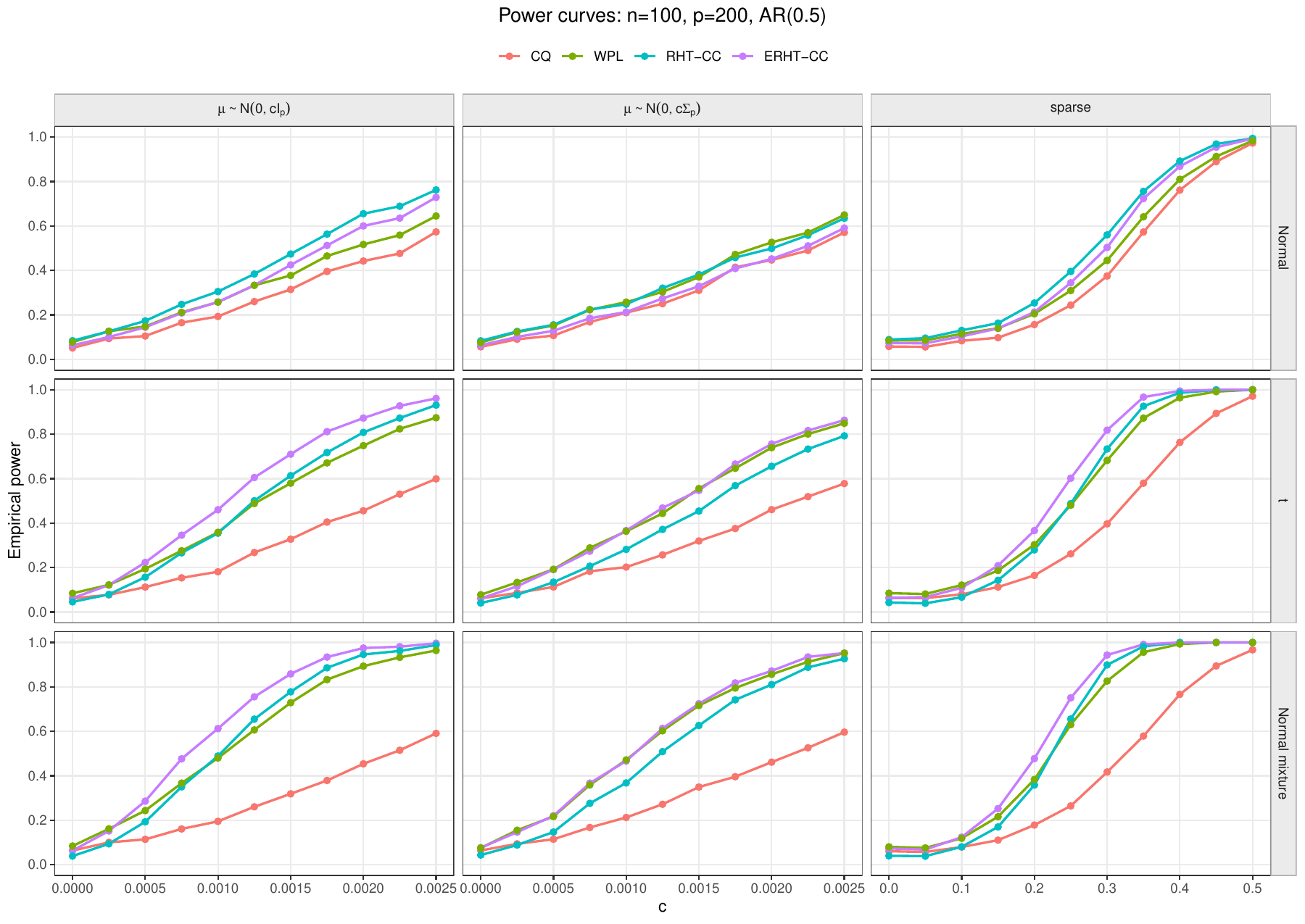}
\caption{Empirical power under AR(1) dependence, $\Omega_{p,jk}=0.5^{|j-k|}$, with $(n,p)=(100,200)$. }\label{fig:power_ar}
\end{figure}

Under AR(1), whose eigenvalues remain uniformly bounded, ERHT--CC and WPL generally lie in the same broad power range.  Their differences are especially modest for covariance-aligned and sparse alternatives, although the ranking can depend on the radial distribution and signal strength.  Under Gaussian sampling, RHT--CC is sometimes slightly more powerful, which is consistent with the efficiency of covariance-based regularization when second-moment modeling is correct.  Once the errors are Multivariate $t_5$ or normal mixtures, ERHT--CC is usually above RHT--CC and CQ, with a particularly visible gain for dense isotropic signals.  The robust procedure therefore pays little under bounded-spectrum dependence while protecting power against radial heavy tails.

\begin{figure}[!htbp]
\centering
\includegraphics[width=\textwidth]{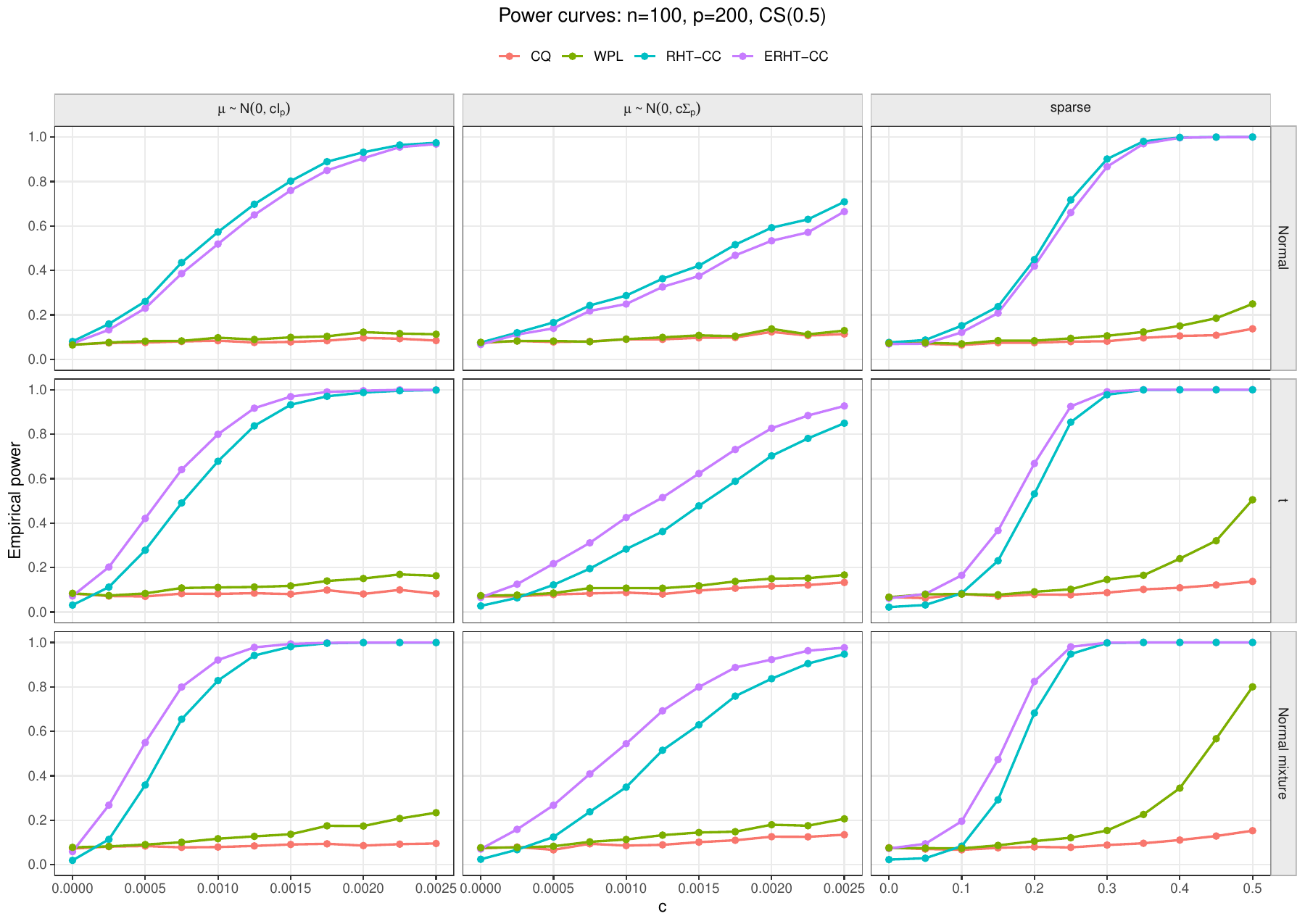}
\caption{Empirical power under compound-symmetry dependence, $\mOmega_p=0.5\I_p+0.5\1_p\1_p^\top$, with $(n,p)=(100,200)$. }\label{fig:power_cs}
\end{figure}

The distinction is much sharper under compound symmetry.  ERHT--CC and RHT--CC exploit the inverse angular-scatter structure and attain rapidly increasing power, whereas WPL can remain close to its null rejection probability for dense signals and increases much more slowly for sparse signals.  Thus, when the largest eigenvalue is proportional to $p$, ERHT--CC is substantially more powerful than WPL over most of the transition region.  The heavy-tailed rows also display the main robustness advantage: ERHT--CC uniformly dominates CQ and is appreciably above RHT--CC at moderate signal strengths for both dense alternatives and for the sparse alternative before the power curves reach one.  Under Gaussian errors, RHT--CC can retain a small advantage in some panels, but this advantage disappears or reverses under Multivariate $t_5$ and normal-mixture sampling.

Taken together, the power results agree with the theoretical motivation.  Under weak spectral dependence, the proposed test is competitive with the existing spatial-sign procedure; under pervasive dependence, ridge normalization of the centered SSCM produces a clear gain.  Replacing Euclidean magnitudes by the spatial median and spatial signs is most consequential under heavy tails, where ERHT--CC substantially improves on the covariance-based RHT--CC and the CQ quadratic test while maintaining the favorable dependence adaptation of a regularized Hotelling statistic.

\section{Real Data Application}\label{sec:realdata}

We illustrate the proposed procedure using the publicly available GSE19804 data set from the NCBI Gene Expression Omnibus
(\href{https://www.ncbi.nlm.nih.gov/geo/query/acc.cgi?acc=GSE19804}{GEO accession GSE19804}).
The study profiles transcriptional changes in lung cancer among nonsmoking women in Taiwan.  Expression was measured on the Affymetrix Human Genome U133 Plus 2.0 Array (GPL570) for 60 lung-tumor specimens and the corresponding 60 adjacent normal-tissue specimens, giving 60 matched pairs and 54,675 probe-level measurements per specimen.

Let $\bm T_i\in\R^p$ and $\bm N_i\in\R^p$ denote the tumor and matched normal expression vectors for patient $i$, respectively, and define the paired difference
\[
 \bm X_i=\bm T_i-\bm N_i,
 \qquad i=1,\ldots,60,
 \qquad p=54{,}675.
\]
The resulting one-sample problem tests whether the multivariate location of the paired expression changes is zero.  We apply CQ, WPL, RHT--CC, and ERHT--CC using the same implementations and, for the two ridge procedures, the same grid $\mathcal R_{10}$ as in Section~\ref{sec:simulation}.

To describe the marginal signal strength, for each probe $j$ we compute the paired Student statistic
\[
 t_j=\frac{\overline X_j}{s_j/\sqrt{60}},
\]
where $\overline X_j$ and $s_j$ are the sample mean and standard deviation of $X_{1j},\ldots,X_{60,j}$.  Positive values indicate higher expression in the tumor tissue.  Figure~\ref{fig:gse19804_t} shows a broad, asymmetric distribution with substantial mass in both tails, indicating that the full probe set contains many pronounced marginal changes.

\begin{figure}[htbp]
\centering
\includegraphics[width=0.82\textwidth]{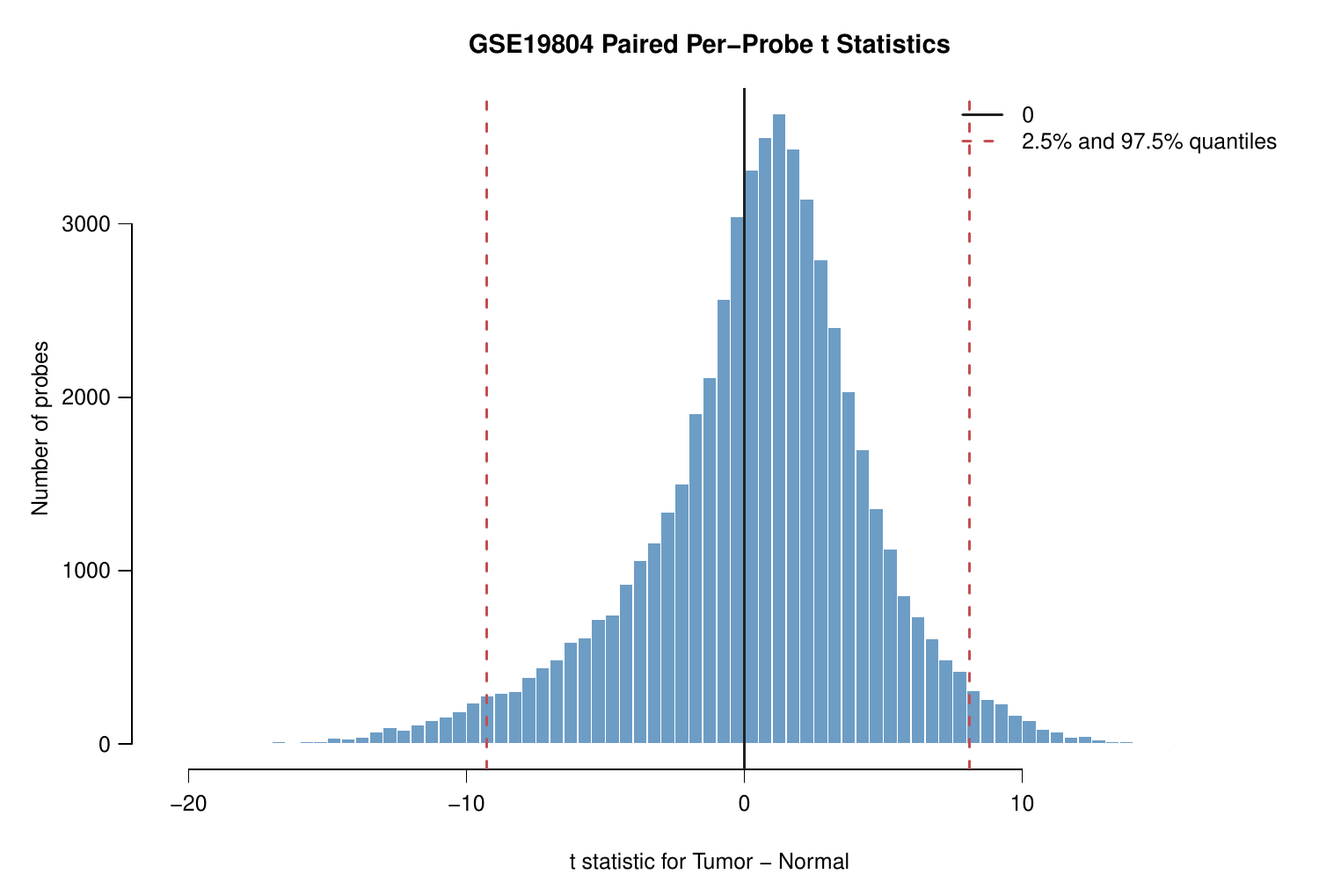}
\caption{Histogram of the paired probe-level Student-t statistics for the 54,675 probes in GSE19804.  The solid vertical line marks zero and the dashed lines mark the empirical 2.5\% and 97.5\% quantiles.}\label{fig:gse19804_t}
\end{figure}

All four global procedures strongly reject the null hypothesis when the complete probe set is used.  Table~\ref{tab:gse19804_full} reports the numerical results.  The Cauchy-combined $p$-values reach the numerical reporting floor used in the implementation.

\begin{table}[htbp]
\centering
\caption{Global tests using all 54,675 probes in GSE19804.}\label{tab:gse19804_full}
\small
\begin{tabular}{lcc}
\toprule
Method & Standardized statistic & $p$-value \\
\midrule
CQ      & 34.9413 & $8.77\times10^{-268}$ \\
WPL     & 35.2884 & $4.43\times10^{-273}$ \\
RHT--CC & --      & $<10^{-15}$ \\
ERHT--CC& --      & $<10^{-15}$ \\
\bottomrule
\end{tabular}
\end{table}

The overwhelming significance of the full analysis leaves little scope for distinguishing the four procedures.  We therefore conduct a descriptive subsampling experiment that focuses on probes with weak marginal evidence.  Specifically, we form a pool containing the 23,508 probes satisfying $|t_j|<2$.  For each $n\in\{30,40,50,60\}$ and $p\in\{40,50,60\}$, one repetition samples $n$ matched pairs and $p$ probes without replacement from the 60 pairs and the weak-signal pool, respectively, and then applies the four tests at level 0.05.  Each configuration is repeated 10,000 times.  Because the probe pool is selected from the same data, the resulting rejection frequencies are intended as a comparative sensitivity analysis rather than as estimates of type-I error or model-based power.

\begin{table}[htbp]
\centering
\caption{Rejection frequencies in the weak-marginal-signal subsampling experiment.}\label{tab:gse19804_subsample}
\small
\setlength{\tabcolsep}{9pt}
\begin{tabular}{cccccc}
\toprule
$n$ & $p$ & CQ & WPL & RHT--CC & ERHT--CC \\
\midrule
30 & 40 & 0.1140 & 0.2658 & 0.2461 & \textbf{0.3386} \\
30 & 50 & 0.1067 & 0.2850 & 0.2798 & \textbf{0.3997} \\
30 & 60 & 0.1095 & 0.3133 & 0.3235 & \textbf{0.4547} \\
\midrule
40 & 40 & 0.1201 & 0.2896 & 0.3104 & \textbf{0.4147} \\
40 & 50 & 0.1202 & 0.3268 & 0.3727 & \textbf{0.4935} \\
40 & 60 & 0.1193 & 0.3509 & 0.4286 & \textbf{0.5620} \\
\midrule
50 & 40 & 0.1308 & 0.3389 & 0.3887 & \textbf{0.5019} \\
50 & 50 & 0.1280 & 0.3807 & 0.4586 & \textbf{0.5904} \\
50 & 60 & 0.1267 & 0.4096 & 0.5327 & \textbf{0.6661} \\
\midrule
60 & 40 & 0.1191 & 0.4058 & 0.4534 & \textbf{0.5966} \\
60 & 50 & 0.1155 & 0.4656 & 0.5637 & \textbf{0.6996} \\
60 & 60 & 0.1298 & 0.5097 & 0.6449 & \textbf{0.7705} \\
\bottomrule
\end{tabular}
\end{table}

ERHT--CC has the largest rejection frequency in every configuration in Table~\ref{tab:gse19804_subsample}.  The differences become more pronounced as either the number of pairs or the number of included probes increases.  For example, at $(n,p)=(60,60)$, the rejection frequencies are 0.7705 for ERHT--CC, 0.6449 for RHT--CC, 0.5097 for WPL, and 0.1298 for CQ.  The comparison suggests that ridge regularization of the centered SSCM is effective at accumulating many weak and correlated expression changes, while the spatial construction yields an additional gain over its covariance-based RHT--CC counterpart in this data set.

\FloatBarrier
\section{Conclusion}\label{sec:discussion}

We proposed ERHT--CC for one-sample high-dimensional location testing under elliptically symmetric distributions with heavy tails and pervasive dependence.  The procedure combines the sample spatial median, a centered SSCM, ridge regularization, and fixed-grid Cauchy aggregation.  We established feasible null calibration, characterized local power, and showed how aggregation avoids selecting an alternative-specific ridge parameter.  The simulations indicate favorable performance relative to several existing dense tests when the observations are heavy tailed or the dependence contains pervasive components, and the GSE19804 analysis illustrates its sensitivity to weak, correlated expression changes.

Two extensions are of particular interest.  First, the present quadratic statistic targets dense alternatives; combining it with robust maximum or thresholding components could yield a sparsity-adaptive procedure under pervasive dependence, building on the sparse and max--sum testing ideas of \citet{ZhongChenXu2013}, \citet{CaiLiuXia2014}, \citet{ChenLiZhong2019}, \citet{ZhangFeng2024}, and \citet{LiuFengZhaoWang2027}.  Second, extending the centered-SSCM regularization and Cauchy aggregation framework to two-sample and multi-sample location problems would complement the spatial-sign methods of \citet{FengZouWang2016} and \citet{HuangLiuZhouFeng2023}.

\clearpage
\appendix
\section{Elliptical angular decomposition and population matrices}\label{app:angular}

Throughout the appendices, all constants denoted by $C$ may change from line to line and depend only on the fixed constants in Assumption~1.  Every inverse-distance weight below is the elliptical weight $w_i=\xi_i(h_{i,p}/q_{i,p})^{1/2}$ from (2.2); no argument uses independence between $w_i$ and the spatial sign.  Write
\[
 \mP_{F,p}^\top\bm G_i=(G_{1i},\ldots,G_{ri})^\top,
 \qquad
 \mP_{B,p}^\top\bm G_i=\bm G_{B,i}.
\]
Define
\begin{equation}\label{eq:Q0_D0}
 Q_{i,p}^{0}=\mu_{0,p}+\sum_{a=1}^r\tau_{a,p}G_{ai}^2,
 \qquad
 D_{i,p}^{0}=(Q_{i,p}^{0})^{-1}.
\end{equation}
The two Gaussian quadratic forms in Section~2 satisfy
\begin{align*}
 q_{i,p}&=Q_{i,p}^{0}+u_{i,p},
 &u_{i,p}&=\frac1p\sum_{j=r+1}^p\lambda_{j,p}(G_{ji}^2-1),\\
 h_{i,p}&=1+v_{i,p},
 &v_{i,p}&=\frac1p\sum_{j=1}^p(G_{ji}^2-1).
\end{align*}
Consequently,
\[
 \ell_{i,p}=\left(\frac{1+v_{i,p}}{Q_{i,p}^{0}+u_{i,p}}\right)^{1/2}.
\]
In the eigenbasis of $\mOmega_p$, the coordinates of
$\bm Y_i=\sqrt p\,\bm U_i$ are
\begin{equation*}
 (\mP_{F,p}^{\top}\bm Y_i)_a
 =\frac{\sqrt{p\tau_{a,p}}G_{ai}}{q_{i,p}^{1/2}},
 \quad 1\le a\le r,
 \qquad
 (\mP_{B,p}^{\top}\bm Y_i)_j
 =\frac{\lambda_{j,p}^{1/2}G_{ji}}{q_{i,p}^{1/2}},
 \quad r<j\le p.
\end{equation*}

\begin{lemma}\label{lem:angular_denominator}
For every fixed integer $2\le k\le8$,
\[
 \E|u_{i,p}|^k+\E|v_{i,p}|^k\le C_kp^{-k/2}.
\]
Moreover,
\begin{equation}\label{eq:uv_max}
 \max_{1\le i\le n}\{|u_{i,p}|+|v_{i,p}|\}
 =O_{\Pp}\left(\sqrt{\frac{\log p}{p}}\right),
\end{equation}
and
\begin{equation}\label{eq:ell_D_rate}
 \max_{1\le i\le n}
 \left|\ell_{i,p}-(D_{i,p}^{0})^{1/2}\right|
 =O_{\Pp}\left(\sqrt{\frac{\log p}{p}}\right).
\end{equation}
For all $i$ and all sufficiently large $p$,
\begin{equation}\label{eq:ell_bounds}
 0<\ell_{i,p}\le\omega_-^{-1/2},
 \qquad
 0<w_i\le\bar\xi\omega_-^{-1/2}=\bar w.
\end{equation}
\end{lemma}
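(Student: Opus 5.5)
The plan treats the moment bound, the maximal bound, the transfer to $\ell_{i,p}$, and the deterministic bounds in that order.

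\emph{Moment bound.} Both $u_{i,p}$ and $v_{i,p}$ are normalized sums of independent centered variables $\lambda_{j,p}(G_{ji}^2-1)/p$ and $(G_{ji}^2-1)/p$. The weights in $u_{i,p}$ are bounded by $\omega_+$, and the bulk sum excludes the $r$ pervasive eigenvalues, which is why $u_{i,p}$ carries no factor-size coefficient. Rosenthal's inequality (or Marcinkiewicz--Zygmund) for $k\le 8$, using $\E|G^2-1|^k<\infty$, gives
\[
 \E\Bigl|\sum_j a_j(G_{ji}^2-1)\Bigr|^k\le C_k\Bigl(\sum_j a_j^2\Bigr)^{k/2}+C_k\sum_j|a_j|^k .
\]
Here $\sum_j a_j^2\le \omega_+^2p$, so the bound after dividing by $p^k$ is $C_kp^{-k/2}$.

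\emph{Maximal bound \eqref{eq:uv_max}.} Since $G_{ji}^2-1$ is sub-exponential, Bernstein's inequality yields
\[
 \Pp(|u_{i,p}|>t)\le 2\exp\{-c\,p\min(t^2,t)\}
\]
with constants depending only on $\omega_+$, and similarly for $v_{i,p}$. Take $t=M\sqrt{\log p/p}$ and apply a union bound over $n\le p/c_-$ indices. The resulting failure probability is $2n\,p^{-cM^2}\to0$ for $M$ large, which proves \eqref{eq:uv_max}.

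\emph{Rate \eqref{eq:ell_D_rate}.} I will use that $Q^0_{i,p}\ge\mu_{0,p}\ge\mu_->0$ deterministically. On the event where the maximum in \eqref{eq:uv_max} is below $\mu_-/2$, which has probability tending to one, we have $q_{i,p}\ge\mu_-/2$ and $h_{i,p}\ge1/2$. On that compact domain the map $(h,q)\mapsto(h/q)^{1/2}$ is Lipschitz in $(h,q)$ with $h\in[1/2,3/2]$ and $q\ge\mu_-/2$, so
\[
 |\ell_{i,p}-(D^0_{i,p})^{1/2}|\le C(|u_{i,p}|+|v_{i,p}|)
\]
uniformly in $i$. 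Here $(D^0_{i,p})^{1/2}=(1/Q^0_{i,p})^{1/2}$ corresponds to $h=1$. The maximal bound then transfers directly. Note that $Q^0_{i,p}$ is unbounded above because of the $G_{ai}^2$ terms, but this is harmless since only lower bounds on $q$ are needed for the Lipschitz step.

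\emph{Deterministic bounds \eqref{eq:ell_bounds}.} These follow without probability. We have $q_{i,p}=p^{-1}\bm G_i^\top\mOmega_p\bm G_i\ge\lambda_{\min}(\mOmega_p)h_{i,p}$, and $\lambda_{\min}(\mOmega_p)\ge\omega_-$ once $p$ is large enough that the pervasive eigenvalues $\tau_{a,p}p\ge\tau_-p$ exceed $\omega_-$. Hence $\ell_{i,p}^2=h_{i,p}/q_{i,p}\le\omega_-^{-1}$, and $w_i=\xi_i\ell_{i,p}\le\bar\xi\omega_-^{-1/2}$. Positivity holds almost surely since $\bm G_i\neq\0$ almost surely.

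The only mildly delicate point is keeping every constant free of the factor eigenvalues, which are of order $p$. This is handled because $u_{i,p}$ is defined through the bulk only and the factor part enters $Q^0_{i,p}$ exactly.
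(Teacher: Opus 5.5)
Your proposal is correct and follows the paper's proof essentially step for step. The paper also uses Rosenthal's inequality for the moment bound and a sub-exponential Bernstein bound with a union bound over $n\le p/c_-$ indices for \eqref{eq:uv_max}. It then applies a mean-value/Lipschitz argument for $(h,q)\mapsto(h/q)^{1/2}$ on the event $\{|u_{i,p}|\le\mu_-/2,\ |v_{i,p}|\le 1/2\}$; your single threshold $\mu_-/2$ covers this because $\mu_-\le\mu_{0,p}\le1$. Finally, it uses the deterministic bound $q_{i,p}\ge\lambda_{\min}(\mOmega_p)h_{i,p}$ with $\lambda_{\min}(\mOmega_p)\ge\omega_-$ for all sufficiently large $p$.
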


\begin{proof}
For $j>r$, put
\[
 X_{ji}=p^{-1}\lambda_{j,p}(G_{ji}^2-1).
\]
The variables $X_{ji}$ are independent and centered, and
\[
 \sum_{j=r+1}^p\E X_{ji}^2
 =\frac2{p^2}\sum_{j=r+1}^p\lambda_{j,p}^2
 \le\frac{2\omega_+^2}{p}.
\]
For $k\ge2$,
\[
 \sum_{j=r+1}^p\E|X_{ji}|^k
 \le \frac{C_k}{p^k}\sum_{j=r+1}^p\lambda_{j,p}^k
 \le C_kp^{1-k}.
\]
Rosenthal's inequality \citep[Theorem~15.11, pp.~425--426]{BoucheronLugosiMassart2013} therefore yields
\[
 \E|u_{i,p}|^k
 \le C_k\left[
 \left\{\sum_{j=r+1}^p\E X_{ji}^2\right\}^{k/2}
 +\sum_{j=r+1}^p\E|X_{ji}|^k
 \right]
 \le C_k\{p^{-k/2}+p^{1-k}\}
 \le C_kp^{-k/2}.
\]
The same calculation with $\lambda_{j,p}=1$ gives
$\E|v_{i,p}|^k\le C_kp^{-k/2}$.

The centered variable $G^2-1$ is sub-exponential.  Scalar Bernstein inequalities \citep[Theorem~2.10 and Corollary~2.11, pp.~36--37]{BoucheronLugosiMassart2013} give, for $0<t\le1$,
\[
 \Pp(|u_{i,p}|>t)+\Pp(|v_{i,p}|>t)
 \le4\exp(-cpt^2).
\]
Taking $t=M\sqrt{\log p/p}$ and using $n\le p/c_-$ gives
\[
 \Pp\left[
 \max_{i\le n}\{|u_{i,p}|+|v_{i,p}|\}>2t
 \right]
 \le \frac{4}{c_-}p^{1-cM^2}.
\]
A sufficiently large fixed $M$ proves \eqref{eq:uv_max}.

On the event
\[
 \max_i|u_{i,p}|\le\mu_-/2,
 \qquad
 \max_i|v_{i,p}|\le1/2,
\]
the function $g(x,y)=\{(1+y)/(Q+x)\}^{1/2}$ has uniformly bounded first derivatives for $Q\ge\mu_-$.  The mean-value theorem gives
\[
 \left|
 \left\{\frac{1+v_{i,p}}{Q_{i,p}^{0}+u_{i,p}}\right\}^{1/2}
 -(Q_{i,p}^{0})^{-1/2}
 \right|
 \le C(|u_{i,p}|+|v_{i,p}|),
\]
which proves \eqref{eq:ell_D_rate}.

Finally, with $\bm V_i=\bm G_i/\|\bm G_i\|$,
\[
 \ell_{i,p}=(\bm V_i^\top\mOmega_p\bm V_i)^{-1/2}.
\]
Since $\lambda_{\min}(\mOmega_p)\ge\omega_-$ for all sufficiently large $p$,
$\ell_{i,p}\le\omega_-^{-1/2}$.  Multiplication by
$0<\xi_i\le\bar\xi$ proves \eqref{eq:ell_bounds}.
\end{proof}

The population spatial-sign shape and the population derivative of the scaled spatial score are
\begin{equation*}
 \mR_p=\E(\bm Y_i\bm Y_i^\top),
 \qquad
 \mJ_{E,p}=\E\{\ell_{i,p}(\I_p-\bm U_i\bm U_i^\top)\}.
\end{equation*}
Because $\xi_i$ is independent of $\bm G_i$, the derivative of
$\E\{\sqrt p\,\mathbf U(\bm X_i-\bm t)\}$ at
$\bm t=\btheta_p$ equals $-m_1\mJ_{E,p}$.

For $Q_p^0=\mu_{0,p}+\sum_{a=1}^r\tau_{a,p}G_a^2$, define
\begin{align*}
 a_{a,p}&=\tau_{a,p}\E\left(\frac{G_a^2}{Q_p^0}\right),
 &b_{0,p}&=\E\{(Q_p^0)^{-1}\},\\
 c_{a,p}&=\E\left[
 \frac{Q_p^0-\tau_{a,p}G_a^2}{(Q_p^0)^{3/2}}
 \right],
 &s_{1,p}&=\E\{(Q_p^0)^{-1/2}\}.
\end{align*}

\begin{proposition}\label{prop:population_matrices}
Under Assumption~1, $\mR_p$, $\mJ_{E,p}$, and
$\mOmega_p$ have the same eigenvectors.  Uniformly in $1\le a\le r$ and $r<j\le p$,
\begin{align}
 p^{-1}\lambda_a(\mR_p)&=a_{a,p}+O(p^{-1/2}),
 &\lambda_j(\mR_p)&=b_{0,p}\lambda_{j,p}+O(p^{-1/2}),
 \label{eq:R_spectrum_ell}\\
 \lambda_a(\mJ_{E,p})&=c_{a,p}+O(p^{-1/2}),
 &\lambda_j(\mJ_{E,p})&=s_{1,p}+O(p^{-1}).
 \label{eq:JE_spectrum}
\end{align}
Furthermore,
\begin{equation}\label{eq:mass_identity_ell}
 \sum_{a=1}^ra_{a,p}+\mu_{0,p}b_{0,p}=1,
\end{equation}
and there are constants $0<j_-<j_+<\infty$ such that
\begin{equation}\label{eq:JE_bounds}
 j_-\le\lambda_{\min}(\mJ_{E,p})
 \le\lambda_{\max}(\mJ_{E,p})\le j_+,
 \qquad
 \|\mJ_{E,p}^{-1}\|_{\op}\le j_-^{-1}.
\end{equation}
On the bulk subspace,
\begin{equation}\label{eq:JE_bulk_inverse}
 \left\|
 \mP_{B,p}^\top
 \{\mJ_{E,p}^{-1}-s_{1,p}^{-1}\I_p\}
 \mP_{B,p}
 \right\|_{\op}=O(p^{-1}).
\end{equation}
\end{proposition}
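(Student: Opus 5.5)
Work in the eigenbasis of $\mOmega_p$ and exploit the sign symmetry of the Gaussian directions. Writing $\mR_p=\E(\bm Y_i\bm Y_i^\top)$ in coordinates $\mP^\top\bm Y_i$, the $(j,k)$ entry is $p\lambda_j^{1/2}\lambda_k^{1/2}\E(G_jG_k/q)$, where $q=\sum_l\lambda_{l,p}G_l^2/p$ is invariant under $G_j\mapsto -G_j$. Hence every off-diagonal entry vanishes. The same holds for $\mJ_{E,p}$, since $\ell_{i,p}=(h/q)^{1/2}$ is also invariant under coordinate sign flips. Both matrices are therefore diagonal in the eigenbasis of $\mOmega_p$, which gives the common eigenvectors. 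The diagonal entries are $\lambda_j(\mR_p)=\lambda_{j,p}\E(G_j^2/q)$ and
\[
\lambda_j(\mJ_{E,p})=\E\{\ell(1-\lambda_{j,p}G_j^2/(pq))\}.
\]
The mass identity \eqref{eq:mass_identity_ell} should be derived exactly, not asymptotically, from $\tr(\mR_p)=p$: summing $\lambda_{j,p}\E(G_j^2/q)$ over $j$ gives $p\E(q/q)=p$, and the expansion below is consistent with this.

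To obtain the spectral approximations, replace $q$ by $Q_p^0$ and $h$ by $1$, controlling the remainders with Lemma~\ref{lem:angular_denominator}. For a factor index $a$, expand
\[
\E(G_a^2/q)=\E(G_a^2/Q^0)-\E\{G_a^2u/(Q^0(Q^0+u))\}.
\]
Since $Q^0\ge\mu_-$ and, on the complement of a tiny event, $|u|\le\mu_-/2$, Cauchy--Schwarz together with $\E u^2\le Cp^{-1}$ bounds the remainder by $O(p^{-1/2})$. The exceptional event has probability $e^{-cp}$, and on it the integrand is bounded by $C\,G_a^2/\lambda_{\min}$ because $q\ge\omega_-h$. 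Multiplying by $\lambda_{a,p}/p=\tau_{a,p}$ gives $a_{a,p}$.

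For a bulk index $j$, note that $u$ depends on $G_j$ through the single term $\lambda_{j,p}(G_j^2-1)/p$. Write $q=q_{(-j)}+\lambda_{j,p}(G_j^2-1)/p$, where $q_{(-j)}$ is independent of $G_j$. Then
\[
\E(G_j^2/q)=\E(1/q_{(-j)})+O(p^{-1}),
\qquad
\E(1/q_{(-j)})=\E(1/Q^0)+O(p^{-1/2}).
\]
The second estimate uses the same $u$-expansion, with the first-order term vanishing to leading order because $\E u=0$; keeping a second-order term gives an error of $O(p^{-1})$, but $O(p^{-1/2})$ suffices. This yields $b_{0,p}\lambda_{j,p}$.

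For $\mJ_{E,p}$ in a bulk direction, the leave-one-out argument is the step needing the sharpest rate, namely $O(p^{-1})$. The key is that, for bulk $j$, the expression $\lambda_j(\mJ_{E,p})=\E\ell-\E\{\ell\lambda_{j,p}G_j^2/(pq)\}$ has its second term equal to $O(p^{-1})$ outright. The first term $\E\ell$ does not depend on $j$. So I will define $s_{1,p}$ via the claimed approximation and need
\[
\E\ell-\E(Q^0)^{-1/2}=O(p^{-1}).
\]
I expect this to be the main obstacle. I would expand $g(u,v)=\{(1+v)/(Q^0+u)\}^{1/2}$ to second order around $(0,0)$. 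The first-order terms vanish in expectation because $\E u=\E v=0$, and $u,v$ are independent of $(G_a)_{a\le r}$ apart from the factor part of $v$, whose contribution is $O(1/p)$ since it involves only $r$ terms. The second-order terms are $O(\E u^2+\E v^2)=O(p^{-1})$ by Lemma~\ref{lem:angular_denominator}, and the exponentially small bad event is handled using the bound $\ell\le\omega_-^{-1/2}$.

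For factor $a$, a similar expansion gives
\[
\E\{Q^{-1/2}(1-\tau_aG_a^2/Q)\}=c_{a,p}
\]
up to $O(p^{-1/2})$. Lower and upper bounds then follow. For the upper bound, $\ell\le\omega_-^{-1/2}$ directly. For the lower bound, note $c_{a,p}\ge\E\{\mu_{0,p}(Q^0)^{-3/2}\}$, which is bounded below because $\mu_{0,p}\ge\mu_-$ and $\E(Q^0)^{-3/2}\ge(\E Q^0)^{-3/2}=1$ by Jensen. Also $s_{1,p}\ge1$ by Jensen. Adding the $O(p^{-1/2})$ errors gives \eqref{eq:JE_bounds} for large $p$. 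Finally, \eqref{eq:JE_bulk_inverse} follows directly: by diagonality, $\mP_B^\top(\mJ_{E,p}^{-1}-s_{1,p}^{-1}\I)\mP_B$ is diagonal with entries $(s_{1,p}-\lambda_j)/(\lambda_js_{1,p})=O(p^{-1})$ uniformly in $j$.
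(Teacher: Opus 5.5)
Your argument follows the paper step for step. Coordinatewise sign symmetry diagonalizes $\mR_p$ and $\mJ_{E,p}$ in the eigenbasis of $\mOmega_p$. You replace $q$ by $Q_p^0$ using Cauchy--Schwarz and an exceptional event. The $O(p^{-1})$ bulk rate for $\mJ_{E,p}$ comes from a second-order expansion of $\ell$, whose first-order terms vanish except for the $O(p^{-1})$ factor part of $v$. Jensen gives $c_{a,p}\ge\mu_-$ and $s_{1,p}\ge1$, and the bulk inverse is read off entrywise. All of that is sound.

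\textbf{The mass identity step does not prove what is claimed.} The identity $\tr(\mR_p)=p$ concerns expectations with the exact denominator $q$. By contrast, $a_{a,p}=\tau_{a,p}\E(G_a^2/Q_p^0)$ and $b_{0,p}=\E\{(Q_p^0)^{-1}\}$ are built from $Q_p^0$. To pass from one to the other you must insert the approximations \eqref{eq:R_spectrum_ell}. Summed over all $p$ eigenvalues, their errors are of order $p^{1/2}$. After dividing by $p$ you obtain only
\[
\sum_a a_{a,p}+\mu_{0,p}b_{0,p}=1+O(p^{-1/2}),
\]
not the exact identity you set out to prove.

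The exact identity follows from the same computation carried out with $Q_p^0$ itself. Pointwise, $\sum_a\tau_{a,p}G_a^2+\mu_{0,p}=Q_p^0$. Dividing by $Q_p^0$ and taking expectations gives exactly $1$, which is how the paper argues.

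Two minor points:
\begin{itemize}
\item On the exceptional event $\{|u|>\mu_-/2\}$, the inequality $q\ge\omega_-h$ does not give a bound of the form $CG_a^2$, because $h$ is not bounded below there. Use $q\ge\tau_{a,p}G_a^2$ instead, i.e.\ $\tau_{a,p}G_a^2/q\le1$, so the event contributes at most its probability.
\item Your leave-one-out split $q=q_{(-j)}+\lambda_{j,p}(G_j^2-1)/p$ for the bulk eigenvalues of $\mR_p$ is correct but unnecessary. $G_j$ is already independent of $Q_p^0$, so $\E(G_j^2/Q_p^0)=b_{0,p}$. A single Cauchy--Schwarz bound on $\E\{G_j^2|q^{-1}-(Q_p^0)^{-1}|\}$ then gives the $O(p^{-1/2})$ error directly, as in the paper.
\end{itemize}
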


\begin{proof}
In the eigenbasis of $\mOmega_p$,
\begin{align*}
 Y_{ai}&=\frac{\sqrt{p\tau_{a,p}}G_{ai}}{q_{i,p}^{1/2}},
 &&1\le a\le r,\\
 Y_{ji}&=\frac{\lambda_{j,p}^{1/2}G_{ji}}{q_{i,p}^{1/2}},
 &&r<j\le p.
\end{align*}
Every denominator is an even function of every Gaussian coordinate.  Changing the sign of one coordinate shows that every off-diagonal entry of $\mR_p$ is zero in this basis.  The same argument applies to $\mJ_{E,p}$.

For $a\le r$,
\[
 p^{-1}\lambda_a(\mR_p)
 =\tau_{a,p}\E(G_a^2/q_p).
\]
On $|u_p|\le\mu_-/2$,
\[
 |q_p^{-1}-(Q_p^0)^{-1}|
 \le C|u_p|.
\]
The Cauchy--Schwarz inequality and Lemma~\ref{lem:angular_denominator} give
\[
 \E\{G_a^2|q_p^{-1}-(Q_p^0)^{-1}|;
 |u_p|\le\mu_-/2\}
 \le C(\E G_a^4)^{1/2}(\E u_p^2)^{1/2}
 \le Cp^{-1/2}.
\]
The complement has probability $O(p^{-4})$ by the eighth-moment bound.  Hölder's inequality and the inverse moments of a chi-square variable give the same $O(p^{-1/2})$ order there.  Thus
\[
 p^{-1}\lambda_a(\mR_p)
 =\tau_{a,p}\E\{G_a^2/Q_p^0\}+O(p^{-1/2}).
\]
For $j>r$, $G_j$ is independent of $Q_p^0$, and the same calculation gives
\[
 \lambda_j(\mR_p)
 =\lambda_{j,p}\E(G_j^2/q_p)
 =\lambda_{j,p}\E\{(Q_p^0)^{-1}\}+O(p^{-1/2}).
\]
This proves \eqref{eq:R_spectrum_ell}.

The factor-direction eigenvalue of $\mJ_{E,p}$ is
\begin{align*}
 \E\{\ell_p(1-U_{a}^2)\}
 &=\E\left[
 \frac{h_p^{1/2}\{q_p-\tau_{a,p}G_a^2\}}
 {q_p^{3/2}}
 \right].
\end{align*}
The function
$g(h,q,z)=h^{1/2}(q-z)q^{-3/2}$ has polynomially bounded derivatives on
$h\in[1/2,3/2]$ and $q\ge\mu_-/2$.  Lemma~\ref{lem:angular_denominator}, the Gaussian moments of $G_a$, and the same complement argument as above imply
\[
 \E\{\ell_p(1-U_a^2)\}
 =\E\left[
 \frac{Q_p^0-\tau_{a,p}G_a^2}{(Q_p^0)^{3/2}}
 \right]+O(p^{-1/2}).
\]
For $j>r$,
\begin{align*}
 \E\{\ell_p(1-U_j^2)\}
 &=\E\ell_p-
 \frac{\lambda_{j,p}}p
 \E\left(\frac{h_p^{1/2}G_j^2}{q_p^{3/2}}\right).
\end{align*}
The second term is $O(p^{-1})$ because
$\lambda_{j,p}\le\omega_+$ and $q_p\ge\omega_-h_p$.  In addition, Lemma~\ref{lem:angular_denominator} first gives
$\E\ell_p=s_{1,p}+O(p^{-1/2})$.  To obtain the sharper order, on
$\{|v_p|\le1/2,\ |u_p|\le\mu_-/2\}$, a second-order Taylor expansion of
$f(h,q)=h^{1/2}q^{-1/2}$ around $(1,Q_p^0)$ gives
\[
 \ell_p=(Q_p^0)^{-1/2}
 +\frac12(Q_p^0)^{-1/2}v_p
 -\frac12(Q_p^0)^{-3/2}u_p+\mathcal R_p,
 \qquad
 |\mathcal R_p|\le C(v_p^2+u_p^2).
\]
The complement contributes $O(p^{-1})$ by the eighth-moment bounds and
Hölder's inequality.  Since $u_p$ is a centered function of the bulk
Gaussian coordinates and is independent of $Q_p^0$,
\[
 \E\{(Q_p^0)^{-3/2}u_p\}=0.
\]
Writing
\[
 v_p=p^{-1}\sum_{a=1}^r(G_a^2-1)
     +p^{-1}\sum_{j=r+1}^p(G_j^2-1),
\]
the bulk sum is centered and independent of $Q_p^0$, while $r$ is fixed and
$Q_p^0\ge\mu_-$.  Consequently,
\[
 \left|\E\{(Q_p^0)^{-1/2}v_p\}\right|
 \le \frac1p\sum_{a=1}^r
 \E\left|(Q_p^0)^{-1/2}(G_a^2-1)\right|
 \le Cp^{-1}.
\]
Moreover, $\E(v_p^2+u_p^2)\le Cp^{-1}$.  Therefore
$\E\ell_p=s_{1,p}+O(p^{-1})$, and the bulk relation in
\eqref{eq:JE_spectrum} follows.

The identity \eqref{eq:mass_identity_ell} is immediate from
\[
 \sum_{a=1}^ra_{a,p}+\mu_{0,p}b_{0,p}
 =\E\left[
 \frac{\sum_{a=1}^r\tau_{a,p}G_a^2+\mu_{0,p}}{Q_p^0}
 \right]=1.
\]
Since $Q_p^0-\tau_{a,p}G_a^2\ge\mu_{0,p}\ge\mu_-$,
\[
 c_{a,p}\ge\mu_-\E\{(Q_p^0)^{-3/2}\}
 \ge\mu_-(\E Q_p^0)^{-3/2}=\mu_-,
\]
where Jensen's inequality was used and $\E Q_p^0=1$.  Similarly,
$s_{1,p}\ge(\E Q_p^0)^{-1/2}=1$.  Upper bounds follow from
$Q_p^0\ge\mu_-$.  Equations \eqref{eq:JE_spectrum} then prove
\eqref{eq:JE_bounds}.  Finally, on the bulk subspace,
\[
 \mP_{B,p}^\top\mJ_{E,p}\mP_{B,p}
 =s_{1,p}\I_{p-r}+\mE_{B,p},
 \qquad \|\mE_{B,p}\|_{\op}=O(p^{-1}).
\]
The inverse identity
\[
 (s\I+\mE)^{-1}-s^{-1}\I
 =-s^{-1}(s\I+\mE)^{-1}\mE
\]
and $s_{1,p}\ge1$ prove \eqref{eq:JE_bulk_inverse}.
\end{proof}

\begin{corollary}\label{cor:compound_symmetry}
For
\[
 \mOmega_p=(1-\varrho)\I_p+\varrho\1_p\1_p^\top,
 \qquad 0<\varrho<1,
\]
Assumption~1(iii)--(iv) holds with $r=1$,
$\tau_{1,p}=\varrho+(1-\varrho)/p\to\varrho$,
$H_B=\delta_{1-\varrho}$, and
\[
 Q^0=1-\varrho+\varrho G^2,
 \qquad D^0=(1-\varrho+\varrho G^2)^{-1}.
\]
In particular,
\begin{align*}
 p^{-1}\lambda_1(\mR_p)&\to
 \E\left\{\frac{\varrho G^2}{1-\varrho+\varrho G^2}\right\},\\
 \lambda_j(\mR_p)&\to
 (1-\varrho)\E\left\{\frac1{1-\varrho+\varrho G^2}\right\},
 \qquad j\ge2,
\end{align*}
and the factor-direction eigenvalue of $\mJ_{E,p}$ converges to
\[
 (1-\varrho)\E(1-\varrho+\varrho G^2)^{-3/2}>0.
\]
\end{corollary}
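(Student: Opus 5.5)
This is a direct verification from Assumption~\ref{ass:model}(iii)--(iv), Proposition~\ref{prop:population_matrices}, and the defining formulas for $a_{a,p}$, $b_{0,p}$, and $c_{a,p}$.

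\emph{Spectral decomposition.} Write $\mOmega_p=(1-\varrho)\I_p+\varrho\1_p\1_p^\top$ and note that $\tr(\mOmega_p)=p$, so the normalization holds. The vector $\1_p/\sqrt p$ is an eigenvector with eigenvalue $1-\varrho+\varrho p$. Every vector orthogonal to $\1_p$ is an eigenvector with eigenvalue $1-\varrho$. Accordingly, we take $r=1$, $\mP_{F,p}=\1_p/\sqrt p$, and $\tau_{1,p}=\varrho+(1-\varrho)/p$. This lies in $[\varrho,(1+\varrho)/2]$ once $p\ge2$, so $\tau_-=\varrho$ and $\tau_+=(1+\varrho)/2<1$ are admissible bounds. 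The bulk eigenvalues all equal $1-\varrho$, so $\omega_\pm=1-\varrho$ and $H_{B,p}=\delta_{1-\varrho}=H_B$ exactly. Finally,
\[
 \mu_{0,p}=(p-1)(1-\varrho)/p\ge(1-\varrho)/2=:\mu_->0,
\]
and $\tau_{1,p}\to\varrho$. This verifies parts (iii) and (iv).

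\emph{Limits of $Q^0$ and $D^0$.} Since $\mu_{0,p}\to1-\varrho$, we get $Q_p^0\to Q^0=1-\varrho+\varrho G^2$ pointwise, and hence $D^0=(Q^0)^{-1}$.

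\emph{Eigenvalues of $\mR_p$ and $\mJ_{E,p}$.} These follow from \eqref{eq:R_spectrum_ell} and \eqref{eq:JE_spectrum}: the $O(p^{-1/2})$ remainders vanish, so it suffices to pass to the limit in
\[
 a_{1,p}=\tau_{1,p}\E\{G^2/Q_p^0\},\qquad \lambda_{j,p}b_{0,p}=(1-\varrho)(p-1)p^{-1}\cdot\E\{(Q_p^0)^{-1}\}\ \text{(up to the factor $\lambda_{j,p}=1-\varrho$)},\qquad c_{1,p}.
\]
Because $Q_p^0\ge\mu_-$ uniformly in $p$, the integrands $G^2/Q_p^0$, $(Q_p^0)^{-1}$, and $\mu_{0,p}(Q_p^0)^{-3/2}$ are dominated by $CG^2$ or by a constant. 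Dominated convergence then yields the three stated limits. For the factor-direction eigenvalue of $\mJ_{E,p}$, the numerator of $c_{1,p}$ is $Q_p^0-\tau_{1,p}G^2=\mu_{0,p}$, so $c_{1,p}=\mu_{0,p}\E(Q_p^0)^{-3/2}$. Its limit is
\[
 (1-\varrho)\,\E(1-\varrho+\varrho G^2)^{-3/2},
\]
which is strictly positive because the integrand is positive.

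No step here is a real obstacle. The only point needing care is the uniform lower bound on $Q_p^0$, which is what justifies dominated convergence.
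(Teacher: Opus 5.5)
Your proof is correct and follows the paper's route. You identify the eigenstructure ($\1_p/\sqrt p$ with eigenvalue $1-\varrho+\varrho p$, bulk eigenvalue $1-\varrho$) and substitute into Proposition~\ref{prop:population_matrices}; your dominated-convergence step, based on $Q_p^0\ge\mu_{0,p}\ge(1-\varrho)/2$, supplies the limit passage that the paper leaves implicit. The only blemish is the garbled display for the bulk eigenvalue, where you wrote $\mu_{0,p}b_{0,p}$ instead of $\lambda_{j,p}b_{0,p}=(1-\varrho)b_{0,p}$; the two differ by the factor $(p-1)/p\to1$, so the limit is unaffected.
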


\begin{proof}
The vector $p^{-1/2}\1_p$ is an eigenvector with eigenvalue
$1+(p-1)\varrho$, and every orthogonal vector has eigenvalue
$1-\varrho$.  Substitution into Proposition~\ref{prop:population_matrices}
proves all assertions.
\end{proof}

\section{Separable deterministic equivalents and weighted companion functionals}\label{app:det_equiv}

Put
\[
 c_{B,n}=\frac{p-r}{n}.
\]
The use of $c_{B,n}$, rather than $c_n=p/n$, is important at finite $p$ because the separable proxy below has $p-r$ rows.  Since $r$ is fixed, $c_{B,n}-c_n=-r/n$ and both ratios have the same limit.

Let $F_{D,p}$ be the distribution of $D_p^0=(Q_p^0)^{-1}$, with $Q_p^0$ defined in \eqref{eq:Q0_D0}.  For every $\rho\in[\rho_0,\rho_1]$, let $(\delta_{p,\rho},\wt\delta_{p,\rho})$ be the unique positive solution of
\begin{align}
 \delta_{p,\rho}
 &=c_{B,n}\int\frac{t}{\rho(1+\wt\delta_{p,\rho}t)}\,\dd H_{B,p}(t),
 \label{eq:canonical_delta_finite}\\
 \wt\delta_{p,\rho}
 &=\int\frac{d}{\rho(1+\delta_{p,\rho}d)}\,\dd F_{D,p}(d).
 \label{eq:canonical_tdelta_finite}
\end{align}
Existence and uniqueness follow from the canonical-equation construction in Theorem~2.4 of \citet{HachemLoubatonNajim2007}.  Set
\begin{align*}
 \lambda_{p,\rho}(d)&=\frac{\delta_{p,\rho}d}{1+\delta_{p,\rho}d},
 &\kappa_{p,\rho}^{\star}&=\int\lambda_{p,\rho}(d)\,\dd F_{D,p}(d),\\
 s_{a,p}&=\int d^{a/2}\,\dd F_{D,p}(d),
 &b_{a,p,\rho}^{\star}&=\int d^{a/2}\lambda_{p,\rho}(d)\,\dd F_{D,p}(d),
 \qquad a=1,2,
\end{align*}
and
\begin{align*}
 u_{p,\rho}^{\star}
 &=c_{B,n}\int\frac{t^2}{(1+\wt\delta_{p,\rho}t)^2}\,\dd H_{B,p}(t),\\
 v_{p,\rho}^{\star}
 &=\int\frac{d^2}{(1+\delta_{p,\rho}d)^2}\,\dd F_{D,p}(d),\\
 A_{a,p,\rho}^{\star}
 &=\int\frac{d^{1+a/2}}{(1+\delta_{p,\rho}d)^2}\,\dd F_{D,p}(d),
 \qquad a=0,1,2.
\end{align*}

\begin{lemma}\label{lem:canonical_stability}
There are constants $0<c_\delta<C_\delta<\infty$ and $s_0>0$, depending only on the constants in Assumption~1, such that, for all sufficiently large $p$ and all $\rho\in[\rho_0,\rho_1]$,
\begin{equation}\label{eq:canonical_compact}
 c_\delta\le \delta_{p,\rho},\wt\delta_{p,\rho}\le C_\delta
\end{equation}
and
\begin{equation}\label{eq:stability_finite}
 \rho^2-u_{p,\rho}^{\star}v_{p,\rho}^{\star}\ge s_0.
\end{equation}
\end{lemma}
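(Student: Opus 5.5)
The plan is to get \eqref{eq:canonical_compact} from a priori bounds read off the two fixed-point equations, and then \eqref{eq:stability_finite} from an algebraic splitting of the same equations. Existence and uniqueness of the positive solution come from \citet{HachemLoubatonNajim2007}, as stated above. The inputs are the support facts.

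\emph{Support facts.} $H_{B,p}$ is supported in $[\omega_-,\omega_+]$. Since $Q_p^0\ge\mu_{0,p}\ge\mu_-$, $F_{D,p}$ is supported in $(0,\mu_-^{-1}]$. Jensen's inequality with $\E Q_p^0=1$ gives $\int d\,\dd F_{D,p}(d)\ge 1$. Since $c_{B,n}=c_n-r/n$, we have $c_-/2\le c_{B,n}\le c_+$ for all large $p$.

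\emph{Upper bounds.} Bounding each integrand by dropping the denominator gives
\[
 \delta_{p,\rho}\le c_+\omega_+/\rho_0,
 \qquad
 \wt\delta_{p,\rho}\le (\mu_-\rho_0)^{-1}.
\]

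\emph{Lower bounds.} Insert these upper bounds into the denominators. For $\delta_{p,\rho}$,
\[
 \delta_{p,\rho}\ge \frac{c_-\omega_-}{2\rho_1\{1+\omega_+/(\mu_-\rho_0)\}}.
\]
For $\wt\delta_{p,\rho}$, use $d\le\mu_-^{-1}$ in the denominator $1+\delta_{p,\rho}d\le 1+c_+\omega_+/(\rho_0\mu_-)$, together with $\int d\,\dd F_{D,p}\ge1$. This gives
\[
 \wt\delta_{p,\rho}\ge \bigl[\rho_1\{1+c_+\omega_+/(\rho_0\mu_-)\}\bigr]^{-1}.
\]
All constants depend only on Assumption~\ref{ass:model} and not on $p$ or $\rho$, which proves \eqref{eq:canonical_compact}.

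\emph{Splitting identity.} For \eqref{eq:stability_finite}, use
\[
 \frac{t}{1+\wt\delta t}=\frac{t}{(1+\wt\delta t)^2}+\frac{\wt\delta t^2}{(1+\wt\delta t)^2}.
\]
With $X_p=c_{B,n}\int t(1+\wt\delta_{p,\rho}t)^{-2}\dd H_{B,p}(t)$, equation \eqref{eq:canonical_delta_finite} becomes
\[
 \rho\,\delta_{p,\rho}=X_p+\wt\delta_{p,\rho}\,u^\star_{p,\rho}.
\]
The same identity with $Y_p=\int d(1+\delta_{p,\rho}d)^{-2}\dd F_{D,p}(d)$ turns \eqref{eq:canonical_tdelta_finite} into
\[
 \rho\,\wt\delta_{p,\rho}=Y_p+\delta_{p,\rho}\,v^\star_{p,\rho}.
\]

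\emph{Stability.} Multiply the two identities and drop the nonnegative cross terms:
\[
 \rho^2\delta_{p,\rho}\wt\delta_{p,\rho}\ge X_pY_p+u^\star_{p,\rho}v^\star_{p,\rho}\,\delta_{p,\rho}\wt\delta_{p,\rho}.
\]
Hence
\[
 \rho^2-u^\star_{p,\rho}v^\star_{p,\rho}\ge \frac{X_pY_p}{\delta_{p,\rho}\wt\delta_{p,\rho}}\ge \frac{X_pY_p}{C_\delta^2}.
\]
It remains to bound $X_p$ and $Y_p$ below. By \eqref{eq:canonical_compact} and the support facts,
\[
 X_p\ge \frac{c_-\omega_-}{2(1+C_\delta\omega_+)^2},
 \qquad
 Y_p\ge (1+C_\delta\mu_-^{-1})^{-2}\int d\,\dd F_{D,p}(d)\ge(1+C_\delta/\mu_-)^{-2}.
\]
This gives an explicit $s_0>0$.

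\emph{Main obstacle.} The only delicate point is the uniformity of the lower bounds over $p$ and $\rho\in[\rho_0,\rho_1]$. The distribution $F_{D,p}$ is not compactly supported away from zero, because $D_p^0$ is small when the factor Gaussians are large. This is why the argument uses the first moment of $F_{D,p}$ (via Jensen) rather than a lower bound on its support, together with the upper bound $d\le\mu_-^{-1}$ inside the denominators. If needed, a continuity argument in $\rho$ would also confirm that the solution branch selected by the Hachem--Loubaton--Najim construction is the positive one used here.
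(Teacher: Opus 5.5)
Your proof is correct. The compactness bounds \eqref{eq:canonical_compact} follow exactly as in the paper: you drop the denominators for the upper bounds, then reinsert those bounds together with $c_{B,n}\ge c_-/2$, $t\ge\omega_-$ and the Jensen bound $\int d\,\dd F_{D,p}(d)\ge 1$ to get the lower bounds.

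For \eqref{eq:stability_finite} your route is genuinely different. The paper bounds $u^\star_{p,\rho}$ and $v^\star_{p,\rho}$ separately against the fixed-point equations. Since $t/(1+\wt\delta_{p,\rho}t)\le \alpha_{p,\rho}/\wt\delta_{p,\rho}$ with $\alpha_{p,\rho}=\wt\delta_{p,\rho}\omega_+/(1+\wt\delta_{p,\rho}\omega_+)$, it gets $u^\star_{p,\rho}\le\rho\delta_{p,\rho}\alpha_{p,\rho}/\wt\delta_{p,\rho}$. Symmetrically, $v^\star_{p,\rho}\le\rho\wt\delta_{p,\rho}\beta_{p,\rho}/\delta_{p,\rho}$ with $\beta_{p,\rho}$ built from $d\le\mu_-^{-1}$. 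Hence $u^\star_{p,\rho}v^\star_{p,\rho}\le\rho^2\alpha_{p,\rho}\beta_{p,\rho}$, where both factors lie uniformly below one because $\delta_{p,\rho},\wt\delta_{p,\rho}\le C_\delta$.

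That argument needs only upper bounds ($C_\delta$, $\omega_+$, $\mu_-^{-1}$) and no lower bound on any integral. It does, however, rely on the supports being bounded above, since otherwise $\alpha_{p,\rho}$ and $\beta_{p,\rho}$ would approach one.

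Your splitting identity instead produces the exact inequality $\rho^2-u^\star_{p,\rho}v^\star_{p,\rho}\ge X_pY_p/(\delta_{p,\rho}\wt\delta_{p,\rho})$, which is the standard determinant-type argument for such canonical systems. It costs you lower bounds on $X_p$ and $Y_p$, so $\omega_-$ and the Jensen step are used a second time. In exchange, the mechanism only needs $X_p,Y_p$ bounded below and $\delta_{p,\rho}\wt\delta_{p,\rho}$ bounded above, which makes it the more robust version if the spectral assumptions were relaxed.

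Both routes deliver the uniform contraction later reused in Lemma~\ref{lem:cross_ridge_DE}; yours gives it with $\varepsilon_0=s_0/\rho_1^2$. Your closing remark about selecting the positive solution branch is unnecessary, since the lemma concerns the unique positive solution by definition.
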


\begin{proof}
Since $\omega_-\le t\le\omega_+$ and $0<d\le d_+:=\mu_-^{-1}$,
\[
 \delta_{p,\rho}\le \frac{c_+\omega_+}{\rho_0},
 \qquad
 \wt\delta_{p,\rho}\le \frac{d_+}{\rho_0}.
\]
Moreover, $\E Q_p^0=1$ and the convexity of $x\mapsto x^{-1}$ give
$\int d\,\dd F_{D,p}(d)=\E(Q_p^0)^{-1}\ge1$.  Hence
\[
 \delta_{p,\rho}
 \ge \frac{(c_-/2)\omega_-}
 {\rho_1\{1+(d_+/\rho_0)\omega_+\}},
 \qquad
 \wt\delta_{p,\rho}
 \ge \frac{1}{\rho_1\{1+(c_+\omega_+/\rho_0)d_+\}},
\]
where $c_{B,n}\ge c_-/2$ for all sufficiently large $p$.  This proves \eqref{eq:canonical_compact}.

Put
\[
 \alpha_{p,\rho}
 =\frac{\wt\delta_{p,\rho}\omega_+}
 {1+\wt\delta_{p,\rho}\omega_+},
 \qquad
 \beta_{p,\rho}
 =\frac{\delta_{p,\rho}d_+}
 {1+\delta_{p,\rho}d_+}.
\]
Both quantities are uniformly smaller than one by \eqref{eq:canonical_compact}.  The first canonical equation gives
\begin{align*}
 u_{p,\rho}^{\star}
 &=c_{B,n}\int
 \frac{t}{1+\wt\delta_{p,\rho}t}
 \frac{t}{1+\wt\delta_{p,\rho}t}\,\dd H_{B,p}(t)\\
 &\le \frac{\alpha_{p,\rho}}{\wt\delta_{p,\rho}}
 c_{B,n}\int\frac{t}{1+\wt\delta_{p,\rho}t}\,\dd H_{B,p}(t)
 =\rho\delta_{p,\rho}
 \frac{\alpha_{p,\rho}}{\wt\delta_{p,\rho}}.
\end{align*}
Similarly,
\[
 v_{p,\rho}^{\star}
 \le \rho\wt\delta_{p,\rho}
 \frac{\beta_{p,\rho}}{\delta_{p,\rho}}.
\]
Therefore
\[
 u_{p,\rho}^{\star}v_{p,\rho}^{\star}
 \le \rho^2\alpha_{p,\rho}\beta_{p,\rho}
 \le \rho^2(1-\varepsilon_0)
\]
for a fixed $\varepsilon_0>0$.  Since $\rho\ge\rho_0$, \eqref{eq:stability_finite} follows with $s_0=\rho_0^2\varepsilon_0$.
\end{proof}

For $a,b\in\{0,1,2\}$ and $m_0=1$, define
\begin{equation*}
 \Psi_{ab,p}^{\star}(\rho)
 =m_am_b\frac{u_{p,\rho}^{\star}
 A_{a,p,\rho}^{\star}A_{b,p,\rho}^{\star}}
 {\rho^2-u_{p,\rho}^{\star}v_{p,\rho}^{\star}}.
\end{equation*}
Also put
\begin{equation*}
 D_{p,\rho}^{\star}
 =m_1^2(s_{1,p}-b_{1,p,\rho}^{\star})^2
 +m_2\kappa_{p,\rho}^{\star}
 (s_{2,p}-b_{2,p,\rho}^{\star}).
\end{equation*}
Let $\mGamma_{p,\rho}^{\star}$ have the same form as
(2.4), with $\psi_{ab,n}$ replaced by
$\Psi_{ab,p}^{\star}(\rho)$, and define
\begin{equation}\label{eq:g_sigma_finite}
 \bg_{p,\rho}^{\star}
 =(D_{p,\rho}^{\star})^{-2}
 \begin{pmatrix}
 m_1^2(s_{1,p}-b_{1,p,\rho}^{\star})^2\\
 2\kappa_{p,\rho}^{\star}m_1
 (s_{1,p}-b_{1,p,\rho}^{\star})\\
 (\kappa_{p,\rho}^{\star})^2
 \end{pmatrix},
 \qquad
 (\sigma_{E,p}^{\star})^2
 =(\bg_{p,\rho}^{\star})^\top
 \mGamma_{p,\rho}^{\star}\bg_{p,\rho}^{\star}.
\end{equation}

The next lemma supplies the first- and second-order resolvent statements used below.  Its proof records the empirical fixed-point step that is needed when the column scales $D_{i,p}^0$ are random.

\begin{lemma}\label{lem:empirical_canonical}
Condition on $\mD_p^0=\diag(D_{1,p}^0,\ldots,D_{n,p}^0)$ and let
$(\delta_{n,\rho}^{D},\wt\delta_{n,\rho}^{D})$ be the positive solution of
\begin{align*}
 \delta_{n,\rho}^{D}
 &=c_{B,n}\int\frac{t}{\rho(1+\wt\delta_{n,\rho}^{D}t)}\,\dd H_{B,p}(t),\\
 \wt\delta_{n,\rho}^{D}
 &=\frac1n\sum_{i=1}^n
 \frac{D_{i,p}^0}{\rho(1+\delta_{n,\rho}^{D}D_{i,p}^0)}.
\end{align*}
For every fixed $\rho\in[\rho_0,\rho_1]$,
\begin{equation}\label{eq:empirical_fixed_point_rate}
 |\delta_{n,\rho}^{D}-\delta_{p,\rho}|
 +|\wt\delta_{n,\rho}^{D}-\wt\delta_{p,\rho}|
 =O_{\Pp}(n^{-1/2}).
\end{equation}
Let
\[
 \overline\mB_{B,n}
 =\frac1n\mLambda_{B,p}^{1/2}\mG_B\mD_p^0
 \mG_B^\top\mLambda_{B,p}^{1/2},
 \qquad
 \overline\mR_n=(\overline\mB_{B,n}+\rho\I_{p-r})^{-1},
\]
and let $\overline\mA_n$ be the companion matrix in
\eqref{eq:separable_proxy_main}.  Then
\begin{align}
 \frac1n\tr\overline\mA_n-\kappa_{p,\rho}^{\star}
 &=O_{\Pp}(n^{-1/2}),\label{eq:first_order_trace_empirical}\\
 \frac1n\sum_{i=1}^n
 \{(\overline\mA_n)_{ii}-\lambda_{p,\rho}(D_{i,p}^0)\}^2
 &=O_{\Pp}(n^{-1}),\label{eq:diagonal_mse_empirical}\\
 \frac1n\tr(\mLambda_{B,p}\overline\mR_n
 \mLambda_{B,p}\overline\mR_n)
 &=\frac{u_{p,\rho}^{\star}}
 {\rho^2-u_{p,\rho}^{\star}v_{p,\rho}^{\star}}
 +O_{\Pp}(p^{-1/4}).\label{eq:second_order_trace_DE}
\end{align}
The same last relation holds for the resolvent obtained by deleting any fixed number not exceeding two of the columns, with an additional $O(p^{-1})$ error.
\end{lemma}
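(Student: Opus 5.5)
The plan has three parts: the empirical fixed-point rate, the two first-order functionals, and the second-order trace. Throughout, I work conditionally on $\mD_p^0$. Given $\mD_p^0$, $\overline\mB_{B,n}$ is a separable Gaussian matrix $n^{-1}\mLambda_{B,p}^{1/2}\mG_B\mD_p^0\mG_B^\top\mLambda_{B,p}^{1/2}$. Here $\mG_B$ has i.i.d.\ $N(0,1)$ entries independent of $\mD_p^0$, because $D_{i,p}^0$ depends only on the factor coordinates $G_{ai}$, $a\le r$. The column scales are bounded, $0<D_{i,p}^0\le\mu_-^{-1}$, and the row variances lie in $[\omega_-,\omega_+]$. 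So the conditional model is exactly the variance-profile setting of \citet{HachemLoubatonNajim2007}, with $(\delta^D_{n,\rho},\wt\delta^D_{n,\rho})$ as the associated canonical pair.

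For \eqref{eq:empirical_fixed_point_rate}, I compare the two fixed-point systems. Write $\Delta=\delta^D_{n,\rho}-\delta_{p,\rho}$ and $\wt\Delta=\wt\delta^D_{n,\rho}-\wt\delta_{p,\rho}$. Subtracting the equations gives a linear system in $(\Delta,\wt\Delta)$. Its matrix has the form $\begin{pmatrix}\rho & u\\ v&\rho\end{pmatrix}$, with $u,v$ intermediate analogues of $u^\star_{p,\rho},v^\star_{p,\rho}$, and the forcing term is
\[
 \frac1n\sum_i f(D_{i,p}^0)-\E f(D_p^0),
 \qquad f(d)=\frac{d}{\rho(1+\delta_{p,\rho}d)}.
\]
This forcing is an average of i.i.d.\ bounded variables, so it is $O_{\Pp}(n^{-1/2})$ by Chebyshev. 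Lemma~\ref{lem:canonical_stability} gives the uniform bounds \eqref{eq:canonical_compact} and the determinant bound \eqref{eq:stability_finite}. By continuity, the same bounds survive for the interpolated coefficients on an event of probability tending to one. Inverting the system then yields the $n^{-1/2}$ rate.

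For \eqref{eq:first_order_trace_empirical} and \eqref{eq:diagonal_mse_empirical}, I use the Schur complement formula for the diagonal of the companion resolvent:
\[
 (\overline\mA_n)_{ii}=\frac{D_{i,p}^0\,\eta_i}{1+D_{i,p}^0\,\eta_i},
 \qquad
 \eta_i=n^{-1}\bm g_i^\top\mLambda_{B,p}^{1/2}\overline\mR_{n,(i)}\mLambda_{B,p}^{1/2}\bm g_i,
\]
where $\overline\mR_{n,(i)}$ is the resolvent with column $i$ deleted. By Gaussian quadratic-form variance bounds and the rank-one resolvent identity, $\eta_i=n^{-1}\tr(\mLambda_{B,p}\overline\mR_n)+O_{L^2}(n^{-1/2})$ uniformly in $i$. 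The standard Hachem--Loubaton--Najim argument (Gaussian Poincaré/Nash inequalities plus integration by parts) gives $n^{-1}\tr(\mLambda_{B,p}\overline\mR_n)=\rho\,\delta^D_{n,\rho}+O_{\Pp}(n^{-1})$. Combining this with the first step, the map $\eta\mapsto d\eta/(1+d\eta)$ is Lipschitz with constant bounded by $d\le\mu_-^{-1}$, so each diagonal entry satisfies $(\overline\mA_n)_{ii}-\lambda_{p,\rho}(D_{i,p}^0)=O_{L^2}(n^{-1/2})$ after unconditioning. This gives \eqref{eq:diagonal_mse_empirical}. Averaging over $i$ gives the trace statement, where the error in $\frac1n\sum_i\lambda_{p,\rho}(D^0_{i,p})-\kappa^\star_{p,\rho}$ is again an i.i.d.\ CLT fluctuation of order $n^{-1/2}$.

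For \eqref{eq:second_order_trace_DE}, the main obstacle is deriving and stabilizing the second-order deterministic equivalent. I would differentiate the resolvent along the pencil: set $\mLambda_{B,p}\overline\mR_n\mLambda_{B,p}\overline\mR_n$ as a derivative of $\tr\{\mLambda_{B,p}(\overline\mB_{B,n}+\rho\I-z\mLambda_{B,p})^{-1}\}$ at $z=0$. Equivalently, I would use the resolvent identity together with the leave-one-out representation. This produces a closed linear relation between $x=n^{-1}\tr(\mLambda\overline\mR\mLambda\overline\mR)$ and its companion analogue $\tilde x=n^{-1}\sum_i (D^0_i)^2/(1+\delta D^0_i)^2\cdot(\cdot)$, of the form $\rho^2x=u+uv\,x+\text{error}$. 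It therefore solves to $x=u/(\rho^2-uv)$, with $u=u^\star_{p,\rho}$ and $v=v^\star_{p,\rho}$ after substituting the first-order approximations. The stability bound \eqref{eq:stability_finite} controls the inversion. The errors come from the $n^{-1/2}$ approximations above, but they enter through products and square roots of variance bounds, so I would settle for the weaker $O_{\Pp}(p^{-1/4})$ stated rather than optimize.

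Finally, deleting at most two columns is a perturbation of rank at most two. By the resolvent rank-one bound $|\tr\{\mM(\mR-\mR_{(i)})\}|\le\|\mM\|_{\op}/\rho$, together with the bounded factors $\mLambda_{B,p}$ and $\overline\mR_n$, each deletion changes $\tr(\mLambda\overline\mR\mLambda\overline\mR)$ by $O(1)$. After division by $n$ this is $O(p^{-1})$.
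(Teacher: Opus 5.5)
\textbf{The gap is in your proof of \eqref{eq:empirical_fixed_point_rate}.} After subtracting the two systems, you invert a $2\times2$ matrix whose entries $u,v$ are evaluated at points between $(\delta^D_{n,\rho},\wt\delta^D_{n,\rho})$ and $(\delta_{p,\rho},\wt\delta_{p,\rho})$. You then assert that ``by continuity'' the bound \eqref{eq:stability_finite} survives at those points.

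Continuity only transfers that bound to points near the deterministic solution. But that the empirical solution is near the deterministic one is exactly what you are trying to prove, so the argument is circular as written.

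You cannot fall back on the compact box \eqref{eq:canonical_compact} either. Lemma~\ref{lem:canonical_stability} obtains $u^\star_{p,\rho}v^\star_{p,\rho}\le\rho^2(1-\varepsilon_0)$ only by substituting the fixed-point equations at the solution. At generic points of the box, $\rho^2-u(x)v(y)$ need not be positive: for small ridge values the lower corner of the box already violates it.

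There are two ways to repair this.

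\emph{Consistency first (the paper's route).} Both solutions lie in a fixed compact set, and $\sup_x|n^{-1}\sum_i\psi_x(D^0_{i,p})-\E\psi_x(D^0_p)|\to0$ in probability, where $\psi_x(d)=d/\{\rho(1+xd)\}$. Every subsequential limit of either solution then solves \eqref{eq:limit_delta}--\eqref{eq:limit_tdelta}, and uniqueness identifies the two limits. Only after this are the intermediate $u^D_{n,\rho},v^D_{n,\rho}$ within $o_{\Pp}(1)$ of $u^\star_{p,\rho},v^\star_{p,\rho}$, and your inversion goes through.

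\emph{Secant identities.} Alternatively, avoid intermediate points altogether. Write $\Delta=\delta^D_{n,\rho}-\delta_{p,\rho}$ and $\wt\Delta=\wt\delta^D_{n,\rho}-\wt\delta_{p,\rho}$. The exact identities are $\rho\Delta+\bar u\wt\Delta=0$ and $\bar v\Delta+\rho\wt\Delta=\rho\epsilon_n$, where $\epsilon_n$ is your forcing term, $\bar u=c_{B,n}\int t^2\{(1+\wt\delta^D_{n,\rho}t)(1+\wt\delta_{p,\rho}t)\}^{-1}\dd H_{B,p}(t)$, and $\bar v=n^{-1}\sum_i(D^0_{i,p})^2\{(1+\delta^D_{n,\rho}D^0_{i,p})(1+\delta_{p,\rho}D^0_{i,p})\}^{-1}$. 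Cauchy--Schwarz bounds $\bar u\bar v$ by the geometric mean of the diagonal products at the empirical and at the deterministic solution. The argument of Lemma~\ref{lem:canonical_stability}, applied to the empirical system and (up to a factor $1+O_{\Pp}(n^{-1/2})$) to the deterministic one, then gives $\bar u\bar v\le\rho^2(1-\varepsilon_0)$ with probability tending to one. This is the same device the paper uses for cross-ridge stability in Lemma~\ref{lem:cross_ridge_DE}.

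Your second-order step inherits this issue, since it replaces the empirical $u,v$ by $u^\star_{p,\rho},v^\star_{p,\rho}$ through \eqref{eq:empirical_fixed_point_rate}.

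\textbf{Two smaller points.}

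First, a normalization slip in the diagonal step. With the paper's normalization, $n^{-1}\tr[\mLambda_{B,p}\{\rho(\I+\wt\delta\mLambda_{B,p})\}^{-1}]=\delta$, so the trace equivalent is $n^{-1}\tr(\mLambda_{B,p}\overline\mR_n)\approx\delta^D_{n,\rho}$, not $\rho\,\delta^D_{n,\rho}$. With the extra $\rho$ you would not land on $\lambda_{p,\rho}$. Apart from that, your Sherman--Morrison reduction (quadratic-form concentration plus a scalar trace equivalent) is a valid substitute for the paper's direct use of the bilinear-form bound of \citet{HachemLoubatonNajimVallet2013}.

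Second, your route for \eqref{eq:second_order_trace_DE} differs from the paper's proof of this lemma. The paper perturbs to $\overline\mB_{B,n}+\rho\I+\varepsilon\mLambda_{B,p}$, which is again separable, and applies the first-order equivalent at $\varepsilon\in\{0,n^{-1/4}\}$. It then compares the finite difference with the derivative of the perturbed canonical system; this is where the $p^{-1/4}$ rate comes from.
\begin{itemize}
\item Your ``pencil derivative'' option needs exactly this finite-difference device, because a pointwise deterministic equivalent cannot simply be differentiated.
\item Your leave-one-out relation $\rho^2x=u+uvx+\text{error}$ is the route the paper takes in Lemma~\ref{lem:cross_ridge_DE}. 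It gives the better rate $O_{\Pp}(n^{-1/2})$, but it requires the full Gaussian integration-by-parts and leave-one-out expansion, which your proposal only sketches.
\end{itemize}
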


\begin{proof}
For fixed $x$ in the compact interval in \eqref{eq:canonical_compact}, the variables
\[
 \psi_x(D_{i,p}^0)=
 \frac{D_{i,p}^0}{\rho(1+xD_{i,p}^0)}
\]
are independent, bounded by $(\rho_0\mu_-)^{-1}$, and have a Lipschitz constant at most $\rho_0^{-1}\mu_-^{-2}$ in $x$.  Bernstein's inequality \citep[Theorem~2.10 and Corollary~2.11, pp.~36--37]{BoucheronLugosiMassart2013} on a grid of mesh $n^{-2}$, followed by the Lipschitz extension from the grid, gives
\begin{equation}\label{eq:empirical_process_fixed_point}
 \sup_{0\le x\le C_\delta}
 \left|\frac1n\sum_{i=1}^n\psi_x(D_{i,p}^0)
 -\E\psi_x(D_p^0)\right|
 =O_{\Pp}\left(\sqrt{\frac{\log n}{n}}\right).
\end{equation}
We first obtain consistency without linearizing at an unverified intermediate point.  The same elementary bounds used in \eqref{eq:canonical_compact}, with the empirical average in place of expectation, show that both empirical coordinates belong to a fixed compact interval with probability tending to one.  We now compare the empirical and finite-$p$ deterministic solutions through their common subsequential limits.  Take an arbitrary subsequence.  Compactness permits a further subsequence along which
\[
 (\delta_{p,\rho},\wt\delta_{p,\rho})
 \to(\delta_\star,\wt\delta_\star).
\]
Along a still further subsequence, the left-hand side of
\eqref{eq:empirical_process_fixed_point} converges to zero almost surely.  Assumption~1, the continuous-mapping theorem, and $c_{B,n}\to c$ give
\[
 H_{B,p}\Rightarrow H_B,
 \qquad F_{D,p}\Rightarrow F_D.
\]
For every outcome in this probability-one event, any convergent subsubsequence of
$(\delta_{n,\rho}^{D},\wt\delta_{n,\rho}^{D})$ therefore satisfies
\begin{align*}
 \delta
 &=c\int\frac{t}{\rho(1+\wt\delta t)}\,\dd H_B(t),\\
 \wt\delta
 &=\int\frac{d}{\rho(1+\delta d)}\,\dd F_D(d).
\end{align*}
The deterministic limit $(\delta_\star,\wt\delta_\star)$ satisfies the same system.  Its positive solution is unique, so the two limits coincide.  Thus every subsequence has a further subsequence on which
\[
 |\delta_{n,\rho}^{D}-\delta_{p,\rho}|
 +|\wt\delta_{n,\rho}^{D}-\wt\delta_{p,\rho}|
 \to0
\]
almost surely.  The subsequence criterion yields
\[
 |\delta_{n,\rho}^{D}-\delta_{p,\rho}|
 +|\wt\delta_{n,\rho}^{D}-\wt\delta_{p,\rho}|=o_{\Pp}(1).
\]
At the deterministic point $\delta_{p,\rho}$, Bernstein's inequality \citep[Theorem~2.10 and Corollary~2.11, pp.~36--37]{BoucheronLugosiMassart2013} gives
\[
 \frac1n\sum_{i=1}^n\psi_{\delta_{p,\rho}}(D_{i,p}^0)
 -\E\psi_{\delta_{p,\rho}}(D_p^0)
 =O_{\Pp}(n^{-1/2}).
\]
Subtracting the empirical and population systems and applying the mean-value theorem now gives
\[
 \begin{pmatrix}
 1&u_{n,\rho}^{D}/\rho\\
 v_{n,\rho}^{D}/\rho&1
 \end{pmatrix}
 \begin{pmatrix}
 \delta_{n,\rho}^{D}-\delta_{p,\rho}\\
 \wt\delta_{n,\rho}^{D}-\wt\delta_{p,\rho}
 \end{pmatrix}
 =\begin{pmatrix}0\\O_{\Pp}(n^{-1/2})\end{pmatrix},
\]
where $u_{n,\rho}^{D}$ and $v_{n,\rho}^{D}$ are evaluated on the line segment joining the two solutions.  The preceding $o_{\Pp}(1)$ relation and continuity of the derivatives imply
\[
 \rho^2-u_{n,\rho}^{D}v_{n,\rho}^{D}
 =\rho^2-u_{p,\rho}^{\star}v_{p,\rho}^{\star}+o_{\Pp}(1)
 \ge s_0/2
\]
with probability tending to one.  The inverse of the displayed $2\times2$ matrix is therefore $O_{\Pp}(1)$, proving \eqref{eq:empirical_fixed_point_rate}.

Conditional on $\mD_p^0$, the matrix
$\mLambda_{B,p}^{1/2}\mG_B(\mD_p^0)^{1/2}/\sqrt n$ has a bounded separable variance profile.  Theorem~1.1 of
\citet{HachemLoubatonNajimVallet2013}, applied to the deterministic
coordinate vectors $\bm e_i$, supplies a uniform second-moment bound for
each diagonal bilinear form.  In the present notation this gives
\[
 \E_{\mG_B}\left[
 \left.
 \left\{(\overline\mA_n)_{ii}
 -\frac{\delta_{n,\rho}^{D}D_{i,p}^0}
 {1+\delta_{n,\rho}^{D}D_{i,p}^0}\right\}^2
 \right|\mD_p^0\right]\le \frac Cn,
 \qquad 1\le i\le n.
\]
Averaging this inequality and applying conditional Markov's inequality,
and then applying Cauchy--Schwarz to the averaged diagonal error, yield
\begin{align*}
 \frac1n\sum_{i=1}^n
 \left\{(\overline\mA_n)_{ii}
 -\frac{\delta_{n,\rho}^{D}D_{i,p}^0}
 {1+\delta_{n,\rho}^{D}D_{i,p}^0}\right\}^2
 &=O_{\Pp}(n^{-1}),\\
 \frac1n\tr\overline\mA_n
 -\frac1n\sum_{i=1}^n
 \frac{\delta_{n,\rho}^{D}D_{i,p}^0}
 {1+\delta_{n,\rho}^{D}D_{i,p}^0}
 &=O_{\Pp}(n^{-1/2}).
\end{align*}
The Lipschitz bound
\[
 \left|\frac{xd}{1+xd}-\frac{yd}{1+yd}\right|
 \le d_+|x-y|
\]
and \eqref{eq:empirical_fixed_point_rate}, followed by Bernstein's inequality \citep[Theorem~2.10 and Corollary~2.11, pp.~36--37]{BoucheronLugosiMassart2013} for the final empirical average, prove
\eqref{eq:first_order_trace_empirical}--\eqref{eq:diagonal_mse_empirical}.  Notice that Theorem~2.4 of \citet{HachemLoubatonNajim2007} is used only for existence and uniqueness of the deterministic system; the bilinear moment bound needed here is Theorem~1.1 of \citet{HachemLoubatonNajimVallet2013}.

It remains to prove \eqref{eq:second_order_trace_DE}.  For $0\le\varepsilon\le\varepsilon_0$, define
\[
 \overline\mR_n(\varepsilon)
 =\{\overline\mB_{B,n}+\rho\I_{p-r}
 +\varepsilon\mLambda_{B,p}\}^{-1},
 \qquad
 \widehat\delta_n(\varepsilon)
 =\frac1n\tr\{\mLambda_{B,p}\overline\mR_n(\varepsilon)\}.
\]
Direct differentiation gives
\begin{align*}
 -\widehat\delta_n'(0)
 &=\frac1n\tr(\mLambda_{B,p}\overline\mR_n
 \mLambda_{B,p}\overline\mR_n),\\
 |\widehat\delta_n''(\varepsilon)|
 &\le \frac{2}{n}\tr(\mLambda_{B,p}^3)
 \rho_0^{-3}\le C.
\end{align*}
Let $h_n=n^{-1/4}$.  Taylor's formula therefore gives
\begin{equation}\label{eq:finite_difference_random}
 -\widehat\delta_n'(0)
 =-\frac{\widehat\delta_n(h_n)-\widehat\delta_n(0)}{h_n}
 +O(h_n).
\end{equation}
For $\varepsilon>0$, premultiplication and postmultiplication by
$(\rho\I_{p-r}+\varepsilon\mLambda_{B,p})^{-1/2}$ reduce the matrix
again to the separable covariance model with row scales
$\lambda_{j,p}/(\rho+\varepsilon\lambda_{j,p})$ and scalar ridge one.
The bilinear resolvent bound just used for
\eqref{eq:first_order_trace_empirical}, applied after this rescaling at
$\varepsilon=0$ and $\varepsilon=h_n$, gives
\[
 \widehat\delta_n(\varepsilon)-\delta_n^{D}(\varepsilon)
 =O_{\Pp}(n^{-1/2}),
 \qquad \varepsilon\in\{0,h_n\},
\]
where $\delta_n^{D}(\varepsilon)$ is the first component of the perturbed empirical canonical system
\begin{align*}
 \delta_n^{D}(\varepsilon)
 &=\frac1n\sum_{j=r+1}^p
 \frac{\lambda_{j,p}}
 {\rho+\varepsilon\lambda_{j,p}
 +\rho\wt\delta_n^{D}(\varepsilon)\lambda_{j,p}},\\
 \wt\delta_n^{D}(\varepsilon)
 &=\frac1n\sum_{i=1}^n
 \frac{D_{i,p}^0}{\rho(1+\delta_n^{D}(\varepsilon)D_{i,p}^0)}.
\end{align*}
All first two derivatives of this system are uniformly bounded by
Lemma~\ref{lem:canonical_stability}.  Differentiating at zero gives
\begin{align*}
 \{\delta_n^{D}\}'(0)
 &=-\frac{u_{n,\rho}^{D}}
 {\rho^2-u_{n,\rho}^{D}v_{n,\rho}^{D}},\\
 u_{n,\rho}^{D}
 &=c_{B,n}\int\frac{t^2}
 {(1+\wt\delta_{n,\rho}^{D}t)^2}\,\dd H_{B,p}(t),\\
 v_{n,\rho}^{D}
 &=\frac1n\sum_{i=1}^n
 \frac{(D_{i,p}^0)^2}
 {(1+\delta_{n,\rho}^{D}D_{i,p}^0)^2}.
\end{align*}
Consequently,
\begin{equation}\label{eq:finite_difference_deterministic}
 -\frac{\delta_n^{D}(h_n)-\delta_n^{D}(0)}{h_n}
 =\frac{u_{n,\rho}^{D}}
 {\rho^2-u_{n,\rho}^{D}v_{n,\rho}^{D}}
 +O(h_n).
\end{equation}
Equations \eqref{eq:empirical_fixed_point_rate} and Bernstein's inequality \citep[Theorem~2.10 and Corollary~2.11, pp.~36--37]{BoucheronLugosiMassart2013} imply
\[
 |u_{n,\rho}^{D}-u_{p,\rho}^{\star}|
 +|v_{n,\rho}^{D}-v_{p,\rho}^{\star}|
 =O_{\Pp}(n^{-1/2}).
\]
Substitution into \eqref{eq:finite_difference_random}--\eqref{eq:finite_difference_deterministic} yields
\[
 -\widehat\delta_n'(0)
 =\frac{u_{p,\rho}^{\star}}
 {\rho^2-u_{p,\rho}^{\star}v_{p,\rho}^{\star}}
 +O_{\Pp}(n^{-1/2}h_n^{-1}+h_n)
 =\frac{u_{p,\rho}^{\star}}
 {\rho^2-u_{p,\rho}^{\star}v_{p,\rho}^{\star}}
 +O_{\Pp}(p^{-1/4}).
\]
Deleting one or two columns is a fixed-rank perturbation.  Applying the resolvent identity before the normalized trace changes the last display by at most $C/n$, which proves the final assertion.
\end{proof}

\begin{proposition}\label{prop:weighted_DE}
Under Assumption~1, for every fixed
$\rho\in[\rho_0,\rho_1]$,
\begin{align}
 \kappa_n-\kappa_{p,\rho}^{\star}
 &=O_{\Pp}\left(\sqrt{\frac{\log p}{p}}\right),
 \label{eq:kappa_DE}\\
 e_n-m_1s_{1,p}
 &=O_{\Pp}\left(\sqrt{\frac{\log p}{p}}\right),
 &t_n-m_2s_{2,p}
 &=O_{\Pp}\left(\sqrt{\frac{\log p}{p}}\right),
 \label{eq:et_DE}\\
 b_{a,n}-m_ab_{a,p,\rho}^{\star}
 &=O_{\Pp}\left(\sqrt{\frac{\log p}{p}}\right),
 \qquad a=1,2,
 \label{eq:b_DE}\\
 \psi_{ab,n}-\Psi_{ab,p}^{\star}(\rho)
 &=O_{\Pp}(p^{-1/4}),
 \qquad a,b\in\{0,1,2\}.
 \label{eq:psi_DE}
\end{align}
Consequently,
\begin{equation}\label{eq:D_sigma_DE}
 D_n-D_{p,\rho}^{\star}
 =O_{\Pp}\left(\sqrt{\frac{\log p}{p}}\right),
 \qquad
 \sigma_{D,n}^2-(\sigma_{E,p}^{\star})^2
 =O_{\Pp}(p^{-1/4}).
\end{equation}
There are constants $0<C_1<C_2<\infty$ such that, with probability tending to one,
\begin{equation}\label{eq:nondegenerate_DE}
 C_1\le D_n,\ \mu_n,\ \sigma_{D,n}^2\le C_2.
\end{equation}
\end{proposition}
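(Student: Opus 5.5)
The plan is to reduce every functional of the true companion matrix $\mA_n$ to the separable proxy $\overline\mA_n$ of Lemma~\ref{lem:empirical_canonical}, and then to invoke that lemma together with scalar concentration. The reduction has two parts.

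\emph{Step 1: removal of the pervasive rows and of the angular perturbation.} Write $\mY=\mP_{F,p}\mY_F+\mP_{B,p}\mY_B$. Here $\mY_B=\mLambda_{B,p}^{1/2}\mG_B\diag(q_{i,p}^{-1/2})$, and $\mY_F$ has $r$ rows of order $\sqrt p$. Because $\mA_n=\I_n-\rho(n^{-1}\mY^\top\mY+\rho\I_n)^{-1}$, the factor block enters only as a rank-$r$ perturbation $n^{-1}\mY_F^\top\mY_F$ of the Gram matrix. By the resolvent identity, such a perturbation changes $n^{-1}\tr$ and every weighted diagonal average $n^{-1}\sum_i(\cdot)_{ii}w_i^a$ by $O(n^{-1})$. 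It changes the Frobenius-type sums $n^{-1}\sum_{i\ne j}(\cdot)_{ij}^2w_i^aw_j^b$ by $O(n^{-1})$ times the operator norm, because the weights are bounded by $\bar w$ from Lemma~\ref{lem:angular_denominator}. The proxy in Lemma~\ref{lem:empirical_canonical} uses $D_{i,p}^0$ in place of $q_{i,p}^{-1}$. Replacing one by the other is a column rescaling $\mY_B=\overline\mY_B(\I+\mE)$ with $\norm{\mE}_{\op}=O_{\Pp}(\sqrt{\log p/p})$ by \eqref{eq:uv_max}. Since $\mA_n$ is Lipschitz in the Gram matrix in operator norm, the diagonal averages move by $O_{\Pp}(\sqrt{\log p/p})$. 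The same substitution $w_i=\xi_i\ell_{i,p}=\xi_i(D_{i,p}^0)^{1/2}+O_{\Pp}(\sqrt{\log p/p})$ uniformly, from \eqref{eq:ell_D_rate}, handles the weights.

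\emph{Step 2: first-order quantities.} After the reduction, $\kappa_n$ follows from \eqref{eq:first_order_trace_empirical}. For $e_n$ and $t_n$, the summands $\xi_i^k(D_{i,p}^0)^{k/2}$ are i.i.d. and bounded, so Bernstein's inequality gives $m_ks_{k,p}+O_{\Pp}(n^{-1/2})$. For $b_{a,n}$, first replace $(\overline\mA_n)_{ii}$ by $\lambda_{p,\rho}(D_{i,p}^0)$, with error bounded via Cauchy--Schwarz and \eqref{eq:diagonal_mse_empirical}. Then apply Bernstein to $\lambda_{p,\rho}(D_{i,p}^0)\xi_i^a(D_{i,p}^0)^{a/2}$, using the independence of $\xi_i$ from $D_{i,p}^0$ to factor the mean as $m_ab^\star_{a,p,\rho}$.

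\emph{Step 3: the second-order functionals, which I expect to be the main obstacle.} Write $(\overline\mA_n)_{ij}=n^{-1}\bm y_i^\top\overline\mQ\bm y_j$. By the leave-two-out Sherman--Morrison identities,
\[
 (\overline\mA_n)_{ij}=\frac{n^{-1}\bm y_i^\top\overline\mQ_{ij}\bm y_j}{(1+n^{-1}\bm y_i^\top\overline\mQ_{i}\bm y_i)(1+n^{-1}\bm y_j^\top\overline\mQ_{ij}\bm y_j)}\{1+o_{\Pp}(1)\}.
\]
The denominators concentrate at $(1+\delta D_i)$, which produces the factors $D_i/(1+\delta D_i)^2$. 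Conditionally on everything except $\bm g_i$ and $\bm g_j$, $\E(\bm y_i^\top\overline\mQ_{ij}\bm y_j)^2=D_iD_j\tr(\mLambda_B\overline\mQ_{ij}\mLambda_B\overline\mQ_{ij})$. The trace is then replaced by $u^\star/(\rho^2-u^\star v^\star)$ using \eqref{eq:second_order_trace_DE} and its column-deletion version. Summing over $i\ne j$ with weights $\xi_i^a\xi_j^b$ and averaging over $\xi$ and $D$ by a U-statistic law of large numbers gives $\Psi^\star_{ab,p}$. The delicate points are two. First, the fluctuation of $\sum_{i\ne j}$ around its conditional mean must be shown to be small; I would bound it by a second-moment computation with four-index resolvent expansions. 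Second, the $p^{-1/4}$ rate is dictated by \eqref{eq:second_order_trace_DE}.

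\emph{Step 4: consequences.} The map $(\kappa,e,t,b_1,b_2)\mapsto D$ is polynomial, which gives \eqref{eq:D_sigma_DE} for $D_n$. For $\sigma^2_{D,n}$, use that $\bg$ is smooth once $D$ is bounded below. For \eqref{eq:nondegenerate_DE}, show $D^\star_{p,\rho}\ge m_2\kappa^\star(s_2-b_2^\star)$, which is positive because $s_2-b_2^\star=\int d(1+\delta d)^{-1}\dd F_{D,p}$ is bounded below and $\kappa^\star\ge c$ by Lemma~\ref{lem:canonical_stability}. Upper bounds follow from $d\le\mu_-^{-1}$. For $\sigma^{\star2}_{E,p}$, the third coordinate of $\bg$ is bounded below, and $\Psi^\star_{22}>0$ by \eqref{eq:stability_finite}. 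To bound $\bg^\top\mGamma\bg$ below, I would write $\mGamma$ as the covariance of a suitable Gaussian quadratic-form vector, which makes it positive semidefinite, and then use positivity of its last diagonal entry together with the structure $\Psi_{ab}\propto m_am_bA_aA_b$.
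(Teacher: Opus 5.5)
Your proposal is correct and follows essentially the paper's route: reduction to the separable proxy (a rank-$r$ factor perturbation of the Gram matrix plus an $O_{\Pp}(\sqrt{\log p/p})$ angular column rescaling), Lemma~\ref{lem:empirical_canonical} and Bernstein for $\kappa_n,e_n,t_n,b_{a,n}$, and for $\psi_{ab,n}$ the leave-two-out identity, the conditional second moment $D_{i,p}^0D_{j,p}^0\tr(\mLambda_{B,p}\mR^{(ij)}\mLambda_{B,p}\mR^{(ij)})$, a concentration step, and \eqref{eq:second_order_trace_DE}. For the concentration step, the paper removes the radii by Efron--Stein and controls the Gaussian fluctuation with the Poincar\'e inequality rather than a direct variance expansion.

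The one point to tighten is the lower bound on $\sigma^2$. Positive semidefiniteness of $\mGamma$ does not by itself bound $\bg^\top\mGamma\bg$ below. What works is entrywise nonnegativity of both $\bg$ and $\mGamma$; note that $e_n-b_{1,n}=n^{-1}\sum_i\{1-(\mA_n)_{ii}\}w_i\ge0$. This gives $\bg^\top\mGamma\bg\ge2\Psi_{22}\,g_3^2$, with $g_3=\kappa^2/D^2$ the third entry of $\bg$; the paper uses the analogous bound $\sigma_{D,n}^2\ge2g_{1,n}^2\psi_{00,n}$. Finally, the lower bound on $\Psi_{22,p}^{\star}(\rho)$ comes from $u_{p,\rho}^{\star}\ge c$, $A_{2,p,\rho}^{\star}\ge c$ and $\rho^2-u_{p,\rho}^{\star}v_{p,\rho}^{\star}\le\rho_1^2$, not from \eqref{eq:stability_finite}.
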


\begin{proof}
Let
\[
 \mY_B=\mP_{B,p}^\top\mY,
 \qquad
 \overline{\mY}_B
 =\mLambda_{B,p}^{1/2}\mG_B(\mD_p^0)^{1/2},
 \qquad
 \mD_p^0=\diag(D_{1,p}^0,\ldots,D_{n,p}^0).
\]
Define
\begin{equation}\label{eq:separable_proxy_main}
 \overline\mB_{B,n}
 =n^{-1}\overline\mY_B\overline\mY_B^\top,
 \qquad
 \overline\mK_{B,n}
 =n^{-1}\overline\mY_B^\top\overline\mY_B,
\end{equation}
and put
\[
 \mK_{B,n}=n^{-1}\mY_B^\top\mY_B,
 \qquad
 \overline\mA_n
 =\overline\mK_{B,n}(\overline\mK_{B,n}+\rho\I_n)^{-1}.
\]
Lemma~\ref{lem:angular_denominator} and
$\|\mG_B\|_{\op}=O_{\Pp}(\sqrt n+\sqrt p)$ give
\begin{equation*}
 \|\mK_{B,n}-\overline\mK_{B,n}\|_{\op}
 =O_{\Pp}\left(\sqrt{\frac{\log p}{p}}\right).
\end{equation*}
For positive semidefinite $\mK_1,\mK_2$,
\[
 \mK_1(\mK_1+\rho\I)^{-1}
 -\mK_2(\mK_2+\rho\I)^{-1}
 =\rho(\mK_2+\rho\I)^{-1}
 (\mK_1-\mK_2)(\mK_1+\rho\I)^{-1}.
\]
Thus, with $\mA_{B,n}=\mK_{B,n}(\mK_{B,n}+\rho\I)^{-1}$,
\begin{equation}\label{eq:proxy_A_rate_new}
 \|\mA_{B,n}-\overline\mA_n\|_{\op}
 =O_{\Pp}\left(\sqrt{\frac{\log p}{p}}\right).
\end{equation}
The full Gram matrix differs from the bulk Gram matrix by a positive semidefinite matrix of rank at most $r$.  Resolvent monotonicity and eigenvalue interlacing give
\begin{equation}\label{eq:factor_trace_rank_new}
 \rank(\mA_n-\mA_{B,n})\le r,
 \qquad
 \|\mA_n-\mA_{B,n}\|_{\op}\le1,
 \qquad
 \left|n^{-1}\tr(\mA_n-\mA_{B,n})\right|\le r/n.
\end{equation}
Lemma~\ref{lem:empirical_canonical} now proves \eqref{eq:kappa_DE}.

Put $\widetilde w_i=\xi_i(D_{i,p}^0)^{1/2}$.  By
\eqref{eq:ell_D_rate},
\begin{equation}\label{eq:w_proxy_rate}
 \max_{i\le n}|w_i-\widetilde w_i|
 =O_{\Pp}\left(\sqrt{\frac{\log p}{p}}\right).
\end{equation}
For each $a=1,2$, the variables $\widetilde w_i^a$ are independent across $i$ and uniformly bounded.  Hence
\[
 n^{-1}\sum_i\widetilde w_i^a-m_as_{a,p}=O_{\Pp}(n^{-1/2}),
\]
which proves \eqref{eq:et_DE}.

For $a=1,2$, decompose
\begin{align*}
 b_{a,n}-m_ab_{a,p,\rho}^{\star}
 ={}&\frac1n\sum_i(\mA_n)_{ii}
 \{w_i^a-\widetilde w_i^a\}\\
 &+\frac1n\tr\{\mD_{\widetilde w}^a
 (\mA_{B,n}-\overline\mA_n)\}\\
 &+\frac1n\tr\{\mD_{\widetilde w}^a
 (\mA_n-\mA_{B,n})\}\\
 &+\frac1n\sum_i\widetilde w_i^a
 \{(\overline\mA_n)_{ii}-\lambda_{p,\rho}(D_{i,p}^0)\}\\
 &+\frac1n\sum_i\left[
 \xi_i^a(D_{i,p}^0)^{a/2}\lambda_{p,\rho}(D_{i,p}^0)
 -m_ab_{a,p,\rho}^{\star}\right],
\end{align*}
where $\mD_{\widetilde w}=\diag(\widetilde w_1,\ldots,\widetilde w_n)$.  The five terms are, respectively,
\[
 O_{\Pp}\left(\sqrt{\frac{\log p}{p}}\right),\quad
 O_{\Pp}\left(\sqrt{\frac{\log p}{p}}\right),\quad
 O_{\Pp}(n^{-1}),\quad O_{\Pp}(n^{-1/2}),\quad
 O_{\Pp}(n^{-1/2}).
\]
For the third term we used
$|\tr(\mD_{\widetilde w}^a(\mA_n-\mA_{B,n}))|
\le \bar w^a\|\mA_n-\mA_{B,n}\|_*$ and
\eqref{eq:factor_trace_rank_new}; the fourth follows from
\eqref{eq:diagonal_mse_empirical}.  This proves \eqref{eq:b_DE}.

It remains to prove \eqref{eq:psi_DE}.  Conditional on the angular observations, changing one radial variable $\xi_i$ changes the off-diagonal weighted sum by at most
\[
 \frac Cn\sum_{j\ne i}\{(\mA_n)_{ij}^2+(\mA_n)_{ji}^2\}
 \le\frac Cn.
\]
The Efron--Stein inequality \citep[Theorem~3.1, p.~54]{BoucheronLugosiMassart2013} therefore gives
\begin{equation}\label{eq:radial_offdiag_remove}
 \psi_{ab,n}
 -m_am_b\frac1n\sum_{i\ne j}(\mA_n)_{ij}^2
 \ell_{i,p}^a\ell_{j,p}^b
 =O_{\Pp}(n^{-1/2}).
\end{equation}
Replacing $\ell_{i,p}$ by $(D_{i,p}^0)^{1/2}$ and $\mA_n$ by
$\overline\mA_n$ costs
$O_{\Pp}(\sqrt{\log p/p})$ by
\eqref{eq:w_proxy_rate}, \eqref{eq:proxy_A_rate_new},
\eqref{eq:factor_trace_rank_new}, and
$n^{-1}\tr(\mA_n^2)\le1$.  It remains to evaluate
\begin{equation}\label{eq:proxy_weighted_offdiag}
 T_{ab,n}=\frac1n\sum_{i\ne j}(\overline\mA_n)_{ij}^2
 (D_{i,p}^0)^{a/2}(D_{j,p}^0)^{b/2}.
\end{equation}

Conditional on $\mD_p^0$, write
$\bm x_i=(D_{i,p}^0)^{1/2}\mLambda_{B,p}^{1/2}\bm g_i$ and let
\[
 \mR^{(ij)}=
 \left(\frac1n\sum_{k\ne i,j}\bm x_k\bm x_k^\top
 +\rho\I_{p-r}\right)^{-1}.
\]
The two-column Schur complement gives, for $i\ne j$,
\begin{equation}\label{eq:Qij_leave_two}
 (\overline\mA_n)_{ij}
 =\frac{n^{-1}\bm x_i^\top\mR^{(ij)}\bm x_j}
 {\{1+n^{-1}\bm x_i^\top\mR^{(ij)}\bm x_i\}
  \{1+n^{-1}\bm x_j^\top\mR^{(ij)}\bm x_j\}
  -n^{-2}(\bm x_i^\top\mR^{(ij)}\bm x_j)^2}.
\end{equation}
Theorem~1.1 of \citet{HachemLoubatonNajimVallet2013} gives the averaged quadratic-form bounds
\begin{align*}
 \frac1n\sum_i\left|
 n^{-1}\bm x_i^\top\mR^{(i)}\bm x_i
 -\delta_{n,\rho}^{D}D_{i,p}^0\right|^2
 &=O_{\Pp}(n^{-1}),\\
 \frac1{n^2}\sum_{i\ne j}
 \left|n^{-1}\bm x_i^\top\mR^{(ij)}\bm x_j\right|^4
 &=O_{\Pp}(n^{-2}).
\end{align*}
Consequently, replacing the denominator in \eqref{eq:Qij_leave_two} by
$\{1+\delta_{n,\rho}^{D}D_{i,p}^0\}
 \{1+\delta_{n,\rho}^{D}D_{j,p}^0\}$ changes $T_{ab,n}$ by
$O_{\Pp}(n^{-1/2})$.

Put
\[
 \gamma_i=(1+\delta_{n,\rho}^{D}D_{i,p}^0)^{-1},
 \qquad
 q_{ij}=\bm x_i^\top\mR^{(ij)}\bm x_j,
\]
and define
\begin{align*}
 \widetilde T_{ab,n}
 &=\frac1{n^3}\sum_{i\ne j}
 (D_{i,p}^0)^{a/2}(D_{j,p}^0)^{b/2}
 \gamma_i^2\gamma_j^2q_{ij}^2,\\
 M_{ab,n}
 &=\frac1{n^3}\sum_{i\ne j}
 (D_{i,p}^0)^{1+a/2}(D_{j,p}^0)^{1+b/2}
 \gamma_i^2\gamma_j^2
 \tr(\mLambda_{B,p}\mR^{(ij)}
      \mLambda_{B,p}\mR^{(ij)}).
\end{align*}
The preceding denominator replacement gives
$T_{ab,n}-\widetilde T_{ab,n}=O_{\Pp}(n^{-1/2})$.
Moreover, conditional on $\mR^{(ij)}$ and $\mD_p^0$,
\[
 \E_{i,j}q_{ij}^2
 =D_{i,p}^0D_{j,p}^0
 \tr(\mLambda_{B,p}\mR^{(ij)}
      \mLambda_{B,p}\mR^{(ij)}),
\]
so $\E_{\mG_B}(\widetilde T_{ab,n}-M_{ab,n}\mid\mD_p^0)=0$.
We next verify concentration around this conditional expectation rather
than using the Poincar\'e inequality to identify the expectation itself.
For $k\notin\{i,j\}$,
\begin{align*}
 \frac{\partial q_{ij}}{\partial g_{\alpha k}}
 =-\frac{\sqrt{D_{k,p}^0\lambda_{B,\alpha,p}}}{n}
 \big\{&
 (\bm x_i^\top\mR^{(ij)}\bm e_\alpha)
 (\bm x_k^\top\mR^{(ij)}\bm x_j)\\
 &+(\bm x_i^\top\mR^{(ij)}\bm x_k)
 (\bm e_\alpha^\top\mR^{(ij)}\bm x_j)
 \big\},
\end{align*}
whereas
\[
 \frac{\partial q_{ij}}{\partial g_{\alpha i}}
 =\sqrt{D_{i,p}^0\lambda_{B,\alpha,p}}
 \bm e_\alpha^\top\mR^{(ij)}\bm x_j,
 \qquad
 \frac{\partial q_{ij}}{\partial g_{\alpha j}}
 =\sqrt{D_{j,p}^0\lambda_{B,\alpha,p}}
 \bm x_i^\top\mR^{(ij)}\bm e_\alpha.
\]
In addition, for $k\notin\{i,j\}$,
\[
 \frac{\partial}{\partial g_{\alpha k}}
 \tr(\mLambda_{B,p}\mR^{(ij)}
      \mLambda_{B,p}\mR^{(ij)})
 =-\frac{4\sqrt{D_{k,p}^0\lambda_{B,\alpha,p}}}{n}
 \bm e_\alpha^\top\mR^{(ij)}\mLambda_{B,p}
 \mR^{(ij)}\mLambda_{B,p}\mR^{(ij)}\bm x_k.
\]
Using these identities, $D_{i,p}^0\le\mu_-^{-1}$,
$\|\mLambda_{B,p}\|_{\op}\le\omega_+$, and
$\|\mR^{(ij)}\|_{\op}\le\rho_0^{-1}$, repeated
Cauchy--Schwarz followed by the Gaussian second- and fourth-moment
identities gives the conditional gradient bound
\[
 \E_{\mG_B}\left[
 \sum_{\alpha=1}^{p-r}\sum_{k=1}^n
 \left|
 \frac{\partial(\widetilde T_{ab,n}-M_{ab,n})}
      {\partial g_{\alpha k}}
 \right|^2\,\middle|\,\mD_p^0\right]
 \le \frac Cn.
\]
Here the terms with $k\in\{i,j\}$ are controlled by the row energies
of the two resolvents, whereas the terms with $k\notin\{i,j\}$ contain
an additional factor $n^{-1}$; after summing over $(i,j,k)$ both classes
are bounded by $C/n$.  The constant is uniform because $p/n$ is bounded
and all Gaussian moments involved are finite.  Since the conditional
mean of $\widetilde T_{ab,n}-M_{ab,n}$ is zero, the Gaussian
Poincar\'e inequality
\citep[Theorem~3.20, p.~72]{BoucheronLugosiMassart2013}
therefore yields
\[
 \E_{\mG_B}\left[
 (\widetilde T_{ab,n}-M_{ab,n})^2\mid\mD_p^0\right]
 \le \frac Cn.
\]
Consequently,
\begin{equation*}
 \widetilde T_{ab,n}-M_{ab,n}=O_{\Pp}(n^{-1/2}).
\end{equation*}

It remains to compute $M_{ab,n}$.  Let
\[
 S_{n,\rho}=\frac1n\tr(\mLambda_{B,p}\overline\mR_n
 \mLambda_{B,p}\overline\mR_n),
 \qquad
 \overline A_{a,n}=\frac1n\sum_i
 \frac{(D_{i,p}^0)^{1+a/2}}
 {(1+\delta_{n,\rho}^{D}D_{i,p}^0)^2}.
\]
Deleting two columns is a rank-two perturbation.  The resolvent identity,
$\|\mR^{(ij)}\|_{\op},\|\overline\mR_n\|_{\op}\le\rho_0^{-1}$,
and $\|\mLambda_{B,p}\|_{\op}\le\omega_+$ imply the deterministic bound
\[
 \left|
 \tr(\mLambda_{B,p}\mR^{(ij)}
      \mLambda_{B,p}\mR^{(ij)})
 -\tr(\mLambda_{B,p}\overline\mR_n
      \mLambda_{B,p}\overline\mR_n)
 \right|\le C.
\]
Since all $D_{i,p}^0$ and $\gamma_i$ are uniformly bounded, summing this
bound over $i\ne j$ gives an $O(n^{-1})$ contribution.  Restoring the
omitted diagonal $i=j$ also costs $O(n^{-1})$.  Consequently,
\[
 M_{ab,n}=S_{n,\rho}\overline A_{a,n}\overline A_{b,n}
 +O(n^{-1}),
\]
and therefore
\begin{align*}
&T_{ab,n}
 -S_{n,\rho}
 \left\{\frac1n\sum_i
 \frac{(D_{i,p}^0)^{1+a/2}}
 {(1+\delta_{n,\rho}^{D}D_{i,p}^0)^2}\right\}
 \left\{\frac1n\sum_j
 \frac{(D_{j,p}^0)^{1+b/2}}
 {(1+\delta_{n,\rho}^{D}D_{j,p}^0)^2}\right\}
 =O_{\Pp}(n^{-1/2}).
\end{align*}

Lemma~\ref{lem:empirical_canonical},
\eqref{eq:empirical_fixed_point_rate}, and Bernstein's inequality \citep[Theorem~2.10 and Corollary~2.11, pp.~36--37]{BoucheronLugosiMassart2013} now give
\[
 T_{ab,n}
 =\frac{u_{p,\rho}^{\star}
 A_{a,p,\rho}^{\star}A_{b,p,\rho}^{\star}}
 {\rho^2-u_{p,\rho}^{\star}v_{p,\rho}^{\star}}
 +O_{\Pp}(p^{-1/4}).
\]
Together with \eqref{eq:radial_offdiag_remove}, this proves
\eqref{eq:psi_DE}.

Finally,
\[
 s_{1,p}-b_{1,p,\rho}^{\star}
 =\int\frac{d^{1/2}}{1+\delta_{p,\rho}d}\,\dd F_{D,p}(d),
 \qquad
 s_{2,p}-b_{2,p,\rho}^{\star}
 =\int\frac{d}{1+\delta_{p,\rho}d}\,\dd F_{D,p}(d).
\]
Since $0<D_p^0\le d_+:=\mu_-^{-1}$ and
$c_\delta\le\delta_{p,\rho}\le C_\delta$,
\begin{align*}
 s_{1,p}-b_{1,p,\rho}^{\star}
 &=\E\left\{\frac{(D_p^0)^{1/2}}
 {1+\delta_{p,\rho}D_p^0}\right\}
 \ge \frac{\E(D_p^0)^{1/2}}{1+C_\delta d_+}
 \ge \frac1{1+C_\delta d_+},\\
 \kappa_{p,\rho}^{\star}
 &=\E\left\{\frac{\delta_{p,\rho}D_p^0}
 {1+\delta_{p,\rho}D_p^0}\right\}
 \ge\frac{c_\delta\E D_p^0}{1+C_\delta d_+}
 \ge\frac{c_\delta}{1+C_\delta d_+}.
\end{align*}
Moreover,
\begin{align*}
 D_{p,\rho}^{\star}
 &\ge m_-^2(1+C_\delta d_+)^{-2}=:D_->0,\\
 D_{p,\rho}^{\star}
 &\le m_1^2s_{1,p}^2+m_2s_{2,p}
 \le m_1^2d_+ +m_2d_+=:D_+<\infty.
\end{align*}
Consequently,
\[
 \frac{c_\delta}{(1+C_\delta d_+)D_+}
 \le
 \mu_{p,\rho}^{\star}
 :=\frac{\kappa_{p,\rho}^{\star}}{D_{p,\rho}^{\star}}
 \le D_-^{-1}.
\]
The bulk spectrum also gives
\[
 u_{p,\rho}^{\star}
 \ge
 \frac{c_{B,n}\omega_-^2}
 {(1+C_\delta\omega_+)^2}\ge c>0.
\]
Let
$\mathcal E_0=\{\max_{1\le a\le r}|G_a|\le1\}$ and
$\pi_0=\Pp(\mathcal E_0)>0$.  On $\mathcal E_0$,
\[
 Q_p^0=\mu_{0,p}+\sum_{a=1}^r\tau_{a,p}G_a^2
 \le\mu_{0,p}+\sum_{a=1}^r\tau_{a,p}=1,
 \qquad 1\le D_p^0\le d_+.
\]
Hence
\[
 A_{0,p,\rho}^{\star}
 =\E\left\{\frac{D_p^0}{(1+\delta_{p,\rho}D_p^0)^2}\right\}
 \ge\frac{\pi_0}{(1+C_\delta d_+)^2}>0.
\]
Together with
\[
 s_0\le \rho^2-u_{p,\rho}^{\star}v_{p,\rho}^{\star}
 \le\rho_1^2,
\]
these inequalities show that
$\Psi_{00,p}^{\star}$ is bounded above and away from zero.
The rates in \eqref{eq:kappa_DE}--\eqref{eq:psi_DE} transfer the same
bounds to $D_n$, $\mu_n$, and $\psi_{00,n}$ with probability tending to
one.  Write $\bg_n=(g_{1,n},g_{2,n},g_{3,n})^\top$.  Then
\[
 g_{1,n}=D_n^{-2}(e_n-b_{1,n})^2\ge c>0,
 \qquad g_{2,n},g_{3,n}\ge0
\]
with probability tending to one.  The exact covariance calculation in
Lemma~\ref{lem:conditional_covariance_new} gives
\begin{align*}
 \sigma_{D,n}^2
 &=\frac1n\sum_{i<j}a_{ij}^2
 \{2g_{1,n}+g_{2,n}(w_i+w_j)+2g_{3,n}w_iw_j\}^2\\
 &\ge \frac{4g_{1,n}^2}{n}\sum_{i<j}a_{ij}^2
 =2g_{1,n}^2\psi_{00,n}.
\end{align*}
Thus $\sigma_{D,n}^2$ is bounded away from zero, and
\eqref{eq:psi_DE} implies the same lower bound for
$(\sigma_{E,p}^{\star})^2$.  All remaining quantities are bounded
above.  The displayed rates and the mean-value theorem prove
\eqref{eq:D_sigma_DE}--\eqref{eq:nondegenerate_DE}.
\end{proof}

\begin{proof}[Passage to the limits in Section~2.4]
The compactness in Lemma~\ref{lem:canonical_stability},
$c_{B,n}\to c$, $H_{B,p}\Rightarrow H_B$, and
$F_{D,p}\Rightarrow F_D$ imply that every convergent subsequence of
$(\delta_{p,\rho},\wt\delta_{p,\rho})$ solves
(2.8)--(2.9).  Uniqueness gives
\[
 \delta_{p,\rho}\to\delta_\rho,
 \qquad
 \wt\delta_{p,\rho}\to\wt\delta_\rho.
\]
All integrands defining $s_{a,p}$, $b_{a,p,\rho}^{\star}$,
$u_{p,\rho}^{\star}$, $v_{p,\rho}^{\star}$, and
$A_{a,p,\rho}^{\star}$ are uniformly bounded because
$0<D_p^0\le\mu_-^{-1}$.  Dominated convergence therefore yields
\[
 D_{p,\rho}^{\star}\to D_\rho,
 \qquad
 (\sigma_{E,p}^{\star})^2\to\sigma_E^2(\rho).
\]
\end{proof}

\section{Spatial-median expansion under pervasive dependence}\label{app:median}\label{sec:reduction}
\subsection{Spatial-median expansion}\label{subsec:median_expansion}

For later use, write
\[
 \bm U_{F,i}=\mPi_{F,p}\bm U_i,
 \qquad
 \bm U_{B,i}=\mPi_{B,p}\bm U_i,
 \qquad
 \bm Y_{F,i}=\mPi_{F,p}\bm Y_i,
 \qquad
 \bm Y_{B,i}=\mPi_{B,p}\bm Y_i.
\]

Let
\begin{equation*}
\bm Z_n=\frac1{\sqrt n}\sum_{i=1}^n\bm Y_i,
\qquad
\bm W_n=\frac1{\sqrt n}\sum_{i=1}^nw_i\bm Y_i,
\qquad
e_n=\frac1n\sum_{i=1}^nw_i,
\qquad
t_n=\frac1n\sum_{i=1}^nw_i^2.
\end{equation*}
By Proposition~\ref{prop:population_matrices}, the population derivative of the scaled spatial score is $m_1\mJ_{E,p}$.  Hence the population Bahadur term is
\begin{equation*}
\bdelta_n^\star
=\frac1{m_1\sqrt n}\mJ_{E,p}^{-1}\bm Z_n.
\end{equation*}
For the sharper expansion required by the quadratic statistic, define
\begin{align}
\mV_n^{J}
&=\frac1n\sum_{i=1}^nw_i\bm U_i\bm U_i^\top,\notag\\
\mV_{F,n}^{J}
&=\mPi_{F,p}\mV_n^{J}+\mV_n^{J}\mPi_{F,p}
-\mPi_{F,p}\mV_n^{J}\mPi_{F,p},\notag\\
\mJ_{F,n}&=e_n\I_p-\mV_{F,n}^{J},
\qquad
\mG_{F,n}=\mJ_{F,n}^{-1}-e_n^{-1}\I_p.
\label{eq:GFJ_def}
\end{align}
The inverses in \eqref{eq:GFJ_def} exist with probability tending to one.  The matrix $\mG_{F,n}$ has fixed rank.  It retains the random factor--bulk block of the sample score Jacobian, whereas the remaining bulk block has operator norm $O_{\Pp}(p^{-1})$.

\begin{theorem}\label{thm:median}
Under Assumption~1 and $H_0$,
\begin{align}
\wh\btheta-\btheta_{0,p}
&=\frac1{e_n\sqrt n}\bm Z_n
+\frac1{\sqrt n}\mG_{F,n}\bm Z_n
+\bm r_{\theta,B,n},\label{eq:median_sample_expansion}\\
\rank(\mG_{F,n})&\le2r,
\qquad
\norm{\bm r_{\theta,B,n}}=O_{\Pp}(p^{-1}),\label{eq:median_bulk_remainder}
\end{align}
where
\begin{align}
\norm{\mPi_{F,p}\mG_{F,n}\mPi_{F,p}}_{\op}&=O_{\Pp}(1),\label{eq:GF_FF_rate}\\
\norm{\mPi_{F,p}\mG_{F,n}\mPi_{B,p}}_{\op}
+\norm{\mPi_{B,p}\mG_{F,n}\mPi_{F,p}}_{\op}
&=O_{\Pp}(p^{-1/2}),\notag\\
\norm{\mPi_{B,p}\mG_{F,n}\mPi_{B,p}}_{\op}
&=O_{\Pp}(p^{-1}).\label{eq:GF_BB_rate}
\end{align}
The population expansion remains
\begin{equation}\label{eq:median_bahadur}
\wh\btheta-\btheta_{0,p}
=\bdelta_n^\star+O_{\Pp}(p^{-1/2})
\end{equation}
in Euclidean norm.  More importantly, for every fixed $\rho\in[\rho_0,\rho_1]$,
\begin{equation}\label{eq:median_quadratic_replacement}
n\left|
(\wh\btheta-\btheta_{0,p})^\top\mQ_n(\wh\btheta-\btheta_{0,p})
-\frac1{ne_n^2}\bm Z_n^\top\mQ_n\bm Z_n
\right|=O_{\Pp}(1).
\end{equation}
Thus the random sample-Jacobian factor block is retained before ridge suppression; replacing it only by a Euclidean $O_{\Pp}(p^{-1/2})$ remainder would not be sufficient for a $\sqrt n$-order limit.
\end{theorem}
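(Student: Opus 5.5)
We work under $H_0$ with $\btheta_{0,p}=\btheta_p$, so that $\bm X_i-\btheta_p=\sqrt p\,w_i^{-1}\bm U_i$, and write $\bdelta=\wh\btheta-\btheta_p$. The starting point is the score equation $\sum_i\mathbf U(\bm X_i-\wh\btheta)=\0$. In the scaled form it reads $\sum_i(\bm Y_i-w_i\bdelta/\sqrt p)/\norm{\bm Y_i/\sqrt p-w_i\bdelta/\sqrt p}\cdot\ldots=\0$. Concretely, $\bm X_i-\wh\btheta=\sqrt p\,w_i^{-1}(\bm U_i-w_i\bdelta/\sqrt p)$, so
\[
 \sum_i\frac{\bm U_i-\bm\epsilon_i}{\norm{\bm U_i-\bm\epsilon_i}}=\0,
 \qquad \bm\epsilon_i=w_i\bdelta/\sqrt p .
\]

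\textbf{Step 1: a preliminary rate.} We first show $\norm{\bdelta}=O_{\Pp}(1)$, via $\sqrt n$-consistency in the natural scale. Since $\norm{\bm Z_n}^2/n$ has expectation $p/n$, the target is $\bdelta\approx\bm Z_n/(e_n\sqrt n)$, which has norm $O(\sqrt{p/n})=O(1)$. We argue by convexity of $\bm t\mapsto\sum_i\norm{\bm X_i-\bm t}$. On the sphere $\norm{\bm t-\btheta_p}=M$, the directional derivative of the objective is $-\sum_i\bm U_i^\top\bm v$ plus a curvature term of order $n\,e_n M/\sqrt p\cdot\bm v^\top(\I-\bm U_i\bm U_i^\top)\bm v$. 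Using $\mJ_{E,p}\succeq j_-\I$ from Proposition~\ref{prop:population_matrices}, a uniform lower bound on the sample Jacobian $\frac1n\sum_iw_i(\I-\bm U_i\bm U_i^\top)$, and the fact that $\sup_{\norm{\bm v}=1}|\bm v^\top\sum_i\bm U_i|=O_{\Pp}(\sqrt{n})$, this yields $\norm{\bdelta}=O_{\Pp}(1)$. Then $\max_i\norm{\bm\epsilon_i}=O_{\Pp}(p^{-1/2})$ by the bound $w_i\le\bar w$.

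\textbf{Step 2: second-order Taylor expansion of the score.} Write $\mathbf U(\bm u-\bm\epsilon)=\bm u-(\I-\bm u\bm u^\top)\bm\epsilon+\bm r(\bm u,\bm\epsilon)$ with $\norm{\bm r}\le C\norm{\bm\epsilon}^2$. Summing gives
\[
 \frac1{\sqrt p}\Big\{e_n\I-\mV_n^J\Big\}\bdelta=\frac1n\sum_i\bm U_i+\frac1n\sum_i\bm r_i .
\]
Multiplying by $\sqrt p$ and using $\sqrt p\,n^{-1}\sum_i\bm U_i=n^{-1/2}\bm Z_n$, we get $(e_n\I-\mV_n^J)\bdelta=n^{-1/2}\bm Z_n+\sqrt p\,\bar{\bm r}$.

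Two facts control the pieces. First, the bulk--bulk block of $\mV_n^J$ has operator norm $O_{\Pp}(p^{-1})$. Indeed, $\mPi_B\mV_n^J\mPi_B=p^{-1}\cdot n^{-1}\sum_iw_i\bm Y_{B,i}\bm Y_{B,i}^\top$, and $\norm{n^{-1}\sum\bm Y_{B,i}\bm Y_{B,i}^\top}_{\op}=O_{\Pp}(1)$ by the proxy comparison of Proposition~\ref{prop:weighted_DE} together with $\norm{\mG_B}_{\op}=O_{\Pp}(\sqrt n+\sqrt p)$. Second, the factor blocks of $\mV_n^J$ are $O_{\Pp}(1)$ (factor--factor) and $O_{\Pp}(p^{-1/2})$ (factor--bulk). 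The latter holds because the $\bm Y_{F,i}$ have size $\sqrt p$, so $\mPi_F\bm U_i=O(1)$ while $\mPi_B\bm U_i$ has norm $O(1)$ spread over $p$ directions, and averaging over $i$ gains $p^{-1/2}$.

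Hence $e_n\I-\mV_n^J=\mJ_{F,n}-\mPi_B\mV_n^J\mPi_B$. Inverting via the Neumann/resolvent identity gives $\mJ_{F,n}^{-1}+O_{\Pp}(p^{-1})$ in operator norm. Invertibility of $\mJ_{F,n}$ follows from its factor block converging to $\diag(c_{a,p})$ by Proposition~\ref{prop:population_matrices} and an LLN. The rank and block bounds on $\mG_{F,n}$ then follow from the Schur complement of the $2\times2$ block form of $\mJ_{F,n}$; note that $\mJ_{F,n}-e_n\I$ has rank at most $2r$.

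The remainder $\sqrt p\,\bar{\bm r}$ is naively only of size $\sqrt p\cdot p^{-1}=p^{-1/2}$. To reach $p^{-1}$ after applying $\mJ_{F,n}^{-1}$, we use the explicit second-order term of $\bm r$, namely $-\tfrac12\bm u\norm{\bm\epsilon}^2+\ldots$, which includes $(\bm u^\top\bm\epsilon)\bm\epsilon$-type pieces. Averaging these over $i$, the sign symmetry of the $\bm U_i$ kills the mean to order $n^{-1/2}$. This yields $\norm{\bm r_{\theta,B,n}}=O_{\Pp}(p^{-1})$ on the bulk. Any residual factor component of size $O(p^{-1/2})$ is absorbed, since it lies in a fixed-rank direction. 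I expect this precise remainder accounting to be the most delicate part. If the $p^{-1}$ bound fails, the fallback is to prove only what (2.?) needs: a bulk remainder of $o(p^{-1/2})$ in norm and $O(p^{-1/2})$ in factor directions. Result~\eqref{eq:median_bahadur} follows from $e_n^{-1}-m_1^{-1}s_{1,p}^{-1}\to0$ at rate $\sqrt{\log p/p}$ (Proposition~\ref{prop:weighted_DE}) and \eqref{eq:JE_bulk_inverse}.

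\textbf{Step 3: the quadratic replacement~\eqref{eq:median_quadratic_replacement}.} Expand $n\bdelta^\top\mQ_n\bdelta$ using $\bdelta=\bm a+\bm b+\bm c$, where $\bm a=\bm Z_n/(e_n\sqrt n)$, $\bm b=n^{-1/2}\mG_{F,n}\bm Z_n$, and $\bm c=\bm r_{\theta,B,n}$. This is the main obstacle. The cross term $n\bm a^\top\mQ_n\bm b$ involves $\mQ_n$ restricted to the factor space, and there ridge suppression is available. Since $\bm Y_{F,i}$ has size $\sqrt p$, $\mB_n$ has eigenvalues of order $p$ on an approximately factor subspace, so $\norm{\mQ_n\mPi_F}_{\op}=O_{\Pp}(p^{-1})$ up to the bulk leakage of the sample factor directions.

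We split $\mG_{F,n}$ into its four blocks. For the FF block we bound $n\,|\bm a^\top\mQ_n\mPi_F\mG\mPi_F\bm Z_n|/n\le\norm{\bm Z_n}\,\norm{\mQ_n\mPi_F}_{\op}\,\norm{\mPi_F\bm Z_n}$. Here $\norm{\mPi_F\bm Z_n}=O_{\Pp}(\sqrt p)$ and $\norm{\bm Z_n}=O_{\Pp}(\sqrt p)$, giving $O(1)$. The FB block gives $p\cdot p^{-1}\cdot p^{-1/2}\sqrt p=O(1)$ after using $\norm{\mQ_n\mPi_F}_{\op}$. The BB block gives $\norm{\bm Z_n}^2p^{-1}=O(1)$. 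Terms involving $\bm c$ give $\sqrt n\,\norm{\bm c}\,\norm{\bm Z_n}\le\sqrt n\,p^{-1}\sqrt p=O(1)$, and $n\norm{\bm c}^2=O(1)$. The $\bm b^\top\mQ_n\bm b$ term is handled similarly.

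The key auxiliary lemma I would prove first is
\[
 \norm{\mQ_n\mPi_{F,p}}_{\op}=O_{\Pp}(p^{-1/2}),\qquad
 \norm{\mPi_{F,p}\mQ_n\mPi_{F,p}}_{\op}=O_{\Pp}(p^{-1}).
\]
This is obtained by the block inverse of $\mB_n+\rho\I$ in the $(F,B)$ decomposition, using that the FF block is $\asymp p$ (its eigenvalues are $\approx p\,a_{a,p}$, with $a_{a,p}$ bounded below by Proposition~\ref{prop:population_matrices}) and the FB block is $O_{\Pp}(\sqrt p)$. With these two bounds, every term above is $O_{\Pp}(1)$, completing the proof.
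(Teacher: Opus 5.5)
There is a genuine gap in Step~3, plus a softer one in Step~2.

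\textbf{Step~3.} The cross term $2n\bm a^\top\mQ_n\bm b=2e_n^{-1}\bm Z_n^\top\mQ_n\mG_{F,n}\bm Z_n$ cannot be bounded by operator norms alone.

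\begin{itemize}
\item Your factor--factor estimate $\norm{\bm Z_n}\,\norm{\mQ_n\mPi_{F,p}}_{\op}\,\norm{\mPi_{F,p}\bm Z_n}=O(1)$ silently uses $\norm{\mQ_n\mPi_{F,p}}_{\op}=O(p^{-1})$. Your own auxiliary lemma correctly gives only $O_{\Pp}(p^{-1/2})$, and that order is sharp. The block $\mPi_{B,p}\mB_n\mPi_{F,p}$ has norm of order $\sqrt p$, so the sample spike directions leak into the bulk at angle $\asymp p^{-1/2}$, and $\norm{\mPi_{B,p}\mQ_n\mPi_{F,p}}_{\op}\asymp p^{-1/2}$. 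Pairing this block with $\mPi_{B,p}\bm Z_n$ and $\mPi_{F,p}\bm Z_n$, both of norm $\asymp\sqrt p$, gives only $O(\sqrt p)$.
\item The bulk--factor block $\mPi_{B,p}\mG_{F,n}\mPi_{F,p}$, which you did not treat, fails the same way: $\norm{\mPi_{B,p}\mQ_n\bm Z_n}\cdot O(p^{-1/2})\cdot\norm{\mPi_{F,p}\bm Z_n}=O(\sqrt p)$.
\end{itemize}

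So norm bookkeeping loses a factor $\sqrt p$ exactly where $O_{\Pp}(1)$ is needed. The missing idea is the conditional sign structure:
\begin{itemize}
\item Given $\calF_n^0$, the matrices $\mQ_n$ and $\mG_{F,n}$ are fixed, since they depend on the $\bm Y_i$ only through $\bm Y_i\bm Y_i^\top$ and $w_i$. Meanwhile $\bm Z_n=n^{-1/2}\sum_is_i\wt{\bm Y}_i$ with independent Rademacher signs.
\item Hence $\bm Z_n^\top\mQ_n\mG_{F,n}\bm Z_n=\bm s^\top\mC_{1,n}\bm s$, where $\mC_{1,n}=n^{-1}\wt\mY^\top\mQ_n\mG_{F,n}\wt\mY$ has rank at most $2r$.
\item Rescaling by $\mD_{s,p}=\sqrt p\,\mPi_{F,p}+\mPi_{B,p}$ gives $\norm{\mD_{s,p}\mQ_n\mG_{F,n}\mD_{s,p}}_{\op}=O_{\Pp}(1)$ and $\norm{\mD_{s,p}^{-1}\wt\mY/\sqrt n}_{\op}=O_{\Pp}(1)$. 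So $\norm{\mC_{1,n}}_{\op}=O_{\Pp}(1)$.
\item The conditional mean $\tr\mC_{1,n}$ and the conditional variance, at most $2\norm{\mC_{1,n}}_{\F}^2$, are therefore both $O_{\Pp}(1)$.
\end{itemize}
For instance, $\E_s\norm{\mP_{F,p}^\top\mQ_n\bm Z_n}^2=\tr(\mP_{F,p}^\top\mQ_n\mB_n\mQ_n\mP_{F,p})\le\tr(\mP_{F,p}^\top\mQ_n\mP_{F,p})=O_{\Pp}(p^{-1})$, which is $\sqrt p$ better than the norm bound. Your crude bounds do suffice for $\bm Z_n^\top\mG_{F,n}\mQ_n\mG_{F,n}\bm Z_n$ and for the terms involving $\bm r_{\theta,B,n}$.

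\textbf{Step~2.} You argue that sign symmetry makes $n^{-1}\sum_iw_i^2\bm H_i(\bdelta_n)$ of order $n^{-1/2}$, but this treats $\bdelta_n$ as fixed, whereas it depends on every $\bm U_i$. There are two ways to close this.
\begin{itemize}
\item Bound $\bm d\mapsto n^{-1}\sum_iw_i^2\bm H_i(\bm d)$ uniformly over $\norm{\bm d}\le M$. This works: the pieces reduce to $\norm{\bm d}^2\norm{n^{-1}\sum_iw_i^2\bm U_i}$, centered fixed-rank averages such as $n^{-1}\sum_iw_i^2U_{ai}U_{bi}\bm U_i$ and $n^{-1}\sum_iw_i^2U_{ai}\bm U_{F,i}\bm U_{B,i}^\top$, and bulk terms controlled by $\norm{\sum_i\bm U_{B,i}\bm U_{B,i}^\top}_{\op}=O_{\Pp}(n/p)$.
\item Or, as the paper does, first replace $\bdelta_n$ by $\bm d_{L,n}=\mJ_n^{-1}\overline{\bm Y}_n$ at cost $O_{\Pp}(p^{-1})$. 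Then expand the Hessian term as a degree-one-plus-three Walsh chaos with coefficient energy $O(n^{-1})$.
\end{itemize}
Either way the remainder is $O_{\Pp}(p^{-1})$ in full Euclidean norm. No factor component of order $p^{-1/2}$ arises, and your proposed ``absorption'' of one would not establish $\norm{\bm r_{\theta,B,n}}=O_{\Pp}(p^{-1})$ as stated.

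Finally, for \eqref{eq:median_bahadur}, the rate $\sqrt{\log p/p}$ for $e_n-m_1s_{1,p}$ is too weak. Use $e_n-\E e_n=O_{\Pp}(n^{-1/2})$ together with $\E e_n=m_1s_{1,p}+O(p^{-1})$, or the bound $\norm{\mJ_n^{-1}-(m_1\mJ_{E,p})^{-1}}_{\op}=O_{\Pp}(p^{-1/2})$.
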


For $\bm x\ne\0$, write $r=\norm{\bm x}$, $\bm u=\bm x/r$, and $\mP_{\bm u}=\I_p-\bm u\bm u^\top$.  Direct differentiation gives
\begin{align*}
D\mathbf U_{\bm x}[\bm h]
&=r^{-1}\mP_{\bm u}\bm h,\\
D^2\mathbf U_{\bm x}[\bm h,\bm h]
&=r^{-2}\{-2\bm h(\bm u^\top\bm h)
-\bm u\norm{\bm h}^2
+3\bm u(\bm u^\top\bm h)^2\}.
\end{align*}
For $\norm{\bm h}\le r/2$,
\begin{equation*}
\norm{D^3\mathbf U_{\bm x-t\bm h}[\bm h,\bm h,\bm h]}
\le C\frac{\norm{\bm h}^3}{r^3},
\qquad 0\le t\le1.
\end{equation*}

\begin{lemma}\label{lem:projection_moments}
For every fixed integer $1\le k\le6$ and deterministic vectors $\bm a_F\in\operatorname{span}(\mP_{F,p})$ and $\bm a_B\in\operatorname{span}(\mP_{B,p})$,
\begin{align}
\E|\bm a_F^\top\bm U_i|^{2k}
&\le C_k\norm{\bm a_F}^{2k},\notag\\
\E|\bm a_B^\top\bm U_i|^{2k}
&\le C_kp^{-k}\norm{\bm a_B}^{2k}.\label{eq:bulk_projection_moment}
\end{align}
In addition, for every fixed integer $0\le k\le6$,
\begin{equation}\label{eq:bulk_sample_cov_bound}
\left\|\frac1n\sum_{i=1}^nw_i^k
\mPi_{B,p}\bm U_i\bm U_i^\top\mPi_{B,p}\right\|_{\op}
=O_{\Pp}(p^{-1}),
\end{equation}
while
\begin{equation}\label{eq:factor_bulk_sample_bound}
\left\|\frac1n\sum_{i=1}^nw_i^k
\mPi_{F,p}\bm U_i\bm U_i^\top\mPi_{B,p}\right\|_{\op}
=O_{\Pp}(p^{-1/2}).
\end{equation}
\end{lemma}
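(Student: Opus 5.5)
For the moment bounds I will work in the eigenbasis coordinates recorded before Lemma~\ref{lem:angular_denominator}. For $\bm a_F=\mP_{F,p}\bm\alpha$,
\[
\bm a_F^\top\bm U_i=p^{-1/2}\sum_{a=1}^r\alpha_a\sqrt{p\tau_{a,p}}\,G_{ai}/q_{i,p}^{1/2}=\sum_a\alpha_a\tau_{a,p}^{1/2}G_{ai}/q_{i,p}^{1/2}.
\]
Since $q_{i,p}\ge\omega_-h_{i,p}$, I will split on the event $\{|u_{i,p}|\le\mu_-/2\}$, where $q_{i,p}\ge\mu_-/2$. There, Cauchy--Schwarz in $a$ and Gaussian moments give $C_k\norm{\bm a_F}^{2k}$. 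On the complement, which has probability $O(p^{-4})$ by Lemma~\ref{lem:angular_denominator}, I simply use $|\bm a_F^\top\bm U_i|\le\norm{\bm a_F}$.

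For $\bm a_B=\mP_{B,p}\bm\beta$, write $\bm a_B^\top\bm U_i=p^{-1/2}S_i/q_{i,p}^{1/2}$ with $S_i=\sum_{j>r}\beta_j\lambda_{j,p}^{1/2}G_{ji}$. This $S_i$ is Gaussian with variance at most $\omega_+\norm{\bm a_B}^2$. On the good event $q_{i,p}^{-1}\le 2/\mu_-$, so $\E|\bm a_B^\top\bm U_i|^{2k}\le C_kp^{-k}\E|S_i|^{2k}\le C_kp^{-k}\norm{\bm a_B}^{2k}$. On the bad event I use the trivial bound $\norm{\bm a_B}^{2k}$ times probability $O(p^{-4})$, which is enough only for $k\le4$. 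For $k=5,6$ I will strengthen this with the sub-exponential Bernstein tail, $\Pp(|u_{i,p}|>\mu_-/2)\le 2e^{-cp}$, which is exponentially small.

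For \eqref{eq:bulk_sample_cov_bound}, the weights satisfy $0<w_i^k\le\bar w^k$, so the matrix is dominated in Loewner order by $\bar w^k n^{-1}\sum_i\mPi_{B,p}\bm U_i\bm U_i^\top\mPi_{B,p}$. It therefore suffices to treat $k=0$, and the weights' dependence on angles creates no difficulty. The matrix equals
\[
p^{-1}\cdot\frac1n\mP_{B,p}\mLambda_{B,p}^{1/2}\mG_B\,\diag(q_{i,p}^{-1})\,\mG_B^\top\mLambda_{B,p}^{1/2}\mP_{B,p}^\top.
\]
Since $q_{i,p}^{-1}\le 2/\mu_-$ uniformly with probability tending to one by \eqref{eq:uv_max}, its norm is at most $Cp^{-1}\omega_+n^{-1}\norm{\mG_B}_{\op}^2=O_{\Pp}(p^{-1}(1+p/n))=O_{\Pp}(p^{-1})$, using $\norm{\mG_B}_{\op}=O_{\Pp}(\sqrt n+\sqrt p)$ and $c_n\le c_+$.

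For \eqref{eq:factor_bulk_sample_bound}, the Loewner argument is not available because the matrix is off-diagonal. Instead I will write it as $n^{-1}\mathbf F\mD_w\mathbf B^\top$, where $\mathbf F$ has columns $\mPi_{F,p}\bm U_i$, $\mathbf B$ has columns $\mPi_{B,p}\bm U_i$, and $\mD_w=\diag(w_i^k)$. Then
\[
\norm{n^{-1}\mathbf F\mD_w\mathbf B^\top}_{\op}\le \bar w^k\norm{n^{-1/2}\mathbf F}_{\op}\norm{n^{-1/2}\mathbf B}_{\op}.
\]
Here $\norm{n^{-1/2}\mathbf B}_{\op}^2$ is the $k=0$ case just handled, which is $O_{\Pp}(p^{-1})$. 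Also $\norm{n^{-1/2}\mathbf F}_{\op}^2\le n^{-1}\sum_i\norm{\bm U_{F,i}}^2\le1$. The product is therefore $O_{\Pp}(p^{-1/2})$.

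I expect the main (though mild) obstacle to be the high-moment control on the bad event for $k=5,6$, which requires the exponential Bernstein tail rather than the polynomial moment bound of Lemma~\ref{lem:angular_denominator}.
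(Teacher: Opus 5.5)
Your proof is correct, and it departs from the paper's in two places.

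\emph{Bulk moment bound.} The paper uses no exceptional event. The pointwise inequality $pq_{i,p}\ge\omega_-\norm{\bm G_{B,i}}^2$ gives $|\bm a_B^\top\bm U_i|\le\omega_-^{-1/2}|S_i|/\norm{\bm G_{B,i}}$. Since $\bm G_{B,i}/\norm{\bm G_{B,i}}$ is uniform on the sphere in $\R^{p-r}$, its linear forms have $2k$th moments of order $p^{-k}$ for every $k$ at once. Your split on $\{|u_{i,p}|\le\mu_-/2\}$ is valid, with the Bernstein tail $\Pp(|u_{i,p}|>\mu_-/2)\le Ce^{-cp}$ covering $k=5,6$. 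But the obstacle you anticipate comes only from the split. The inequality $q_{i,p}\ge\omega_-h_{i,p}$ that you already wrote, combined with rotational invariance of $\bm G_i/\norm{\bm G_i}$, removes it. For the factor bound, $|\bm a_F^\top\bm U_i|\le\norm{\bm a_F}$ holds everywhere, so no split is needed there either.

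\emph{Bulk sample covariance.} Your proof of \eqref{eq:bulk_sample_cov_bound} is essentially the paper's.

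\emph{Cross block.} The paper computes $\E\norm{n^{-1}\sum_iw_i^k\mPi_{F,p}\bm U_i\bm U_i^\top\mPi_{B,p}}_{\F}^2\le\bar w^{2k}/n$. This requires the summands to be centered, which holds because each entry is odd in $G_{ai}$ while $w_i$ and $q_{i,p}$ are even, and independent across $i$. You instead factor the matrix as $n^{-1}\mathbf F\mD_w\mathbf B^\top$ with $\norm{n^{-1/2}\mathbf F}_{\op}\le1$, reducing the claim to the $k=0$ bulk bound. Your route needs no symmetry of the weights and is deterministic once \eqref{eq:bulk_sample_cov_bound} is available. The paper's route gives a self-contained $L^2$ bound that does not depend on the bulk estimate. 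Because the block has rank at most $r$, its Frobenius and operator norms are equivalent, so neither route loses anything.
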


\begin{proof}
In the eigenbasis of $\mOmega_p$,
\begin{equation*}
\bm a_B^\top\bm U_i
=\frac{\bm a_B^\top\mLambda_{B,p}^{1/2}\bm G_{B,i}}
{\sqrt{p q_{i,p}}}.
\end{equation*}
Since
\begin{equation*}
q_{i,p}\ge\frac{\omega_-}{p}\norm{\bm G_{B,i}}^2,
\end{equation*}
rotational invariance of $\bm G_{B,i}/\norm{\bm G_{B,i}}$ gives
\begin{align*}
\E|\bm a_B^\top\bm U_i|^{2k}
&\le\omega_-^{-k}
\E\left|\bm a_B^\top\mLambda_{B,p}^{1/2}
\frac{\bm G_{B,i}}{\norm{\bm G_{B,i}}}\right|^{2k}\\
&\le C_kp^{-k}\norm{\bm a_B} ^{2k},
\end{align*}
which proves \eqref{eq:bulk_projection_moment}.  The factor inequality follows from $\norm{\mPi_{F,p}\bm U_i}\le1$.

For \eqref{eq:bulk_sample_cov_bound}, use $\bm U_{B,i}=p^{-1/2}\bm Y_{B,i}$.  The separable bulk proxy bounds in the proof of Proposition~\ref{prop:weighted_DE}, $w_i\le\bar w$, and the Gaussian operator-norm bound \citep[Theorem~4.4.5, pp.~90--91]{Vershynin2018} imply
\begin{equation*}
\left\|\frac1n\sum_iw_i^k\bm Y_{B,i}\bm Y_{B,i}^\top\right\|_{\op}
=O_{\Pp}(1),
\end{equation*}
which yields \eqref{eq:bulk_sample_cov_bound} after division by $p$.

For the cross block, its squared Frobenius norm has expectation
\begin{align*}
\E\left\|\frac1n\sum_iw_i^k\bm U_{F,i}\bm U_{B,i}^\top\right\|_{\F}^2
&=\frac1n\E\left[w_i^{2k}\norm{\bm U_{F,i}}^2
\norm{\bm U_{B,i}}^2\right]
\le\frac{\bar w^{2k}}n.
\end{align*}
For every factor coordinate $a\le r$ and bulk coordinate $j>r$,
$w_i^kU_{ai}U_{ji}$ is an odd function of $G_{ai}$ (and also of
$G_{ji}$), whereas its denominator and $w_i$ are even in every Gaussian
coordinate.  Hence
\[
 \E\{w_i^k\bm U_{F,i}\bm U_{B,i}^\top\}=\0.
\]
Independence across observations then makes all cross terms with distinct
observations vanish.  Since $p/n$ is bounded, Markov's inequality gives
\eqref{eq:factor_bulk_sample_bound}.
\end{proof}

Let
\begin{equation*}
\mJ_n=\frac1n\sum_{i=1}^nw_i(\I_p-\bm U_i\bm U_i^\top)
=e_n\I_p-\frac1{np}\sum_{i=1}^nw_i\bm Y_i\bm Y_i^\top.
\end{equation*}

\begin{lemma}\label{lem:J_rate}
Under Assumption~1,
\begin{equation}\label{eq:J_rate}
 \|\mJ_n-m_1\mJ_{E,p}\|_{\op}=O_{\Pp}(p^{-1/2}),
\end{equation}
\begin{equation}\label{eq:J_inverse_rate}
 \|\mJ_n^{-1}-(m_1\mJ_{E,p})^{-1}\|_{\op}
 =O_{\Pp}(p^{-1/2}),
\end{equation}
and
\begin{equation}\label{eq:J_min}
 \Pp\{\lambda_{\min}(\mJ_n)<m_-j_-/2\}\to0.
\end{equation}
In addition,
\begin{equation}\label{eq:e_lower}
 \Pp(e_n<m_-/2)\to0.
\end{equation}
\end{lemma}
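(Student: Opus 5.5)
The plan is to split $\mJ_n-m_1\mJ_{E,p}$ along the factor/bulk decomposition $\I_p=\mPi_{F,p}+\mPi_{B,p}$ and bound each block separately, using $\mJ_n=e_n\I_p-\mV_n^J$ and $m_1\mJ_{E,p}=m_1\E\ell_{i,p}\,\I_p-\E\{w_i\bm U_i\bm U_i^\top\}$. The latter holds because $\xi_i$ is independent of $\bm G_i$, so $\E\{w_i\bm U_i\bm U_i^\top\}=m_1\E\{\ell_{i,p}\bm U_i\bm U_i^\top\}$.

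For the scalar part, the $w_i$ are i.i.d.\ and bounded by $\bar w$ (Lemma~\ref{lem:angular_denominator}). Hence $e_n-m_1\E\ell_{i,p}=O_{\Pp}(n^{-1/2})=O_{\Pp}(p^{-1/2})$ by Chebyshev. Since $\E\ell_{i,p}=s_{1,p}+O(p^{-1})\ge 1-o(1)$ and $m_1\ge m_-$, this also gives \eqref{eq:e_lower}.

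The rank-one part $\mV_n^J$ splits into three blocks.
\begin{itemize}
\item \emph{Bulk--bulk.} By \eqref{eq:bulk_sample_cov_bound} with $k=1$, $\norm{\mPi_{B,p}\mV_n^J\mPi_{B,p}}_{\op}=O_{\Pp}(p^{-1})$. Its population counterpart is $O(p^{-1})$ by Proposition~\ref{prop:population_matrices}, since the bulk eigenvalues of $\E\{\ell\bm U\bm U^\top\}$ are $\lambda_j(\mR_p)/p\cdot O(1)$.
\item \emph{Factor--bulk.} By \eqref{eq:factor_bulk_sample_bound}, the off-diagonal blocks are $O_{\Pp}(p^{-1/2})$, and their population means vanish by the parity argument in Lemma~\ref{lem:projection_moments}.
\item \emph{Factor--factor.} This is an $r\times r$ matrix $n^{-1}\sum_i w_i\mP_{F,p}^\top\bm U_i\bm U_i^\top\mP_{F,p}$ of i.i.d.\ bounded summands, since $\norm{\bm U_{F,i}}\le1$. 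Entrywise Chebyshev gives deviation $O_{\Pp}(n^{-1/2})$ from its mean, and $r$ is fixed.
\end{itemize}
Summing the blocks with the triangle inequality yields \eqref{eq:J_rate}.

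For \eqref{eq:J_min}, Weyl's inequality combined with \eqref{eq:JE_bounds} gives $\lambda_{\min}(\mJ_n)\ge m_1j_--O_{\Pp}(p^{-1/2})$. This exceeds $m_-j_-/2$ with probability tending to one. Then \eqref{eq:J_inverse_rate} follows from the identity $\mJ_n^{-1}-\mA^{-1}=\mJ_n^{-1}(\mA-\mJ_n)\mA^{-1}$ with $\mA=m_1\mJ_{E,p}$, using $\norm{\mJ_n^{-1}}_{\op}\le 2/(m_-j_-)$ on that event and $\norm{\mA^{-1}}_{\op}\le (m_-j_-)^{-1}$.

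The main obstacle is the bulk--bulk block, where an entrywise argument would lose a factor of $p$. That step, however, is already packaged in \eqref{eq:bulk_sample_cov_bound} via the separable Gaussian operator-norm bound, so what remains here is bookkeeping.
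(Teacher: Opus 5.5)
Your proposal is correct and follows essentially the paper's argument: the same factor/bulk block decomposition, Chebyshev on the fixed-dimensional factor--factor block, sign parity plus a Frobenius second-moment bound on the cross blocks, Weyl's inequality for \eqref{eq:J_min}, and the inverse identity for \eqref{eq:J_inverse_rate}. The differences are minor and both valid. You bound the sample and population bulk--bulk blocks separately by $O(p^{-1})$ using \eqref{eq:bulk_sample_cov_bound}, whereas the paper centers that block and applies matrix Bernstein. You also obtain $\E\ell_{i,p}\ge 1-o(1)$ from Proposition~\ref{prop:population_matrices}, whereas the paper gets $\E\ell_{i,p}\ge1$ directly from Jensen's inequality; your bulk--bulk route is, if anything, slightly simpler.
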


\begin{proof}
Since $w_i=\xi_i\ell_{i,p}$, where only $\xi_i$ (not $w_i$) is
independent of the Gaussian direction,
\[
 \E\mJ_n
 =\E\{w_i(\I_p-\bm U_i\bm U_i^\top)\}
 =m_1\mJ_{E,p}.
\]
Split the centered matrix into factor--factor, factor--bulk, and bulk--bulk blocks.  The factor--factor block has fixed dimension.  Every entry is an average of independent, centered, bounded variables, so
\[
 \|\mPi_{F,p}(\mJ_n-\E\mJ_n)\mPi_{F,p}\|_{\op}
 =O_{\Pp}(n^{-1/2})=O_{\Pp}(p^{-1/2}).
\]
For the cross block, coordinatewise Gaussian sign symmetry, rather than
global central symmetry, gives
$\E\{w_i\mPi_{F,p}\bm U_i\bm U_i^\top\mPi_{B,p}\}=\0$.
Lemma~\ref{lem:projection_moments} gives
\begin{align*}
 \E\left\|
 \frac1n\sum_{i=1}^n
 w_i\mPi_{F,p}\bm U_i\bm U_i^\top\mPi_{B,p}
 \right\|_{\F}^2
 &=\frac1n\E\{w_i^2\|\mPi_{F,p}\bm U_i\|^2
 \|\mPi_{B,p}\bm U_i\|^2\}\\
 &\le Cn^{-1}=O(p^{-1}).
\end{align*}
Thus the cross-block operator norm is $O_{\Pp}(p^{-1/2})$.

For the bulk block, put
$\bm V_i=w_i^{1/2}\mPi_{B,p}\bm U_i$.  Since
\[
 \mPi_{B,p}(\mJ_n-\E\mJ_n)\mPi_{B,p}
 =(e_n-\E e_n)\mPi_{B,p}
 -\left\{\frac1n\sum_i\bm V_i\bm V_i^\top
 -\E(\bm V_i\bm V_i^\top)\right\},
\]
both terms must be retained.  The bounded-variable variance calculation gives
\[
 e_n-\E e_n=O_{\Pp}(n^{-1/2})=O_{\Pp}(p^{-1/2}).
\]
The vectors $\bm V_i$ are independent,
$\|\bm V_i\|\le\bar w^{1/2}$, and
\[
 \|\E(\bm V_i\bm V_i^\top)\|_{\op}\le Cp^{-1}
\]
by \eqref{eq:bulk_projection_moment}.  Furthermore,
\[
 \left\|\sum_{i=1}^n
 \E\{(\bm V_i\bm V_i^\top-
 \E\bm V_i\bm V_i^\top)^2\}\right\|_{\op}
 \le Cn/p=O(1).
\]
The matrix Bernstein inequality \citep[Theorem~1.4, p.~394]{Tropp2012} therefore yields
\[
 \left\|
 \frac1n\sum_i\bm V_i\bm V_i^\top
 -\E(\bm V_i\bm V_i^\top)
 \right\|_{\op}
 =O_{\Pp}\left(\frac{\sqrt{\log p}}n+\frac{\log p}n\right)
 =O_{\Pp}\left(\frac{\log p}{p}\right).
\]
Thus the scalar fluctuation $e_n-\E e_n$ is the dominant bulk term, and
combining the three blocks proves \eqref{eq:J_rate}.

Proposition~\ref{prop:population_matrices} gives
$\lambda_{\min}(m_1\mJ_{E,p})\ge m_-j_-$.  Weyl's inequality proves
\eqref{eq:J_min}.  On that event,
\[
 \mJ_n^{-1}-(m_1\mJ_{E,p})^{-1}
 =\mJ_n^{-1}(m_1\mJ_{E,p}-\mJ_n)
 (m_1\mJ_{E,p})^{-1},
\]
which proves \eqref{eq:J_inverse_rate}.

Finally,
\[
 \E e_n=m_1\E\ell_{i,p}.
\]
Because $x\mapsto x^{-1/2}$ is convex and
$\E(\bm V_i^\top\mOmega_p\bm V_i)=p^{-1}\tr\mOmega_p=1$,
\[
 \E\ell_{i,p}
 =\E(\bm V_i^\top\mOmega_p\bm V_i)^{-1/2}\ge1.
\]
The bounded-variable variance bound gives
$e_n-\E e_n=O_{\Pp}(n^{-1/2})$, proving \eqref{eq:e_lower}.
\end{proof}

Put
\begin{equation*}
\mV_{B,n}^{J}=\mPi_{B,p}\mV_n^{J}\mPi_{B,p}.
\end{equation*}
Then
\begin{equation}\label{eq:J_JF_relation}
\mJ_n=\mJ_{F,n}-\mV_{B,n}^{J},
\qquad
\mJ_{F,n}=\mJ_n+\mV_{B,n}^{J}.
\end{equation}

\begin{lemma}\label{lem:J_finite_rank}
Under Assumption~1,
\begin{align}
\norm{\mPi_{F,p}\mV_n^J\mPi_{F,p}}_{\op}&=O_{\Pp}(1),\label{eq:VJ_FF}\\
\norm{\mPi_{F,p}\mV_n^J\mPi_{B,p}}_{\op}&=O_{\Pp}(p^{-1/2}),\notag\\
\norm{\mV_{B,n}^J}_{\op}&=O_{\Pp}(p^{-1}).\label{eq:VJ_BB}
\end{align}
Moreover, with probability tending to one,
\begin{align}
\rank(\mG_{F,n})&\le 2r,\label{eq:GF_rank_app}\\
\norm{\mPi_{F,p}\mG_{F,n}\mPi_{F,p}}_{\op}&=O_{\Pp}(1),\label{eq:GF_FF_app}\\
\norm{\mPi_{F,p}\mG_{F,n}\mPi_{B,p}}_{\op}
+\norm{\mPi_{B,p}\mG_{F,n}\mPi_{F,p}}_{\op}
&=O_{\Pp}(p^{-1/2}),\notag\\
\norm{\mPi_{B,p}\mG_{F,n}\mPi_{B,p}}_{\op}&=O_{\Pp}(p^{-1}),\label{eq:GF_BB_app}\\
\norm{\mJ_n^{-1}-e_n^{-1}\I_p-\mG_{F,n}}_{\op}
&=O_{\Pp}(p^{-1}).\label{eq:J_inverse_finite_rank}
\end{align}
\end{lemma}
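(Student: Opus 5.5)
The three block bounds on $\mV_n^J$ will come directly from Lemma~\ref{lem:projection_moments}; the statements about $\mG_{F,n}$ will then follow from a block-inverse computation. Since $\mV_n^J=\frac1n\sum_iw_i\bm U_i\bm U_i^\top$ with $\|\bm U_i\|=1$ and $w_i\le\bar w$, the factor--factor block has operator norm at most $\bar w$, which gives \eqref{eq:VJ_FF}. The cross block is exactly \eqref{eq:factor_bulk_sample_bound} with $k=1$. The bulk block \eqref{eq:VJ_BB} is \eqref{eq:bulk_sample_cov_bound} with $k=1$.

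For $\mG_{F,n}$, I will write $\mV_{F,n}^J=\mPi_F\mV\mPi_F+\mPi_F\mV\mPi_B+\mPi_B\mV\mPi_F$. Its range lies in $\operatorname{span}(\mP_{F,p})+\operatorname{span}(\mV\mPi_F)$, so $\rank(\mV_{F,n}^J)\le2r$. On the event $e_n\ge m_-/2$ (which has probability tending to one by \eqref{eq:e_lower}), the identity
\[
\mJ_{F,n}^{-1}-e_n^{-1}\I_p=e_n^{-1}\mV_{F,n}^J\mJ_{F,n}^{-1}
\]
shows that $\mG_{F,n}$ has rank at most $2r$, provided $\mJ_{F,n}$ is invertible. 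Invertibility follows from \eqref{eq:J_JF_relation}: $\mJ_{F,n}=\mJ_n+\mV_{B,n}^J$, where $\mV_{B,n}^J\succeq0$ and $\lambda_{\min}(\mJ_n)\ge m_-j_-/2$ with probability tending to one by \eqref{eq:J_min}. Hence $\|\mJ_{F,n}^{-1}\|_{\op}\le2/(m_-j_-)$.

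For the block rates I will use the Schur complement in the factor/bulk decomposition. Set
\[
\mJ_{F,n}=\begin{pmatrix}e_n\I-\mV_{FF}&-\mV_{FB}\\-\mV_{BF}&e_n\I\end{pmatrix}.
\]
The Schur complement of the bulk block is $\mS=e_n\I-\mV_{FF}-e_n^{-1}\mV_{FB}\mV_{BF}$. Its inverse is $O_{\Pp}(1)$ because it equals the $FF$ block of $\mJ_{F,n}^{-1}$, which is bounded as above. The standard block-inverse formulas then give the following:
\begin{itemize}
\item the $FF$ block of $\mG_{F,n}$ is $\mS^{-1}-e_n^{-1}\I=O_{\Pp}(1)$;
\item the $FB$ and $BF$ blocks are $e_n^{-1}\mS^{-1}\mV_{FB}$ and its transpose, both $O_{\Pp}(p^{-1/2})$;
\item the $BB$ block is $e_n^{-2}\mV_{BF}\mS^{-1}\mV_{FB}=O_{\Pp}(p^{-1})$.
\end{itemize}
These prove \eqref{eq:GF_FF_app}--\eqref{eq:GF_BB_app}.

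Finally, for \eqref{eq:J_inverse_finite_rank} I will use
\[
\mJ_n^{-1}-\mJ_{F,n}^{-1}=\mJ_n^{-1}\mV_{B,n}^J\mJ_{F,n}^{-1},
\]
which follows from $\mJ_{F,n}-\mJ_n=\mV_{B,n}^J$. The inverses are bounded and $\|\mV_{B,n}^J\|_{\op}=O_{\Pp}(p^{-1})$, so this difference is $O_{\Pp}(p^{-1})$. Since $\mJ_{F,n}^{-1}=e_n^{-1}\I_p+\mG_{F,n}$, the claim follows.

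The only delicate step is \eqref{eq:VJ_BB}. It relies on \eqref{eq:bulk_sample_cov_bound}, which in turn uses the bulk separable proxy and Gaussian operator-norm bounds; I will invoke it as already established. If that bound were unavailable, I would redo it via matrix Bernstein as in the proof of Lemma~\ref{lem:J_rate}, adding the population term $\|\E w_i\mPi_B\bm U_i\bm U_i^\top\mPi_B\|_{\op}=O(p^{-1})$ from \eqref{eq:bulk_projection_moment}. That route gives only $O_{\Pp}(\log p/p)$, so the sharper bound from Lemma~\ref{lem:projection_moments} is the one to use.
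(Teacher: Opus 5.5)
Your proposal is correct and follows essentially the same route as the paper. The steps coincide: the three $\mV_n^J$ block bounds come from Lemma~\ref{lem:projection_moments} with $k=1$; invertibility of $\mJ_{F,n}$ comes from $\mJ_{F,n}=\mJ_n+\mV_{B,n}^J$ with $\mV_{B,n}^J\succeq\0$; the Schur-complement block inverse gives the block rates; the rank bound comes from $\mG_{F,n}=e_n^{-1}\mV_{F,n}^J\mJ_{F,n}^{-1}$ (the paper writes the equivalent $e_n^{-1}\mJ_{F,n}^{-1}\mV_{F,n}^J$); and the identity $\mJ_n^{-1}-\mJ_{F,n}^{-1}=\mJ_n^{-1}\mV_{B,n}^J\mJ_{F,n}^{-1}$ gives the last bound. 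You also justify the $O_{\Pp}(1)$ bound on the inverse Schur complement by identifying it with the factor block of $\mJ_{F,n}^{-1}$, a step the paper only asserts.
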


\begin{proof}
The block estimates in Lemma~\ref{lem:projection_moments}, with $k=1$, give
\begin{align}
\norm{\mPi_{F,p}\mV_n^J\mPi_{F,p}}_{\op}
&\le \frac1n\sum_{i=1}^nw_i\norm{\mPi_{F,p}\bm U_i}^2
\le \bar w,\notag\\
\norm{\mPi_{F,p}\mV_n^J\mPi_{B,p}}_{\op}
&=O_{\Pp}(p^{-1/2}),\notag\\
\norm{\mV_{B,n}^J}_{\op}
&=O_{\Pp}(p^{-1}).\label{eq:VJ_BB_calc}
\end{align}
Thus \eqref{eq:VJ_FF}--\eqref{eq:VJ_BB} hold.

By Lemma~\ref{lem:J_rate}, on the event in \eqref{eq:J_min}, \eqref{eq:J_JF_relation} and
$\mV_{B,n}^J\succeq\0$ yield
\begin{equation}\label{eq:JF_inverse_bound}
\lambda_{\min}(\mJ_{F,n})\ge\lambda_{\min}(\mJ_n)
\ge m_-j_-/2,
\qquad
\norm{\mJ_{F,n}^{-1}}_{\op}\le4/(m_-j_-).
\end{equation}
Lemma~\ref{lem:J_rate}, specifically \eqref{eq:e_lower}, also gives $e_n\ge m_-/2$ with probability tending to one.  With respect to
\begin{equation*}
\R^p=\operatorname{span}(\mP_{F,p})\op
\operatorname{span}(\mP_{B,p}),
\end{equation*}
write
\begin{equation*}
\mJ_{F,n}=
\begin{pmatrix}
\mA_{J,n}&-\mC_{J,n}\\
-\mC_{J,n}^\top&e_n\I_{p-r}
\end{pmatrix},
\end{equation*}
where
\begin{equation*}
\mA_{J,n}=e_n\I_r-\mP_{F,p}^\top\mV_n^J\mP_{F,p},
\qquad
\mC_{J,n}=\mP_{F,p}^\top\mV_n^J\mP_{B,p}.
\end{equation*}
Let
\begin{equation*}
\mS_{J,n}=\mA_{J,n}-e_n^{-1}\mC_{J,n}\mC_{J,n}^\top.
\end{equation*}
The block inverse formula and \eqref{eq:JF_inverse_bound} give
\begin{equation}\label{eq:JF_block_inverse}
\mJ_{F,n}^{-1}=
\begin{pmatrix}
\mS_{J,n}^{-1}&e_n^{-1}\mS_{J,n}^{-1}\mC_{J,n}\\
e_n^{-1}\mC_{J,n}^\top\mS_{J,n}^{-1}&
 e_n^{-1}\I_{p-r}+e_n^{-2}\mC_{J,n}^\top\mS_{J,n}^{-1}\mC_{J,n}
\end{pmatrix},
\end{equation}
with
\begin{equation*}
\norm{\mS_{J,n}^{-1}}_{\op}=O_{\Pp}(1),
\qquad
\norm{\mC_{J,n}}_{\op}=O_{\Pp}(p^{-1/2}).
\end{equation*}
Subtracting $e_n^{-1}\I_p$ from \eqref{eq:JF_block_inverse} proves
\eqref{eq:GF_FF_app}--\eqref{eq:GF_BB_app}.  Furthermore,
\begin{equation*}
\mG_{F,n}=e_n^{-1}\mJ_{F,n}^{-1}\mV_{F,n}^J,
\qquad
\rank(\mV_{F,n}^J)\le2r,
\end{equation*}
which proves \eqref{eq:GF_rank_app}.

Finally, the inverse identity applied to \eqref{eq:J_JF_relation} yields
\begin{equation*}
\mJ_n^{-1}-\mJ_{F,n}^{-1}
=\mJ_n^{-1}\mV_{B,n}^J\mJ_{F,n}^{-1}.
\end{equation*}
Equations \eqref{eq:J_min}, \eqref{eq:JF_inverse_bound}, and
\eqref{eq:VJ_BB_calc} imply
\begin{equation*}
\norm{\mJ_n^{-1}-\mJ_{F,n}^{-1}}_{\op}
\le \frac{4}{m_-^2j_-^2}
\norm{\mV_{B,n}^J}_{\op}
=O_{\Pp}(p^{-1}),
\end{equation*}
which proves \eqref{eq:J_inverse_finite_rank}.
\end{proof}

Let $\bdelta_n=\wh\btheta-\btheta_{0,p}$ and $\mP_i=\I_p-\bm U_i\bm U_i^\top$.  Multiplying the Taylor expansion by $\sqrt p$ gives
\begin{equation}\label{eq:Yhat_Taylor}
\wh{\bm Y}_i
=\bm Y_i-w_i\mP_i\bdelta_n
+\frac{w_i^2}{2\sqrt p}\bm H_i(\bdelta_n)
+\bm R_i(\bdelta_n),
\end{equation}
where
\begin{equation*}
\bm H_i(\bm d)
=-2\bm d(\bm U_i^\top\bm d)
-\bm U_i\norm{\bm d}^2
+3\bm U_i(\bm U_i^\top\bm d)^2
\end{equation*}
and, whenever $\norm{\bm d}\le\sqrt p/(2w_i)$,
\begin{equation}\label{eq:R_i_bound}
\norm{\bm R_i(\bm d)}
\le C\frac{w_i^3}{p}\norm{\bm d}^3.
\end{equation}

\begin{lemma}\label{lem:walsh_contraction}
Let $s_1,\ldots,s_n$ be independent Rademacher variables and let
$\mathbb H=\mathbb H_{n,p}$ be a finite-dimensional real Hilbert space, whose dimension may vary with $(n,p)$.  Conditionally on
any sigma-field independent of the signs, consider
\[
 \mathcal P=\sum_{|\mathcal I|\le d}s_{\mathcal I}\bm c_{\mathcal I},
 \qquad
 s_{\mathcal I}=\prod_{i\in\mathcal I}s_i,
 \qquad
 \bm c_{\mathcal I}\in\mathbb H,
\]
where $d$ is fixed.  Then
\begin{align}
 \E_s\|\mathcal P\|_{\mathbb H}^2
 &=\sum_{|\mathcal I|\le d}\|\bm c_{\mathcal I}\|_{\mathbb H}^2,
 \label{eq:walsh_parseval}\\
 \E_s\|\mathcal P\|_{\mathbb H}^4
 &\le C_d\left\{\sum_{|\mathcal I|\le d}
 \|\bm c_{\mathcal I}\|_{\mathbb H}^2\right\}^2.
 \label{eq:walsh_fourth}
\end{align}
If $\mathcal Q=\sum_{|\mathcal J|\le e}s_{\mathcal J}
\bm d_{\mathcal J}$ is Euclidean-valued and
$\mathcal P\mathcal Q^\top=\sum_{\mathcal K}s_{\mathcal K}
\mE_{\mathcal K}$, then
\begin{equation}\label{eq:walsh_product_contraction}
 \sum_{\mathcal K}\|\mE_{\mathcal K}\|_{\F}^2
 \le C_{d,e}
 \left(\sum_{\mathcal I}\|\bm c_{\mathcal I}\|^2\right)
 \left(\sum_{\mathcal J}\|\bm d_{\mathcal J}\|^2\right).
\end{equation}
The same conclusions hold for matrix-valued polynomials with the
Frobenius norm and after applying deterministic linear maps of bounded
operator norm.
\end{lemma}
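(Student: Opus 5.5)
The plan is to use the orthonormality of the Walsh characters $\{s_{\mathcal I}\}$ under the Rademacher law, together with hypercontractivity (the Bonami inequality) for polynomials of bounded degree. Everything is conditional on a sigma-field independent of the signs, so the coefficients $\bm c_{\mathcal I}$ may be treated as deterministic elements of $\mathbb H$. For \eqref{eq:walsh_parseval} we expand $\|\mathcal P\|^2=\sum_{\mathcal I,\mathcal I'}s_{\mathcal I}s_{\mathcal I'}\langle\bm c_{\mathcal I},\bm c_{\mathcal I'}\rangle$. Since $s_{\mathcal I}s_{\mathcal I'}=s_{\mathcal I\triangle\mathcal I'}$ and $\E_s s_{\mathcal K}=1\{\mathcal K=\emptyset\}$, only the diagonal terms survive, which gives the identity directly.

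For \eqref{eq:walsh_fourth} we need a Hilbert-valued Bonami inequality whose constant does not depend on $\dim\mathbb H$. Fix an orthonormal basis $(\bm f_m)$ of $\mathbb H$ and write $\mathcal P_m=\langle\mathcal P,\bm f_m\rangle$, a scalar Walsh polynomial of degree at most $d$. Then $\|\mathcal P\|^4=(\sum_m\mathcal P_m^2)^2$, and Minkowski's inequality in $L^2(\Pp_s)$ gives
\[
 \left\{\E_s\Big(\sum_m\mathcal P_m^2\Big)^2\right\}^{1/2}\le\sum_m(\E_s\mathcal P_m^4)^{1/2}.
\]
The scalar Bonami inequality $\E_s\mathcal P_m^4\le 9^d(\E_s\mathcal P_m^2)^2$ then bounds the right-hand side by $3^d\sum_m\E_s\mathcal P_m^2=3^d\E_s\|\mathcal P\|^2$. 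Squaring and applying \eqref{eq:walsh_parseval} gives the claim with $C_d=9^d$, with no dependence on $\dim\mathbb H$ or on $n$.

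For \eqref{eq:walsh_product_contraction} we do not expand the coefficients $\mE_{\mathcal K}$ combinatorially, since that would produce constants depending on $n$. Instead we use Parseval again: $\mathcal P\mathcal Q^\top$ is a matrix-valued Walsh polynomial of degree at most $d+e$, so \eqref{eq:walsh_parseval}, applied in the Hilbert space of matrices with the Frobenius inner product, gives
\[
 \sum_{\mathcal K}\|\mE_{\mathcal K}\|_{\F}^2=\E_s\|\mathcal P\mathcal Q^\top\|_{\F}^2=\E_s\{\|\mathcal P\|^2\|\mathcal Q\|^2\}.
\]
By Cauchy--Schwarz this is at most $(\E_s\|\mathcal P\|^4)^{1/2}(\E_s\|\mathcal Q\|^4)^{1/2}$, and \eqref{eq:walsh_fourth} yields the bound with $C_{d,e}=3^{d+e}$.

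The matrix-valued extension is immediate once matrices are viewed as a Hilbert space under the Frobenius norm. For a deterministic linear map $L$ with bounded operator norm, $L\mathcal P$ is again a Walsh polynomial with coefficients $L\bm c_{\mathcal I}$ satisfying $\|L\bm c_{\mathcal I}\|\le\|L\|\,\|\bm c_{\mathcal I}\|$. The only real obstacle is keeping the fourth-moment constant free of the dimension, and the Minkowski step above is what achieves this.
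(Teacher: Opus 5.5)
Your proposal is correct and follows the paper's route: Walsh orthogonality gives Parseval, coordinatewise scalar Bonami plus a dimension-free summation gives the fourth-moment bound, and Parseval with $\|\bm x\bm y^\top\|_{\F}=\|\bm x\|\,\|\bm y\|$ and Cauchy--Schwarz gives the product bound. Your Minkowski step is the same as the paper's ``double sum with Cauchy--Schwarz on each mixed coordinate moment''; you also make the constants explicit, $C_d=9^d$ and $C_{d,e}=3^{d+e}$.
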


\begin{proof}
Walsh orthogonality gives \eqref{eq:walsh_parseval}.  Apply the scalar
degree-$d$ Bonami hypercontractive inequality
\citep[Theorem~9.21, p.~262]{ODonnell2014} to each coordinate.  Expanding
$\E_s\|\mathcal P\|_{\mathbb H}^4$ as a double sum and applying
Cauchy--Schwarz to each mixed coordinate moment gives the
dimension-free Hilbert-space bound \eqref{eq:walsh_fourth}; its constant
depends only on $d$.  Parseval,
$\|\bm x\bm y^\top\|_{\F}=\|\bm x\|\|\bm y\|$, Cauchy--Schwarz, and
\eqref{eq:walsh_fourth} yield
\begin{align*}
 \sum_{\mathcal K}\|\mE_{\mathcal K}\|_{\F}^2
 &=\E_s\|\mathcal P\mathcal Q^\top\|_{\F}^2\\
 &\le
 \{\E_s\|\mathcal P\|^4\}^{1/2}
 \{\E_s\|\mathcal Q\|^4\}^{1/2},
\end{align*}
which is \eqref{eq:walsh_product_contraction}.  The final assertion
follows by applying the corresponding operator-norm bound to every
coefficient.
\end{proof}

\begin{lemma}\label{lem:score_remainder}
Under Assumption~1 and $H_0$,
\begin{equation}\label{eq:delta_preliminary_new}
\norm{\bdelta_n}=O_{\Pp}(1),
\end{equation}
\begin{equation}\label{eq:second_score_rate_new}
\left\|\frac1n\sum_{i=1}^n
\frac{w_i^2}{2\sqrt p}\bm H_i(\bdelta_n)\right\|
=O_{\Pp}(p^{-1}),
\end{equation}
and
\begin{equation}\label{eq:third_score_rate_new}
\left\|\frac1n\sum_{i=1}^n\bm R_i(\bdelta_n)\right\|
=O_{\Pp}(p^{-1}).
\end{equation}
\end{lemma}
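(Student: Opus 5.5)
Three bounds are needed: the crude rate \eqref{eq:delta_preliminary_new}, then the two remainder rates. I would first prove \eqref{eq:delta_preliminary_new} by a convexity argument applied to the objective $L_n(\bm t)=\sum_i\norm{\bm X_i-\bm t}$. Consider a sphere $\norm{\bm d}=M$. For each unit direction $\bm v$, the one-sided directional derivative of $n^{-1}\sqrt p\,L_n$ at $\btheta_{0,p}+s\bm v$ is $-n^{-1}\sum_i\bm v^\top\sqrt p\,\mathbf U(\bm X_i-\btheta_{0,p}-s\bm v)$. At $s=0$ this equals $-n^{-1/2}\bm v^\top\bm Z_n$. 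Since $\E_s\norm{\bm Z_n}^2=p$, we get $\norm{n^{-1/2}\bm Z_n}=O_{\Pp}(\sqrt{p/n})=O_{\Pp}(1)$.

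Monotonicity of the gradient of a convex function gives
\[
 \frac{\sqrt p}{n}\{L_n(\btheta_{0,p}+M\bm v)-L_n(\btheta_{0,p})\}\ge -M\norm{n^{-1/2}\bm Z_n}+\int_0^M s\,\bm v^\top\mJ_n(s)\bm v\,\dd s,
\]
where $\mJ_n(s)$ is an averaged Jacobian along the segment. For $\norm{\bm d}\le M$ we have $\norm{\bm d}\ll\sqrt p/(2\bar w)\le\norm{\bm X_i-\btheta}/2$, so the derivative formula $D\mathbf U$ applies uniformly in $i$. Lemma~\ref{lem:J_rate} then gives $\lambda_{\min}\ge m_-j_-/4$ uniformly along the segment, because the second-order terms are $O(M/\sqrt p)$ by the bound on $D^2\mathbf U$ together with $w_i\le\bar w$. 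Choosing $M$ large makes the increment positive on the whole sphere, and convexity confines the minimizer to the ball.

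For \eqref{eq:second_score_rate_new}, expand $\bm H_i(\bdelta_n)$. The term $\bm U_i\norm{\bdelta_n}^2$ contributes $\norm{\bdelta_n}^2\,n^{-1}\sum_iw_i^2\bm U_i/(2\sqrt p)$. Its norm is $O_{\Pp}(1)\cdot p^{-1/2}\cdot O_{\Pp}(n^{-1/2})$, since $\E w_i^2\bm U_i=\0$ by sign symmetry and the summands are bounded vectors; this is $O_{\Pp}(p^{-1})$. The terms $-2\bdelta_n(\bm U_i^\top\bdelta_n)$ and $3\bm U_i(\bm U_i^\top\bdelta_n)^2$ are handled by writing them as $\mM_1\bdelta_n$ and a cubic tensor, with $\mM_1=n^{-1}\sum_iw_i^2\bm U_i$ and with $n^{-1}\sum_iw_i^2\bm U_i\bm U_i^\top\bdelta_n\bdelta_n^\top\bm U_i$ bounded by $\norm{n^{-1}\sum_iw_i^2\bm U_i\bm U_i^\top}_{\op}\norm{\bdelta_n}^2\max_i\norm{\bm U_i}$.

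The key point is to split $\bdelta_n$ into factor and bulk parts. Lemma~\ref{lem:projection_moments} makes the bulk block $O_{\Pp}(p^{-1})$ and the cross block $O_{\Pp}(p^{-1/2})$. The factor--factor block is $O(1)$, but it acts on the fixed-dimensional projection $\mPi_{F,p}\bdelta_n$, and its coefficient vector $n^{-1}\sum_iw_i^2\bm U_i(\bm U_{F,i}^\top\cdot)^2$ is a mean-zero average by sign symmetry, hence $O_{\Pp}(n^{-1/2})$. Since $\bdelta_n$ depends on all the signs, I would bound this coefficient uniformly over the unit ball of the fixed-dimensional factor space, using an $\varepsilon$-net of fixed cardinality, rather than plug $\bdelta_n$ in directly. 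Together with the prefactor $p^{-1/2}$, every piece is then $O_{\Pp}(p^{-1})$.

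For \eqref{eq:third_score_rate_new}, the bound \eqref{eq:R_i_bound} applies to every $i$ with probability tending to one, because $\norm{\bdelta_n}=O_{\Pp}(1)$ and $w_i\le\bar w$. It gives $\norm{n^{-1}\sum_i\bm R_i}\le C\bar w^3p^{-1}\norm{\bdelta_n}^3=O_{\Pp}(p^{-1})$.

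I expect the main obstacle to be the uniform treatment of the quadratic-in-$\bdelta_n$ terms of \eqref{eq:second_score_rate_new}, because $\bdelta_n$ is not independent of the signs. If the net argument over the factor directions is too crude for the bulk component, I would instead fall back on the Walsh-polynomial representation of Lemma~\ref{lem:walsh_contraction}, applied to the linearization $\bdelta_n\approx e_n^{-1}n^{-1/2}\bm Z_n$.
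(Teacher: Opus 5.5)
Your proof is correct. For the tightness bound and the cubic remainder it follows the paper's argument: the paper checks $\bm d^\top n^{-1}\sum_i\wh{\bm Y}_i(\bm d)<0$ on the sphere instead of integrating the directional derivative, which is equivalent. (Your integral weight should be $M-s$ rather than $s$; this is immaterial once $\bm v^\top\mJ_n(s)\bm v\ge m_-j_-/4$ holds along the whole segment.)

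For the Hessian term you take a genuinely different route. The paper proceeds in three steps:
\begin{itemize}
\item it uses the score equation to get $\norm{\bdelta_n-\bm d_{L,n}}=O_{\Pp}(p^{-1/2})$ with $\bm d_{L,n}=\mJ_n^{-1}\overline{\bm Y}_n$;
\item it moves $\bm S_{2,n}$ from $\bdelta_n$ to $\bm d_{L,n}$ by polarization, at cost $O_{\Pp}(p^{-1})$;
\item it writes $\bm d_{L,n}=\sum_js_j\bm a_{j,n}$ with $\calF_n^0$-measurable coefficients, so that $n^{-1}\sum_iw_i^2\bm H_i(\bm d_{L,n})$ is a Walsh chaos of degrees one and three with coefficient energy $O(n^{-1})$.
\end{itemize}
You instead bound the Hessian sum uniformly over $\norm{\bm d}\le M$. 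This removes the dependence of $\bdelta_n$ on the signs without any linearization.

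Your route goes through, and no net is needed. Put $\bm d_F=\mPi_{F,p}\bm d$, $\bm d_B=\mPi_{B,p}\bm d$, $\mV=n^{-1}\sum_iw_i^2\bm U_i\bm U_i^\top$ and $\mathcal T(\bm a,\bm b)=n^{-1}\sum_iw_i^2\bm U_i(\bm U_i^\top\bm a)(\bm U_i^\top\bm b)$.
\begin{itemize}
\item The bound $|\bm x^\top\bm U_i|\le1$ and Cauchy--Schwarz give $\norm{\mathcal T(\bm d_B,\bm d_B)}\le\bm d_B^\top\mV\bm d_B$ and $\norm{\mathcal T(\bm d_F,\bm d_B)}\le(\bm d_F^\top\mV\bm d_F)^{1/2}(\bm d_B^\top\mV\bm d_B)^{1/2}$.
\item By \eqref{eq:bulk_sample_cov_bound} with $k=2$ and $\tr\mV\le\bar w^2$, these are $O_{\Pp}(p^{-1})$ and $O_{\Pp}(p^{-1/2})$ uniformly on the ball. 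Only the bulk block of Lemma~\ref{lem:projection_moments} is used, not the cross block.
\item $\mathcal T(\bm d_F,\bm d_F)$ is a combination, with coefficients $(\mP_{F,p}^\top\bm d)_a(\mP_{F,p}^\top\bm d)_b$, of $r^2$ fixed vectors $n^{-1}\sum_iw_i^2(\mP_{F,p}^\top\bm U_i)_a(\mP_{F,p}^\top\bm U_i)_b\bm U_i$. Each is a mean-zero average by sign symmetry, so each has norm $O_{\Pp}(n^{-1/2})$.
\end{itemize}

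Your route is more elementary. It uses only tightness of $\bdelta_n$, Lemma~\ref{lem:projection_moments} and sign symmetry. It applies to any tight random vector, and it makes your Walsh fallback unnecessary.

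The paper's heavier route builds several things that are reused in Theorem~\ref{thm:median} and Lemmas~\ref{lem:F_factorization}--\ref{lem:E_bulk}:
\begin{itemize}
\item the linearization rate;
\item the sign representation of $\bm d_{L,n}$;
\item the check, made inside this proof, that $\wh\btheta$ avoids every observation, so that the score equation holds.
\end{itemize}
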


\begin{proof}
Let $\overline{\bm Y}_n=n^{-1}\sum_i\bm Y_i$.  Since
$\tr(\mR_p)=p$,
\begin{equation}\label{eq:Ybar_second_moment}
 \E\|\overline{\bm Y}_n\|^2=\frac pn=O(1).
\end{equation}
For $\|\bm d\|\le\sqrt p/(2\bar w)$, Taylor's formula gives
\[
 \frac1n\sum_{i=1}^n\wh{\bm Y}_i(\bm d)
 =\overline{\bm Y}_n-\mJ_n\bm d+\bm R_{S,n}(\bm d),
 \qquad
 \|\bm R_{S,n}(\bm d)\|
 \le C\left\{\frac{\|\bm d\|^2}{\sqrt p}
 +\frac{\|\bm d\|^3}{p}\right\}.
\]
On the event in \eqref{eq:J_min}, for $\|\bm d\|=M$,
\begin{align*}
 \bm d^\top\left\{\frac1n\sum_i\wh{\bm Y}_i(\bm d)\right\}
 &\le M\|\overline{\bm Y}_n\|
 -\frac{m_-j_-}{4}M^2
 +C\frac{M^3}{\sqrt p}+C\frac{M^4}{p}.
\end{align*}
Equation \eqref{eq:Ybar_second_moment} permits a fixed $M$ for which the
right-hand side is negative with probability arbitrarily close to one.
Convexity of $\sum_i\|\bm X_i-\bm d\|$ then proves
\eqref{eq:delta_preliminary_new}.  This tightness also justifies the
ordinary score equation used below.  Indeed,
\[
 \min_{i\le n}\|\bm X_i-\btheta_{0,p}\|
 =\min_{i\le n}\frac{\sqrt p}{w_i}
 \ge \frac{\sqrt p}{\bar w}.
\]
For every fixed $M$, on the event $\|\bdelta_n\|\le M$ and for all
sufficiently large $p$, the sample median differs from every observation.
The objective is then differentiable at $\wh\btheta$, and its first-order
condition is exactly $n^{-1}\sum_i\wh{\bm Y}_i=\0$.

Define
\[
 \bm d_{L,n}=\mJ_n^{-1}\overline{\bm Y}_n,
 \qquad
 \bm S_{2,n}(\bm d)
 =\frac1n\sum_{i=1}^n\frac{w_i^2}{2\sqrt p}\bm H_i(\bm d),
 \qquad
 \bm S_{3,n}(\bm d)=\frac1n\sum_{i=1}^n\bm R_i(\bm d).
\]
The score equation is
\begin{equation*}
 \mJ_n\bdelta_n
 =\overline{\bm Y}_n+\bm S_{2,n}(\bdelta_n)
 +\bm S_{3,n}(\bdelta_n).
\end{equation*}
On every event on which $\|\bdelta_n\|\le M$,
\[
 \|\bm S_{2,n}(\bdelta_n)\|\le C_Mp^{-1/2},
 \qquad
 \|\bm S_{3,n}(\bdelta_n)\|\le C_Mp^{-1}.
\]
Together with \eqref{eq:J_min}, this gives the preliminary linearization
\begin{equation*}
 \|\bdelta_n-\bm d_{L,n}\|=O_{\Pp}(p^{-1/2}),
 \qquad
 \|\bm d_{L,n}\|=O_{\Pp}(1).
\end{equation*}
The bilinear map $\mathcal H_i$ defined by
\[
 \mathcal H_i(\bm a,\bm b)
 =-\bm a(\bm U_i^\top\bm b)-\bm b(\bm U_i^\top\bm a)
 -\bm U_i(\bm a^\top\bm b)
 +3\bm U_i(\bm U_i^\top\bm a)(\bm U_i^\top\bm b)
\]
satisfies
\begin{equation}\label{eq:H_polarization}
 \bm H_i(\bm d)=\mathcal H_i(\bm d,\bm d),
 \qquad
 \|\mathcal H_i(\bm a,\bm b)\|
 \le6\|\bm a\|\|\bm b\|.
\end{equation}
Hence, on the same event,
\begin{align}
 \|\bm S_{2,n}(\bdelta_n)-\bm S_{2,n}(\bm d_{L,n})\|
 &\le \frac{C}{\sqrt p}
 (\|\bdelta_n\|+\|\bm d_{L,n}\|)
 \|\bdelta_n-\bm d_{L,n}\|
 =O_{\Pp}(p^{-1}).
 \label{eq:S2_linear_replace}
\end{align}

It remains to evaluate $\bm S_{2,n}(\bm d_{L,n})$.  In the remainder of the appendices, $\E_s$, $\Var_s$, and $\Pp_s$ denote conditional expectation, variance, and probability with respect to the Rademacher signs given $\calF_n^0$.  Conditional on
$\calF_n^0$, write
\[
 \bm Y_i=s_i\widetilde{\bm Y}_i,
 \qquad
 \bm U_i=s_i\widetilde{\bm U}_i,
 \qquad
 \bm a_{i,n}=\frac1n\mJ_n^{-1}\widetilde{\bm Y}_i.
\]
The matrix $\mJ_n$ is $\calF_n^0$-measurable because it depends on
$\bm U_i$ only through $\bm U_i\bm U_i^\top$.  Therefore
\begin{equation}\label{eq:dL_Rademacher}
 \bm d_{L,n}=\sum_{j=1}^ns_j\bm a_{j,n}.
\end{equation}
Using the Rademacher representation \eqref{eq:dL_Rademacher}, on the event in \eqref{eq:J_min},
\begin{align}
 \max_{j\le n}\|\bm a_{j,n}\|
 &\le \frac{C\sqrt p}{n}=O(p^{-1/2}),\notag\\
 \sum_{j=1}^n\|\bm a_{j,n}\|^2
 &=\frac1n\tr(\mJ_n^{-1}\mB_n\mJ_n^{-1})
 \le \frac{C}{n}\tr(\mB_n)
 =C\frac pn=O(1).
 \label{eq:a_square_score}
\end{align}
Let $\widetilde{\mathcal H}_i$ denote the map in
\eqref{eq:H_polarization} with $\bm U_i$ replaced by
$\widetilde{\bm U}_i$.  Since $\bm H_i(\bm d)
=s_i\widetilde{\mathcal H}_i(\bm d,\bm d)$,
\begin{equation}\label{eq:cubic_score_chaos}
 \frac1n\sum_iw_i^2\bm H_i(\bm d_{L,n})
 =\frac1n\sum_{i,j,k=1}^n
 w_i^2s_is_js_k
 \widetilde{\mathcal H}_i(\bm a_{j,n},\bm a_{k,n}).
\end{equation}
Collecting equal Rademacher monomials in \eqref{eq:cubic_score_chaos}
produces only Walsh chaoses of degrees one and three.  If their vector
coefficients are denoted by $\bm c_\ell$ and
$\bm c_{\ell_1\ell_2\ell_3}$, respectively, then
\begin{align*}
 \sum_{\ell=1}^n\|\bm c_\ell\|^2
 &\le \frac{C}{n^2}\sum_{\ell=1}^n
 \left\|\sum_j
 \widetilde{\mathcal H}_\ell(\bm a_{j,n},\bm a_{j,n})\right\|^2\\
 &\quad+\frac{C}{n^2}\sum_{\ell=1}^n
 \left\|\sum_i
 \widetilde{\mathcal H}_i(\bm a_{i,n},\bm a_{\ell,n})\right\|^2
 \le \frac Cn,\\
 \sum_{\ell_1<\ell_2<\ell_3}
 \|\bm c_{\ell_1\ell_2\ell_3}\|^2
 &\le \frac{C}{n^2}\sum_{i,j,k}^{\mathrm{distinct}}
 \{\|\bm a_{j,n}\|^2\|\bm a_{k,n}\|^2
 +\|\bm a_{i,n}\|^2\|\bm a_{k,n}\|^2
 +\|\bm a_{i,n}\|^2\|\bm a_{j,n}\|^2\}
 \le \frac Cn.
\end{align*}
The first inequality uses \eqref{eq:H_polarization}, Cauchy--Schwarz,
and \eqref{eq:a_square_score}; the second additionally uses that the
index not appearing in the displayed product has at most $n$ choices.
Orthogonality of distinct Walsh monomials now gives
\begin{equation*}
 \E_s\left\|
 \frac1n\sum_iw_i^2\bm H_i(\bm d_{L,n})
 \right\|^2
 =\sum_\ell\|\bm c_\ell\|^2
 +\sum_{\ell_1<\ell_2<\ell_3}
 \|\bm c_{\ell_1\ell_2\ell_3}\|^2
 \le \frac Cn.
\end{equation*}
Thus
\[
 \|\bm S_{2,n}(\bm d_{L,n})\|
 =O_{\Pp}\{(np)^{-1/2}\}=O_{\Pp}(p^{-1}).
\]
Combining this relation with \eqref{eq:S2_linear_replace} proves
\eqref{eq:second_score_rate_new}.

Finally, \eqref{eq:R_i_bound}, $w_i\le\bar w$, and
\eqref{eq:delta_preliminary_new} give
\[
 \left\|\bm S_{3,n}(\bdelta_n)\right\|
 \le \frac{C\bar w^3}{p}\|\bdelta_n\|^3
 =O_{\Pp}(p^{-1}),
\]
which proves \eqref{eq:third_score_rate_new}.
\end{proof}

\begin{lemma}\label{lem:factor_suppression}
Under Assumption~1,
\begin{align}
\lambda_r(\mB_n)&\ge c_0p\quad (r\ge1),
&\lambda_{r+1}(\mB_n)&=O_{\Pp}(1),\label{eq:B_spike_lower}\\
\norm{\mQ_n\mP_{F,p}}_{\op}&=O_{\Pp}(p^{-1/2}),
&\norm{\mP_{F,p}^\top\mQ_n\mP_{F,p}}_{\op}&=O_{\Pp}(p^{-1}).
\label{eq:QPF}
\end{align}
When $r\ge1$, the first inequality holds with probability tending to
one; when $r=0$, the two factor-resolvent assertions are vacuous.
\end{lemma}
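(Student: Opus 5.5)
I would use a block decomposition of $\mB_n$ with respect to $\R^p=\operatorname{span}(\mP_{F,p})\oplus\operatorname{span}(\mP_{B,p})$. Write
\[
 \mB_n=\begin{pmatrix}\mB_{FF}&\mB_{FB}\\ \mB_{BF}&\mB_{BB}\end{pmatrix},
\]
with
\[
 \mB_{FF}=\mP_{F,p}^\top\mB_n\mP_{F,p},\qquad \mB_{FB}=\mP_{F,p}^\top\mB_n\mP_{B,p},\qquad \mB_{BB}=\mP_{B,p}^\top\mB_n\mP_{B,p}.
\]
Since $\bm Y_i=\sqrt p\,\bm U_i$, these are $p$ times the $k=0$ block averages in Lemma~\ref{lem:projection_moments}. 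Hence \eqref{eq:bulk_sample_cov_bound} gives $\|\mB_{BB}\|_{\op}=O_{\Pp}(1)$, and \eqref{eq:factor_bulk_sample_bound} gives $\|\mB_{FB}\|_{\op}=O_{\Pp}(p^{1/2})$.

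For the factor block, the entries of $p^{-1}\mB_{FF}=n^{-1}\sum_i\mP_{F,p}^\top\bm U_i\bm U_i^\top\mP_{F,p}$ are averages of bounded i.i.d.\ variables. The law of large numbers therefore gives
\[
 p^{-1}\mB_{FF}=p^{-1}\mP_{F,p}^\top\mR_p\mP_{F,p}+O_{\Pp}(n^{-1/2}).
\]
By Proposition~\ref{prop:population_matrices}, the limit matrix is diagonal with entries $a_{a,p}+O(p^{-1/2})$. Each $a_{a,p}=\tau_{a,p}\E(G_a^2/Q_p^0)$ is bounded below, because $\tau_{a,p}\ge\tau_-$ and $Q_p^0\le\mu_{0,p}+\sum_b\tau_{b,p}G_b^2$ has bounded moments; for example, restrict to the event $\{|G_a|\ge1,\ \max_b|G_b|\le2\}$. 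So $\lambda_{\min}(\mB_{FF})\ge 2c_0p$ with probability tending to one.

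The spectral claims \eqref{eq:B_spike_lower} then follow from eigenvalue interlacing. Courant--Fischer applied to $\operatorname{span}(\mP_{F,p})$ gives $\lambda_r(\mB_n)\ge\lambda_{\min}(\mB_{FF})$. Weyl's inequality, together with the fact that $\mPi_F\mB_n\mPi_F$ has rank $r$, gives
\[
 \lambda_{r+1}(\mB_n)\le\lambda_{r+1}(\mPi_F\mB_n\mPi_F)+\|\mB_n-\mPi_F\mB_n\mPi_F\|_{\op}\le 0+2\|\mB_{FB}\|_{\op}+\|\mB_{BB}\|_{\op}.
\]
This bound is $O_{\Pp}(p^{1/2})$, which is too crude. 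The sharper $O_{\Pp}(1)$ comes from the Schur complement instead. The bulk block of $\mB_n$ restricted to the orthogonal complement satisfies $\lambda_{r+1}(\mB_n)\le\|\mB_{BB}-\mB_{BF}(\mB_{FF}+t)^{-1}\mB_{FB}\|_{\op}+o(1)$, and interlacing then gives $\lambda_{r+1}(\mB_n)\le\lambda_1(\mB_{BB})=O_{\Pp}(1)$. The clean route is Cauchy interlacing: $\lambda_{r+1}(\mB_n)\le\lambda_1$ of the compression of $\mB_n$ to any codimension-$r$ subspace, here $\operatorname{span}(\mP_{B,p})$, so $\lambda_{r+1}(\mB_n)\le\|\mB_{BB}\|_{\op}$.

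For \eqref{eq:QPF}, apply the block inverse formula to $\mQ_n=(\mB_n+\rho\I)^{-1}$. Its $FF$ block is $\mS^{-1}$ with
\[
 \mS=\mB_{FF}+\rho\I_r-\mB_{FB}(\mB_{BB}+\rho\I)^{-1}\mB_{BF}.
\]
The subtracted term has operator norm at most $\rho_0^{-1}\|\mB_{FB}\|_{\op}^2=O_{\Pp}(p)$, so a sharper bound is needed; this is the main obstacle. Two routes are available. The first is to exploit the Frobenius bound in \eqref{eq:factor_bulk_sample_bound} with a constant small relative to $c_0$, but the constant is not small. The better route is to note that $\mS^{-1}=\mP_{F,p}^\top\mQ_n\mP_{F,p}$ and bound it variationally. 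For a unit $\bm v\in\operatorname{span}(\mP_{F,p})$, use $\bm v^\top\mQ_n\bm v\le\|\mQ_n\bm v\|$ and the identity $(\mB_n+\rho)\mQ_n\bm v=\bm v$. Writing $\mQ_n\bm v=\bm x_F+\bm x_B$ and taking the inner product with $\bm x_F$ gives
\[
 \bm x_F^\top\mB_n(\bm x_F+\bm x_B)+\rho\|\bm x_F\|^2=\bm v^\top\bm x_F .
\]
Combined with positive semidefiniteness this yields $\bm v^\top\mQ_n\bm v\le(\lambda_{\min}$ of the Schur complement$)^{-1}$. The Schur complement can be rewritten as
\[
 \mS=\rho\I_r+\mP_{F,p}^\top\mY\Bigl\{\tfrac1n\I_n-\tfrac1{n^2}\mY_B^\top(\mB_{BB}+\rho)^{-1}\mY_B\Bigr\}\mY^\top\mP_{F,p}=\rho\I_r+\tfrac{\rho}{n}\mY_F\bigl(\mK_{B,n}+\rho\I_n\bigr)^{-1}\mY_F^\top,
\]
where $\mY_F=\mP_{F,p}^\top\mY$. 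Since $\|\mK_{B,n}\|_{\op}=\|\mB_{BB}\|_{\op}=O_{\Pp}(1)$, this is at least $\rho\{\|\mB_{BB}\|_{\op}+\rho\}^{-1}\mB_{FF}\ge c p$ with probability tending to one. Hence $\|\mP_{F,p}^\top\mQ_n\mP_{F,p}\|_{\op}=O_{\Pp}(p^{-1})$.

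Finally, the $BF$ block of $\mQ_n$ equals $-(\mB_{BB}+\rho)^{-1}\mB_{BF}\mS^{-1}$, whose norm is $O_{\Pp}(p^{1/2}\cdot p^{-1})=O_{\Pp}(p^{-1/2})$. Combining this with the $FF$ block gives $\|\mQ_n\mP_{F,p}\|_{\op}=O_{\Pp}(p^{-1/2})$. When $r=0$, every statement involving the factor block is vacuous.
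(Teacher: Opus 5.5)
Your proof is correct. For the resolvent bounds it takes a genuinely different route from the paper; for the spectral part it is essentially the same.

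For $\lambda_{r+1}(\mB_n)=O_{\Pp}(1)$, the paper applies Weyl's singular-value inequality to $\mY_n=\mP_{F,p}\mY_{F,n}+\mP_{B,p}\mY_{B,n}$. This is equivalent to your Courant--Fischer/Cauchy interlacing bound $\lambda_{r+1}(\mB_n)\le\lambda_{\max}(\mB_{BB})$.

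For \eqref{eq:QPF} the paper argues perturbatively. It uses the spike gap and Wedin's $\sin\Theta$ theorem to show that the top-$r$ eigenprojector $\widehat\mPi_{F,n}$ of $\mB_n$ satisfies $\|(\I_p-\widehat\mPi_{F,n})\mP_{F,p}\|_{\op}=O_{\Pp}(p^{-1/2})$, and then splits $\mQ_n$ along $\widehat\mPi_{F,n}$. You instead compute exactly, in three steps:
\begin{itemize}
\item the block-inverse formula gives $\mP_{F,p}^\top\mQ_n\mP_{F,p}=\mS^{-1}$;
\item the push-through identity rewrites the Schur complement as $\rho\I_r+\rho n^{-1}\mY_F(\mK_{B,n}+\rho\I_n)^{-1}\mY_F^\top\succeq\rho\I_r+\rho(\|\mB_{BB}\|_{\op}+\rho)^{-1}\mB_{FF}$;
\item the off-diagonal block $-(\mB_{BB}+\rho\I)^{-1}\mB_{BF}\mS^{-1}$ then gives the $p^{-1/2}$ rate.
\end{itemize}

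Your route is more elementary. It needs no subspace-perturbation theorem, only $\lambda_{\min}(\mB_{FF})\gtrsim p$ and $\|\mB_{BB}\|_{\op}=O_{\Pp}(1)$, and it has explicit constants. However, it leans on the Gram (positive semidefinite) structure of $\mB_n$. The paper's Weyl--Wedin argument is the one it reuses later for $\mQ_{H,n}$ and $\mQ_{HF,n}$. Their inverses differ from $\mB_n+\rho\I_p$ by indefinite fixed-rank perturbations, where your push-through form is not directly available.

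When you write it up, delete the false starts. The crude Weyl bound, the unjustified Schur-complement inequality for $\lambda_{r+1}(\mB_n)$, and the variational detour for $\bm v^\top\mQ_n\bm v$ are all unnecessary. Interlacing and the exact identity $\mP_{F,p}^\top\mQ_n\mP_{F,p}=\mS^{-1}$ already do the work. Note also that for $r=0$ the bound $\lambda_1(\mB_n)=O_{\Pp}(1)$ is not vacuous. It is the same interlacing statement with $\mPi_{B,p}=\I_p$.
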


\begin{proof}
Put $\mY_n=\mY/\sqrt n$.  If $r=0$, then
$\mY_n=\mP_{B,p}\mY_{B,n}$ and the separable bulk comparison in
Proposition~\ref{prop:weighted_DE} gives
$\lambda_1(\mB_n)=\|\mY_{B,n}\|_{\op}^2=O_{\Pp}(1)$; all remaining
factor assertions are vacuous.  Hence assume $r\ge1$ and decompose the
rows as
\[
 \mY_n=\mP_{F,p}\mY_{F,n}+\mP_{B,p}\mY_{B,n},
 \qquad
 \mY_{F,n}=\mP_{F,p}^\top\mY_n,
 \qquad
 \mY_{B,n}=\mP_{B,p}^\top\mY_n.
\]
The $r\times r$ matrix $p^{-1}\mY_{F,n}\mY_{F,n}^\top$ has entries
\[
 \frac1n\sum_{i=1}^n
 \frac{\sqrt{\tau_{a,p}\tau_{b,p}}G_{ai}G_{bi}}{q_{i,p}},
 \qquad 1\le a,b\le r.
\]
The off-diagonal expectations vanish by sign symmetry.  Proposition~\ref{prop:population_matrices} and the fixed-dimensional variance bound give
\[
 \left\|
 p^{-1}\mY_{F,n}\mY_{F,n}^\top
 -\diag(a_{1,p},\ldots,a_{r,p})
 \right\|_{\op}=O_{\Pp}(p^{-1/2}).
\]
Since $a_{a,p}\ge \tau_-\E\{G_a^2/(1+\sum_b\tau_+G_b^2)\}>0$ uniformly in $p$,
\begin{equation*}
 s_r(\mY_{F,n})\ge c\sqrt p,
 \qquad
 s_1(\mY_{F,n})\le C\sqrt p
\end{equation*}
with probability tending to one.  The separable bulk comparison in
Proposition~\ref{prop:weighted_DE} gives
\begin{equation}\label{eq:YB_operator_factor}
 \norm{\mY_{B,n}}_{\op}=O_{\Pp}(1).
\end{equation}
Weyl's singular-value inequality applied to
$\mY_n=\mP_{F,p}\mY_{F,n}+\mP_{B,p}\mY_{B,n}$ yields
\[
 s_r(\mY_n)\ge c\sqrt p-O_{\Pp}(1),
 \qquad
 s_{r+1}(\mY_n)\le\norm{\mY_{B,n}}_{\op}=O_{\Pp}(1).
\]
Because $\mB_n=\mY_n\mY_n^\top$, these relations prove
\eqref{eq:B_spike_lower}.

Let $\widehat\mPi_{F,n}$ denote the spectral projector of $\mB_n$
associated with its largest $r$ eigenvalues.  The unperturbed matrix
$\mP_{F,p}\mY_{F,n}$ has left singular space
$\operatorname{span}(\mP_{F,p})$, its $r$th singular value is at least
$c\sqrt p$, and the perturbation in \eqref{eq:YB_operator_factor} has
operator norm $O_{\Pp}(1)$.  Wedin's sin--theta theorem \citep[Theorem~V.4.1, pp.~260--262]{StewartSun1990} therefore gives
\begin{equation*}
 \norm{(\I_p-\widehat\mPi_{F,n})\mP_{F,p}}_{\op}
 =O_{\Pp}(p^{-1/2}).
\end{equation*}
On $\operatorname{span}(\widehat\mPi_{F,n})$, the resolvent eigenvalues
are at most $(c_0p+\rho_0)^{-1}$; on its orthogonal complement they are
at most $\rho_0^{-1}$.  Hence
\begin{align*}
 \norm{\mQ_n\mP_{F,p}}_{\op}
 &\le \frac1{c_0p+\rho_0}
 +\rho_0^{-1}
 \norm{(\I_p-\widehat\mPi_{F,n})\mP_{F,p}}_{\op}
 =O_{\Pp}(p^{-1/2}),\\
 \norm{\mP_{F,p}^\top\mQ_n\mP_{F,p}}_{\op}
 &\le \frac1{c_0p+\rho_0}
 +\rho_0^{-1}
 \norm{(\I_p-\widehat\mPi_{F,n})\mP_{F,p}}_{\op}^2
 =O_{\Pp}(p^{-1}).
\end{align*}
This proves \eqref{eq:QPF}.
\end{proof}

\begin{proof}[Proof of Theorem~\ref{thm:median}]
The score equation and \eqref{eq:Yhat_Taylor} give
\begin{equation}\label{eq:median_score_equation_app}
\mJ_n\bdelta_n
=\overline{\bm Y}_n+\bm r_{S,n},
\end{equation}
where Lemma~\ref{lem:score_remainder} yields
\begin{equation}\label{eq:rS_rate_app}
\norm{\bm r_{S,n}}=O_{\Pp}(p^{-1}).
\end{equation}
Let
\begin{equation*}
\mR_{J,n}=\mJ_n^{-1}-e_n^{-1}\I_p-\mG_{F,n}.
\end{equation*}
Lemma~\ref{lem:J_finite_rank} gives
\begin{equation}\label{eq:RJ_rate_app}
\norm{\mR_{J,n}}_{\op}=O_{\Pp}(p^{-1}).
\end{equation}
Since $\overline{\bm Y}_n=\bm Z_n/\sqrt n$ and
$\norm{\overline{\bm Y}_n}=O_{\Pp}(1)$, equations
\eqref{eq:median_score_equation_app}--\eqref{eq:RJ_rate_app} imply
\begin{align}
\bdelta_n
&=(e_n^{-1}\I_p+\mG_{F,n}+\mR_{J,n})
\overline{\bm Y}_n+\mJ_n^{-1}\bm r_{S,n}\notag\\
&=\frac1{e_n\sqrt n}\bm Z_n
 +\frac1{\sqrt n}\mG_{F,n}\bm Z_n
 +\bm r_{\theta,B,n},\label{eq:median_expansion_app}
\end{align}
where
\begin{equation*}
\norm{\bm r_{\theta,B,n}}
\le \norm{\mR_{J,n}}_{\op}\norm{\overline{\bm Y}_n}
+\norm{\mJ_n^{-1}}_{\op}\norm{\bm r_{S,n}}
=O_{\Pp}(p^{-1}).
\end{equation*}
This proves \eqref{eq:median_sample_expansion}--\eqref{eq:median_bulk_remainder}.
Equations \eqref{eq:GF_rank_app}--\eqref{eq:GF_BB_app} prove
\eqref{eq:GF_FF_rate}--\eqref{eq:GF_BB_rate}.

For the population expansion, \eqref{eq:J_inverse_rate},
\eqref{eq:median_score_equation_app}, and \eqref{eq:rS_rate_app} give
\begin{align*}
\bdelta_n-\frac1{m_1}\mJ_{E,p}^{-1}\overline{\bm Y}_n
={}&\{\mJ_n^{-1}-(m_1\mJ_{E,p})^{-1}\}
\overline{\bm Y}_n+\mJ_n^{-1}\bm r_{S,n},\\
\left\|\bdelta_n-\frac1{m_1}\mJ_{E,p}^{-1}\overline{\bm Y}_n\right\|
={}&O_{\Pp}(p^{-1/2})O_{\Pp}(1)+O_{\Pp}(p^{-1})
=O_{\Pp}(p^{-1/2}),
\end{align*}
which proves \eqref{eq:median_bahadur}.

It remains to prove the quadratic replacement.  Put
\begin{equation*}
\mD_{s,p}=\sqrt p\,\mPi_{F,p}+\mPi_{B,p}.
\end{equation*}
Lemmas~\ref{lem:factor_suppression} and~\ref{lem:J_finite_rank} give the
block orders
\begin{equation}\label{eq:QG_block_orders}
\mQ_n=
\begin{pmatrix}
O_{\Pp}(p^{-1})&O_{\Pp}(p^{-1/2})\\
O_{\Pp}(p^{-1/2})&O_{\Pp}(1)
\end{pmatrix},
\qquad
\mG_{F,n}=
\begin{pmatrix}
O_{\Pp}(1)&O_{\Pp}(p^{-1/2})\\
O_{\Pp}(p^{-1/2})&O_{\Pp}(p^{-1})
\end{pmatrix}.
\end{equation}
Here every entry denotes the operator norm of the corresponding block.
Therefore
\begin{align}
\norm{\mD_{s,p}\mQ_n\mD_{s,p}}_{\op}&=O_{\Pp}(1),\label{eq:DQD_bound}\\
\norm{\mD_{s,p}^{-1}\mG_{F,n}\mD_{s,p}}_{\op}&=O_{\Pp}(1),\label{eq:DG_right_bound}\\
\norm{\mD_{s,p}\mG_{F,n}\mD_{s,p}^{-1}}_{\op}&=O_{\Pp}(1).\notag
\end{align}
It follows that
\begin{align}
\norm{\mD_{s,p}\mQ_n\mG_{F,n}\mD_{s,p}}_{\op}
&=O_{\Pp}(1),\label{eq:DQGD_bound}\\
\norm{\mD_{s,p}\mG_{F,n}\mQ_n\mG_{F,n}\mD_{s,p}}_{\op}
&=O_{\Pp}(1),\label{eq:DGQGD_bound}
\end{align}
and both matrices in \eqref{eq:DQGD_bound}--\eqref{eq:DGQGD_bound}
have rank at most $2r$.

By central symmetry, on an enlarged probability space one may write
\begin{equation*}
\bm Y_i=s_i\wt{\bm Y}_i,
\qquad
\Pp(s_i=1)=\Pp(s_i=-1)=1/2,
\end{equation*}
where $s_1,\ldots,s_n$ are independent of
$\calF_n^0=\sigma\{w_i,\wt{\bm Y}_i:1\le i\le n\}$.
The matrices $\mQ_n$ and $\mG_{F,n}$ are $\calF_n^0$-measurable.  Let
\begin{equation*}
\wt\mX_n=\mD_{s,p}^{-1}
(\wt{\bm Y}_1,\ldots,\wt{\bm Y}_n).
\end{equation*}
The factor--factor, factor--bulk, and bulk--bulk bounds used in the proof of
Lemma~\ref{lem:factor_suppression} give
\begin{equation}\label{eq:Xtilde_operator}
\left\|\frac1n\wt\mX_n\wt\mX_n^\top\right\|_{\op}=O_{\Pp}(1),
\qquad
\norm{\wt\mX_n/\sqrt n}_{\op}=O_{\Pp}(1).
\end{equation}
Define
\begin{align*}
\mC_{1,n}
&=\frac1n\wt\mX_n^\top
\mD_{s,p}\mQ_n\mG_{F,n}\mD_{s,p}\wt\mX_n,\\
\mC_{2,n}
&=\frac1n\wt\mX_n^\top
\mD_{s,p}\mG_{F,n}\mQ_n\mG_{F,n}\mD_{s,p}\wt\mX_n.
\end{align*}
Equations \eqref{eq:DQGD_bound}--\eqref{eq:Xtilde_operator} imply
\begin{equation}\label{eq:C12_bounds}
\rank(\mC_{j,n})\le2r,
\qquad
\norm{\mC_{j,n}}_{\op}+\norm{\mC_{j,n}}_{\F}
+|\tr(\mC_{j,n})|=O_{\Pp}(1),
\qquad j=1,2.
\end{equation}
Writing $\bm s=(s_1,\ldots,s_n)^\top$, one has
\begin{equation*}
\bm Z_n^\top\mQ_n\mG_{F,n}\bm Z_n=\bm s^\top\mC_{1,n}\bm s,
\qquad
\bm Z_n^\top\mG_{F,n}\mQ_n\mG_{F,n}\bm Z_n
=\bm s^\top\mC_{2,n}\bm s.
\end{equation*}
For any $\calF_n^0$-measurable matrix $\mC$, the Rademacher identities
\begin{align*}
\E_s(\bm s^\top\mC\bm s)&=\tr(\mC),\\
\Var_s(\bm s^\top\mC\bm s)
&=2\sum_{i\ne j}\left\{\frac{C_{ij}+C_{ji}}2\right\}^2
\le2\norm{(\mC+\mC^\top)/2}_{\F}^2
\end{align*}
give, by \eqref{eq:C12_bounds},
\begin{equation}\label{eq:G_quadratic_suppression}
\bm Z_n^\top\mQ_n\mG_{F,n}\bm Z_n=O_{\Pp}(1),
\qquad
\bm Z_n^\top\mG_{F,n}\mQ_n\mG_{F,n}\bm Z_n=O_{\Pp}(1).
\end{equation}

Let
\begin{equation*}
\bdelta_{M,n}=e_n^{-1}\overline{\bm Y}_n+
\mG_{F,n}\overline{\bm Y}_n.
\end{equation*}
Since $e_n^{-1}=O_{\Pp}(1)$,
$\norm{\mG_{F,n}}_{\op}=O_{\Pp}(1)$, and
$\norm{\overline{\bm Y}_n}=O_{\Pp}(1)$,
\begin{equation*}
\norm{\bdelta_{M,n}}=O_{\Pp}(1).
\end{equation*}
Equations \eqref{eq:median_expansion_app} and
\eqref{eq:G_quadratic_suppression} yield
\begin{align}
&n\left\{
\bdelta_{M,n}^\top\mQ_n\bdelta_{M,n}
-e_n^{-2}\overline{\bm Y}_n^\top\mQ_n\overline{\bm Y}_n
\right\}\notag\\
&\quad=2e_n^{-1}\bm Z_n^\top\mQ_n\mG_{F,n}\bm Z_n
+\bm Z_n^\top\mG_{F,n}\mQ_n\mG_{F,n}\bm Z_n
=O_{\Pp}(1).
\label{eq:main_quadratic_difference}
\end{align}
The remaining terms satisfy
\begin{align}
2n|\bdelta_{M,n}^\top\mQ_n\bm r_{\theta,B,n}|
&\le2n\rho_0^{-1}\norm{\bdelta_{M,n}}
\norm{\bm r_{\theta,B,n}}
=O_{\Pp}(n/p)=O_{\Pp}(1),\notag\\
n\bm r_{\theta,B,n}^\top\mQ_n\bm r_{\theta,B,n}
&\le n\rho_0^{-1}\norm{\bm r_{\theta,B,n}}^2
=O_{\Pp}(n/p^2)=O_{\Pp}(p^{-1}).
\label{eq:rtheta_square}
\end{align}
Combining \eqref{eq:main_quadratic_difference}--\eqref{eq:rtheta_square}
proves \eqref{eq:median_quadratic_replacement}.
\end{proof}

\section{Finite-rank SSCM expansion and Woodbury algebra}\label{app:finite_rank}
\subsection{Finite-rank correction of the spatial-sign covariance matrix}\label{subsec:finite_rank}

Introduce
\begin{equation*}
\mH_n=\frac1{\sqrt n}(\bm Z_n,\bm W_n),
\qquad
\mC_n=
\begin{pmatrix}
 t_n/e_n^2&-1/e_n\\
 -1/e_n&0
\end{pmatrix}.
\end{equation*}
The matrix $\mH_n\mC_n\mH_n^\top$ contains the two universal cross-products and the leading quadratic product generated by estimating the center.  Terms that touch the pervasive eigenspace are collected in a separate fixed-rank matrix.

\begin{theorem}\label{thm:finite_rank}
Under Assumption~1 and $H_0$, there exist symmetric random matrices $\mF_n$ and $\mE_n$ such that
\begin{equation}\label{eq:Rhat_decomp}
\wh\mR_n
=\mB_n+\mH_n\mC_n\mH_n^\top+\mF_n+\mE_n,
\end{equation}
where
\begin{equation}\label{eq:F_rank}
\rank(\mF_n)\le2r,
\qquad
\mPi_{B,p}\mF_n\mPi_{B,p}=\0.
\end{equation}
Let
\begin{equation*}
\mQ_{H,n}
=(\mB_n+\mH_n\mC_n\mH_n^\top+\rho\I_p)^{-1}.
\end{equation*}
Then, for every fixed $\rho\in[\rho_0,\rho_1]$,
\begin{align}
\norm{\mQ_{H,n}\mP_{F,p}}_{\op}
&=O_{\Pp}(p^{-1/2}),\notag\\
\norm{\mP_{F,p}^\top\mQ_{H,n}\mP_{F,p}}_{\op}
&=O_{\Pp}(p^{-1}).\label{eq:factor_resolvent_rates}
\end{align}
Moreover,
\begin{equation}\label{eq:T_factor_negligible}
\begin{aligned}
T_n(\rho)
&-\frac1{e_n^2}\bm Z_n^\top\\[-2pt]
&\quad\times
(\mB_n+\mH_n\mC_n\mH_n^\top+\rho\I_p)^{-1}
\bm Z_n
=O_{\Pp}(\log p).
\end{aligned}
\end{equation}
The full perturbation $\mH_n\mC_n\mH_n^\top+\mF_n$ has rank at most $2+2r$ and admits a $(2+2r)$-dimensional Woodbury representation.  The factor correction contributes $O_{\Pp}(1)$ and the pure-bulk Taylor remainder contributes $O_{\Pp}(\log p)$; both are $o_{\Pp}(n^{1/2})$.
\end{theorem}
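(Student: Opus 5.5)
I will expand $\wh\mR_n=n^{-1}\sum_i\wh{\bm Y}_i\wh{\bm Y}_i^\top$ by substituting the Taylor expansion \eqref{eq:Yhat_Taylor} together with the median expansion of Theorem~\ref{thm:median}. Write $\wh{\bm Y}_i=\bm Y_i-w_i\mP_i\bdelta_n+\bm\epsilon_i$, where $\bm\epsilon_i$ collects the second-order term $w_i^2\bm H_i(\bdelta_n)/(2\sqrt p)$ and $\bm R_i(\bdelta_n)$.

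\textbf{Step 1: the rank-two term $\mH_n\mC_n\mH_n^\top$.} Since $\mP_i\bdelta_n=\bdelta_n-\bm U_i(\bm U_i^\top\bdelta_n)$, the first-order cross products are
\[
-n^{-1}\sum_iw_i(\bm Y_i\bdelta_n^\top+\bdelta_n\bm Y_i^\top)=-n^{-1/2}(\bm W_n\bdelta_n^\top+\bdelta_n\bm W_n^\top)
\]
plus terms involving $\bm Y_i\bm U_i^\top\bdelta_n$. The latter equal $p^{-1/2}\bm Y_i\bm Y_i^\top\bdelta_n$, of order $\mB_n\bdelta_n/\sqrt p$ in norm. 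The quadratic part $n^{-1}\sum_iw_i^2\mP_i\bdelta_n\bdelta_n^\top\mP_i$ equals $t_n\bdelta_n\bdelta_n^\top$ plus similar $\bm U_i$-corrections. Inserting $\bdelta_n=e_n^{-1}n^{-1/2}\bm Z_n+n^{-1/2}\mG_{F,n}\bm Z_n+\bm r_{\theta,B,n}$, the pieces built only from $e_n^{-1}\bm Z_n$ give exactly $-e_n^{-1}n^{-1}(\bm W_n\bm Z_n^\top+\bm Z_n\bm W_n^\top)+t_ne_n^{-2}n^{-1}\bm Z_n\bm Z_n^\top=\mH_n\mC_n\mH_n^\top$. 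The pieces containing $\mG_{F,n}\bm Z_n$ have range in $\operatorname{span}(\mP_{F,p})$ up to $O_{\Pp}(p^{-1/2})$ bulk leakage by \eqref{eq:GF_FF_rate}--\eqref{eq:GF_BB_rate}. I will define $\mF_n$ as the factor-touching part, namely $\mPi_F(\cdot)+(\cdot)\mPi_F-\mPi_F(\cdot)\mPi_F$ of all such terms. This gives the rank bound $2r$ and $\mPi_B\mF_n\mPi_B=\0$ as in \eqref{eq:F_rank}. Everything else goes into $\mE_n$.

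\textbf{Step 2: factor resolvent rates and the statistic.} Because $\mH_n\mC_n\mH_n^\top$ has rank two and $\norm{\mH_n}_{\op}=O_{\Pp}(1)$, the matrix $\mB_n+\mH_n\mC_n\mH_n^\top$ still has $r$ eigenvalues of order $p$. Its remaining eigenvalues are $O_{\Pp}(1)$, with at most two possibly negative but bounded below by $-\rho_0/2$ with high probability. Establishing this lower bound needs care; I would use $\wh\mR_n\succeq0$ and the smallness of $\mE_n$, or Woodbury directly. With this, the Wedin argument of Lemma~\ref{lem:factor_suppression} gives \eqref{eq:factor_resolvent_rates}.

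For \eqref{eq:T_factor_negligible} I will write $T_n=n\bdelta_n^\top\wh\mQ_n\bdelta_n$. I then replace $\wh\mQ_n$ by $\mQ_{H,n}$ via the resolvent identity $\wh\mQ_n-\mQ_{H,n}=-\wh\mQ_n(\mF_n+\mE_n)\mQ_{H,n}$. For the $\mF_n$ part I use the Woodbury representation of the rank-$(2+2r)$ perturbation together with the factor suppression $\mQ_{H,n}\mP_{F,p}=O(p^{-1/2})$, which gives $O_{\Pp}(1)$. Finally I replace $\bdelta_n$ by $e_n^{-1}n^{-1/2}\bm Z_n$ exactly as in the proof of \eqref{eq:median_quadratic_replacement}, with $\mQ_{H,n}$ in place of $\mQ_n$. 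The Rademacher quadratic-form argument carries over because $\mQ_{H,n}$ depends on the signs only through $\bm Z_n,\bm W_n$. I would therefore condition via Woodbury on the two-dimensional span and bound the resulting rank-$O(1)$ corrections.

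\textbf{Main obstacle: $\mE_n$.} I must show $n\bdelta_n^\top\wh\mQ_n\mE_n\mQ_{H,n}\bdelta_n=O_{\Pp}(\log p)$, while $\norm{\mE_n}_{\op}$ is only about $\sqrt{\log p/p}$. Bulk pieces like $p^{-1/2}n^{-1}\sum_iw_i\bm Y_i\bm Y_i^\top\bdelta_n\bdelta_n^\top$ applied to $\bdelta_n$ give $n\cdot O(p^{-1/2})\cdot$ quadratic forms. To handle this I would expand these pieces into Walsh chaoses in the signs and use Lemma~\ref{lem:walsh_contraction} to show their quadratic forms against $\bm Z_n$ are $O_{\Pp}(1)$ rather than $O(\sqrt n)$. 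The $\log p$ factor comes from the uniform bounds in Lemma~\ref{lem:angular_denominator} on $\ell_{i,p}$ and on the second-order terms controlled by Lemma~\ref{lem:score_remainder}. I expect this step to be the hardest.
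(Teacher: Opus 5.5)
Your architecture matches the paper's. It has the exact decomposition and the rank-two Woodbury formula plus Wedin for the factor rates of $\mQ_{H,n}$; your ``Woodbury directly'' option is exactly Lemma~\ref{lem:rank_two_inverse}. It then uses resolvent identities and Walsh chaos for the bulk remainder. However, the split in Step~1 fails as written. You put into $\mF_n$ only the factor-touching parts of the pieces containing $\mG_{F,n}\bm Z_n$, and send everything else to $\mE_n$.

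The $\bm U_i$-corrections are not vectors of size $\norm{\mB_n\bdelta_n}/\sqrt p$. In $\wh\mR_n$ they form the full-rank weighted covariance $2(np)^{-1}\sum_i w_i(\bm Y_i^\top\bdelta_n)\bm Y_i\bm Y_i^\top$. Its factor--factor block is $2\sqrt p\,n^{-1}\sum_i w_i(\bm U_i^\top\bdelta_n)\mPi_{F,p}\bm U_i\bm U_i^\top\mPi_{F,p}$. That block and its factor--bulk block are only reduced from $O(\sqrt p)$ to order one by sign averaging; they are not $o_{\Pp}(1)$. Compare the paper's remark that $\mV_{F,n}$ cannot be sharpened to $O_{\Pp}(p^{-1/2})$.

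Under your rule these blocks land in $\mE_n$, together with the factor-touching parts of the Hessian and $\bm r_{\theta,B,n}$ terms. So $\mE_n$ is neither pure bulk nor small. Your own size estimate $\norm{\mB_n\bdelta_n}/\sqrt p$, which is of order $\sqrt p$ because of the spikes, already contradicts $\norm{\mE_n}_{\op}\approx\sqrt{\log p/p}$.

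The remedy is to project the whole residual. Set $\mL_n=\wh\mR_n-\mB_n-\mH_n\mC_n\mH_n^\top$, $\mF_n=\mPi_{F,p}\mL_n+\mL_n\mPi_{F,p}-\mPi_{F,p}\mL_n\mPi_{F,p}$ and $\mE_n=\mPi_{B,p}\mL_n\mPi_{B,p}$. The decomposition and rank claims are then automatic, but two substantive estimates remain that your plan does not supply:
\begin{itemize}
\item $\norm{\mA_{F,n}}_{\op}+\norm{\mV_{F,n}}_{\op}=O_{\Pp}(1)$. Naive bounds give $O(\sqrt p)$; the paper obtains this via a Walsh coefficient-energy computation with $\bdelta_n\approx\sum_j s_j\bm a_{j,n}$ (Lemma~\ref{lem:F_factorization}).
\item $\norm{\mE_n}_{\op}=O_{\Pp}(p^{-1/2})$. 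The paper obtains this from the explicit factorization of the bulk residual columns together with $\norm{\sum_i w_i^2\bm U_{B,i}\bm U_{B,i}^\top}_{\op}=O_{\Pp}(n/p)$ (Lemma~\ref{lem:E_bulk}).
\end{itemize}

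Second, operator-norm factor suppression does not give your claimed $O_{\Pp}(1)$ for the $\mF_n$ effect. Write $\mF_n=\mP_{F,p}\mA_{F,n}\mP_{F,p}^\top+\mP_{F,p}\mV_{F,n}^\top+\mV_{F,n}\mP_{F,p}^\top$, where $\mV_{F,n}$ is of order one and $\norm{\mQ_{H,n}\bm Z_n}$ is of order $\sqrt p$. Then $\norm{\mQ_{H,n}\mP_{F,p}}_{\op}=O_{\Pp}(p^{-1/2})$ only yields $\abs{\bm Z_n^\top\mQ_{H,n}\mV_{F,n}\mP_{F,p}^\top\mQ_{HF,n}\bm Z_n}=O_{\Pp}(\sqrt p)$. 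That is the same order as the $\sqrt n$ fluctuation scale, so it is not negligible.

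What you need is suppression along the random direction itself, $\norm{\mP_{F,p}^\top\mQ_{H,n}\bm Z_n}=O_{\Pp}(p^{-1/2})$. Conditionally on $\calF_n^0$, $\E_s\norm{\mP_{F,p}^\top\mQ_n\bm Z_n}^2=\tr(\mP_{F,p}^\top\mQ_n\mB_n\mQ_n\mP_{F,p})\le\tr(\mP_{F,p}^\top\mQ_n\mP_{F,p})=O_{\Pp}(p^{-1})$. This passes to $\mQ_{H,n}$ through the rank-two Woodbury formula, using $\norm{\mP_{F,p}^\top\mQ_n\mH_n}=O_{\Pp}(p^{-1})$. It passes to $\mQ_{HF,n}$ by a self-bounding resolvent inequality.

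The same vector bounds are what let you, in the $\mE_n$ step, strip $\wh\mQ_n$ and $\mQ_{H,n}$ down to the $\calF_n^0$-measurable $\mQ_n$ before expanding in Walsh chaos. Your chaos argument cannot be applied directly to $\bdelta_n^\top\wh\mQ_n\mE_n\mQ_{H,n}\bdelta_n$, because those resolvents depend non-polynomially on the signs.
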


\subsection{Two-dimensional leading Woodbury reduction}\label{subsec:woodbury}

Set
\begin{equation*}
x_n=\frac1n\bm Z_n^\top\mQ_n\bm Z_n,
\qquad
y_n=\frac1n\bm Z_n^\top\mQ_n\bm W_n,
\qquad
z_n=\frac1n\bm W_n^\top\mQ_n\bm W_n.
\end{equation*}
For $x,y,z\in\R$, $e>0$, and $t>0$, define
\begin{align*}
L(x,y,z,e,t)&=e^2+tx-2ey+y^2-xz,\\
f(x,y,z,e,t)&=\frac{x}{L(x,y,z,e,t)}.
\end{align*}

\begin{lemma}\label{lem:rank_two_inverse}
Under Assumption~1 and $H_0$,
\[
 L(x_n,y_n,z_n,e_n,t_n)=D_n+O_{\Pp}(n^{-1/2}).
\]
Consequently, with probability tending to one,
\[
 \mC_n^{-1}+\mH_n^\top\mQ_n\mH_n
\]
is invertible with bounded inverse, and
\[
 \mQ_{H,n}
 =(\mB_n+\mH_n\mC_n\mH_n^\top+\rho\I_p)^{-1}
\]
exists and satisfies $\|\mQ_{H,n}\|_{\op}=O_{\Pp}(1)$.
\end{lemma}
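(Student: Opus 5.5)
The plan is to identify the three quadratic forms $x_n,y_n,z_n$ with the companion functionals $\kappa_n,b_{1,n},b_{2,n}$, using the conditional Rademacher structure. Expanding both expressions,
\[
 D_n=e_n^2-2e_nb_{1,n}+b_{1,n}^2+\kappa_nt_n-\kappa_nb_{2,n},
\]
while $L=e_n^2+t_nx_n-2e_ny_n+y_n^2-x_nz_n$. It therefore suffices to show
\[
 x_n-\kappa_n,\quad y_n-b_{1,n},\quad z_n-b_{2,n}\ =\ O_{\Pp}(n^{-1/2}),
\]
with all six arguments bounded in probability. The difference $L-D_n$ is a polynomial in these quantities with bounded coefficients, because $e_n,t_n\le\bar w^2+\bar w$. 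Hence the three rates transfer to $L$.

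\textbf{Step 1 (Rademacher representation).} Write $\bm Y_i=s_i\wt{\bm Y}_i$ as in the proof of Theorem~\ref{thm:median}. Then $\mB_n$, $\mQ_n$ and $w_i$ are $\calF_n^0$-measurable, and $(\mA_n)_{ij}=s_is_j\wt A_{ij}$ with $\wt A_{ij}=n^{-1}\wt{\bm Y}_i^\top\mQ_n\wt{\bm Y}_j$. This gives
\[
 x_n=\frac1n\bm s^\top\wt\mA\bm s,
 \qquad
 y_n=\frac1n\bm s^\top\wt\mA\mD_w\bm s,
 \qquad
 z_n=\frac1n\bm s^\top\mD_w\wt\mA\mD_w\bm s,
\]
where $\mD_w=\diag(w_i)$. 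The conditional means are exactly $n^{-1}\tr\mA_n=\kappa_n$, $b_{1,n}$ and $b_{2,n}$.

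\textbf{Step 2 (variance bound).} By the Rademacher variance identity already used for \eqref{eq:G_quadratic_suppression}, each conditional variance is at most $2n^{-2}\bar w^4\|\wt\mA\|_{\F}^2$. Since $\|\mA_n\|_{\op}\le1$, we have $\|\wt\mA\|_{\F}^2\le n$, so the variance is at most $C/n$. Conditional Chebyshev then gives the three $O_{\Pp}(n^{-1/2})$ rates. The same bounds give $|x_n|,|y_n|,|z_n|\le\bar w^2$, which completes $L=D_n+O_{\Pp}(n^{-1/2})$.

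\textbf{Step 3 (capacitance matrix).} A direct computation gives
\[
 \mC_n^{-1}=\begin{pmatrix}0&-e_n\\-e_n&-t_n\end{pmatrix},
 \qquad
 \mC_n^{-1}+\mH_n^\top\mQ_n\mH_n=\begin{pmatrix}x_n&y_n-e_n\\y_n-e_n&z_n-t_n\end{pmatrix}.
\]
Its determinant is $-L(x_n,y_n,z_n,e_n,t_n)$. By \eqref{eq:nondegenerate_DE}, $D_n\ge C_1$ with probability tending to one, so $|L|\ge C_1/2$ on that event. The entries are bounded, so the $2\times2$ inverse is $O_{\Pp}(1)$. Also $\mC_n$ is invertible because $\det\mC_n=-e_n^{-2}$ and $e_n\ge m_-/2$ with probability tending to one by \eqref{eq:e_lower}.

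\textbf{Step 4 (Woodbury).} Woodbury's identity then shows that $\mB_n+\mH_n\mC_n\mH_n^\top+\rho\I_p$ is invertible, with
\[
 \mQ_{H,n}=\mQ_n-\mQ_n\mH_n(\mC_n^{-1}+\mH_n^\top\mQ_n\mH_n)^{-1}\mH_n^\top\mQ_n .
\]
To bound this, note $\|\mQ_n\|_{\op}\le\rho_0^{-1}$. For $\mH_n$, $\|\bm Z_n/\sqrt n\|=\|\overline{\bm Y}_n\|=O_{\Pp}(1)$ by \eqref{eq:Ybar_second_moment}. The analogous conditional second moment $\E_s\|\bm W_n\|^2/n=n^{-2}\sum_iw_i^2\|\bm Y_i\|^2\le\bar w^2p/n$ gives $\|\bm W_n/\sqrt n\|=O_{\Pp}(1)$. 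Hence $\|\mH_n\|_{\op}=O_{\Pp}(1)$ and $\|\mQ_{H,n}\|_{\op}=O_{\Pp}(1)$.

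\textbf{Main obstacle.} There is no hard step, provided \eqref{eq:nondegenerate_DE} is available. The only delicate point is that $\mA_n$ and the signs interact only through $s_is_j$, so the conditional-mean identification is exact rather than approximate.
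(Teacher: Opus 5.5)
Your proposal is correct and follows the paper's proof essentially step for step. Both arguments identify the conditional means of $x_n,y_n,z_n$ as $\kappa_n,b_{1,n},b_{2,n}$ via the sign representation, bound the conditional variances by $C/n$ using $\|\mA_n\|_{\op}\le1$, and note that the capacitance matrix has determinant $-L(x_n,y_n,z_n,e_n,t_n)$ with $D_n\ge C_1$ from Proposition~\ref{prop:weighted_DE}. Both then conclude with Woodbury, using $\|\mQ_n\|_{\op}\le\rho_0^{-1}$ and $\|\mH_n\|_{\op}=O_{\Pp}(1)$. One trivial slip: your constants $\bar w^4$ (variance) and $\bar w^2$ (bound on $|x_n|,|y_n|,|z_n|$) should be written with $\max(1,\bar w)$, since $\bar w$ need not exceed one.
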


\begin{proof}
On the event $e_n\ge m_-/2$,
\[
 \mC_n^{-1}=\begin{pmatrix}0&-e_n\\-e_n&-t_n\end{pmatrix}.
\]
Conditional on $\calF_n^0$, write
$\bm Y_i=s_i\widetilde{\bm Y}_i$ and let
$a_{ij}$ be the entries of the unsigned companion matrix.  Direct
expansion gives
\begin{align*}
 x_n-\kappa_n&=\frac2n\sum_{i<j}a_{ij}s_is_j,\\
 y_n-b_{1,n}&=\frac1n\sum_{i<j}a_{ij}(w_i+w_j)s_is_j,\\
 z_n-b_{2,n}&=\frac2n\sum_{i<j}a_{ij}w_iw_j s_is_j.
\end{align*}
Since $0\preceq\widetilde\mA_n\preceq\I_n$ and $w_i\le\bar w$,
\[
 \Var_s(x_n\mid\calF_n^0)
 +\Var_s(y_n\mid\calF_n^0)
 +\Var_s(z_n\mid\calF_n^0)
 \le \frac Cn\left\{\frac1n\sum_{i,j}a_{ij}^2\right\}
 \le\frac Cn.
\]
Hence
\[
 x_n=\kappa_n+O_{\Pp}(n^{-1/2}),\quad
 y_n=b_{1,n}+O_{\Pp}(n^{-1/2}),\quad
 z_n=b_{2,n}+O_{\Pp}(n^{-1/2}).
\]
The five arguments are bounded in probability, and therefore
\begin{align*}
 L(x_n,y_n,z_n,e_n,t_n)
 &=e_n^2+t_n\kappa_n-2e_nb_{1,n}+b_{1,n}^2-\kappa_nb_{2,n}
   +O_{\Pp}(n^{-1/2})\\
 &=D_n+O_{\Pp}(n^{-1/2}).
\end{align*}
Proposition~\ref{prop:weighted_DE} gives $D_n\ge C_1$ with probability
tending to one.  If
\[
 \mG_n=\mH_n^\top\mQ_n\mH_n
 =\begin{pmatrix}x_n&y_n\\y_n&z_n\end{pmatrix},
\]
then
\[
 \det(\mC_n^{-1}+\mG_n)
 =x_n(z_n-t_n)-(y_n-e_n)^2
 =-L(x_n,y_n,z_n,e_n,t_n).
\]
All entries of $\mC_n^{-1}+\mG_n$ are $O_{\Pp}(1)$, so its inverse has
operator norm $O_{\Pp}(1)$.  The Woodbury identity now gives
\[
 \mQ_{H,n}=\mQ_n-\mQ_n\mH_n
 (\mC_n^{-1}+\mG_n)^{-1}\mH_n^\top\mQ_n.
\]
Because $\|\mQ_n\|_{\op}\le\rho_0^{-1}$ and
$\|\mH_n\|_{\op}=O_{\Pp}(1)$, the displayed identity proves the final
bound without invoking Proposition~\ref{prop:scalar}.
\end{proof}

\begin{proposition}\label{prop:scalar}
Under Assumption~1 and $H_0$,
\begin{equation}\label{eq:T_scalar}
T_n(\rho)=n f(x_n,y_n,z_n,e_n,t_n)+O_{\Pp}(\log p).
\end{equation}
For every fixed $\rho\in[\rho_0,\rho_1]$, the denominator $L(x_n,y_n,z_n,e_n,t_n)$ is bounded away from zero with probability tending to one.
\end{proposition}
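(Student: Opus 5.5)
Starting from Theorem~\ref{thm:finite_rank}, equation \eqref{eq:T_factor_negligible} already gives
\[
 T_n(\rho)=\frac1{e_n^2}\bm Z_n^\top\mQ_{H,n}\bm Z_n+O_{\Pp}(\log p),
\]
so it remains to evaluate the leading quadratic form exactly by the rank-two Woodbury identity of Lemma~\ref{lem:rank_two_inverse}. On the event where $e_n\ge m_-/2$ and $\mC_n^{-1}+\mG_n$ is invertible (probability tending to one by that lemma), we have
\[
 \mQ_{H,n}=\mQ_n-\mQ_n\mH_n(\mC_n^{-1}+\mG_n)^{-1}\mH_n^\top\mQ_n,
 \qquad
 \mG_n=\begin{pmatrix}x_n&y_n\\y_n&z_n\end{pmatrix}.
\]
Since $\mH_n=n^{-1/2}(\bm Z_n,\bm W_n)$, we get $\bm Z_n^\top\mQ_n\mH_n=\sqrt n\,(x_n,y_n)$. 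Hence
\[
 \frac1n\bm Z_n^\top\mQ_{H,n}\bm Z_n
 =x_n-(x_n,y_n)\,(\mC_n^{-1}+\mG_n)^{-1}(x_n,y_n)^\top.
\]

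The key step is purely algebraic. With $\mC_n^{-1}+\mG_n=\begin{pmatrix}x_n&y_n-e_n\\y_n-e_n&z_n-t_n\end{pmatrix}$, its determinant equals $-L_n$ with $L_n=L(x_n,y_n,z_n,e_n,t_n)$, as computed in Lemma~\ref{lem:rank_two_inverse}. Its inverse is $-L_n^{-1}\begin{pmatrix}z_n-t_n&e_n-y_n\\e_n-y_n&x_n\end{pmatrix}$. Substituting and expanding, the quadratic form is
\[
 -L_n^{-1}\{x_n^2(z_n-t_n)+2x_ny_n(e_n-y_n)+x_ny_n^2\}
 =-L_n^{-1}x_n\{xz-tx+2ey-y^2\}_n .
\]
Therefore
\[
 x_n+\frac{x_n(xz-tx+2ey-y^2)_n}{L_n}
 =\frac{x_n\{L_n+(xz-tx+2ey-y^2)_n\}}{L_n}
 =\frac{x_ne_n^2}{L_n}.
\]
Multiplying by $n/e_n^2$ yields exactly $nf(x_n,y_n,z_n,e_n,t_n)$. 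I will double-check the sign conventions in $\mC_n^{-1}$, because this identity is the whole content of \eqref{eq:T_scalar} and a sign slip would leave a spurious term.

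The denominator claim follows from Lemma~\ref{lem:rank_two_inverse}, which gives $L_n=D_n+O_{\Pp}(n^{-1/2})$, together with \eqref{eq:nondegenerate_DE}, which gives $D_n\ge C_1$ with probability tending to one. Hence $L_n\ge C_1/2$ with probability tending to one, and $f$ is well defined there. The only real obstacle is inherited rather than new: the $O_{\Pp}(\log p)$ error comes entirely from \eqref{eq:T_factor_negligible}, that is, from the factor correction $\mF_n$ and the pure-bulk Taylor remainder $\mE_n$. Given Theorem~\ref{thm:finite_rank}, the present proof is just the exact Woodbury computation above on a high-probability event, and the complement of that event is absorbed into the $O_{\Pp}$ statement.
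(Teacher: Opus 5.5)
Your proof is correct and follows the paper's argument. The $O_{\Pp}(\log p)$ error is inherited from \eqref{eq:T_factor_negligible}, the leading term is evaluated exactly by the rank-two Woodbury identity, and the denominator bound comes from Lemma~\ref{lem:rank_two_inverse} combined with \eqref{eq:nondegenerate_DE}. The paper writes the Woodbury step as $\mH_n^\top\mQ_{H,n}\mH_n=\mG_n(\I_2+\mC_n\mG_n)^{-1}$ instead of expanding $\mQ_{H,n}$, but this is the same computation. Your sign conventions are right, since $\mC_n^{-1}=\begin{pmatrix}0&-e_n\\-e_n&-t_n\end{pmatrix}$, and the bracket does collapse to $x_ne_n^2/L_n$.
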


The representation has the same five scalar arguments as in the bounded-spectrum case.  Pervasive dependence enters their joint distribution through the heterogeneous diagonal structure of the companion resolvent.

Let
\begin{equation*}
\bdelta_{0,n}=e_n^{-1}\bm Z_n/\sqrt n.
\end{equation*}
Expanding \eqref{eq:Yhat_Taylor} and collecting the two terms linear in $\bdelta_{0,n}$ and the leading quadratic term gives
\begin{equation*}
-\frac{\bm W_n\bm Z_n^\top+\bm Z_n\bm W_n^\top}{ne_n}
+\frac{t_n\bm Z_n\bm Z_n^\top}{ne_n^2}
=\mH_n\mC_n\mH_n^\top.
\end{equation*}
Define the exact residual
\begin{equation}\label{eq:L_exact}
\mL_n=\wh\mR_n-\mB_n-\mH_n\mC_n\mH_n^\top.
\end{equation}
Its part touching the factor subspace and its pure bulk part are
\begin{align}
\mF_n
&=\mPi_{F,p}\mL_n+\mL_n\mPi_{F,p}
-\mPi_{F,p}\mL_n\mPi_{F,p},\notag\\
\mE_n
&=\mPi_{B,p}\mL_n\mPi_{B,p}.
\label{eq:E_projection}
\end{align}
Then \eqref{eq:Rhat_decomp} is exact.  Put
\[
 \bm d_{0,n}=e_n^{-1}\overline{\bm Y}_n,
 \qquad
 \bm r_{i,n}=\wh{\bm Y}_i-\bm Y_i+w_i\bm d_{0,n}.
\]
The definition of $\mL_n$ is equivalently
\begin{equation}\label{eq:L_residual_exact}
 \mL_n=\frac1n\sum_{i=1}^n
 \{(\bm Y_i-w_i\bm d_{0,n})\bm r_{i,n}^\top
 +\bm r_{i,n}(\bm Y_i-w_i\bm d_{0,n})^\top
 +\bm r_{i,n}\bm r_{i,n}^\top\}.
\end{equation}

\begin{lemma}\label{lem:F_factorization}
Let
\[
 \mA_{F,n}=\mP_{F,p}^\top\mL_n\mP_{F,p},
 \qquad
 \mV_{F,n}=\mPi_{B,p}\mL_n\mP_{F,p}.
\]
Then
\begin{equation*}
 \mF_n
 =(\mP_{F,p},\mV_{F,n})
 \begin{pmatrix}
 \mA_{F,n}&\I_r\\
 \I_r&\0
 \end{pmatrix}
 (\mP_{F,p},\mV_{F,n})^\top,
\end{equation*}
so that $\rank(\mF_n)\le2r$.  Moreover,
\begin{equation}\label{eq:VF_rate}
 \norm{\mA_{F,n}}_{\op}=O_{\Pp}(1),
 \qquad
 \norm{\mV_{F,n}}_{\op}=O_{\Pp}(1).
\end{equation}
The second order in \eqref{eq:VF_rate} cannot in general be sharpened to
$O_{\Pp}(p^{-1/2})$ when the factor eigenvalues are of order $p$.
\end{lemma}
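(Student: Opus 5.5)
The factorization is pure linear algebra, so I would do it first. Write $\mPi_F=\mPi_{F,p}$, $\mPi_B=\mPi_{B,p}=\I_p-\mPi_F$, and use $\mPi_F=\mP_{F,p}\mP_{F,p}^\top$. Since $\mL_n$ is symmetric, the definition of $\mF_n$ gives
\[
\mF_n=\mPi_F\mL_n\mPi_F+\mPi_B\mL_n\mPi_F+\mPi_F\mL_n\mPi_B .
\]
Substituting $\mPi_F=\mP_{F,p}\mP_{F,p}^\top$, the first term is $\mP_{F,p}\mA_{F,n}\mP_{F,p}^\top$. The second is $\mV_{F,n}\mP_{F,p}^\top$, and the third is its transpose $\mP_{F,p}\mV_{F,n}^\top$. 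Expanding the displayed block product reproduces exactly these three terms. The matrix $(\mP_{F,p},\mV_{F,n})$ has $2r$ columns, so $\rank(\mF_n)\le2r$. This also gives $\mPi_B\mF_n\mPi_B=\0$, because $\mPi_B\mP_{F,p}=\0$.

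For \eqref{eq:VF_rate}, it suffices to bound $\norm{\mL_n\mP_{F,p}}_{\op}=O_{\Pp}(1)$, since both $\mA_{F,n}$ and $\mV_{F,n}$ are compressions of this matrix. I would use the exact representation \eqref{eq:L_residual_exact} and first bound the residuals $\bm r_{i,n}$. By \eqref{eq:Yhat_Taylor}, $\bm r_{i,n}=-w_i\mP_i\bdelta_n+w_i\bm d_{0,n}+(w_i^2/(2\sqrt p))\bm H_i(\bdelta_n)+\bm R_i(\bdelta_n)$. Theorem~\ref{thm:median} gives $\bdelta_n-\bm d_{0,n}=n^{-1/2}\mG_{F,n}\bm Z_n+\bm r_{\theta,B,n}=O_{\Pp}(1)$ in norm, with $\norm{\bdelta_n}=O_{\Pp}(1)$. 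Also $w_i\bm U_i\bm U_i^\top\bdelta_n$ has norm at most $\bar w\norm{\bdelta_n}$. Hence $\max_i\norm{\bm r_{i,n}}=O_{\Pp}(1)$. Together with \eqref{eq:R_i_bound}, which gives $\norm{\bm R_i}\le Cp^{-1}$, and $\norm{\bm H_i}\le6\norm{\bdelta_n}^2$, this yields the uniform bound. Write $\mM_n=n^{-1/2}(\bm r_{1,n},\ldots,\bm r_{n,n})$, so that $\norm{\mM_n}_{\op}\le\norm{\mM_n}_{\F}=O_{\Pp}(1)$.

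Each term of \eqref{eq:L_residual_exact} has the form $n^{-1}\mathbf S\mathbf T^\top$ with $\mathbf S,\mathbf T\in\{\mY-\bm d_{0,n}\bm w^\top,\ (\bm r_{i,n})_i\}$. Its operator norm is at most the product of the operator norms of the $n^{-1/2}$-scaled matrices. The factor $n^{-1/2}\mY$ is problematic: it has operator norm of order $\sqrt p$ because of the spikes, so a naive bound gives only $O_{\Pp}(\sqrt p)$. This is the main obstacle. The terms $\bm r\bm r^\top$ and the $\bm d_{0,n}$ parts are fine, since $\norm{n^{-1/2}\bm d_{0,n}\bm w^\top}_{\op}\le\bar w\norm{\bm d_{0,n}}$. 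For the cross term $n^{-1}\mY\mM_n^\top\sqrt n$, I would right-multiply by $\mP_{F,p}$, which gives $n^{-1}\sum_i\bm Y_i(\mP_{F,p}^\top\bm r_{i,n})^\top$, together with the transposed term $n^{-1}\sum_i\bm r_{i,n}\bm Y_i^\top\mP_{F,p}$.

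The second of these is harmless when split by blocks. The factor coordinates of $\bm Y_i$ are $O(\sqrt p)$, but they enter as $n^{-1}\sum_i\bm r_{i,n}(\mP_{F,p}^\top\bm Y_i)^\top$. Here I would decompose $\bm r_{i,n}$ into its dominant piece $-w_i\mP_i(\bdelta_n-\bm d_{0,n})+w_i\bm U_i\bm U_i^\top\bm d_{0,n}$ plus pieces of order $p^{-1/2}$. The piece $w_i\bm U_i(\bm U_i^\top\bdelta_n)$ paired with $\sqrt p\,\bm U_i$ yields $n^{-1}\sum_iw_i\bm Y_i\bm Y_i^\top\mP_{F,p}\,p^{-1/2}$-type terms. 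These are $O_{\Pp}(1)$ because $\bm U_i^\top\bdelta_n$ is bounded, and they are exactly the non-sharpenable $O(1)$ part. The constant-vector piece $(\bdelta_n-\bm d_{0,n})\,n^{-1}\sum_iw_i\bm Y_i^\top\mP_{F,p}$ is $O_{\Pp}(1)$ because $n^{-1}\sum_iw_i\mP_{F,p}^\top\bm Y_i=n^{-1/2}\mP_{F,p}^\top\bm W_n$. That quantity is a Rademacher sum with variance $O(p/n)$, hence $O_{\Pp}(1)$. The $p^{-1/2}$-order remainders pick up at most a factor $\sqrt p$ against bounded coefficients, and I would check that they remain $O_{\Pp}(1)$ using the Walsh bounds of Lemma~\ref{lem:walsh_contraction}, as in Lemma~\ref{lem:score_remainder}. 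Finally, I would justify the closing remark by noting that the piece $n^{-1}\sum_iw_i\bm Y_{B,i}(\bm U_i^\top\bdelta_n)\bm Y_{F,i}^\top/\sqrt p$ need not vanish when $\mP_{F,p}^\top\bdelta_n\neq\0$. Its size is of order one, which explains why the rate cannot be improved to $O_{\Pp}(p^{-1/2})$ in general.
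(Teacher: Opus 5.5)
\textbf{There is a genuine gap in the factor--factor block; the rest of your argument works.}

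Your factorization and the reduction to bounding $\norm{\mL_n\mP_{F,p}}_{\op}$ are correct. Most of your block bounds also go through:
\begin{itemize}
\item the $\bm r\bm r^\top$ term;
\item the $\bm d_{0,n}$ terms;
\item the bulk rows of the cross terms;
\item the constant-vector piece, via $n^{-1/2}\mP_{F,p}^\top\bm W_n=O_{\Pp}(1)$;
\item the Hessian and $\bm R_i$ pieces, which need only the triangle inequality.
\end{itemize}

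The gap is the piece generated by $w_i\bm U_i(\bm U_i^\top\bdelta_n)$. Your justification is boundedness of $\bm U_i^\top\bdelta_n$. That does handle the bulk rows of this piece, via matrix Cauchy--Schwarz and \eqref{eq:bulk_sample_cov_bound}. It fails for the factor--factor block
\[
 \sqrt p\,\frac1n\sum_{i=1}^n w_i(\bm U_i^\top\bdelta_n)\,\bm u_i\bm u_i^\top,
 \qquad \bm u_i=\mP_{F,p}^\top\bm U_i .
\]
Since $\norm{\bm u_i}$ is of order one, using only $|\bm U_i^\top\bdelta_n|\le\norm{\bdelta_n}$ gives $O_{\Pp}(\sqrt p)$. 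Equivalently, $\norm{n^{-1}\sum_i\mP_{F,p}^\top\bm Y_i\bm Y_i^\top\mP_{F,p}}_{\op}\asymp p$, so your ``$p^{-1/2}$-type term with bounded weights'' is of order $\sqrt p$, not one.

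Nor is $\bm U_i^\top\bdelta_n$ small. $\mP_{F,p}^\top\bdelta_n$ is genuinely of order one, because $\mP_{F,p}^\top\overline{\bm Y}_n$ has second moment of order $p/n$.

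This block feeds directly into $\mA_{F,n}$. Transposed, it is also the factor rows of your first cross term. So as written, $\norm{\mA_{F,n}}_{\op}=O_{\Pp}(1)$ is not established; a cancellation is required. The paper obtains it by replacing $\bdelta_n$ with the sign-linear $\bm d_{L,n}$ and bounding the Walsh coefficient energy of this term by $Cp\norm{\mS_{a,n}}_{\op}/n$.

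\textbf{How to repair your route without chaos expansions.} Split $\bm U_i^\top\bdelta_n=\bm u_i^\top\mP_{F,p}^\top\bdelta_n+\bm U_i^\top\mPi_{B,p}\bdelta_n$.
\begin{itemize}
\item \emph{Factor part.} The fixed-size array $n^{-1}\sum_iw_i\,\bm u_i\otimes\bm u_i\otimes\bm u_i$ has bounded i.i.d.\ summands with mean zero: they are odd under $\bm G_i\mapsto-\bm G_i$, while $w_i$ is even. Hence the array is $O_{\Pp}(n^{-1/2})$. Contract it deterministically with $\mP_{F,p}^\top\bdelta_n=O_{\Pp}(1)$ and multiply by $\sqrt p$; this gives $O_{\Pp}(1)$, whatever the dependence between the two factors.
\item \emph{Bulk part.} Cauchy--Schwarz over $i$ bounds the block by
\[
 \bar w\sqrt p\,\Bigl\{\bdelta_n^\top\Bigl(n^{-1}\sum_i\mPi_{B,p}\bm U_i\bm U_i^\top\mPi_{B,p}\Bigr)\bdelta_n\Bigr\}^{1/2}=O_{\Pp}(1),
\]
by \eqref{eq:bulk_sample_cov_bound}.
\end{itemize}

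With this patch, your argument is a genuinely different and more elementary route than the paper's. It works with the exact residual $\bm r_{i,n}$ and deterministic norm inequalities, avoiding both the replacement $\bm r_{i,n}\to\bm r_{i,n}^{\circ}$ and the Walsh-energy bookkeeping. The price is one explicit symmetry cancellation in the factor--factor block.
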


\begin{proof}
The displayed factorization follows from
\[
 \mPi_{F,p}\mL_n\mPi_{F,p}
 =\mP_{F,p}\mA_{F,n}\mP_{F,p}^\top,
 \qquad
 \mPi_{B,p}\mL_n\mPi_{F,p}
 =\mV_{F,n}\mP_{F,p}^\top.
\]
Put
\[
 \bm d_{L,n}=\mJ_n^{-1}\overline{\bm Y}_n,
 \qquad
 \bm d_{0,n}=e_n^{-1}\overline{\bm Y}_n,
\]
and define the polynomial Taylor residual
\begin{equation}\label{eq:r_circ_factor}
 \bm r_{i,n}^{\circ}
 =-w_i(\bm d_{L,n}-\bm d_{0,n})
 +w_i\bm U_i(\bm U_i^\top\bm d_{L,n})
 +\frac{w_i^2}{2\sqrt p}\bm H_i(\bm d_{L,n}).
\end{equation}
Let $\mL_n^{\circ}$ be obtained from \eqref{eq:L_residual_exact} by
replacing every $\bm r_{i,n}$ by $\bm r_{i,n}^{\circ}$, and set
\[
 \mA_{F,n}^{\circ}=\mP_{F,p}^\top\mL_n^{\circ}\mP_{F,p},
 \qquad
 \mV_{F,n}^{\circ}=\mPi_{B,p}\mL_n^{\circ}\mP_{F,p}.
\]
The score equation and
\eqref{eq:second_score_rate_new}--\eqref{eq:third_score_rate_new} give
\[
 \norm{\bdelta_n-\bm d_{L,n}}=O_{\Pp}(p^{-1}).
\]
By \eqref{eq:H_polarization}, \eqref{eq:R_i_bound}, and
$\norm{\bdelta_n}+\norm{\bm d_{L,n}}=O_{\Pp}(1)$,
\begin{align}
 \bm r_{i,n}-\bm r_{i,n}^{\circ}
 ={}&-w_i(\bdelta_n-\bm d_{L,n})
 +w_i\bm U_i\bm U_i^\top(\bdelta_n-\bm d_{L,n})\notag\\
 &+\frac{w_i^2}{2\sqrt p}
 \{\bm H_i(\bdelta_n)-\bm H_i(\bm d_{L,n})\}
 +\bm R_i(\bdelta_n),\notag\\
 \max_{i\le n}\norm{\bm r_{i,n}-\bm r_{i,n}^{\circ}}
 &=O_{\Pp}(p^{-1}),
 \qquad
 \frac1n\sum_{i=1}^n
 \norm{\bm r_{i,n}-\bm r_{i,n}^{\circ}}^2
 =O_{\Pp}(p^{-2}).
 \label{eq:r_exact_circ_difference}
\end{align}
Write $\mX_{0,n}=(\bm Y_i-w_i\bm d_{0,n})_{i=1}^n$ and
$\mR_{\eta,n}=(\bm r_{i,n}-\bm r_{i,n}^{\circ})_{i=1}^n$.  Since
\begin{align*}
 \norm{n^{-1/2}\mX_{0,n}}_{\op}
 &=O_{\Pp}(\sqrt p),\\
 \norm{n^{-1/2}(\bm r_{i,n}^{\circ})_{i=1}^n}_{\op}
 &=O_{\Pp}(1),\\
 \norm{n^{-1/2}\mR_{\eta,n}}_{\op}
 &\le\left\{\frac1n\sum_i\norm{\bm\eta_{i,n}}^2\right\}^{1/2}
 =O_{\Pp}(p^{-1}).
\end{align*}
where $\bm\eta_{i,n}=\bm r_{i,n}-\bm r_{i,n}^{\circ}$,
\begin{align}
 \norm{\mL_n-\mL_n^{\circ}}_{\op}
 &\le2\norm{n^{-1/2}\mX_{0,n}}_{\op}
       \norm{n^{-1/2}\mR_{\eta,n}}_{\op}\notag\\
 &\quad+2\norm{n^{-1/2}(\bm r_{i,n}^{\circ})_{i=1}^n}_{\op}
       \norm{n^{-1/2}\mR_{\eta,n}}_{\op}
       +\norm{n^{-1/2}\mR_{\eta,n}}_{\op}^2
 =O_{\Pp}(p^{-1/2}).
 \label{eq:L_exact_circ_difference}
\end{align}
Consequently, \eqref{eq:L_exact_circ_difference} gives
\begin{equation}\label{eq:AF_VF_exact_circ}
 \norm{\mA_{F,n}-\mA_{F,n}^{\circ}}_{\op}
 +\norm{\mV_{F,n}-\mV_{F,n}^{\circ}}_{\op}
 =O_{\Pp}(p^{-1/2}).
\end{equation}

Conditional on $\calF_n^0$, write
\[
 \bm d_{L,n}=\sum_{j=1}^ns_j\bm a_{j,n},
 \qquad
 \bm d_{0,n}=\sum_{j=1}^ns_j\bm b_{j,n},
 \qquad
 \bm b_{j,n}=\frac{\widetilde{\bm Y}_j}{ne_n}.
\]
On the event in \eqref{eq:J_min} and \eqref{eq:e_lower},
\begin{align}
 \max_j\{\norm{\bm a_{j,n}}+\norm{\bm b_{j,n}}\}
 &=O(p^{-1/2}),\notag\\
 \mS_{a,n}:=\sum_j\bm a_{j,n}\bm a_{j,n}^\top
 &=\frac1n\mJ_n^{-1}\mB_n\mJ_n^{-1},\notag\\
 \mS_{b,n}:=\sum_j\bm b_{j,n}\bm b_{j,n}^\top
 &=\frac1{ne_n^2}\mB_n,\notag\\
 \norm{\mS_{a,n}}_{\op}+\norm{\mS_{b,n}}_{\op}
 +\tr(\mS_{a,n})+\tr(\mS_{b,n})
 &=O_{\Pp}(1).
 \label{eq:ab_factor_coefficients}
\end{align}
Let $\widetilde{\mathcal H}_i$ be the bilinear map in
\eqref{eq:H_polarization} with $\bm U_i$ replaced by
$\widetilde{\bm U}_i$.  Equation \eqref{eq:r_circ_factor} becomes
\begin{align}
 \bm r_{i,n}^{\circ}
 ={}&\sum_{j=1}^ns_j\bm c_{ij,n}
 +\frac{w_i^2}{2\sqrt p}s_i
 \widetilde{\mathcal H}_i(\bm d_{L,n},\bm d_{L,n}),\notag\\
 \bm c_{ij,n}
 ={}&-w_i(\bm a_{j,n}-\bm b_{j,n})
 +w_i\widetilde{\bm U}_i\widetilde{\bm U}_i^\top
 \bm a_{j,n}.
 \label{eq:r_circ_walsh}
\end{align}
Thus $\mA_{F,n}^{\circ}$ and $\mV_{F,n}^{\circ}$ are exact Walsh
polynomials of degree at most six.  If
\[
 \mA_{F,n}^{\circ}=\sum_{|\mathcal I|\le6}s_{\mathcal I}
 \mC_{\mathcal I,n}^{A},
 \qquad
 \mV_{F,n}^{\circ}=\sum_{|\mathcal I|\le6}s_{\mathcal I}
 \mC_{\mathcal I,n}^{V},
\]
then Walsh orthogonality gives
\begin{equation}\label{eq:AF_VF_walsh_identity}
 \E_s\{\norm{\mA_{F,n}^{\circ}}_{\F}^2
       +\norm{\mV_{F,n}^{\circ}}_{\F}^2\mid\calF_n^0\}
 =\sum_{\mathcal I}
 \{\norm{\mC_{\mathcal I,n}^{A}}_{\F}^2
  +\norm{\mC_{\mathcal I,n}^{V}}_{\F}^2\}.
\end{equation}

The largest coefficient classes are generated by the first two terms
of $\bm c_{ij,n}$.  For the factor--factor block,
\begin{align*}
 &\sum_{i,j}\left\|
 \frac1n\mP_{F,p}^\top\widetilde{\bm Y}_i
 (\bm a_{j,n}-\bm b_{j,n})^\top\mP_{F,p}
 \right\|_{\F}^2\\
 &\quad=
 \left\{\frac1n\sum_i\norm{\mP_{F,p}^\top\bm Y_i}^2\right\}
 \left\{\frac1n\sum_j
 \norm{\mP_{F,p}^\top(\bm a_{j,n}-\bm b_{j,n})}^2\right\}
 =O_{\Pp}(p)O_{\Pp}(n^{-1})=O_{\Pp}(1),\\
 &\frac{Cp}{n^2}\sum_i
 \norm{\mP_{F,p}^\top\bm U_i}^4
 \bm U_i^\top\mS_{a,n}\bm U_i
 \le\frac{Cp}{n}\norm{\mS_{a,n}}_{\op}
 =O_{\Pp}(1).
\end{align*}
For the two bulk--factor orientations, the corresponding sums are
bounded by
\begin{align*}
 &\left\{\frac1n\sum_i\norm{\mPi_{B,p}\bm Y_i}^2\right\}
 \left\{\frac1n\sum_j
 \norm{\mP_{F,p}^\top(\bm a_{j,n}-\bm b_{j,n})}^2\right\}
 =O_{\Pp}(1),\\
 &\frac{Cp}{n^2}\sum_i
 \norm{\bm U_{B,i}}^2\norm{\bm U_{F,i}}^2
 \bm U_i^\top\mS_{a,n}\bm U_i
 \le\frac{Cp}{n}\norm{\mS_{a,n}}_{\op}
 =O_{\Pp}(1),\\
 &\frac{Cp}{n^2}\sum_i\norm{\bm U_{F,i}}^2
 \left\{\bm U_i^\top\mS_{a,n}\bm U_i
       +\bm U_i^\top\mS_{b,n}\bm U_i\right\}
 \le C.
\end{align*}
The Hessian coefficient arrays satisfy, by
$\norm{\widetilde{\mathcal H}_i(\bm a,\bm b)}
 \le6\norm{\bm a}\norm{\bm b}$,
\begin{align*}
 \sum_{j,k}\left\|
 \frac1n\sum_iw_i^2
 \mP_{F,p}^\top\widetilde{\bm U}_i
 \{\mP_{F,p}^\top
 \widetilde{\mathcal H}_i(\bm a_{j,n},\bm a_{k,n})\}^\top
 \right\|_{\F}^2
 &\le C\left(\sum_j\norm{\bm a_{j,n}}^2\right)^2
 =O_{\Pp}(1),\\
 \sum_{j,k}\left\|
 \frac1n\sum_iw_i^2
 \mPi_{B,p}\widetilde{\bm U}_i
 \{\mP_{F,p}^\top
 \widetilde{\mathcal H}_i(\bm a_{j,n},\bm a_{k,n})\}^\top
 \right\|_{\F}^2
 &\le C.
\end{align*}
The three summands in \eqref{eq:L_residual_exact} are products of
$\mX_{0,n}$, which is a degree-one Walsh polynomial, and
$\bm r_{i,n}^{\circ}$, whose degree-one and degree-three coefficient
families are displayed in \eqref{eq:r_circ_walsh}.  The contractions in
\eqref{eq:ab_factor_coefficients} and the preceding displays bound the
coefficient energy of every such base family, both after the
factor--factor projection and after either bulk--factor projection.
Replacing an $\bm a_{j,n}$ family by a $\bm b_{j,n}$ family obeys the
same bounds.  Lemma~\ref{lem:walsh_contraction}, applied conditionally on
$\calF_n^0$, then controls the coefficient energy of the
$\mX_{0,n}(\bm r_{i,n}^{\circ})^\top$ terms and, by
\eqref{eq:walsh_product_contraction}, that of
$\bm r_{i,n}^{\circ}(\bm r_{i,n}^{\circ})^\top$.  Since the number of
linear and Hessian contraction patterns is fixed, these bounds imply
\begin{equation}\label{eq:AF_VF_coefficient_energy}
 \sum_{\mathcal I}
 \{\norm{\mC_{\mathcal I,n}^{A}}_{\F}^2
  +\norm{\mC_{\mathcal I,n}^{V}}_{\F}^2\}
 =O_{\Pp}(1).
\end{equation}
Equations \eqref{eq:AF_VF_walsh_identity}--\eqref{eq:AF_VF_coefficient_energy},
conditional Markov's inequality, and fixed $r$ yield
\[
 \norm{\mA_{F,n}^{\circ}}_{\op}
 +\norm{\mV_{F,n}^{\circ}}_{\op}=O_{\Pp}(1).
\]
The conclusion follows from \eqref{eq:AF_VF_exact_circ}.
\end{proof}

\begin{lemma}\label{lem:E_bulk}
Let
\[
 \mS_{HF,n}=\mB_n+\mH_n\mC_n\mH_n^\top+\mF_n+\rho\I_p.
\]
Under Assumption~1, $\mS_{HF,n}$ is invertible with
probability tending to one.  On this event write
$\mQ_{HF,n}=\mS_{HF,n}^{-1}$.  For every fixed
$\rho\in[\rho_0,\rho_1]$,
\begin{align}
 \norm{\mE_n}_{\op}
 &=O_{\Pp}(p^{-1/2}),
 \label{eq:E_op_rate_new}\\
 \max_{\bm a,\bm b\in\{\bm Z_n,\bm W_n,\mG_{F,n}\bm Z_n\}}
 \abs{\bm a^\top\mQ_{HF,n}\mE_n
 (\wh\mR_n+\rho\I_p)^{-1}\bm b}
 &=O_{\Pp}(\log p),
 \label{eq:E_bilinear_rate_new}\\
 \max_{\bm v\in\{\bm Z_n,\mG_{F,n}\bm Z_n\}}
 \abs{\bm d^\top\mQ_{HF,n}\mE_n
 (\wh\mR_n+\rho\I_p)^{-1}\bm v}
 &=O_{\Pp}\{g_p(\bm d)\},
 \label{eq:E_mixed_rate_new}
\end{align}
for every deterministic $\bm d$, where
$g_p(\bm d)^2=p^{-1}\|\mPi_{F,p}\bm d\|^2+
\|\mPi_{B,p}\bm d\|^2$.  In particular, the second quantity is
$o_{\Pp}(p^{1/2})$.
\end{lemma}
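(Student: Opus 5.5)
The plan is to treat the three assertions in order: first the operator-norm bound \eqref{eq:E_op_rate_new}, then invertibility of $\mS_{HF,n}$ as a consequence of it, and finally the two bilinear bounds, which are the substantive part.

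\emph{Operator norm of $\mE_n$ and invertibility.} By \eqref{eq:L_exact_circ_difference}, $\|\mL_n-\mL_n^{\circ}\|_{\op}=O_{\Pp}(p^{-1/2})$. Since $\mE_n=\mPi_{B,p}\mL_n\mPi_{B,p}$, it suffices to bound the bulk--bulk block of $\mL_n^{\circ}$ in \eqref{eq:L_residual_exact}, with $\bm r_{i,n}$ replaced by $\bm r_{i,n}^{\circ}$ from \eqref{eq:r_circ_factor}. I would treat the three pieces of $\bm r_{i,n}^{\circ}$ separately.
\begin{enumerate}[label=(\alph*),leftmargin=1.8em]
\item The translation piece $-w_i(\bm d_{L,n}-\bm d_{0,n})$. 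By \eqref{eq:J_inverse_finite_rank}, $\bm d_{L,n}-\bm d_{0,n}=\mG_{F,n}\overline{\bm Y}_n+O_{\Pp}(p^{-1})$. Its bulk component is $\mPi_{B,p}\mG_{F,n}\mPi_{F,p}\overline{\bm Y}_n+\mPi_{B,p}\mG_{F,n}\mPi_{B,p}\overline{\bm Y}_n$, which is $O_{\Pp}(p^{-1/2})$ by \eqref{eq:GF_FF_rate}--\eqref{eq:GF_BB_rate} and $\|\mPi_{F,p}\overline{\bm Y}_n\|=O_{\Pp}(1)$. Multiplying by $n^{-1/2}\mPi_{B,p}\mX_{0,n}$, whose operator norm is $O_{\Pp}(1)$ by the separable bulk bound in Proposition~\ref{prop:weighted_DE}, gives $O_{\Pp}(p^{-1/2})$.
\item The projection piece $w_i\bm U_i(\bm U_i^\top\bm d_{L,n})$. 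Using $\bm d_{L,n}=\sum_js_j\bm a_{j,n}$, split it into the diagonal $j=i$ part, of size $w_i\bm U_i\cdot O(\sqrt p/n)$, and an off-diagonal degree-two Walsh chaos. The bulk block of the diagonal part is $O(n^{-1})\cdot\|\mB_n^{B}\|_{\op}$. The chaos part is bounded through \eqref{eq:walsh_parseval} and \eqref{eq:bulk_sample_cov_bound}, which give coefficient energy $O_{\Pp}(p^{-1})$.
\item The Hessian piece. This is handled exactly as in \eqref{eq:cubic_score_chaos}, with the factor $p^{-1/2}$ giving $O_{\Pp}(p^{-1/2})$.
\end{enumerate}
Together these prove \eqref{eq:E_op_rate_new}. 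Invertibility is then immediate: $\mS_{HF,n}=(\wh\mR_n+\rho\I_p)-\mE_n$ with $\wh\mR_n\succeq\0$, so $\lambda_{\min}(\mS_{HF,n})\ge\rho_0-\|\mE_n\|_{\op}\ge\rho_0/2$ with probability tending to one, and $\|\mQ_{HF,n}\|_{\op}\le2/\rho_0$.

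\emph{Bilinear bounds.} The naive estimate $\|\bm a\|\|\mE_n\|_{\op}\|\bm b\|=O_{\Pp}(p\cdot p^{-1/2})$ is too crude, so cancellation must be used. I would proceed in four steps.
\begin{enumerate}[label=(\roman*),leftmargin=1.8em]
\item Replace both resolvents by $\calF_n^0$-measurable ones. Applying Woodbury twice writes $\mQ_{HF,n}$ and $(\wh\mR_n+\rho\I_p)^{-1}$ as $\mQ_n$ plus finite-rank corrections. The first correction comes from Lemma~\ref{lem:rank_two_inverse}, with bounded $2\times2$ capacitance. The second is the $2r$-dimensional factor correction from Lemma~\ref{lem:F_factorization}; its capacitance is controlled by \eqref{eq:factor_resolvent_rates} and \eqref{eq:VF_rate}. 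The $(\wh\mR_n+\rho\I_p)^{-1}$ step is a Neumann series in $\mE_n$, which is legitimate by \eqref{eq:E_op_rate_new}.
\item Reduce to scalars. The bilinear forms then become finite sums of products of scalars $\bm x^\top\mQ_n\mE_n\mQ_n\bm y$, with $\bm x,\bm y$ ranging over $\bm Z_n$, $\bm W_n$, $\mG_{F,n}\bm Z_n$ and $\mP_{F,p}$-columns, multiplied by $O_{\Pp}(1)$ capacitance entries.
\item Expand each scalar as a Rademacher chaos. Each scalar is a Walsh polynomial of bounded degree in $\bm s$ conditional on $\calF_n^0$, since $\mE_n^{\circ}$ is a polynomial in the signs through $\bm d_{L,n}$ and $\bm d_{0,n}$. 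I would compute its conditional second moment via Lemma~\ref{lem:walsh_contraction}. The degree-zero coefficients come from diagonal contractions such as $n^{-1}\sum_iw_i^k(\mQ_n\wt{\bm Y}_i)^\top\mE$-type terms; these are expected to be $O(1)$ because of the trace identity $\tr(\wh\mR_n)=p$ and the resolvent bound $\mQ_n\preceq\rho^{-1}\I$. The higher chaos coefficients should have energy $O(1)$ by the same contraction counts as in the proof of Lemma~\ref{lem:F_factorization}. The extra $\log p$ enters through the max-type controls $\max_i|u_{i,p}|+|v_{i,p}|$ of \eqref{eq:uv_max} and \eqref{eq:w_proxy_rate}, used when $w_i$ is replaced by $\wt w_i$ inside the diagonal contractions.
\item Handle the mixed form \eqref{eq:E_mixed_rate_new}. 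I would rescale by $\mD_{s,p}$ as in \eqref{eq:DQD_bound}: the deterministic $\bm d$ contributes $\|\mD_{s,p}^{-1}\bm d\|=g_p(\bm d)$, while $\mD_{s,p}\mQ_n\mD_{s,p}$ and $\mD_{s,p}\mE_n\mD_{s,p}$-type products remain $O_{\Pp}(1)$ after the same chaos computation. This yields $O_{\Pp}\{g_p(\bm d)\}$.
\end{enumerate}

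\emph{Main obstacle.} I expect step (iii) to be the hard part: showing that the degree-zero (diagonal) contractions of $\bm Z_n^\top\mQ_n\mE_n\mQ_n\bm Z_n$ are $O(\log p)$ rather than $O(\sqrt p)$. My first attempt would be to expand $\mQ_n\bm Z_n$ with leave-one-out resolvents, $\mQ_n\bm Y_i=\mQ_n^{(i)}\bm Y_i/(1+n^{-1}\bm Y_i^\top\mQ_n^{(i)}\bm Y_i)$. This decorrelates each $\bm Y_i$ from $\mE_n$, reducing the diagonal terms to averages of $n^{-1}\bm Y_i^\top\mQ_n^{(i)}\mE_n\mQ_n^{(i)}\bm Y_i\approx n^{-1}\tr(\mLambda_{B,p}\mQ_n\mE_n\mQ_n)D_{i,p}^0$. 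These are bounded by $(p/n)\|\mE_n\|_{\op}=O_{\Pp}(p^{-1/2})$ per term, which after summation should leave only the logarithmic maximal factors.
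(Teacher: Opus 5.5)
\textbf{Gap in step (b) of the operator-norm bound.} Step (b) does not deliver \eqref{eq:E_op_rate_new}, and every other assertion depends on that bound. Invertibility of $\mS_{HF,n}$ needs $\|\mE_n\|_{\op}=o_{\Pp}(1)$. The second-order resolvent term in your Neumann step has size $O_{\Pp}(p\,\|\mE_n\|_{\op}^2)$, so it is harmless only if $\|\mE_n\|_{\op}=O_{\Pp}(p^{-1/2})$.

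Walsh--Parseval controls only Frobenius norms. When $r\ge1$, the Frobenius coefficient energy of the projection piece is of order one, not $O_{\Pp}(p^{-1})$:
\begin{itemize}
\item $\alpha_{i,n}=\bm U_i^\top\bm d_{L,n}$ is of order one through the factor coordinates.
\item The pair coefficients sum to order $(p/n^2)\sum_i\wt{\bm U}_i^\top\mS_{a,n}\wt{\bm U}_i$. Since $\mP_{F,p}^\top\mS_{a,n}\mP_{F,p}$ is of order $p/n$, this is of order $(p/n^2)\cdot n\cdot(p/n)$, i.e.\ of order one.
\end{itemize}
In fact $n^{-1}\sum_iw_i\alpha_{i,n}\bm Y_{B,i}\bm U_{B,i}^\top$ has Frobenius norm of order one, although its operator norm is $O_{\Pp}(n^{-1/2})$. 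Your route therefore gives only $\|\mE_n\|_{\op}=O_{\Pp}(1)$.

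The missing observation is that the bulk residual matrix is not small; only the $1/n$ normalization is. Projecting \eqref{eq:r_circ_factor} onto the bulk gives an exact factorization of the residual matrix into three kinds of pieces:
\begin{itemize}
\item the rank-one piece $-\bm b_{B,n}\bm w^\top$;
\item the piece $\mU_{B,n}\mD_{\bm w\odot\bm\alpha_n}$;
\item Hessian pieces carrying an explicit factor $p^{-1/2}$.
\end{itemize}
Each has operator norm $O_{\Pp}(1)$. For the middle one, $|\alpha_{i,n}|\le\|\bm d_{L,n}\|$ and \eqref{eq:bulk_sample_cov_bound} give $\|\mU_{B,n}\mD_{\bm w\odot\bm\alpha_n}\|_{\op}^2\le\max_i\alpha_{i,n}^2\|\sum_iw_i^2\bm U_{B,i}\bm U_{B,i}^\top\|_{\op}=O_{\Pp}(n/p)$. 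Then
\[
\|\mE_n\|_{\op}\le 2n^{-1}\|\mX_{B,n}\|_{\op}\|\mR_{B,n}\|_{\op}+n^{-1}\|\mR_{B,n}\|_{\op}^2=O_{\Pp}(n^{-1/2}),
\]
with no chaos expansion. This is the same submultiplicative argument you used in (a), and it must replace (b).

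\textbf{Bilinear bounds.} Your architecture is the paper's: oracle forms $\bm x^\top\mQ_n\mE_n\mQ_n\bm y$ with $\calF_n^0$-measurable $\mQ_n$, treated as bounded-degree Walsh polynomials, followed by one resolvent step in $\mE_n$, the factor correction, and the rank-two Woodbury step. Two pieces still need repair.

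First, the list in (ii) omits the order-one bulk columns $\mV_{F,n}$ of $\mF_n$. With a generic $O_{\Pp}(1)$ capacitance, a term $(\bm a^\top\mQ_{H,n}\mV_{F,n})\,m\,(\mV_{F,n}^\top\mQ_{H,n}\mE_n\mQ_{H,n}\bm b)$ is only $O_{\Pp}(\sqrt p)$. You can fix this in either of two ways:
\begin{itemize}
\item show that the $\mV_{F,n}$--$\mV_{F,n}$ capacitance block is $O_{\Pp}(p^{-1})$; or
\item follow the paper: use $\mQ_{HF,n}-\mQ_{H,n}=-\mQ_{H,n}\mF_n\mQ_{HF,n}$, note that every summand of $\mF_n$ contains $\mP_{F,p}$, and pair each $O_{\Pp}(\sqrt p)$ factor $\bm a^\top\mQ_{H,n}\mV_{F,n}$ with $\mP_{F,p}^\top\mQ_{HF,n}$, which has norm $O_{\Pp}(p^{-1/2})$.
\end{itemize}

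Second, the leave-one-out plan in your last paragraph does not close, for three reasons:
\begin{itemize}
\item $\mQ_n$ is already sign-free, so no deletion is needed for decoupling.
\item Deleting column $i$ from $\mQ_n$ does not decouple $\bm Y_i$ from $\mE_n$, which contains $\bm Y_i$ through $\bm X_{B,i}$, $\bm d_{L,n}$ and $\bm d_{0,n}$.
\item $n$ terms of size $O_{\Pp}(p^{-1/2})$ sum to $O_{\Pp}(\sqrt p)$, which is just the crude bound.
\end{itemize}
In the paper the empty-set coefficient is not singled out. It is part of the total energy \eqref{eq:E_scalar_coefficient_energy} and is bounded by the same contraction estimates as the other coefficients.
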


\begin{proof}
Put
\[
 \bm d_{L,n}=\mJ_n^{-1}\overline{\bm Y}_n,
 \qquad
 \bm d_{0,n}=e_n^{-1}\overline{\bm Y}_n,
 \qquad
 \alpha_{i,n}=\bm U_i^\top\bm d_{L,n},
\]
and write
\[
 \bm b_{B,n}=\mPi_{B,p}(\bm d_{L,n}-\bm d_{0,n}),
 \qquad
 \bm d_{B,n}=\mPi_{B,p}\bm d_{L,n}.
\]
Lemmas~\ref{lem:J_finite_rank} and~\ref{lem:score_remainder} imply
\begin{equation}\label{eq:bulk_d_orders}
 \norm{\bm d_{L,n}}+\norm{\bm d_{0,n}}=O_{\Pp}(1),
 \qquad
 \norm{\bm b_{B,n}}=O_{\Pp}(p^{-1/2}),
 \qquad
 \max_i|\alpha_{i,n}|=O_{\Pp}(1).
\end{equation}
Let $\mU_{B,n}=(\bm U_{B,1},\ldots,\bm U_{B,n})$,
$\bm w=(w_1,\ldots,w_n)^\top$, and let $\mD_{\bm v}$ denote the
diagonal matrix with diagonal $\bm v$.  Projecting
\eqref{eq:r_circ_factor} onto the bulk subspace gives the exact matrix
factorization
\begin{align}
 \mR_{B,n}^{\circ}
 &:=(\mPi_{B,p}\bm r_{1,n}^{\circ},\ldots,
     \mPi_{B,p}\bm r_{n,n}^{\circ})\notag\\
 &=-\bm b_{B,n}\bm w^\top
 +\mU_{B,n}\mD_{\bm w\odot\bm\alpha_n}
 -p^{-1/2}\bm d_{B,n}(\bm w^2\odot\bm\alpha_n)^\top\notag\\
 &\quad-\frac{\norm{\bm d_{L,n}}^2}{2\sqrt p}
       \mU_{B,n}\mD_{\bm w^2}
 +\frac3{2\sqrt p}\mU_{B,n}
       \mD_{\bm w^2\odot\bm\alpha_n^2},
 \label{eq:RB_circ_factorization}
\end{align}
where $\bm\alpha_n=(\alpha_{1,n},\ldots,\alpha_{n,n})^\top$ and
powers are componentwise.  By \eqref{eq:bulk_sample_cov_bound},
\begin{align*}
 \norm{\mU_{B,n}\mD_{\bm w\odot\bm\alpha_n}}_{\op}^2
 &\le\max_i\alpha_{i,n}^2
 \left\|\sum_iw_i^2\bm U_{B,i}\bm U_{B,i}^\top\right\|_{\op}
 =O_{\Pp}(n/p)=O_{\Pp}(1),\\
 p^{-1/2}\norm{\bm d_{B,n}}
 \norm{\bm w^2\odot\bm\alpha_n}
 &\le Cp^{-1/2}\norm{\bm d_{B,n}}\sqrt n
 =O_{\Pp}(1),\\
 p^{-1/2}\norm{\mU_{B,n}\mD_{\bm w^2}}_{\op}
 &\quad+p^{-1/2}\norm{\mU_{B,n}
 \mD_{\bm w^2\odot\bm\alpha_n^2}}_{\op}
 =O_{\Pp}(p^{-1/2}),\\
 \norm{\bm b_{B,n}\bm w^\top}_{\op}
 &\le\norm{\bm b_{B,n}}\bar w\sqrt n=O_{\Pp}(1).
\end{align*}
Hence
\begin{equation}\label{eq:RB_circ_op}
 \norm{\mR_{B,n}^{\circ}}_{\op}=O_{\Pp}(1).
\end{equation}
By \eqref{eq:r_exact_circ_difference},
\[
 \norm{\mPi_{B,p}(\bm r_{i,n}-\bm r_{i,n}^{\circ})_{i=1}^n}_{\op}
 \le\left\{\sum_i\norm{\bm r_{i,n}-\bm r_{i,n}^{\circ}}^2\right\}^{1/2}
 =O_{\Pp}(p^{-1/2}),
\]
so the exact bulk residual matrix $\mR_{B,n}=(\mPi_{B,p}\bm r_{i,n})_{i=1}^n$
also satisfies
\begin{equation}\label{eq:RB_exact_op}
 \norm{\mR_{B,n}}_{\op}=O_{\Pp}(1).
\end{equation}
Let
\[
 \mX_{B,n}=\mPi_{B,p}(\bm Y_i-w_i\bm d_{0,n})_{i=1}^n.
\]
Equations \eqref{eq:YB_operator_factor} and \eqref{eq:bulk_d_orders} give
\[
 \norm{n^{-1/2}\mX_{B,n}}_{\op}=O_{\Pp}(1).
\]
Combining \eqref{eq:RB_exact_op} with the exact residual representation \eqref{eq:L_residual_exact} yields
\begin{align*}
 \norm{\mE_n}_{\op}
 &\le\frac2n\norm{\mX_{B,n}}_{\op}
                 \norm{\mR_{B,n}}_{\op}
      +\frac1n\norm{\mR_{B,n}}_{\op}^2\\
 &=O_{\Pp}(n^{-1/2})=O_{\Pp}(p^{-1/2}).
\end{align*}
which proves \eqref{eq:E_op_rate_new} without a matrix-chaos inequality.

For the bilinear bounds, let $\mE_n^{\circ}$ denote the matrix in
\eqref{eq:L_residual_exact} with $\bm r_{i,n}$ replaced by
$\bm r_{i,n}^{\circ}$.  Equations \eqref{eq:r_exact_circ_difference},
\eqref{eq:RB_circ_op}, and
$\norm{n^{-1/2}\mX_{B,n}}_{\op}=O_{\Pp}(1)$ give
\begin{equation}\label{eq:E_exact_circ_op}
 \norm{\mE_n-\mE_n^{\circ}}_{\op}=O_{\Pp}(p^{-1}).
\end{equation}
Conditional on $\calF_n^0$,
\[
 \bm d_{L,n}=\sum_{j=1}^ns_j\bm a_{j,n},
 \qquad
 \bm d_{0,n}=\sum_{j=1}^ns_j\bm b_{j,n},
 \qquad
 \bm b_{j,n}=\frac{\widetilde{\bm Y}_j}{ne_n},
\]
and
\begin{align}
 \max_j(\norm{\bm a_{j,n}}+\norm{\bm b_{j,n}})
 &\le Cp^{-1/2},\notag\\
 \sum_j\bm a_{j,n}\bm a_{j,n}^\top
 &=\frac1n\mJ_n^{-1}\mB_n\mJ_n^{-1},
 &
 \sum_j\bm b_{j,n}\bm b_{j,n}^\top
 &=\frac1{ne_n^2}\mB_n,\notag\\
 \left\|\sum_j\bm a_{j,n}\bm a_{j,n}^\top\right\|_{\op}
 +\left\|\sum_j\bm b_{j,n}\bm b_{j,n}^\top\right\|_{\op}
 &\le C,
 &
 \sum_j(\norm{\bm a_{j,n}}^2+\norm{\bm b_{j,n}}^2)&\le C.
 \label{eq:ab_coefficients_bulk}
\end{align}
The coefficient bounds in \eqref{eq:ab_coefficients_bulk}, together with
\eqref{eq:r_circ_walsh}, show that $\mE_n^{\circ}$ is an exact Walsh
polynomial of degree at most six.  Put
\[
 \mathcal V_n=\{\bm Z_n,\bm W_n,\mG_{F,n}\bm Z_n\},
 \qquad
 \bm v=n^{-1/2}\sum_{i=1}^ns_i\bm v_{i,n}
 \quad(\bm v\in\mathcal V_n).
\]
After inserting the two external sign sums, every oracle bilinear form
is a scalar Walsh polynomial of degree at most eight.  Walsh
orthogonality and repeated Cauchy--Schwarz give
\begin{align}
 \max_{\bm a,\bm b\in\mathcal V_n}
 \sum_{\mathcal I}|c_{\mathcal I,n}(\bm a,\bm b)|^2
 &\le C(1+\log p)^2,\notag\\
 \max_{\bm v\in\mathcal V_n}
 \sum_{\mathcal I}|c_{\mathcal I,n}(\bm d,\bm v)|^2
 &\le Cg_p(\bm d)^2.
 \label{eq:E_scalar_coefficient_energy}
\end{align}
For example, the largest linear contraction is bounded by
\begin{align*}
 &\frac{Cp}{n^2}
 \left\|\sum_i\bm U_{B,i}\bm U_{B,i}^\top\right\|_{\op}
 \left\|\sum_j\bm a_{j,n}\bm a_{j,n}^\top\right\|_{\op}
 \tr\left(\frac1n\sum_i\mQ_n\bm v_{i,n}
 \bm v_{i,n}^\top\mQ_n\right)\le C,
\end{align*}
while a deterministic external vector uses
\begin{align*}
 \frac1n\sum_i(\bm d^\top\mQ_n\bm U_{B,i})^2
 &\le\frac Cp\norm{\mPi_{B,p}\mQ_n\bm d}^2
 \le\frac Cp g_p(\bm d)^2.
\end{align*}
Every summand obtained after inserting \eqref{eq:r_circ_walsh} is a
bounded deterministic contraction of one of the displayed linear
coefficient families, one Hessian family, or a product of two such
families.  Lemma~\ref{lem:walsh_contraction} converts the preceding
second-moment contractions into the full coefficient-energy bounds in
\eqref{eq:E_scalar_coefficient_energy}; the product estimate
\eqref{eq:walsh_product_contraction} covers the two-residual terms.
There are only finitely many contraction patterns because the total
Walsh degree is at most eight.  A Hessian occurrence contributes an
additional $p^{-1/2}$ before squaring, so no omitted pattern exceeds the
two displayed bounds.  Hence conditional Chebyshev's inequality and
\eqref{eq:E_exact_circ_op} yield
\begin{align}
 \max_{\bm a,\bm b\in\mathcal V_n}
 |\bm a^\top\mQ_n\mE_n\mQ_n\bm b|
 &=O_{\Pp}(\log p),
 \label{eq:oracle_random_bilinear}\\
 \max_{\bm v\in\mathcal V_n}
 |\bm d^\top\mQ_n\mE_n\mQ_n\bm v|
 &=O_{\Pp}\{g_p(\bm d)\}.
 \label{eq:oracle_mixed_bilinear}
\end{align}

Since
$\mS_{HF,n}=\wh\mR_n+\rho\I_p-\mE_n$,
\[
 \lambda_{\min}(\mS_{HF,n})
 \ge\rho_0-\|\mE_n\|_{\op}\ge\rho_0/2
\]
with probability tending to one.  Therefore
\begin{equation}\label{eq:QHF_bound_from_E}
 \|\mQ_{HF,n}\|_{\op}
 +\|(\wh\mR_n+\rho\I_p)^{-1}\|_{\op}=O_{\Pp}(1).
\end{equation}
The matrices $\mQ_{H,n}$ and $\mQ_{HF,n}$ are inverses of
$\mB_n+\rho\I_p$ plus fixed-rank $O_{\Pp}(1)$ perturbations.
Together with \eqref{eq:B_spike_lower}, Weyl's inequality and Wedin's
sin--theta theorem \citep[Theorem~V.4.1, pp.~260--262]{StewartSun1990} give
\begin{align}
 \|\mQ_{H,n}\mP_{F,p}\|_{\op}
 +\|\mQ_{HF,n}\mP_{F,p}\|_{\op}
 &=O_{\Pp}(p^{-1/2}),\notag\\
 \|\mP_{F,p}^\top\mQ_{H,n}\mP_{F,p}\|_{\op}
 +\|\mP_{F,p}^\top\mQ_{HF,n}\mP_{F,p}\|_{\op}
 &=O_{\Pp}(p^{-1}).\notag
\end{align}
For $\bm v\in\mathcal V_n$, write
$\bm v=n^{-1/2}\sum_i s_i\bm v_{i,n}$ as above and define
$\mB_{v,n}=n^{-1}\sum_i\bm v_{i,n}\bm v_{i,n}^\top$.  Then
\begin{align*}
 \E_s\|\mP_{F,p}^\top\mQ_n\bm v\|^2
 &=\tr(\mP_{F,p}^\top\mQ_n\mB_{v,n}
 \mQ_n\mP_{F,p})\\
 &\le
 \begin{cases}
 \tr(\mP_{F,p}^\top\mQ_n\mB_n\mQ_n\mP_{F,p}),
 &\bm v=\bm Z_n,\\
 \bar w^2\tr(\mP_{F,p}^\top\mQ_n\mB_n\mQ_n\mP_{F,p}),
 &\bm v=\bm W_n,\\
 \tr(\mP_{F,p}^\top\mQ_n\mG_{F,n}\mB_n
 \mG_{F,n}^\top\mQ_n\mP_{F,p}),
 &\bm v=\mG_{F,n}\bm Z_n.
 \end{cases}
\end{align*}
Since $\mQ_n\mB_n\mQ_n\preceq\mQ_n$,
\eqref{eq:QPF} gives the first two bounds $O_{\Pp}(p^{-1})$.
For the third, the block matrices in \eqref{eq:QG_block_orders} give
\[
 \tr(\mP_{F,p}^\top\mQ_n\mG_{F,n}\mB_n
 \mG_{F,n}^\top\mQ_n\mP_{F,p})=O_{\Pp}(p^{-1}).
\]
Hence
\begin{equation}\label{eq:oracle_factor_random}
 \max_{\bm v\in\mathcal V_n}
 \|\mP_{F,p}^\top\mQ_n\bm v\|
 =O_{\Pp}(p^{-1/2}).
\end{equation}
The two-dimensional Woodbury identity and
\[
 \|\mP_{F,p}^\top\mQ_n\mH_n\|_{\op}=O_{\Pp}(p^{-1}),
 \qquad
 \max_{\bm v\in\mathcal V_n}
 \|\mH_n^\top\mQ_n\bm v\|=O_{\Pp}(\sqrt p)
\]
transfer \eqref{eq:oracle_factor_random} to $\mQ_{H,n}$.  Next,
$\mQ_{HF,n}-\mQ_{H,n}=-\mQ_{H,n}\mF_n\mQ_{HF,n}$ and
\eqref{eq:VF_rate} imply, with
$X_{v,n}=\|\mP_{F,p}^\top\mQ_{HF,n}\bm v\|$,
\[
 X_{v,n}
 \le O_{\Pp}(p^{-1/2})
 +O_{\Pp}(p^{-1})X_{v,n}
 +O_{\Pp}(p^{-1})O_{\Pp}(\sqrt p)
 +O_{\Pp}(p^{-1/2})X_{v,n}.
\]
Therefore
\begin{align}
 \max_{\bm v\in\mathcal V_n}
 \{\|\mP_{F,p}^\top\mQ_{H,n}\bm v\|
 +\|\mP_{F,p}^\top\mQ_{HF,n}\bm v\|\}
 &=O_{\Pp}(p^{-1/2}),\label{eq:E_factor_random}\\
 \max_{\bm v\in\mathcal V_n}
 \{\|\mQ_{H,n}\bm v\|+\|\mQ_{HF,n}\bm v\|\}
 &=O_{\Pp}(\sqrt p).\notag
\end{align}

For deterministic $\bm d$, \eqref{eq:QG_block_orders} yields
\[
 \|\mQ_n\bm d\|=O_{\Pp}\{g_p(\bm d)\},
 \qquad
 \|\bm d^\top\mQ_n\mP_{F,p}\|
 =O_{\Pp}\{g_p(\bm d)p^{-1/2}\}.
\]
Moreover, conditional Rademacher second moments give
\[
 \|\mH_n^\top\mQ_n\bm d\|
 =O_{\Pp}\{g_p(\bm d)n^{-1/2}\}.
\]
The rank-two Woodbury identity and then the preceding resolvent identity
for $\mF_n$ consequently give
\begin{equation}\label{eq:E_scaled_d}
 \|\mQ_{H,n}\bm d\|+\|\mQ_{HF,n}\bm d\|
 =O_{\Pp}\{g_p(\bm d)\},
 \qquad
 \|\bm d^\top\mQ_{H,n}\mP_{F,p}\|
 =O_{\Pp}\{g_p(\bm d)p^{-1/2}\}.
\end{equation}

Let $\mQ_{\wh R,n}=(\wh\mR_n+\rho\I_p)^{-1}$.  The identity
\[
 \mQ_{\wh R,n}=\mQ_{HF,n}-\mQ_{HF,n}\mE_n\mQ_{\wh R,n}
\]
gives
\begin{align*}
 \bm a^\top\mQ_{HF,n}\mE_n\mQ_{\wh R,n}\bm b
 ={}&\bm a^\top\mQ_{HF,n}\mE_n\mQ_{HF,n}\bm b\\
 &-\bm a^\top\mQ_{HF,n}\mE_n\mQ_{HF,n}
 \mE_n\mQ_{\wh R,n}\bm b.
\end{align*}
By \eqref{eq:E_op_rate_new}, \eqref{eq:QHF_bound_from_E}, and
\eqref{eq:E_scaled_d}, the second term is
\[
 O_{\Pp}(\log p)
 \quad\text{for }\bm a,\bm b\in\mathcal V_n,
 \qquad
 O_{\Pp}\{g_p(\bm d)\log p/\sqrt p\}=o_{\Pp}\{g_p(\bm d)\}
\]
in the mixed case.

It remains to remove $\mF_n$ and then the rank-two update.  From
$\mQ_{HF,n}-\mQ_{H,n}=-\mQ_{H,n}\mF_n\mQ_{HF,n}$ and
\[
 \mF_n=\mP_{F,p}\mA_{F,n}\mP_{F,p}^\top
 +\mP_{F,p}\mV_{F,n}^\top
 +\mV_{F,n}\mP_{F,p}^\top,
\]
relations \eqref{eq:VF_rate}, \eqref{eq:E_factor_random}, and
\eqref{eq:E_scaled_d} yield
\begin{align}
 &\max_{\bm a,\bm b\in\mathcal V_n}
 |\bm a^\top\mQ_{HF,n}\mE_n\mQ_{HF,n}\bm b
 -\bm a^\top\mQ_{H,n}\mE_n\mQ_{H,n}\bm b|
 =O_{\Pp}(\log p),\label{eq:remove_F_random}\\
 &\max_{\bm v\in\mathcal V_n}
 |\bm d^\top\mQ_{HF,n}\mE_n\mQ_{HF,n}\bm v
 -\bm d^\top\mQ_{H,n}\mE_n\mQ_{H,n}\bm v|
 =O_{\Pp}\{g_p(\bm d)\}.
 \label{eq:remove_F_mixed}
\end{align}
For example,
\[
 \|\bm a^\top\mQ_{H,n}\mV_{F,n}\|
 \|\mP_{F,p}^\top\mQ_{HF,n}\mE_n\mQ_{HF,n}\bm b\|
 \le O_{\Pp}(\sqrt p)
 O_{\Pp}(p^{-1/2})
 =O_{\Pp}(1),
\]
and every other factorized term has the same or a smaller order.

Finally,
\[
 \mQ_{H,n}=\mQ_n-\mQ_n\mH_n\mK_{H,n}\mH_n^\top\mQ_n,
 \qquad
 \mK_{H,n}=(\mC_n^{-1}+\mH_n^\top\mQ_n\mH_n)^{-1},
\]
where $\|\mK_{H,n}\|_{\op}=O_{\Pp}(1)$.  Expanding both occurrences
of $\mQ_{H,n}$ gives four terms.  Equations
\eqref{eq:oracle_random_bilinear}--\eqref{eq:oracle_mixed_bilinear}
imply
\begin{align*}
 \max_{\bm v\in\mathcal V_n}
 \|\mH_n^\top\mQ_n\mE_n\mQ_n\bm v\|
 &=O_{\Pp}(\log p/\sqrt n),\\
 \|\mH_n^\top\mQ_n\mE_n\mQ_n\mH_n\|_{\op}
 &=O_{\Pp}(\log p/n),\\
 \max_{\bm v\in\mathcal V_n}\|\mH_n^\top\mQ_n\bm v\|
 &=O_{\Pp}(\sqrt p).
\end{align*}
Thus all four random--random terms are $O_{\Pp}(\log p)$, while the
mixed terms are $O_{\Pp}\{g_p(\bm d)\}$.  Combining these bounds with
\eqref{eq:remove_F_random}--\eqref{eq:remove_F_mixed} proves
\eqref{eq:E_bilinear_rate_new}--\eqref{eq:E_mixed_rate_new}.
\end{proof}

\begin{lemma}\label{lem:full_woodbury}
Let
\[
 \mU_n=(\mH_n,\mP_{F,p},\mV_{F,n})
\]
and
\[
 \mD_n^{\mathrm{fr}}
 =\begin{pmatrix}
 \mC_n&\0&\0\\
 \0&\mA_{F,n}&\I_r\\
 \0&\I_r&\0
 \end{pmatrix}.
\]
Then
\[
 \mH_n\mC_n\mH_n^\top+\mF_n
 =\mU_n\mD_n^{\mathrm{fr}}\mU_n^\top,
\]
and
\begin{equation}\label{eq:full_woodbury}
 (\mB_n+\rho\I_p+\mU_n\mD_n^{\mathrm{fr}}\mU_n^\top)^{-1}
 =\mQ_n-\mQ_n\mU_n\mM_n\mU_n^\top\mQ_n,
\end{equation}
where
\[
 \mM_n=
 \{(\mD_n^{\mathrm{fr}})^{-1}
 +\mU_n^\top\mQ_n\mU_n\}^{-1}
\]
exists with probability tending to one and
$\norm{\mM_n}_{\op}=O_{\Pp}(1)$.  If
\[
 \mQ_{H,n}=(\mB_n+\mH_n\mC_n\mH_n^\top+\rho\I_p)^{-1},
\]
then
\begin{equation}\label{eq:F_quadratic_effect}
 \max_{\bm a,\bm b\in\{\bm Z_n,\mG_{F,n}\bm Z_n\}}
 \abs{\bm a^\top(\mQ_{HF,n}-\mQ_{H,n})\bm b}
 =O_{\Pp}(1).
\end{equation}
\end{lemma}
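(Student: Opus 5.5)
The factorization $\mH_n\mC_n\mH_n^\top+\mF_n=\mU_n\mD_n^{\mathrm{fr}}\mU_n^\top$ is immediate. Block-diagonal multiplication gives $\mH_n\mC_n\mH_n^\top$ from the first block. Lemma~\ref{lem:F_factorization} gives $\mF_n$ from the $2r\times2r$ block $(\mP_{F,p},\mV_{F,n})\begin{pmatrix}\mA_{F,n}&\I_r\\ \I_r&\0\end{pmatrix}(\mP_{F,p},\mV_{F,n})^\top$.

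Invertibility of $\mD_n^{\mathrm{fr}}$ follows from two facts. First, $\mC_n$ is invertible once $e_n\ge m_-/2$, by \eqref{eq:e_lower}. Second, $\begin{pmatrix}\mA&\I\\ \I&\0\end{pmatrix}^{-1}=\begin{pmatrix}\0&\I\\ \I&-\mA\end{pmatrix}$, whose norm is $O_{\Pp}(1)$ by \eqref{eq:VF_rate}. Given that $\mM_n$ exists, \eqref{eq:full_woodbury} is then the standard Woodbury identity.

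The real work is showing that $\mM_n$ exists with $\norm{\mM_n}_{\op}=O_{\Pp}(1)$. I would avoid inverting the full $(2+2r)$-dimensional capacitance matrix directly and instead argue in two stages. Stage one applies the rank-two reduction: Lemma~\ref{lem:rank_two_inverse} already gives $\mQ_{H,n}$, with a bounded $\mK_{H,n}$. Stage two applies the rank-$2r$ Woodbury step to $\mQ_{H,n}$ with the factor perturbation. Its capacitance matrix is
\[
\mN_n=\begin{pmatrix}\0&\I\\ \I&-\mA_{F,n}\end{pmatrix}+(\mP_{F,p},\mV_{F,n})^\top\mQ_{H,n}(\mP_{F,p},\mV_{F,n}).
\]
By \eqref{eq:factor_resolvent_rates}, the $\mP_{F,p}$-blocks of the second term are $O_{\Pp}(p^{-1})$ and $O_{\Pp}(p^{-1/2})$. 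The $\mV_{F,n}^\top\mQ_{H,n}\mV_{F,n}$ block, however, is only $O_{\Pp}(1)$. So
\[
\mN_n=\begin{pmatrix}o_{\Pp}(1)&\I+o_{\Pp}(1)\\ \I+o_{\Pp}(1)&\mX_n\end{pmatrix}
\]
with $\mX_n=O_{\Pp}(1)$. A Schur complement on the off-diagonal identity blocks shows $\det\mN_n$ is bounded away from zero, and in fact $\mN_n^{-1}=\begin{pmatrix}-\mX_n&\I\\ \I&\0\end{pmatrix}+o_{\Pp}(1)$. Composing the two Woodbury steps and matching with the one-shot formula then yields the existence and boundedness of $\mM_n$; equivalently, $\mM_n$ is a bounded rearrangement of $\mK_{H,n}$ and $\mN_n^{-1}$. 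I expect this step to be the main obstacle, because $\mV_{F,n}$ is not small (as the remark after Lemma~\ref{lem:F_factorization} stresses), and one must check that the bounded $\mX_n$ block does not destroy invertibility. The off-diagonal $\I$ structure is precisely what rescues it.

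For \eqref{eq:F_quadratic_effect}, I would use $\mQ_{HF,n}-\mQ_{H,n}=-\mQ_{H,n}(\mP_{F,p},\mV_{F,n})\mN_n^{-1}(\mP_{F,p},\mV_{F,n})^\top\mQ_{H,n}$. The bilinear form with $\bm a,\bm b$ then expands into finitely many products of the vectors $\bm a^\top\mQ_{H,n}\mP_{F,p}$ and $\bm a^\top\mQ_{H,n}\mV_{F,n}$, with bounded middle matrices. The first vector is $O_{\Pp}(p^{-1/2})$ by \eqref{eq:E_factor_random}. The second is $O_{\Pp}(\sqrt p)$, from $\|\mQ_{H,n}\bm a\|=O_{\Pp}(\sqrt p)$ and \eqref{eq:VF_rate}. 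The $\mN_n^{-1}$ structure forbids a $\mV$--$\mV$ pairing: its bottom-right block is $o_{\Pp}(1)$, but I need it small enough, so I would track its order through the Schur expansion. Given that, the possible pairings are $\mP$--$\mP$, which gives $p^{-1}$; $\mP$--$\mV$, which gives $p^{-1/2}\cdot\sqrt p=O(1)$; and the residual $\mV$--$\mV$ term weighted by the bottom-right block. For the last term I would need that block to be $O_{\Pp}(p^{-1})$. I would obtain this by writing it exactly as $-\mN_{11}(\cdots)^{-1}$ with $\mN_{11}=\mP_{F,p}^\top\mQ_{H,n}\mP_{F,p}=O_{\Pp}(p^{-1})$, which yields the required $O_{\Pp}(1)$ total.
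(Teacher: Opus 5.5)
Your proof is correct, but both halves take a different route from the paper's.

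\emph{Existence and boundedness of $\mM_n$.} The paper never analyzes the capacitance matrix. It writes the reverse Woodbury identity $\mM_n=\mD_n^{\mathrm{fr}}-\mD_n^{\mathrm{fr}}\mU_n^\top\mQ_{HF,n}\mU_n\mD_n^{\mathrm{fr}}$ and combines it with the a priori bound $\norm{\mQ_{HF,n}}_{\op}=O_{\Pp}(1)$ from Lemma~\ref{lem:E_bulk}, which rests on $\wh\mR_n\succeq\0$ and $\norm{\mE_n}_{\op}=O_{\Pp}(p^{-1/2})$. That settles the claim in two lines, despite $\mD_n^{\mathrm{fr}}$ being indefinite. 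Your nested Schur-complement route is longer, but it needs no information on $\mE_n$: invertibility comes from the off-diagonal identity blocks together with the factor suppression of $\mQ_{H,n}$. To make your ``matching'' step precise, note that the Schur complement of the block $\mK_{H,n}^{-1}=\mC_n^{-1}+\mH_n^\top\mQ_n\mH_n$ in $(\mD_n^{\mathrm{fr}})^{-1}+\mU_n^\top\mQ_n\mU_n$ is exactly your $\mathbf N_n$. Block inversion then writes $\mM_n$ in terms of $\mK_{H,n}$, $\mathbf N_n^{-1}$ and $\mH_n^\top\mQ_n(\mP_{F,p},\mV_{F,n})$, all of which are $O_{\Pp}(1)$.

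\emph{The quadratic effect \eqref{eq:F_quadratic_effect}.} The paper uses the one-sided identity $\mQ_{HF,n}-\mQ_{H,n}=-\mQ_{H,n}\mF_n\mQ_{HF,n}$ with the three-term factorization of $\mF_n$. Because $\mQ_{HF,n}$ then appears on the right, the paper must separately prove $\norm{\mP_{F,p}^\top\mQ_{HF,n}\bm Z_n}=O_{\Pp}(p^{-1/2})$ by a self-bounding inequality. Your two-sided form $-\mQ_{H,n}(\mP_{F,p},\mV_{F,n})\mathbf N_n^{-1}(\mP_{F,p},\mV_{F,n})^\top\mQ_{H,n}$ involves only $\mQ_{H,n}$, so that fixed-point step disappears. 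The price is that you need the block structure of $\mathbf N_n^{-1}$. Your claim that its $\mV_{F,n}$--$\mV_{F,n}$ block is $O_{\Pp}(p^{-1})$ is correct. The cleanest exact form comes from the $(1,2)$ block of $\mathbf N_n\mathbf N_n^{-1}=\I$, namely $(\mathbf N_n^{-1})_{22}=-\{(\mathbf N_n)_{12}\}^{-1}(\mathbf N_n)_{11}(\mathbf N_n^{-1})_{12}$. Here $(\mathbf N_n)_{11}=\mP_{F,p}^\top\mQ_{H,n}\mP_{F,p}=O_{\Pp}(p^{-1})$, and the other two factors are $O_{\Pp}(1)$ because $(\mathbf N_n)_{12}=\I_r+O_{\Pp}(p^{-1/2})$ and $\mathbf N_n^{-1}$ is bounded.
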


\begin{proof}
The factorization follows from Lemma~\ref{lem:F_factorization}.  On the
event where $e_n$ is bounded away from zero,
$\mC_n^{-1}$ exists and has bounded norm.  The inverse of the factor
block is explicit:
\[
 \begin{pmatrix}\mA_{F,n}&\I_r\\\I_r&\0\end{pmatrix}^{-1}
 =\begin{pmatrix}\0&\I_r\\\I_r&-\mA_{F,n}\end{pmatrix}.
\]
Hence \eqref{eq:VF_rate}, $\norm{\mH_n}_{\op}=O_{\Pp}(1)$, and
$\norm{\mP_{F,p}}_{\op}=1$ imply
\begin{equation}\label{eq:UD_bounds}
 \norm{\mU_n}_{\op}
 +\norm{\mD_n^{\mathrm{fr}}}_{\op}
 +\norm{(\mD_n^{\mathrm{fr}})^{-1}}_{\op}
 =O_{\Pp}(1).
\end{equation}
By \eqref{eq:QHF_bound_from_E}, the matrix on the left-hand side of
\eqref{eq:full_woodbury} has a bounded inverse with probability tending
to one.  The reverse Woodbury identity gives
\begin{equation}\label{eq:M_reverse_identity}
 \mM_n=\mD_n^{\mathrm{fr}}
 -\mD_n^{\mathrm{fr}}\mU_n^\top\mQ_{HF,n}
 \mU_n\mD_n^{\mathrm{fr}}.
\end{equation}
Equations \eqref{eq:UD_bounds}--\eqref{eq:M_reverse_identity} prove
$\norm{\mM_n}_{\op}=O_{\Pp}(1)$.  This argument is needed because
$\mD_n^{\mathrm{fr}}$ is indefinite; positive definiteness of the final
ridge matrix alone does not directly bound the middle inverse.

It remains to prove \eqref{eq:F_quadratic_effect}.  The two matrices
$\mQ_{H,n}^{-1}$ and $\mQ_{HF,n}^{-1}$ differ from
$\mB_n+\rho\I_p$ by fixed-rank matrices with operator norm
$O_{\Pp}(1)$.  Weyl's inequality and Wedin's sin--theta theorem
\citep[Theorem~V.4.1, pp.~260--262]{StewartSun1990}, used exactly as in
Lemma~\ref{lem:factor_suppression}, therefore give
\begin{align}\label{eq:QH_factor_suppression}
 \norm{\mQ_{H,n}\mP_{F,p}}_{\op}
 +\norm{\mQ_{HF,n}\mP_{F,p}}_{\op}
 &=O_{\Pp}(p^{-1/2}),\\
 \norm{\mP_{F,p}^\top\mQ_{H,n}\mP_{F,p}}_{\op}
 +\norm{\mP_{F,p}^\top\mQ_{HF,n}\mP_{F,p}}_{\op}
 &=O_{\Pp}(p^{-1}).\notag
\end{align}
The two-dimensional Woodbury formula for $\mQ_{H,n}$, together with
conditional Rademacher second moments, yields
\begin{equation}\label{eq:PQHZ_sharp}
 \norm{\mP_{F,p}^\top\mQ_{H,n}\bm Z_n}
 =O_{\Pp}(p^{-1/2}).
\end{equation}
Indeed,
$\E_s\|\mP_{F,p}^\top\mQ_n\bm Z_n\|^2
 \le\tr(\mP_{F,p}^\top\mQ_n\mP_{F,p})=O_{\Pp}(p^{-1})$;
the rank-two correction has the same order because
$\mP_{F,p}^\top\mQ_n\mH_n=O_{\Pp}(p^{-1})$ and
$\mH_n^\top\mQ_n\bm Z_n=O_{\Pp}(p^{1/2})$.

Let
\[
 X_n=\norm{\mP_{F,p}^\top\mQ_{HF,n}\bm Z_n}.
\]
The resolvent identity and
\[
 \mF_n=\mP_{F,p}\mA_{F,n}\mP_{F,p}^\top
 +\mP_{F,p}\mV_{F,n}^\top
 +\mV_{F,n}\mP_{F,p}^\top
\]
give
\[
 X_n\le O_{\Pp}(p^{-1/2})
 +O_{\Pp}(p^{-1})X_n
 +O_{\Pp}(p^{-1})\norm{\mV_{F,n}^\top\mQ_{HF,n}\bm Z_n}
 +O_{\Pp}(p^{-1/2})X_n.
\]
Since
$\norm{\mV_{F,n}^\top\mQ_{HF,n}\bm Z_n}
 \le O_{\Pp}(1)O_{\Pp}(\sqrt p)$,
\begin{equation}\label{eq:PQHFZ_sharp}
 X_n=O_{\Pp}(p^{-1/2}).
\end{equation}
Finally,
\begin{align*}
 &\abs{\bm Z_n^\top(\mQ_{HF,n}-\mQ_{H,n})\bm Z_n}\\
 &\quad=\abs{\bm Z_n^\top\mQ_{H,n}\mF_n\mQ_{HF,n}\bm Z_n}\\
 &\quad\le
 \norm{\mP_{F,p}^\top\mQ_{H,n}\bm Z_n}
 \norm{\mA_{F,n}}
 \norm{\mP_{F,p}^\top\mQ_{HF,n}\bm Z_n}\\
 &\qquad+
 \norm{\mP_{F,p}^\top\mQ_{H,n}\bm Z_n}
 \norm{\mV_{F,n}^\top\mQ_{HF,n}\bm Z_n}\\
 &\qquad+
 \norm{\mV_{F,n}^\top\mQ_{H,n}\bm Z_n}
 \norm{\mP_{F,p}^\top\mQ_{HF,n}\bm Z_n}
 =O_{\Pp}(1),
\end{align*}
where \eqref{eq:VF_rate}, \eqref{eq:PQHZ_sharp},
\eqref{eq:PQHFZ_sharp}, and
$\norm{\mQ_{H,n}\bm Z_n}+\norm{\mQ_{HF,n}\bm Z_n}
 =O_{\Pp}(\sqrt p)$ are used.  Replacing either external
$\bm Z_n$ by $\mG_{F,n}\bm Z_n$ gives the same bounds because
\eqref{eq:DQGD_bound}--\eqref{eq:DGQGD_bound} preserve the scaled
factor--bulk orders.  This proves \eqref{eq:F_quadratic_effect}.
\end{proof}

\begin{proof}[Proof of Theorem~\ref{thm:finite_rank}]
The exact decomposition \eqref{eq:Rhat_decomp} and the rank bound
\eqref{eq:F_rank} follow from \eqref{eq:L_exact}--\eqref{eq:E_projection}
and Lemma~\ref{lem:F_factorization}.

The rank-two matrix $\mH_n\mC_n\mH_n^\top$ has operator norm
$O_{\Pp}(1)$.  Lemma~\ref{lem:rank_two_inverse} proves directly that
$\mQ_{H,n}$ exists and has bounded operator norm with probability tending
to one, without using Proposition~\ref{prop:scalar}.  Its inverse differs
from $\mB_n+\rho\I_p$ by an $O_{\Pp}(1)$ perturbation.  The spike gap in
\eqref{eq:B_spike_lower} and Wedin's sin--theta theorem \citep[Theorem~V.4.1, pp.~260--262]{StewartSun1990} therefore give
\[
 \norm{\mQ_{H,n}\mP_{F,p}}_{\op}=O_{\Pp}(p^{-1/2}),
 \qquad
 \norm{\mP_{F,p}^\top\mQ_{H,n}\mP_{F,p}}_{\op}=O_{\Pp}(p^{-1}),
\]
which proves \eqref{eq:factor_resolvent_rates}.

Put
\[
 \bm v_n=\mG_{F,n}\bm Z_n,
 \qquad
 \bm r_n=\bm r_{\theta,B,n}.
\]
Theorem~\ref{thm:median} gives
\[
 \wh\btheta-\btheta_{0,p}
 =\frac{\bm Z_n}{e_n\sqrt n}
 +\frac{\bm v_n}{\sqrt n}+\bm r_n,
 \qquad
 \norm{\bm r_n}=O_{\Pp}(p^{-1}).
\]
The Rademacher calculation in \eqref{eq:G_quadratic_suppression}, the
rank-two Woodbury identity, \eqref{eq:F_quadratic_effect}, and
\eqref{eq:E_bilinear_rate_new} imply
\begin{align*}
 \abs{\bm Z_n^\top(\wh\mR_n+\rho\I_p)^{-1}\bm v_n}
 +\abs{\bm v_n^\top(\wh\mR_n+\rho\I_p)^{-1}\bm v_n}
 =O_{\Pp}(1+\log p).
\end{align*}
Moreover,
\begin{align*}
 n\abs{\bm r_n^\top(\wh\mR_n+\rho\I_p)^{-1}
 (e_n^{-1}\bm Z_n/\sqrt n+\bm v_n/\sqrt n)}
 &\le Cn\norm{\bm r_n}
 \left\|e_n^{-1}\bm Z_n/\sqrt n+\bm v_n/\sqrt n\right\|
 =O_{\Pp}(1),\\
 n\bm r_n^\top(\wh\mR_n+\rho\I_p)^{-1}\bm r_n
 &\le Cn\norm{\bm r_n}^2=O_{\Pp}(p^{-1}).
\end{align*}
Consequently,
\begin{equation}\label{eq:T_to_ZhatQZ}
 T_n(\rho)
 =e_n^{-2}\bm Z_n^\top(\wh\mR_n+\rho\I_p)^{-1}\bm Z_n
 +O_{\Pp}(1+\log p).
\end{equation}
Finally, \eqref{eq:F_quadratic_effect}, the resolvent identity, and
\eqref{eq:E_bilinear_rate_new} give
\[
 \bm Z_n^\top(\wh\mR_n+\rho\I_p)^{-1}\bm Z_n
 =\bm Z_n^\top\mQ_{H,n}\bm Z_n+O_{\Pp}(1+\log p).
\]
Substitution into \eqref{eq:T_to_ZhatQZ} proves
\eqref{eq:T_factor_negligible}.  Since $p/n$ is bounded away from zero
and infinity, $1+\log p=o(n^{1/2})$.
\end{proof}

\begin{proof}[Proof of Proposition~\ref{prop:scalar}]
Let
\begin{equation*}
\mG_n=\mH_n^\top\mQ_n\mH_n
=\begin{pmatrix}x_n&y_n\\y_n&z_n\end{pmatrix}.
\end{equation*}
The two-dimensional Woodbury identity gives
\begin{equation*}
\mH_n^\top
(\mB_n+\mH_n\mC_n\mH_n^\top+\rho\I_p)^{-1}
\mH_n
=\mG_n(\I_2+\mC_n\mG_n)^{-1}.
\end{equation*}
Direct inversion of the $2\times2$ matrix gives
\begin{equation*}
\frac1{e_n^2}\bm Z_n^\top
(\mB_n+\mH_n\mC_n\mH_n^\top+\rho\I_p)^{-1}
\bm Z_n
=n\frac{x_n}{e_n^2+t_nx_n-2e_ny_n+y_n^2-x_nz_n}.
\end{equation*}
Theorem~\ref{thm:finite_rank} proves \eqref{eq:T_scalar}.  Lemma~\ref{lem:rank_two_inverse} gives
$L(x_n,y_n,z_n,e_n,t_n)=D_n+O_{\Pp}(n^{-1/2})$, and
Proposition~\ref{prop:weighted_DE} shows that $D_n$ is bounded away from
zero with probability tending to one.
\end{proof}

\section{Conditional quadratic-form central limit theorem and theoretical null law}\label{app:joint}

For every nonzero vector $\bm y$, let $\mathfrak o(\bm y)$ denote the unique element of $\{\bm y,-\bm y\}$ whose first nonzero coordinate is positive.  Define
\[
 \widetilde{\bm Y}_i=\mathfrak o(\bm Y_i),
 \qquad
 s_i=\begin{cases}
 1,&\bm Y_i=\widetilde{\bm Y}_i,\\
 -1,&\bm Y_i=-\widetilde{\bm Y}_i,
 \end{cases}
 \qquad
 \calF_n^0=\sigma\{w_i,\widetilde{\bm Y}_i:1\le i\le n\}.
\]

\begin{lemma}\label{lem:conditional_signs}
Under Assumption~1 and $H_0$, conditionally on
$\calF_n^0$, the variables $s_1,\ldots,s_n$ are independent and
\[
 \Pp(s_i=1\mid\calF_n^0)
 =\Pp(s_i=-1\mid\calF_n^0)=\frac12,
 \qquad 1\le i\le n.
\]
\end{lemma}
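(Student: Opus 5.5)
The plan is to reduce the claim to a symmetry of the Gaussian direction under the reflection $\bm G_i\mapsto-\bm G_i$, together with independence across $i$. Under $H_0$ and \eqref{eq:model}, $\bm Y_i=\sqrt p\,\mOmega_p^{1/2}\bm G_i/(\bm G_i^\top\mOmega_p\bm G_i)^{1/2}$ and $w_i=\xi_i\ell_{i,p}$, where $\ell_{i,p}=(h_{i,p}/q_{i,p})^{1/2}$ is an even function of $\bm G_i$. The map $\bm G_i\mapsto-\bm G_i$ sends $\bm Y_i\mapsto-\bm Y_i$ and leaves $w_i$, and hence $\widetilde{\bm Y}_i=\mathfrak o(\bm Y_i)$, unchanged. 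Since $\bm Y_i\neq\0$ almost surely, $\mathfrak o$ and $s_i$ are well defined a.s. Moreover $s_i$ is a measurable function of $\bm Y_i$ with $s_i(-\bm y)=-s_i(\bm y)$.

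First I treat a single index. Fix $i$ and a bounded measurable $\phi$ of $(w_i,\widetilde{\bm Y}_i)$. Because $(\xi_i,\bm G_i)\overset{d}{=}(\xi_i,-\bm G_i)$, using independence of $\xi_i$ and $\bm G_i$ and the symmetry of $N(\0,\I_p)$, we get
\[
\E\{1(s_i=1)\phi(w_i,\widetilde{\bm Y}_i)\}=\E\{1(s_i=-1)\phi(w_i,\widetilde{\bm Y}_i)\},
\]
since the reflection swaps the two events and fixes the arguments of $\phi$. The two events partition a probability-one set, so $\E\{1(s_i=1)\phi\}=\tfrac12\E\phi$. This gives $\Pp(s_i=1\mid w_i,\widetilde{\bm Y}_i)=1/2$ a.s.

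Next I pass to the joint statement. The pairs $(\xi_i,\bm G_i)$ are independent across $i$ by Assumption~\ref{ass:model}(i), so the triples $(s_i,w_i,\widetilde{\bm Y}_i)$ are independent across $i$. For any $\varepsilon\in\{-1,1\}^n$ and bounded product test functions $\prod_i\phi_i(w_i,\widetilde{\bm Y}_i)$, independence and the single-index step give
\[
\E\Bigl\{\prod_i 1(s_i=\varepsilon_i)\phi_i\Bigr\}=\prod_i\tfrac12\E\phi_i=2^{-n}\E\prod_i\phi_i.
\]
Products of this form generate $\calF_n^0$ and form a $\pi$-system, so a monotone-class argument extends the identity to all bounded $\calF_n^0$-measurable functions. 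Hence $\Pp(s_1=\varepsilon_1,\ldots,s_n=\varepsilon_n\mid\calF_n^0)=2^{-n}$ a.s., which is exactly conditional independence with Rademacher marginals.

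No step is a real obstacle. The only points needing care are checking that $\ell_{i,p}$ (and hence $w_i$) is even in $\bm G_i$, and disposing of the null set $\{\bm Y_i=\0\}$, on which $\mathfrak o$ is undefined. Notably, the argument never needs independence between $w_i$ and $\bm U_i$, which fails here.
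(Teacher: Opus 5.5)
Your proposal is correct and follows essentially the same route as the paper's proof. Both use the reflection $\bm G_i\mapsto-\bm G_i$ to obtain $\Pp(s_i=1\mid w_i,\widetilde{\bm Y}_i)=1/2$, then independence across observations with product test functions, and a monotone-class argument to get $\Pp(s_1=\varepsilon_1,\ldots,s_n=\varepsilon_n\mid\calF_n^0)=2^{-n}$, without ever using independence between $w_i$ and $\bm U_i$.
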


\begin{proof}
For one observation, the transformation
$\bm G_i\mapsto-\bm G_i$ preserves Gaussian probability, leaves
$w_i=\xi_i(h_{i,p}/q_{i,p})^{1/2}$ and
$\widetilde{\bm Y}_i=\mathfrak o(\bm Y_i)$ unchanged, and replaces
$s_i$ by $-s_i$.  Hence, for every bounded measurable function $g$,
\begin{align*}
 \E\{g(w_i,\widetilde{\bm Y}_i)\mathbf 1(s_i=1)\}
 &=\E\{g(w_i,\widetilde{\bm Y}_i)\mathbf 1(s_i=-1)\}\\
 &=\frac12\E\{g(w_i,\widetilde{\bm Y}_i)\}.
\end{align*}
This is exactly the identity
$\Pp(s_i=1\mid w_i,\widetilde{\bm Y}_i)=1/2$ almost surely.
For arbitrary $\epsilon_1,\ldots,\epsilon_n\in\{-1,1\}$ and bounded
measurable functions $g_1,\ldots,g_n$, independence across observations
gives
\begin{align*}
 &\E\prod_{i=1}^n
 \{g_i(w_i,\widetilde{\bm Y}_i)\mathbf 1(s_i=\epsilon_i)\}\\
 &\qquad=2^{-n}\prod_{i=1}^n
 \E g_i(w_i,\widetilde{\bm Y}_i)
 =2^{-n}\E\prod_{i=1}^n g_i(w_i,\widetilde{\bm Y}_i).
\end{align*}
A monotone-class argument therefore yields
\[
 \Pp(s_1=\epsilon_1,\ldots,s_n=\epsilon_n\mid\calF_n^0)=2^{-n}
\]
almost surely, proving the claim.  Notice that no independence between
$w_i$ and $\widetilde{\bm Y}_i$ is used.
\end{proof}

Put $\mS_n=\diag(s_1,\ldots,s_n)$ and
\[
 \widetilde\mY=(\widetilde{\bm Y}_1,\ldots,
 \widetilde{\bm Y}_n),
 \qquad
 \widetilde\mA_n=
 \frac1n\widetilde\mY^\top
 \left(\frac1n\widetilde\mY\widetilde\mY^\top+\rho\I_p\right)^{-1}
 \widetilde\mY.
\]
Then
\begin{equation*}
 \mA_n=\mS_n\widetilde\mA_n\mS_n.
\end{equation*}
Write $a_{ij}=(\widetilde\mA_n)_{ij}$.  Since
$0\preceq\widetilde\mA_n\preceq\I_n$,
\begin{equation}\label{eq:A_contraction_rows}
 0\le a_{ii}\le1,
 \qquad
 \sum_{j=1}^na_{ij}^2=(\widetilde\mA_n^2)_{ii}\le1,
 \qquad
 \sum_{i,j=1}^na_{ij}^2\le n.
\end{equation}

Recall the three scalar quadratic forms from Proposition~\ref{prop:scalar}:
\[
 x_n=n^{-1}\bm Z_n^\top\mQ_n\bm Z_n,
 \quad
 y_n=n^{-1}\bm Z_n^\top\mQ_n\bm W_n,
 \quad
 z_n=n^{-1}\bm W_n^\top\mQ_n\bm W_n.
\]
The sign representation gives the exact identities
\begin{align}
 x_n-\kappa_n
 &=\frac2n\sum_{i<j}a_{ij}s_is_j,
 \label{eq:x_exact_new}\\
 y_n-b_{1,n}
 &=\frac1n\sum_{i<j}a_{ij}(w_i+w_j)s_is_j,\notag
\\
 z_n-b_{2,n}
 &=\frac2n\sum_{i<j}a_{ij}w_iw_j s_is_j.
 \label{eq:z_exact_new}
\end{align}

\begin{lemma}\label{lem:conditional_covariance_new}
By Lemma~\ref{lem:conditional_signs}, conditional on $\calF_n^0$,
\[
 \E_s\left(
 \left.\begin{pmatrix}x_n\\y_n\\z_n\end{pmatrix}
 \right|\calF_n^0\right)
 =\begin{pmatrix}\kappa_n\\b_{1,n}\\b_{2,n}\end{pmatrix},
\]
and
\begin{equation*}
 n\Cov_s\left(
 \begin{pmatrix}x_n\\y_n\\z_n\end{pmatrix}
 \middle|\calF_n^0\right)=\mGamma_n,
\end{equation*}
where $\mGamma_n$ is defined in (2.4).
\end{lemma}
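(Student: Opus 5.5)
The plan is a direct second-moment calculation in the Rademacher signs, starting from the exact chaos representations \eqref{eq:x_exact_new}--\eqref{eq:z_exact_new}. I will first justify these identities. Since $\bm Y_i\bm Y_i^\top=\widetilde{\bm Y}_i\widetilde{\bm Y}_i^\top$, the matrices $\mB_n$ and $\mQ_n$ are $\calF_n^0$-measurable, and so are $\widetilde\mA_n$ and the weights $w_i$. Writing $\bm Z_n=n^{-1/2}\widetilde\mY\bm s$ and $\bm W_n=n^{-1/2}\widetilde\mY\mD_{\bm w}\bm s$ with $\bm s=(s_1,\ldots,s_n)^\top$ gives
\[
 x_n=n^{-1}\bm s^\top\widetilde\mA_n\bm s,
 \qquad
 y_n=n^{-1}\bm s^\top\widetilde\mA_n\mD_{\bm w}\bm s,
 \qquad
 z_n=n^{-1}\bm s^\top\mD_{\bm w}\widetilde\mA_n\mD_{\bm w}\bm s.
\]
Using $s_i^2=1$, the diagonal parts are $n^{-1}\sum_ia_{ii}$, $n^{-1}\sum_ia_{ii}w_i$ and $n^{-1}\sum_ia_{ii}w_i^2$. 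Since $(\mA_n)_{ii}=s_i^2a_{ii}=a_{ii}$, these equal $\kappa_n$, $b_{1,n}$ and $b_{2,n}$. Symmetrizing the off-diagonal parts over $i<j$ gives the displayed degree-two Walsh chaoses.

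By Lemma~\ref{lem:conditional_signs}, conditionally on $\calF_n^0$ the signs are i.i.d.\ Rademacher, and all coefficients $a_{ij}$, $w_i$ are fixed. The monomials $s_is_j$ with $i<j$ have conditional mean zero, and $\E_s(s_is_js_ks_l)=\mathbf 1\{\{i,j\}=\{k,l\}\}$. The mean identity follows at once. Each covariance reduces to $n^{-2}\sum_{i<j}$ of the product of the two coefficients of $s_is_j$. I will also use $(\mA_n)_{ij}^2=s_i^2s_j^2a_{ij}^2=a_{ij}^2$, so that $\psi_{ab,n}=n^{-1}\sum_{i\ne j}a_{ij}^2w_i^aw_j^b$, and convert each sum over $i<j$ of a symmetric summand into one half of the sum over $i\ne j$.

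Carrying this out entry by entry gives the following. Multiplying by $n$, $\Var_s(x_n)$ gives $(4/n)\sum_{i<j}a_{ij}^2=2\psi_{00,n}$. The covariance of $x_n$ and $y_n$ gives $(2/n)\sum_{i<j}a_{ij}^2(w_i+w_j)=2\psi_{01,n}$, using $\psi_{01,n}=\psi_{10,n}$. The covariance of $x_n$ and $z_n$ gives $(4/n)\sum_{i<j}a_{ij}^2w_iw_j=2\psi_{11,n}$. $\Var_s(y_n)$ gives $n^{-1}\sum_{i<j}a_{ij}^2(w_i+w_j)^2=\psi_{02,n}+\psi_{11,n}$. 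The covariance of $y_n$ and $z_n$ gives $(2/n)\sum_{i<j}a_{ij}^2(w_i+w_j)w_iw_j=2\psi_{12,n}$. Finally $\Var_s(z_n)$ gives $2\psi_{22,n}$. These are exactly the entries of \eqref{eq:Gamma_conditional}.

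There is no analytic obstacle here; the argument is pure bookkeeping. The only points needing care are (i) verifying that every coefficient, in particular $\mQ_n$ and hence $\widetilde\mA_n$, is $\calF_n^0$-measurable, so that the conditional moments are legitimate, and (ii) keeping track of the factor $2$ when passing between $\sum_{i<j}$ and $\sum_{i\ne j}$ in the entry $\Var_s(y_n)$, whose coefficient $w_i+w_j$ is not a product.
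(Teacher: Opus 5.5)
Your proposal is correct and follows essentially the same route as the paper: both start from the exact degree-two Rademacher representations \eqref{eq:x_exact_new}--\eqref{eq:z_exact_new}. Both then use that only equal unordered pairs survive in $\E_s(s_is_js_ks_\ell)$, followed by the same entry-by-entry conversion from $\sum_{i<j}$ to $\sum_{i\ne j}$. Your extra step of deriving those identities, and of checking the $\calF_n^0$-measurability of $\mQ_n$ via $\bm Y_i\bm Y_i^\top=\widetilde{\bm Y}_i\widetilde{\bm Y}_i^\top$, makes explicit what the paper asserts just before the lemma.
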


\begin{proof}
For distinct unordered pairs $\{i,j\}\ne\{k,\ell\}$,
\[
 \E_s(s_is_js_ks_\ell)=0,
 \qquad i<j,\quad k<\ell.
\]
Equations \eqref{eq:x_exact_new}--\eqref{eq:z_exact_new} therefore have conditional mean zero after centering, and only equal pairs contribute to their conditional covariances.

For the first variance,
\begin{align*}
 n\Var_s(x_n\mid\calF_n^0)
 &=\frac4n\sum_{i<j}a_{ij}^2
 =\frac2n\sum_{i\ne j}a_{ij}^2
 =2\psi_{00,n}.
\end{align*}
Similarly,
\begin{align*}
 n\Cov_s(x_n,y_n\mid\calF_n^0)
 &=\frac2n\sum_{i<j}a_{ij}^2(w_i+w_j)\\
 &=\frac2n\sum_{i\ne j}a_{ij}^2w_i
 =2\psi_{01,n},\\
 n\Cov_s(x_n,z_n\mid\calF_n^0)
 &=\frac4n\sum_{i<j}a_{ij}^2w_iw_j
 =2\psi_{11,n}.
\end{align*}
The variance of the second component is
\begin{align*}
 n\Var_s(y_n\mid\calF_n^0)
 &=\frac1n\sum_{i<j}a_{ij}^2(w_i+w_j)^2\\
 &=\frac1n\sum_{i\ne j}a_{ij}^2w_i^2
 +\frac1n\sum_{i\ne j}a_{ij}^2w_iw_j\\
 &=\psi_{02,n}+\psi_{11,n},
\end{align*}
where $\psi_{20,n}=\psi_{02,n}$ was used.  The remaining two entries are
\begin{align*}
 n\Cov_s(y_n,z_n\mid\calF_n^0)
 &=\frac2n\sum_{i<j}a_{ij}^2
 (w_i^2w_j+w_iw_j^2)
 =2\psi_{12,n},\\
 n\Var_s(z_n\mid\calF_n^0)
 &=\frac4n\sum_{i<j}a_{ij}^2w_i^2w_j^2
 =2\psi_{22,n}.
\end{align*}
These six expressions are precisely the entries of $\mGamma_n$.
\end{proof}

\begin{theorem}\label{thm:joint_clt}
Under Assumption~1 and $H_0$, conditionally on
$\calF_n^0$,
\begin{equation}\label{eq:joint_conditional_CLT}
 \sqrt n
 \begin{pmatrix}
 x_n-\kappa_n\\
 y_n-b_{1,n}\\
 z_n-b_{2,n}
 \end{pmatrix}
 \dto N(\0,\mGamma_n)
\end{equation}
in probability, in the following triangular-array sense: for every bounded sequence of $\calF_n^0$-measurable vectors $\bm c_n\in\R^3$,
\[
 \sqrt n\,\bm c_n^\top
 \begin{pmatrix}
 x_n-\kappa_n\\y_n-b_{1,n}\\z_n-b_{2,n}
 \end{pmatrix}
 \Big/
 (\bm c_n^\top\mGamma_n\bm c_n)^{1/2}
 \dto N(0,1)
\]
whenever the denominator is bounded away from zero; if the denominator converges to zero, the unstandardized projection converges to zero in probability.
\end{theorem}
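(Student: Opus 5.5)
By the Cramér--Wold device, it suffices to prove the scalar statement for one projection. Fix a bounded $\calF_n^0$-measurable $\bm c_n=(c_{1n},c_{2n},c_{3n})^\top$. By \eqref{eq:x_exact_new}--\eqref{eq:z_exact_new},
\[
 \sqrt n\,\bm c_n^\top\begin{pmatrix}x_n-\kappa_n\\y_n-b_{1,n}\\z_n-b_{2,n}\end{pmatrix}
 =\sum_{i<j}\gamma_{ij}s_is_j,
 \qquad
 \gamma_{ij}=\frac{a_{ij}}{\sqrt n}\{2c_{1n}+c_{2n}(w_i+w_j)+2c_{3n}w_iw_j\}.
\]
This is a homogeneous degree-two Rademacher chaos with $\calF_n^0$-measurable, symmetric, zero-diagonal coefficients. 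By Lemma~\ref{lem:conditional_signs}, the signs are conditionally i.i.d.\ Rademacher. By Lemma~\ref{lem:conditional_covariance_new}, the conditional variance is $\sum_{i<j}\gamma_{ij}^2=\bm c_n^\top\mGamma_n\bm c_n=:\sigma_n^2$.

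If $\sigma_n^2\to0$ in probability, conditional Chebyshev already gives convergence of the unstandardized projection to zero. Otherwise, on the event $\sigma_n^2\ge\underline\sigma>0$, I will apply a quantitative CLT for degenerate quadratic forms in independent Rademacher variables. Concretely, I will use de Jong's fourth-moment theorem, or equivalently the Nourdin--Peccati--Reinert invariance principle. Either gives a Kolmogorov bound of order
\[
 \sigma_n^{-2}\Bigl\{\max_i\sum_j\gamma_{ij}^2\Bigr\}^{1/2}+\sigma_n^{-2}\|\bm\Gamma^{(n)}\|_{\op},
\]
where $\bm\Gamma^{(n)}=(\gamma_{ij})$. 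An alternative is the martingale CLT along the filtration generated by $s_1,\dots,s_k$, with increments $s_k\sum_{i<k}\gamma_{ik}s_i$. There one checks that the conditional variance converges and that a Lyapunov condition holds; both reduce to the same two quantities, namely maximal row influence and $\tr\{(\bm\Gamma^{(n)})^4\}\le\|\bm\Gamma^{(n)}\|_{\op}^2\sum\gamma_{ij}^2$.

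Both quantities are bounded using \eqref{eq:A_contraction_rows} and $w_i\le\bar w$.

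\emph{Row influence.} Since $\sum_j a_{ij}^2\le1$, we get $\max_i\sum_j\gamma_{ij}^2\le C/n$, which is deterministic given the bound on $\bm c_n$.

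\emph{Operator norm.} Write $\bm\Gamma^{(n)}=n^{-1/2}\{2c_{1n}\wt\mA_n+c_{2n}(\mD_w\wt\mA_n+\wt\mA_n\mD_w)+2c_{3n}\mD_w\wt\mA_n\mD_w\}$ minus its diagonal, where $\mD_w=\diag(w_1,\dots,w_n)$. Because $0\preceq\wt\mA_n\preceq\I_n$ and $\|\mD_w\|_{\op}\le\bar w$, the full matrix has operator norm at most $Cn^{-1/2}$. The removed diagonal is bounded by $Cn^{-1/2}$ as well, since $0\le a_{ii}\le1$. Hence $\|\bm\Gamma^{(n)}\|_{\op}=O(n^{-1/2})$ deterministically.

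Combining the two bounds, the conditional Kolmogorov distance is $O(n^{-1/4})$ on the event $\sigma_n^2\ge\underline\sigma$. This yields the stated conditional convergence in probability, and uniformity over bounded $\bm c_n$ follows since the bound depends on $\bm c_n$ only through its norm.

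The main point needing care is the choice of Rademacher (non-Gaussian) CLT. Its hypotheses must match the degree-two homogeneous chaos with random but $\calF_n^0$-measurable coefficients. Because every bound above is deterministic given $\calF_n^0$, the conditional application is legitimate, and no deterministic-equivalent input is required at this stage.
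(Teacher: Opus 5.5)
Your proposal is correct and follows essentially the same route as the paper: you write the projection as a homogeneous degree-two Rademacher chaos with $\calF_n^0$-measurable coefficients, bound the maximal row influence by $C/n$ and the operator norm by $Cn^{-1/2}$ using $0\preceq\wt\mA_n\preceq\I_n$ and $w_i\le\bar w$, apply de Jong's theorem where the conditional variance exceeds a threshold, and use conditional Chebyshev on the complement. One caveat: de Jong's Theorem~2.1 is qualitative (the paper verifies its fourth-moment condition by a graph expansion and obtains uniformity by a contradiction argument), so your explicit $O(n^{-1/4})$ Kolmogorov rate must come from the invariance-principle route and may carry a worse exponent, which is harmless since any vanishing rate suffices.
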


\begin{proof}
For a bounded $\bm c_n=(c_{1,n},c_{2,n},c_{3,n})^\top$, define
\[
 h_{ij,n}=2c_{1,n}+c_{2,n}(w_i+w_j)+2c_{3,n}w_iw_j.
\]
Equations \eqref{eq:x_exact_new}--\eqref{eq:z_exact_new} give
\begin{equation}\label{eq:CW_Rademacher_form}
 \sqrt n\,\bm c_n^\top
 \begin{pmatrix}
 x_n-\kappa_n\\y_n-b_{1,n}\\z_n-b_{2,n}
 \end{pmatrix}
 =\frac1{\sqrt n}\sum_{i<j}a_{ij}h_{ij,n}s_is_j.
\end{equation}
Because $0<w_i\le\bar w$ and $\|\bm c_n\|\le C$,
$|h_{ij,n}|\le C$.  Let $\mC_n$ be the symmetric matrix with zero diagonal and off-diagonal entries
\[
 (\mC_n)_{ij}=\frac{a_{ij}h_{ij,n}}{2\sqrt n}.
\]
Then the right-hand side of \eqref{eq:CW_Rademacher_form} is
$\bm s^\top\mC_n\bm s$.  The Hadamard multiplier $h_{ij,n}$ has the finite decomposition
\[
 (h_{ij,n}a_{ij})_{ij}
 =2c_{1,n}\widetilde\mA_n
 +c_{2,n}(\mD_w\widetilde\mA_n+
 \widetilde\mA_n\mD_w)
 +2c_{3,n}\mD_w\widetilde\mA_n\mD_w,
\]
up to its diagonal, where $\mD_w=\diag(w_1,\ldots,w_n)$.  Therefore
\begin{equation}\label{eq:C_operator_small}
 \|\mC_n\|_{\op}\le Cn^{-1/2}.
\end{equation}
Furthermore, by \eqref{eq:A_contraction_rows},
\begin{equation*}
 \max_i\sum_{j=1}^n(\mC_n)_{ij}^2
 \le\frac Cn\max_i\sum_j a_{ij}^2
 \le\frac Cn,
\end{equation*}
and
\begin{equation}\label{eq:C_fourth_small}
 \sum_{i,j}(\mC_n)_{ij}^4
 \le\left\{\max_i\sum_j(\mC_n)_{ij}^2\right\}
 \sum_{i,j}(\mC_n)_{ij}^2
 =O_{\Pp}(n^{-1}).
\end{equation}
Put
\[
 V_n=\Var_s(\bm s^\top\mC_n\bm s\mid\calF_n^0)
 =2\tr(\mC_n^2)=\bm c_n^\top\mGamma_n\bm c_n.
\]
To verify the fourth-moment requirement rather than merely citing it,
expand $\E_s(\bm s^\top\mC_n\bm s)^4$.  The terms whose four edges
split into two disjoint pairs give $3V_n^2$.  Every remaining nonzero
term has a connected index graph.  Grouping the connected graphs
according to whether they contain a four-cycle, two edges sharing a
vertex, or a repeated edge gives
\begin{align*}
 \left|\E_s(\bm s^\top\mC_n\bm s)^4-3V_n^2\right|
 &\le C\tr(\mC_n^4)
   +C\sum_i\left\{\sum_j(\mC_n)_{ij}^2\right\}^2\\
 &\quad+C\sum_{i,j}(\mC_n)_{ij}^4\\
 &\le C\norm{\mC_n}_{\op}^2\norm{\mC_n}_{\F}^2
   +C\left\{\max_i\sum_j(\mC_n)_{ij}^2\right\}
     \norm{\mC_n}_{\F}^2\\
 &\quad+C\sum_{i,j}(\mC_n)_{ij}^4
 =O_{\Pp}(n^{-1}).
\end{align*}
The last equality follows from
\eqref{eq:C_operator_small}--\eqref{eq:C_fourth_small} and
$\norm{\mC_n}_{\F}^2=V_n/2=O_{\Pp}(1)$.
Moreover, the maximal conditional influence satisfies
\[
 \max_i\sum_j(\mC_n)_{ij}^2/V_n
 \le Cn^{-1}/V_n.
\]
Fix $\eta>0$.  On $\{V_n\ge\eta\}$,
\[
 \frac{\max_i\sum_j(\mC_n)_{ij}^2}{V_n}\le\frac{C}{n\eta},
 \qquad
 \left|\frac{\E_s(\bm s^\top\mC_n\bm s)^4}{V_n^2}-3\right|
 \le\frac{C}{n\eta^2}.
\]
Theorem~2.1 of \citet{deJong1987} therefore applies uniformly on this
event.  The asserted uniformity follows by contradiction: otherwise one
could select a deterministic sequence of coefficient arrays satisfying
the two displayed bounds for which the Gaussian approximation failed,
contradicting that theorem.  On $\{V_n<\eta\}$, conditional
Cauchy--Schwarz gives
\[
 \E_s|\bm s^\top\mC_n\bm s|\le\sqrt\eta,
 \qquad
 \E|N(0,V_n)|\le\sqrt\eta.
\]
Hence the conditional bounded-Lipschitz distance between the quadratic
form and $N(0,V_n)$ is at most $C\sqrt\eta+o_{\Pp}(1)$.  Letting first
$n\to\infty$ and then $\eta\downarrow0$ proves the conditional Gaussian
approximation without requiring $V_n$ itself to converge.  In
particular, standardization is valid whenever $V_n$ is bounded away from
zero, while the unstandardized projection converges to zero whenever
$V_n\to0$.  This proves every conditional Cram\'er--Wold projection and
therefore \eqref{eq:joint_conditional_CLT}.
\end{proof}

\begin{lemma}\label{lem:conditional_remainder_transfer}
Let $\calG_n$ be any sigma-field and let $a_n>0$ be deterministic.  If
$R_n/a_n\to0$ in probability, then, for every $\varepsilon>0$,
\[
 \Pp\{|R_n|>\varepsilon a_n\mid\calG_n\}\pto0
\]
in probability.  Consequently, a conditionally asymptotically normal
statistic remains so after adding $R_n/a_n$.
\end{lemma}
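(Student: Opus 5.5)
The first claim is a conditional Markov inequality, and the second follows from it by a Slutsky-type perturbation inside the conditional law.

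\emph{Step 1 (conditional probabilities).} Fix $\varepsilon>0$ and put $\pi_n=\Pp\{|R_n|>\varepsilon a_n\mid\calG_n\}$, a $[0,1]$-valued random variable. By the tower property,
\[
 \E\pi_n=\Pp\{|R_n|>\varepsilon a_n\}\to0,
\]
because $R_n/a_n\pto0$. For any $\eta>0$, Markov's inequality then gives $\Pp(\pi_n>\eta)\le\eta^{-1}\E\pi_n\to0$. Hence $\pi_n\pto0$. No structure of $\calG_n$ is used; in particular, it need not be the sign-free sigma-field $\calF_n^0$.

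\emph{Step 2 (perturbation of a conditional limit).} Suppose $S_n$ is conditionally asymptotically normal in the sense used in Theorem~\ref{thm:joint_clt}:
\[
 \sup_x\bigl|\Pp(S_n\le x\mid\calG_n)-\Phi(x)\bigr|\pto0.
\]
(If the normalization is instead by a $\calG_n$-measurable variance bounded away from zero, the same argument applies after rescaling.) For every $x$ and $\varepsilon>0$ we have the sandwich
\begin{align*}
 \Pp(S_n\le x-\varepsilon\mid\calG_n)-\pi_n
 &\le \Pp(S_n+R_n/a_n\le x\mid\calG_n)\\
 &\le \Pp(S_n\le x+\varepsilon\mid\calG_n)+\pi_n .
\end{align*}
Take the supremum over $x$. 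Using the uniform conditional approximation together with the modulus of continuity of $\Phi$, which satisfies $\sup_x|\Phi(x\pm\varepsilon)-\Phi(x)|\le\varepsilon/\sqrt{2\pi}$, we get
\[
 \sup_x\bigl|\Pp(S_n+R_n/a_n\le x\mid\calG_n)-\Phi(x)\bigr|\le o_{\Pp}(1)+\varepsilon/\sqrt{2\pi}+\pi_n .
\]
By Step 1, $\pi_n=o_{\Pp}(1)$. Since $\varepsilon$ is arbitrary, the left side is $o_{\Pp}(1)$, which is exactly the claimed conclusion.

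The only point that needs care is the order of quantifiers. $\varepsilon$ is fixed first and $n\to\infty$ is taken second. For fixed $\varepsilon$, the event that the left side exceeds $2\varepsilon$ has probability tending to zero, which is precisely convergence in probability. This is also the step that matters downstream: it lets the $O_{\Pp}(\log p)$ remainder of Proposition~\ref{prop:scalar} be divided by $a_n=\sqrt n$ and absorbed into the conditional CLT of Theorem~\ref{thm:joint_clt}.
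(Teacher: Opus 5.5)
Your proof is correct and follows the paper's argument. Step~1 is exactly the paper's tower-property-plus-Markov bound on $\Pp\{|R_n|>\varepsilon a_n\mid\calG_n\}$, and Step~2 is the same Slutsky-type perturbation. The only difference is cosmetic: the paper bounds $|\E\{f(X_n+R_n/a_n)-f(X_n)\mid\calG_n\}|$ for bounded $1$-Lipschitz $f$ and recovers the supremum in \eqref{eq:oracle_null_main} afterwards by a P\'olya argument, whereas your CDF sandwich perturbs the conditional Kolmogorov distance directly. The two routes are equivalent here because the Gaussian limit is continuous.
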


\begin{proof}
Put
$P_n=\Pp\{|R_n|>\varepsilon a_n\mid\calG_n\}$.  Then $0\le P_n\le1$
and
\[
 \E P_n=\Pp(|R_n|>\varepsilon a_n)\to0.
\]
For every $\delta>0$, Markov's inequality gives
$\Pp(P_n>\delta)\le\E P_n/\delta\to0$.  The final assertion follows
from the conditional bounded-Lipschitz inequality
\[
 \left|\E\{f(X_n+R_n/a_n)-f(X_n)\mid\calG_n\}\right|
 \le\varepsilon+2\|f\|_\infty
 \Pp\{|R_n|>\varepsilon a_n\mid\calG_n\}
\]
for every function $f$ with Lipschitz constant at most one, followed by
$\varepsilon\downarrow0$.
\end{proof}

\begin{proof}[Proof of Theorem~2.1]
Let
\[
 \bm q_n=(x_n,y_n,z_n)^\top,
 \qquad
 \bm q_{0,n}=(\kappa_n,b_{1,n},b_{2,n})^\top.
\]
For fixed $e_n,t_n$, define
\[
 \varphi_n(x,y,z)=
 \frac{x}{e_n^2+t_nx-2e_ny+y^2-xz}.
\]
At $\bm q_{0,n}$, the denominator equals
\begin{align*}
 &e_n^2+t_n\kappa_n-2e_nb_{1,n}+b_{1,n}^2-\kappa_nb_{2,n}\\
 &\qquad=(e_n-b_{1,n})^2+\kappa_n(t_n-b_{2,n})=D_n,
\end{align*}
and
\begin{equation*}
 \nabla\varphi_n(\bm q_{0,n})
 =D_n^{-2}
 \begin{pmatrix}
 (e_n-b_{1,n})^2\\
 2\kappa_n(e_n-b_{1,n})\\
 \kappa_n^2
 \end{pmatrix}
 =\bg_n.
\end{equation*}
Lemma~\ref{lem:conditional_covariance_new} and
\eqref{eq:A_contraction_rows} give
\begin{align*}
 \E_s\{n\|\bm q_n-\bm q_{0,n}\|^2\mid\calF_n^0\}
 &=\tr(\mGamma_n)\le C,\\
 \|\bm q_n-\bm q_{0,n}\|
 &=O_{\Pp}(n^{-1/2}).
\end{align*}
Let
\[
 \bm q_n(t)=\bm q_{0,n}+t(\bm q_n-\bm q_{0,n}),
 \qquad 0\le t\le1.
\]
On the event in \eqref{eq:nondegenerate_DE}, the five arguments are
bounded and $D_n\ge C_1$.  Since $L(\cdot,e_n,t_n)$ is a polynomial,
\begin{align*}
 \sup_{0\le t\le1}
 |L(\bm q_n(t),e_n,t_n)-D_n|
 &\le C\|\bm q_n-\bm q_{0,n}\|
   +C\|\bm q_n-\bm q_{0,n}\|^2\\
 &=O_{\Pp}(n^{-1/2}).
\end{align*}
Consequently,
\[
 \Pp\left\{
 \inf_{0\le t\le1}L(\bm q_n(t),e_n,t_n)\ge C_1/2
 \right\}\to1.
\]
Every second derivative of $\varphi_n$ is therefore uniformly bounded
along the complete line segment.  Taylor's theorem now yields
\begin{align*}
 \varphi_n(\bm q_n)-\varphi_n(\bm q_{0,n})
 &=\bg_n^\top(\bm q_n-\bm q_{0,n})+R_{\varphi,n},\\
 |R_{\varphi,n}|
 &\le C\|\bm q_n-\bm q_{0,n}\|^2,
 \qquad
 \sqrt n R_{\varphi,n}=O_{\Pp}(n^{-1/2}).
\end{align*}
By definition,
$\varphi_n(\bm q_{0,n})=\kappa_n/D_n=\mu_n$ and
$\bg_n^\top\mGamma_n\bg_n=\sigma_{D,n}^2$.  Theorem~\ref{thm:joint_clt}, applied with the $\calF_n^0$-measurable vector $\bg_n$, together with the preceding Taylor bound, yields
\begin{equation}\label{eq:f_conditional_CLT}
 \frac{\sqrt n\{\varphi_n(\bm q_n)-\mu_n\}}
 {\sigma_{D,n}}\dto N(0,1)
\end{equation}
conditionally in probability.

Proposition~\ref{prop:scalar} gives
\[
 T_n(\rho)=n\varphi_n(\bm q_n)+R_{T,n},
 \qquad R_{T,n}=O_{\Pp}(1+\log p).
\]
Since $1+\log p=o(\sqrt n)$ and $\sigma_{D,n}^2$ is bounded away from
zero, $R_{T,n}/\sqrt{n\sigma_{D,n}^2}\to0$ in probability.
Lemma~\ref{lem:conditional_remainder_transfer}, with
$\calG_n=\calF_n^0$, upgrades this unconditional rate to conditional
negligibility.  Conditional Slutsky's theorem and
\eqref{eq:f_conditional_CLT} therefore prove pointwise convergence of
the conditional distribution functions.  Since the limiting Gaussian
distribution is continuous, the conditional P\'olya argument upgrades
pointwise convergence to the supremum in (2.6).
The unconditional convergence follows by taking expectations of bounded
continuous test functions.  The order and non-degeneracy assertions follow from
\eqref{eq:nondegenerate_DE}.
\end{proof}

\section{Feasible estimators and proof of the feasible null law}\label{app:null_proofs}

Let $\bdelta_n=\wh\btheta-\btheta_{0,p}$.  Theorem~\ref{thm:median} gives
$\|\bdelta_n\|=O_{\Pp}(1)$.  Since
$\bm X_i-\btheta_{0,p}=\sqrt p\,w_i^{-1}\bm U_i$ under $H_0$,
\begin{equation}\label{eq:what_identity}
 \wh w_i
 =w_i\left\{1-\frac{2w_i}{\sqrt p}
 \bm U_i^\top\bdelta_n
 +\frac{w_i^2}{p}\|\bdelta_n\|^2\right\}^{-1/2}.
\end{equation}

\begin{lemma}\label{lem:weight_estimation}
Under Assumption~1 and $H_0$,
\begin{align}
 \max_{i\le n}|\wh w_i-w_i|&=O_{\Pp}(p^{-1/2}),
 \label{eq:w_max_est_rate}\\
 \frac1n\sum_{i=1}^n(\wh w_i^k-w_i^k)&=O_{\Pp}(n^{-1}),
 \qquad k=1,2.
 \label{eq:w_average_est_rate}
\end{align}
\end{lemma}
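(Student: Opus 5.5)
Both bounds will come from a Taylor expansion of the identity \eqref{eq:what_identity}. Write $\epsilon_i=-2w_ip^{-1/2}\bm U_i^\top\bdelta_n+w_i^2p^{-1}\|\bdelta_n\|^2$, so that $\wh w_i=w_i(1+\epsilon_i)^{-1/2}$. By Lemma~\ref{lem:angular_denominator} we have $w_i\le\bar w$, and Theorem~\ref{thm:median} gives $\|\bdelta_n\|=O_{\Pp}(1)$. Since $\|\bm U_i\|=1$, it follows that $\max_i|\epsilon_i|\le 2\bar w p^{-1/2}\|\bdelta_n\|+\bar w^2p^{-1}\|\bdelta_n\|^2=O_{\Pp}(p^{-1/2})$. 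On the event $\max_i|\epsilon_i|\le1/2$, which has probability tending to one, the map $x\mapsto(1+x)^{-k/2}$ is smooth. This immediately yields \eqref{eq:w_max_est_rate}.

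For \eqref{eq:w_average_est_rate}, expand to second order:
\[
 \wh w_i^k-w_i^k=w_i^k\Big\{-\tfrac k2\epsilon_i+O(\epsilon_i^2)\Big\}
 =k\,w_i^{k+1}p^{-1/2}\bm U_i^\top\bdelta_n+O\big(p^{-1}\|\bdelta_n\|^2+p^{-1}(\bm U_i^\top\bdelta_n)^2\big).
\]
The quadratic remainders already average to $O_{\Pp}(p^{-1})=O_{\Pp}(n^{-1})$, because $\|\bdelta_n\|^2=O_{\Pp}(1)$. The remaining linear term is $kp^{-1/2}\bm m_{k,n}^\top\bdelta_n$ with $\bm m_{k,n}=n^{-1}\sum_iw_i^{k+1}\bm U_i$. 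A crude Cauchy--Schwarz bound gives only $O(p^{-1/2})$, so the real content is to show that $\bm m_{k,n}^\top\bdelta_n=O_{\Pp}(n^{-1/2})$.

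I will handle this term in three steps.

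\begin{enumerate}
\item Replace $\bdelta_n$ by its linearization $\bm d_{L,n}=\mJ_n^{-1}\overline{\bm Y}_n$. Lemma~\ref{lem:score_remainder} and Theorem~\ref{thm:median} give $\|\bdelta_n-\bm d_{L,n}\|=O_{\Pp}(p^{-1})$. Since $\|\bm m_{k,n}\|\le\bar w^{k+1}$, the replacement costs $p^{-1/2}O_{\Pp}(p^{-1})$, which is negligible.

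\item Pass to the Rademacher representation of Lemma~\ref{lem:conditional_signs}. Conditional on $\calF_n^0$, $\mJ_n$ and $w_i$ are fixed, and $\bm m_{k,n}=\sum_i s_i\bm\beta_i$ with $\bm\beta_i=n^{-1}w_i^{k+1}\widetilde{\bm U}_i$, while $\bm d_{L,n}=\sum_js_j\bm a_{j,n}$ as in \eqref{eq:dL_Rademacher}. The product $\bm m_{k,n}^\top\bm d_{L,n}$ is then a degree-two Walsh polynomial. Its diagonal part is
\[
 \sum_i\bm\beta_i^\top\bm a_{i,n}=n^{-2}\sum_iw_i^{k+1}\widetilde{\bm U}_i^\top\mJ_n^{-1}\widetilde{\bm Y}_i,
\]
which is bounded by $Cn^{-1}\sqrt p\,\|\mJ_n^{-1}\|_{\op}=O_{\Pp}(n^{-1/2})$. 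Its off-diagonal chaos has conditional second moment at most $\sum_{i,j}(\bm\beta_i^\top\bm a_{j,n})^2$.

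\item Bound that second moment. Write
\[
 \sum_{i,j}(\bm\beta_i^\top\bm a_{j,n})^2=\tr\Big(\sum_i\bm\beta_i\bm\beta_i^\top\,\mS_{a,n}\Big)\le\Big\|\sum_i\bm\beta_i\bm\beta_i^\top\Big\|_{\op}\tr(\mS_{a,n}).
\]
By \eqref{eq:a_square_score}, $\tr(\mS_{a,n})=O_{\Pp}(1)$. Moreover,
\[
 \Big\|\sum_i\bm\beta_i\bm\beta_i^\top\Big\|_{\op}=n^{-1}\Big\|n^{-1}\sum_iw_i^{2k+2}\bm U_i\bm U_i^\top\Big\|_{\op}\le C n^{-1}.
\]
This uses Lemma~\ref{lem:projection_moments}: the factor block is bounded by $\bar w^{2k+2}$, the bulk block by \eqref{eq:bulk_sample_cov_bound}, and the cross block by \eqref{eq:factor_bulk_sample_bound}. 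Conditional Chebyshev then gives $O_{\Pp}(n^{-1/2})$ for the off-diagonal chaos.
\end{enumerate}

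Multiplying by $p^{-1/2}$, the linear term is $O_{\Pp}(n^{-1})$, which proves \eqref{eq:w_average_est_rate}.

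I expect the main obstacle to be the factor direction, where $\mJ_n^{-1}$ is not close to a scalar multiple of the identity. The factor part of $\bm m_{k,n}$ need not be small termwise, but the bound above depends only on $\|\mJ_n^{-1}\|_{\op}$ and $\tr(\mS_{a,n})$, so it should survive. I will also check that $\sup_i|\bm U_i^\top\bdelta_n|\le\|\bdelta_n\|$ is enough to make the second-order Taylor terms uniform, which it is.
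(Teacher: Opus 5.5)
Your proof is correct, but it reaches the linear term by a longer route than the paper. Both arguments Taylor-expand \eqref{eq:what_identity}; the paper goes to third order with remainder $O(p^{-3/2})$, you go to second order, and either works.

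The difference is in how you show that $p^{-1/2}\bm m_{k,n}^\top\bdelta_n=O_{\Pp}(n^{-1})$. You called Cauchy--Schwarz crude only because you used the triangle-inequality bound $\|\bm m_{k,n}\|\le\bar w^{k+1}$. The paper instead notes that $\bm m_{k,n}=n^{-1}\sum_i w_i^{k+1}\bm U_i$ is an average of independent, bounded, mean-zero vectors, since $w_i$ is even and $\bm U_i$ is odd in $\bm G_i$. Hence $\E\|\bm m_{k,n}\|^2=n^{-1}\E w_i^{2k+2}$ and $\|\bm m_{k,n}\|=O_{\Pp}(n^{-1/2})$. Cauchy--Schwarz with $\|\bdelta_n\|=O_{\Pp}(1)$ then gives $O_{\Pp}\{(np)^{-1/2}\}=O_{\Pp}(n^{-1})$ at once. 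Dependence between $\bm m_{k,n}$ and $\bdelta_n$ is irrelevant for that bound.

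The paper's route therefore needs only tightness of $\bdelta_n$ and a one-line second-moment computation. It does not need the linearization $\bm d_{L,n}$ from Lemma~\ref{lem:score_remainder}, the sign representation, or Lemma~\ref{lem:projection_moments}. Even inside your Step~3, the trivial bound $\|n^{-1}\sum_i w_i^{2k+2}\bm U_i\bm U_i^\top\|_{\op}\le\bar w^{2k+2}$ suffices. Your chaos bound $\|\sum_i\bm\beta_i\bm\beta_i^\top\|_{\op}\tr(\mS_{a,n})$ is in effect recomputing $\E_s\|\bm m_{k,n}\|^2=\sum_i\|\bm\beta_i\|^2$ in disguise.

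What your decomposition does buy is an explicit conditional mean for the linear term, $k\,n^{-2}\sum_i w_i^{k+1}\bm U_i^\top\mJ_n^{-1}\bm U_i$. This is a genuine, strictly positive contribution of exact order $n^{-1}$. It would matter if you wanted the precise $n^{-1}$-order bias of $\wh e_n$ or $\wh t_n$, together with the quadratic Taylor terms, but it is not needed for the stated rate.

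One small slip: the conditional second moment of $\sum_{i\ne j}s_is_jc_{ij}$ is $\sum_{i\ne j}c_{ij}(c_{ij}+c_{ji})\le2\sum_{i\ne j}c_{ij}^2$, not $\sum_{i,j}c_{ij}^2$. The lost factor of two is harmless.
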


\begin{proof}
On the event $\|\bdelta_n\|\le\sqrt p/(4\bar w)$, whose probability tends to one, Taylor's formula applied to \eqref{eq:what_identity} gives
\begin{align}
 \wh w_i
 ={}&w_i+\frac{w_i^2}{\sqrt p}\bm U_i^\top\bdelta_n
 +\frac{w_i^3}{2p}
 \{3(\bm U_i^\top\bdelta_n)^2-\|\bdelta_n\|^2\}
 +\mathcal R_{i,n},
 \label{eq:what_Taylor_ell}\\
 |\mathcal R_{i,n}|
 &\le C\frac{w_i^4}{p^{3/2}}\|\bdelta_n\|^3.
 \notag
\end{align}
Because $w_i\le\bar w$, $\|\bm U_i\|=1$, and
$\|\bdelta_n\|=O_{\Pp}(1)$, the maximum of the first correction is
$O_{\Pp}(p^{-1/2})$ and the remaining corrections are smaller.  This proves \eqref{eq:w_max_est_rate}.

For every fixed integer $a\ge1$, central symmetry gives
$\E(w_i^a\bm U_i)=\0$.  Therefore
\begin{equation}\label{eq:weighted_U_average}
 \E\left\|\frac1n\sum_{i=1}^nw_i^a\bm U_i\right\|^2
 =\frac1n\E(w_i^{2a})\le\frac{\bar w^{2a}}n,
\end{equation}
so the vector average is $O_{\Pp}(n^{-1/2})$.  Averaging
\eqref{eq:what_Taylor_ell} and using
\eqref{eq:weighted_U_average} yields
\begin{align*}
 \left|\frac1{n\sqrt p}\sum_iw_i^2
 \bm U_i^\top\bdelta_n\right|
 &\le\frac{\|\bdelta_n\|}{\sqrt p}
 \left\|\frac1n\sum_iw_i^2\bm U_i\right\|
 =O_{\Pp}\{(np)^{-1/2}\}=O_{\Pp}(n^{-1}),\\
 \frac1{np}\sum_iw_i^3
 \{3(\bm U_i^\top\bdelta_n)^2+\|\bdelta_n\|^2\}
 &\le\frac{4\bar w^3}{p}\|\bdelta_n\|^2
 =O_{\Pp}(n^{-1}),\\
 \frac1n\sum_i|\mathcal R_{i,n}|
 &\le Cp^{-3/2}\|\bdelta_n\|^3
 =O_{\Pp}(p^{-3/2}).
\end{align*}
This proves \eqref{eq:w_average_est_rate} for $k=1$.  Squaring
\eqref{eq:what_Taylor_ell} gives
\[
 \wh w_i^2
 =w_i^2+\frac{2w_i^3}{\sqrt p}\bm U_i^\top\bdelta_n
 +\frac{w_i^4}{p}
 \{4(\bm U_i^\top\bdelta_n)^2-\|\bdelta_n\|^2\}
 +\mathcal R_{i,n}^{(2)},
\]
where $n^{-1}\sum_i|\mathcal R_{i,n}^{(2)}|=O_{\Pp}(p^{-3/2})$.
The same three bounds, with $w_i^2$ replaced by $w_i^3$ or $w_i^4$, prove the result for $k=2$.
\end{proof}

Let
\[
 \mK_n=n^{-1}\mY^\top\mY,
 \qquad
 \wh\mK_n=n^{-1}\wh\mY^\top\wh\mY.
\]
Then
\begin{equation*}
 \mA_n=\mK_n(\mK_n+\rho\I_n)^{-1},
 \qquad
 \wh\mA_n=\wh\mK_n(\wh\mK_n+\rho\I_n)^{-1}.
\end{equation*}

\begin{lemma}\label{lem:weighted_companion_perturbation}
Under Assumption~1 and $H_0$, for every fixed
$\rho\in[\rho_0,\rho_1]$,
\begin{align}
 \|\wh\mA_n-\mA_n\|_{\F}&=O_{\Pp}(1),
 \label{eq:A_F_rate_new}\\
 \left|\tr\{\mD_w^k(\wh\mA_n-\mA_n)\}\right|
 &=O_{\Pp}(1),
 \qquad k=0,1,2,
 \label{eq:A_weighted_trace_rate_new}
\end{align}
where $\mD_w=\diag(w_1,\ldots,w_n)$.  Consequently,
\begin{align}
 \wh\kappa_n-\kappa_n
 &=O_{\Pp}(n^{-1}),
 \label{eq:kappa_hat_rate_new}\\
 \wh b_{a,n}-b_{a,n}
 &=O_{\Pp}(n^{-1}),
 \qquad a=1,2,
 \label{eq:b_hat_rate_new}\\
 \max_{a,b\in\{0,1,2\}}
 |\wh\psi_{ab,n}-\psi_{ab,n}|
 &=O_{\Pp}(n^{-1/2}).
 \label{eq:psi_hat_rate_new}
\end{align}
\end{lemma}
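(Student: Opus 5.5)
The plan is to reduce everything to the perturbation $\wh\mK_n-\mK_n$ through the companion resolvent identity already used in the proof of Proposition~\ref{prop:weighted_DE}:
\[
 \wh\mA_n-\mA_n=\rho(\mK_n+\rho\I_n)^{-1}(\wh\mK_n-\mK_n)(\wh\mK_n+\rho\I_n)^{-1}.
\]
By \eqref{eq:Yhat_Taylor}, write $\wh\mY=\mY+\bDelta_n$. The $i$th column of $\bDelta_n$ is $-w_i\bdelta_n+w_i\bm U_i(\bm U_i^\top\bdelta_n)$ plus Hessian and remainder terms of size $O_{\Pp}(p^{-1/2})$ per column. Thus
\[
 \bDelta_n=-\bdelta_n\bm w^\top+\mU_n\mD_{\bm w\odot\bm\alpha_n}+\bDelta_n^{\rm r},\qquad \alpha_i=\bm U_i^\top\bdelta_n,
\]
with $\|\bDelta_n^{\rm r}\|_{\F}=O_{\Pp}(1)$. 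Then
\[
 n(\wh\mK_n-\mK_n)=\mY^\top\bDelta_n+\bDelta_n^\top\mY+\bDelta_n^\top\bDelta_n .
\]
The key device is that a resolvent next to $\mY^\top$ converts into $\mQ_n$, since $(\mK_n+\rho\I)^{-1}\mY^\top=\mY^\top\mQ_n$. Also $\|\mY^\top\mQ_n\bm v\|^2=n\bm v^\top\mQ_n\mB_n\mQ_n\bm v\le n\rho_0^{-1}\|\bm v\|^2$ and $\|\mY^\top\mQ_n\|_{\op}=O_{\Pp}(\sqrt n)$. The right-hand resolvent $(\wh\mK_n+\rho\I)^{-1}$ is treated the same way with $\wh\mY$ and $\wh\mQ_n$, using $\|\wh\mQ_n\|_{\op}\le\rho_0^{-1}$. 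The $\bDelta_n^\top\mY$ term is handled by transposing the argument.

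First I will prove \eqref{eq:A_F_rate_new} term by term. The rank-one piece contributes $n^{-1}\rho\,\mY^\top\mQ_n\bdelta_n\bm w^\top(\cdot)$. Its Frobenius norm is at most $n^{-1}\|\mY^\top\mQ_n\bdelta_n\|\|\bm w\|=O_{\Pp}(n^{-1}\sqrt n\sqrt n)=O_{\Pp}(1)$, using $\|\bdelta_n\|=O_{\Pp}(1)$ from Theorem~\ref{thm:median}. The full-rank piece satisfies $n^{-1}\|\mY^\top\mQ_n\|_{\op}\|\mU_n\mD_{\bm w\odot\bm\alpha_n}\|_{\F}=O_{\Pp}(n^{-1}\sqrt n\sqrt n)$, because $\sum_iw_i^2\alpha_i^2\le\bar w^2n\|\bdelta_n\|^2$. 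The remainder piece is bounded in the same way. The quadratic term $n^{-1}\bDelta_n^\top\bDelta_n$ has Frobenius norm at most $n^{-1}\|\bDelta_n\|_{\op}\|\bDelta_n\|_{\F}$. Here $\|\bDelta_n\|_{\op}=O_{\Pp}(\sqrt n)$ (this comes only from the rank-one part, the others being $O_{\Pp}(1)$ by \eqref{eq:bulk_sample_cov_bound}), and $\|\bDelta_n\|_{\F}=O_{\Pp}(\sqrt n)$. The rank-one self-product $\|\bdelta_n\|^2\bm w\bm w^\top/n$ is $O_{\Pp}(1)$ in nuclear norm, and the cross products are bounded by operator times Frobenius norms, giving $O_{\Pp}(1)$.

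The weighted traces \eqref{eq:A_weighted_trace_rate_new} are the main obstacle. The Frobenius bound only gives $|\tr\{\mD_w^k(\wh\mA_n-\mA_n)\}|\le\sqrt n\,\bar w^k\|\wh\mA_n-\mA_n\|_{\F}=O_{\Pp}(\sqrt n)$. For the finite-rank pieces (rank-one $\bdelta_n\bm w^\top$, the self-product, and the Woodbury-type corrections when replacing $\wh\mQ_n$ by $\mQ_n$) the nuclear norm is at most the rank times the operator norm, which is $O_{\Pp}(1)$. The delicate part is the full-rank term
\[
 n^{-1}\rho\tr\{\mD_w^k\mY^\top\mQ_n\mU_n\mD_{\bm w\odot\bm\alpha_n}\}.
\]
Here I would first replace $\bdelta_n$ by $\bm d_{L,n}=\sum_js_j\bm a_{j,n}$, paying $O_{\Pp}(p^{-1})$ per $\alpha_i$. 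I would then use the sign representation of Lemma~\ref{lem:conditional_signs}. Since $\alpha_i=\sum_js_js_i\wt{\bm U}_i^\top\bm a_{j,n}$ and $\bm Y_i^\top\mQ_n\bm U_i$ is $\calF_n^0$-measurable, the trace becomes a Walsh polynomial of degree at most two. Its degree-zero part comes from $j=i$: it is $n^{-1}\sum_iw_i^{k+1}(\bm Y_i^\top\mQ_n\bm U_i)(\wt{\bm U}_i^\top\bm a_{i,n})$, with $|\bm Y_i^\top\mQ_n\bm U_i|\le C\sqrt p$ and $\|\bm a_{i,n}\|\le Cp^{-1/2}$, so it is $O(1)$. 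The degree-two part has coefficient energy controlled as in the proof of Lemma~\ref{lem:score_remainder}, using $\sum_j\bm a_{j,n}\bm a_{j,n}^\top=n^{-1}\mJ_n^{-1}\mB_n\mJ_n^{-1}$, and should be $O_{\Pp}(1)$ by conditional Chebyshev. I expect this coefficient-energy bound, together with the swap of $(\wh\mK_n+\rho\I)^{-1}$ for $(\mK_n+\rho\I)^{-1}$ via one more resolvent identity using the Frobenius bound just proved, to be where the work lies.

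Finally I deduce the rates. The bound \eqref{eq:kappa_hat_rate_new} is the case $k=0$ divided by $n$. For \eqref{eq:b_hat_rate_new}, split
\[
 \wh b_{a,n}-b_{a,n}=n^{-1}\tr\{\mD_w^a(\wh\mA_n-\mA_n)\}+n^{-1}\sum_i(\wh\mA_n)_{ii}(\wh w_i^a-w_i^a).
\]
The first term is $O_{\Pp}(n^{-1})$ by \eqref{eq:A_weighted_trace_rate_new}. For the second, replace $(\wh\mA_n)_{ii}$ by $(\mA_n)_{ii}$, at cost $\|\wh\mA_n-\mA_n\|_{\F}n^{-1/2}\max_i|\wh w_i-w_i|=O_{\Pp}(n^{-1})$ using \eqref{eq:w_max_est_rate}. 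Then rerun the proof of \eqref{eq:w_average_est_rate} with the $\calF_n^0$-measurable bounded weights $(\mA_n)_{ii}$. The vector average $n^{-1}\sum_i(\mA_n)_{ii}w_i^{a+1}\bm U_i$ is a degree-one Walsh sum of norm $O_{\Pp}(n^{-1/2})$, so the same argument gives $O_{\Pp}(n^{-1})$. For \eqref{eq:psi_hat_rate_new}, bound
\[
 n^{-1}\Big|\sum_{i\ne j}\{(\wh\mA_n)_{ij}^2-(\mA_n)_{ij}^2\}w_i^aw_j^b\Big|\le Cn^{-1}\|\wh\mA_n-\mA_n\|_{\F}\bigl(\|\wh\mA_n\|_{\F}+\|\mA_n\|_{\F}\bigr)=O_{\Pp}(n^{-1/2}).
\]
The change from $w$ to $\wh w$ costs $O_{\Pp}(p^{-1/2})$ by \eqref{eq:w_max_est_rate} and $n^{-1}\|\wh\mA_n\|_{\F}^2\le1$.
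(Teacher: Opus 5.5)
Your route differs from the paper's, and most of it goes through: the Frobenius bound, the rank-one and quadratic pieces, the Walsh computation, and the deduction of the three rates. Your Cauchy--Schwarz treatment of $n^{-1}\sum_i(\mA_n)_{ii}(\wh w_i^a-w_i^a)$, with the $\calF_n^0$-measurable weights $(\mA_n)_{ii}$, is in fact simpler than the paper's.

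The gap is the $\bDelta_n^\top\mY$ half of the weighted trace, which is exactly where the pervasive spikes enter. Write $\mR_{K,n}=(\mK_n+\rho\I_n)^{-1}$ and $\wh\mR_{K,n}=(\wh\mK_n+\rho\I_n)^{-1}$.
\begin{itemize}
\item Transposing inside the trace turns $\tr\{\mD_w^k\mR_{K,n}\bDelta_n^\top\mY\wh\mR_{K,n}\}$ into $\tr\{\mD_w^k\wh\mR_{K,n}\mY^\top\bDelta_n\mR_{K,n}\}$. The \emph{estimated} resolvent now sits next to $\mY^\top$, so the identity $\mR_{K,n}\mY^\top=\mY^\top\mQ_n$ is unavailable.
\item Swapping $\wh\mR_{K,n}$ for $\mR_{K,n}$ through $\norm{\wh\mA_n-\mA_n}_{\F}=O_{\Pp}(1)$ costs $n^{-1}\norm{\bDelta_n^\top\mY}_{\F}\norm{\wh\mA_n-\mA_n}_{\F}$, and this is not $O_{\Pp}(1)$. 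Since $\mPi_{F,p}\bdelta_n$ is of order one, $\bdelta_n^\top\mB_n\bdelta_n\asymp p$. So already the rank-one part has $\norm{\bm w\bdelta_n^\top\mY}_{\F}=\norm{\bm w}\norm{\mY^\top\bdelta_n}\asymp n^{3/2}$, and you only get $O_{\Pp}(\sqrt n)$.
\item Your fallback, that replacing $\wh\mQ_n$ by $\mQ_n$ only produces finite-rank Woodbury corrections, is not correct. By Theorem~\ref{thm:finite_rank}, $\wh\mR_n-\mB_n$ contains the full-rank bulk Taylor remainder $\mE_n$. Even a genuinely finite-rank correction multiplied by $\wh\mY$ would bring in $\norm{\wh\mY}_{\op}\asymp n$, so ``rank times operator norm'' would not give $O(1)$.
\end{itemize}
The paper sidesteps this differently. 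It collects the center shift and the factor part of the residuals into a rank-$(2r+2)$ Gram perturbation, whose effect on the weighted trace is bounded by a constant times its rank because both companion matrices are contractions. It then expands only in the bulk part $\mE_{K,n}$, which satisfies $\norm{\mE_{K,n}}_{\op}=O_{\Pp}(p^{-1/2})$.

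The missing ingredient is a spike-robust comparison of $\wh\mQ_n\wh\mY$ with $\mQ_n\mY$, and it follows from tools you already use. The exact identity $n(\wh\mK_n-\mK_n)=\mY^\top\bDelta_n+\bDelta_n^\top\wh\mY$ gives
\[
 \wh\mA_n-\mA_n=\frac\rho n\bigl\{\mY^\top\mQ_n\bDelta_n\wh\mR_{K,n}
 +\mR_{K,n}\bDelta_n^\top\wh\mQ_n\wh\mY\bigr\},
 \qquad
 \wh\mQ_n\wh\mY-\mQ_n\mY
 =\rho\mQ_n\bDelta_n\wh\mR_{K,n}
 -\frac1n\mQ_n\mY\bDelta_n^\top\wh\mQ_n\wh\mY .
\]
Use $\norm{\mQ_n\mY}_{\op},\norm{\wh\mQ_n\wh\mY}_{\op}\le(n/4\rho)^{1/2}$ and $\norm{\bDelta_n}_{\F}\le2\bar w\sqrt n\norm{\bdelta_n}$. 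The first identity then gives the Frobenius bound in one line, and the second gives $\norm{\wh\mQ_n\wh\mY-\mQ_n\mY}_{\F}=O_{\Pp}(\sqrt n)$. Replacing $\wh\mR_{K,n}$ by $\mR_{K,n}$ in the first term, and $\wh\mQ_n\wh\mY$ by $\mQ_n\mY$ in the second, costs $O_{\Pp}(1)$ each. Transposition then gives $\tr\{\mD_w^k(\wh\mA_n-\mA_n)\}=2\rho n^{-1}\tr\{\mD_w^k\mY^\top\mQ_n\bDelta_n\mR_{K,n}\}+O_{\Pp}(1)$.

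Your arguments apply to this oracle trace:
\begin{itemize}
\item the rank-one piece directly;
\item $\bDelta_n^{\mathrm r}$ through $\norm{\mY^\top\mQ_n}_{\F}=O(n)$;
\item the Walsh computation for $\mU_n\mD_{\bm w\odot\bm\alpha_n}$, provided you keep the right resolvent $\mR_{K,n}$, which your display drops.
\end{itemize}
With $\mR_{K,n}$ kept, the degree-zero coefficient is $\sum_jw_j\beta_j\wt{\bm U}_j^\top\bm a_{j,n}$, where $\beta_j=\bm e_j^\top\wt\mR_{K,n}\mD_w^k\wt\mY^\top\mQ_n\wt{\bm U}_j$ and $\wt\mR_{K,n}$ is the resolvent of the unsigned Gram matrix. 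Since $|\beta_j|\le C\sqrt n$, this coefficient is $O(n)$ before the factor $\rho n^{-1}$. After that factor, the degree-two coefficient energy is at most $Cn^{-2}\sum_j\beta_j^2\norm{\mS_{a,n}}_{\op}=O_{\Pp}(1)$, with $\mS_{a,n}=\sum_j\bm a_{j,n}\bm a_{j,n}^\top$.

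Completed this way, your proof is shorter than the paper's: it needs neither the finite-rank/bulk split nor the degree-six Walsh bookkeeping for $\mE_{K,n}$. One minor slip: $\mU_n\mD_{\bm w\odot\bm\alpha_n}$ does not have $O_{\Pp}(1)$ operator norm, since its factor block is of order $\sqrt n$. This is harmless, because $\norm{\bDelta_n}_{\op}\le\norm{\bDelta_n}_{\F}$ is all you use there.
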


\begin{proof}
Let
\[
 \bm w=(w_1,\ldots,w_n)^\top,
 \qquad
 \mR_{\Delta,n}=(\bm r_{1,n},\ldots,\bm r_{n,n}),
 \qquad
 \mR_{\Delta,B,n}=\mPi_{B,p}\mR_{\Delta,n}.
\]
Since $\wh\mY=\mY-\bm d_{0,n}\bm w^\top+\mR_{\Delta,n}$, put
\[
 \mD_{\mathrm{fr},n}
 =-\bm d_{0,n}\bm w^\top+\mPi_{F,p}\mR_{\Delta,n},
 \qquad
 \mY_{\mathrm{fr},n}=\mY+\mD_{\mathrm{fr},n},
 \qquad
 \mK_{\mathrm{fr},n}=n^{-1}\mY_{\mathrm{fr},n}^\top
 \mY_{\mathrm{fr},n}.
\]
Then
\[
 \wh\mY=\mY_{\mathrm{fr},n}+\mR_{\Delta,B,n}
\]
and
\begin{equation*}
 \wh\mK_n-\mK_n
 =\{\mK_{\mathrm{fr},n}-\mK_n\}+\mE_{K,n}.
\end{equation*}
where
\begin{equation}\label{eq:EK_exact}
 \mE_{K,n}=\frac1n\left(
 \mY_{\mathrm{fr},n}^\top\mR_{\Delta,B,n}
 +\mR_{\Delta,B,n}^\top\mY_{\mathrm{fr},n}
 +\mR_{\Delta,B,n}^\top\mR_{\Delta,B,n}
 \right).
\end{equation}
Since $\rank(\mD_{\mathrm{fr},n})\le r+1$,
\begin{equation}\label{eq:Gram_finite_rank_exact}
 \rank(\mK_{\mathrm{fr},n}-\mK_n)\le2r+2.
\end{equation}
No operator-norm bound on this finite-rank Gram perturbation is needed;
along a pervasive sample direction its norm need not be $O_{\Pp}(1)$.

The factor part of $\mY_{\mathrm{fr},n}$ is orthogonal to
$\mR_{\Delta,B,n}$, so
\[
 \mY_{\mathrm{fr},n}^\top\mR_{\Delta,B,n}
 =\{\mPi_{B,p}\mY_{\mathrm{fr},n}\}^\top\mR_{\Delta,B,n}.
\]
Furthermore,
\begin{align*}
 \|n^{-1/2}\mPi_{B,p}\mY_{\mathrm{fr},n}\|_{\op}
 &\le\|n^{-1/2}\mPi_{B,p}\mY\|_{\op}
 +\|\mPi_{B,p}\bm d_{0,n}\|\,\|\bm w\|/\sqrt n
 =O_{\Pp}(1),\\
 n^{-1}\|\mR_{\Delta,B,n}\|_{\F}^2
 &\le n^{-1}\sum_i\|\bm r_{i,n}\|^2=O_{\Pp}(1).
\end{align*}
Thus
\begin{equation}\label{eq:EK_F_bound}
 \|\mE_{K,n}\|_{\F}
 \le\frac2n\|\mPi_{B,p}\mY_{\mathrm{fr},n}\|_{\op}
 \|\mR_{\Delta,B,n}\|_{\F}
 +\frac1n\|\mR_{\Delta,B,n}\|_{\F}^2
 =O_{\Pp}(1).
\end{equation}

For the operator norm, let $\mR_{\Delta,n}^{\circ}$ have columns
$\bm r_{i,n}^{\circ}$ from \eqref{eq:r_circ_factor}, and put
$\mR_{\Delta,B,n}^{\circ}=\mPi_{B,p}\mR_{\Delta,n}^{\circ}$.
Equation \eqref{eq:RB_circ_factorization} and
\eqref{eq:RB_circ_op} give
\[
 \norm{\mR_{\Delta,B,n}^{\circ}}_{\op}=O_{\Pp}(1).
\]
Moreover, \eqref{eq:r_exact_circ_difference} gives
\[
 \norm{\mR_{\Delta,B,n}-\mR_{\Delta,B,n}^{\circ}}_{\op}
 \le\norm{\mR_{\Delta,n}-\mR_{\Delta,n}^{\circ}}_{\F}
 =O_{\Pp}(p^{-1/2}),
\]
and therefore
\begin{equation}\label{eq:RDeltaB_op_new}
 \norm{\mR_{\Delta,B,n}}_{\op}=O_{\Pp}(1).
\end{equation}
Substituting the operator bound \eqref{eq:RDeltaB_op_new} into \eqref{eq:EK_exact} yields
\begin{align}
 \norm{\mE_{K,n}}_{\op}
 &\le\frac2n\norm{\mPi_{B,p}\mY_{\mathrm{fr},n}}_{\op}
 \norm{\mR_{\Delta,B,n}}_{\op}
 +\frac1n\norm{\mR_{\Delta,B,n}}_{\op}^2\notag\\
 &=O_{\Pp}(n^{-1/2})=O_{\Pp}(p^{-1/2}).
 \label{eq:EK_op_bound}
\end{align}

Define
\[
 \mA_{\mathrm{fr},n}
 =\mK_{\mathrm{fr},n}(\mK_{\mathrm{fr},n}+\rho\I_n)^{-1}.
\]
Both $\mA_{\mathrm{fr},n}$ and $\mA_n$ are positive-semidefinite
contractions.  Equations \eqref{eq:Gram_finite_rank_exact} and the
resolvent identity imply
\[
 \rank(\mA_{\mathrm{fr},n}-\mA_n)\le2r+2,
 \qquad
 \norm{\mA_{\mathrm{fr},n}-\mA_n}_{\op}\le1,
 \qquad
 \norm{\mA_{\mathrm{fr},n}-\mA_n}_{\F}\le\sqrt{2r+2}.
\]
Moreover,
\[
 \wh\mA_n-\mA_{\mathrm{fr},n}
 =\rho(\mK_{\mathrm{fr},n}+\rho\I_n)^{-1}
 \mE_{K,n}(\wh\mK_n+\rho\I_n)^{-1}.
\]
Thus \eqref{eq:EK_F_bound} gives
\[
 \norm{\wh\mA_n-\mA_n}_{\F}
 \le\sqrt{2r+2}+\rho_1\rho_0^{-2}\norm{\mE_{K,n}}_{\F}
 =O_{\Pp}(1),
\]
which proves \eqref{eq:A_F_rate_new}.

For the weighted trace, put
\begin{align*}
 \mR_{K,n}&=(\mK_n+\rho\I_n)^{-1},
 &\mR_{E,n}&=(\mK_n+\mE_{K,n}+\rho\I_n)^{-1},\\
 \wh\mR_{K,n}&=(\wh\mK_n+\rho\I_n)^{-1}.
\end{align*}
By \eqref{eq:EK_op_bound}, $\mR_{E,n}$ exists with probability tending
to one and all three resolvents have bounded operator norm.  Since
$\wh\mK_n-(\mK_n+\mE_{K,n})=\mK_{\mathrm{fr},n}-\mK_n$ has rank at
most $2r+2$,
\begin{align*}
 &\left|\tr\{\mD_w^k(\wh\mA_n-\mA_n)\}
 -\rho\tr(\mD_w^k\mR_{K,n}\mE_{K,n}\mR_{E,n})\right|\\
 &\qquad=\rho\left|\tr\{\mD_w^k(\mR_{E,n}-\wh\mR_{K,n})\}\right|
 \le C(2r+2)\bar w^k=O_{\Pp}(1).
\end{align*}
The second resolvent identity gives
\begin{align}
 \tr(\mD_w^k\mR_{K,n}\mE_{K,n}\mR_{E,n})
 ={}&\tr(\mD_w^k\mR_{K,n}\mE_{K,n}\mR_{K,n})\notag\\
 &-\tr(\mD_w^k\mR_{K,n}\mE_{K,n}\mR_{K,n}
 \mE_{K,n}\mR_{E,n}).
 \label{eq:weighted_trace_neumann}
\end{align}
The second term is
\[
 O_{\Pp}\{n\norm{\mE_{K,n}}_{\op}^2\}=O_{\Pp}(1).
\]

To handle the first term, define
\[
 \mD_{\mathrm{fr},n}^{\circ}
 =-\bm d_{0,n}\bm w^\top+\mPi_{F,p}\mR_{\Delta,n}^{\circ},
 \qquad
 \mY_{\mathrm{fr},n}^{\circ}=\mY+\mD_{\mathrm{fr},n}^{\circ},
\]
and let $\mE_{K,n}^{\circ}$ be the matrix in \eqref{eq:EK_exact} with
$\mY_{\mathrm{fr},n}$ and $\mR_{\Delta,B,n}$ replaced by
$\mY_{\mathrm{fr},n}^{\circ}$ and
$\mR_{\Delta,B,n}^{\circ}$.  Put
\[
 \mR_{\eta B,n}=\mR_{\Delta,B,n}-\mR_{\Delta,B,n}^{\circ},
 \qquad
 \mX_{B,n}=\mPi_{B,p}(\mY-\bm d_{0,n}\bm w^\top).
\]
Bulk--factor orthogonality gives
\begin{align*}
 \mE_{K,n}-\mE_{K,n}^{\circ}
 =\frac1n\{&\mX_{B,n}^\top\mR_{\eta B,n}
 +\mR_{\eta B,n}^\top\mX_{B,n}
 +(\mR_{\Delta,B,n}^{\circ})^\top\mR_{\eta B,n}\\
 &+\mR_{\eta B,n}^\top\mR_{\Delta,B,n}^{\circ}
 +\mR_{\eta B,n}^\top\mR_{\eta B,n}\}.
\end{align*}
By \eqref{eq:r_exact_circ_difference},
\eqref{eq:RB_circ_op}, and
$\norm{n^{-1/2}\mX_{B,n}}_{\op}=O_{\Pp}(1)$,
\begin{align}
 \norm{\mE_{K,n}-\mE_{K,n}^{\circ}}_{\F}
 &\le\frac2n\norm{\mX_{B,n}}_{\op}\norm{\mR_{\eta B,n}}_{\F}
 +\frac2n\norm{\mR_{\Delta,B,n}^{\circ}}_{\op}
       \norm{\mR_{\eta B,n}}_{\F}
 +\frac1n\norm{\mR_{\eta B,n}}_{\F}^2\notag\\
 &=O_{\Pp}(p^{-1}).
 \label{eq:EK_exact_circ_F}
\end{align}
Consequently,
\[
 \left|\tr\{\mD_w^k\mR_{K,n}
 (\mE_{K,n}-\mE_{K,n}^{\circ})\mR_{K,n}\}\right|
 \le C\sqrt n\norm{\mE_{K,n}-\mE_{K,n}^{\circ}}_{\F}
 =O_{\Pp}(p^{-1/2}).
\]

Write $\mK_n=\mS_n\widetilde\mK_n\mS_n$ and
$\mR_{K,n}=\mS_n\widetilde\mR_{K,n}\mS_n$, and set
\[
 \widetilde\mM_{k,n}
 =\widetilde\mR_{K,n}\mD_w^k\widetilde\mR_{K,n},
 \qquad
 \mM_{k,n}=\mR_{K,n}\mD_w^k\mR_{K,n}
 =\mS_n\widetilde\mM_{k,n}\mS_n.
\]
Then
\begin{equation*}
 \|\widetilde\mM_{k,n}\|_{\op}\le C,
 \qquad
 \max_i\sum_j(\widetilde\mM_{k,n})_{ij}^2\le C,
 \qquad
 \tr(\widetilde\mM_{k,n}^2)\le Cn.
\end{equation*}
Bulk--factor orthogonality in \eqref{eq:EK_exact} gives the exact
identity
\begin{align*}
 T_{k,n}^{\circ}
 &:={\tr}(\mD_w^k\mR_{K,n}\mE_{K,n}^{\circ}\mR_{K,n})\\
 &=\frac2n\tr\{\mM_{k,n}\mX_{B,n}^\top
                 \mR_{\Delta,B,n}^{\circ}\}
   +\frac1n\tr\{\mM_{k,n}
       (\mR_{\Delta,B,n}^{\circ})^\top
        \mR_{\Delta,B,n}^{\circ}\}\\
 &=:\mathcal C_{k,n}+\mathcal Q_{k,n}.
\end{align*}
The quadratic residual term does not require cancellation.  Indeed,
\begin{equation*}
 |\mathcal Q_{k,n}|
 \le \frac Cn\|\mR_{\Delta,B,n}^{\circ}\|_{\F}^2
 =O_{\Pp}(1),
\end{equation*}
because $n^{-1}\sum_i\|\bm r_{i,n}^{\circ}\|^2=O_{\Pp}(1)$.
We now treat the cross term and include its constant Walsh coefficient
explicitly.

Conditionally on $\calF_n^0$, put
\[
 \widetilde{\bm Y}_{B,i}=\mPi_{B,p}\widetilde{\bm Y}_i,
 \qquad
 \bm a_{B,j}=\mPi_{B,p}\bm a_{j,n},
 \qquad
 \bm b_{B,j}=\mPi_{B,p}\bm b_{j,n}.
\]
Then
\begin{align*}
 \bm X_{B,i}
 &=s_i\widetilde{\bm Y}_{B,i}
   -w_i\sum_{j=1}^ns_j\bm b_{B,j},\\
 \bm r_{B,i}^{\circ}
 &:=\mPi_{B,p}\bm r_{i,n}^{\circ}
   =\bm r_{B,i}^{L}+\bm r_{B,i}^{H},\\
 \bm r_{B,i}^{L}
 &=\sum_{j=1}^ns_j\mPi_{B,p}\bm c_{ij,n},\\
 \bm r_{B,i}^{H}
 &=\frac{w_i^2}{2\sqrt p}s_i\mPi_{B,p}
   \widetilde{\mathcal H}_i
   \left(\sum_{j=1}^ns_j\bm a_{j,n},
         \sum_{\ell=1}^ns_\ell\bm a_{\ell,n}\right),
\end{align*}
where $\bm c_{ij,n}$ is defined in \eqref{eq:r_circ_walsh}.  Hence
$\mathcal C_{k,n}=\mathcal C_{k,n}^{L}+
\mathcal C_{k,n}^{H}$ is a Walsh polynomial of degree at most six.
Write
\[
 \mathcal C_{k,n}^{L}
 =\sum_{\mathcal I}s_{\mathcal I}d_{\mathcal I,n}^{L},
 \qquad
 \mathcal C_{k,n}^{H}
 =\sum_{\mathcal I}s_{\mathcal I}d_{\mathcal I,n}^{H}.
\]
Let
\[
 \mS_{a,n}=\sum_j\bm a_{j,n}\bm a_{j,n}^\top,
 \qquad
 \mS_{b,n}=\sum_j\bm b_{j,n}\bm b_{j,n}^\top.
\]
The bounds in \eqref{eq:ab_factor_coefficients} imply
\begin{equation*}
 \|\mS_{a,n}\|_{\op}+\|\mS_{b,n}\|_{\op}
 +\tr(\mS_{a,n})+\tr(\mS_{b,n})=O_{\Pp}(1).
\end{equation*}
After inserting the preceding linear representations and collecting
equal Walsh monomials, Cauchy--Schwarz gives, simultaneously for the
four choices generated by
$\widetilde{\bm Y}_{B,i}$ or $\bm b_{B,j}$ and by
$\bm a_{j,n}-\bm b_{j,n}$ or
$\widetilde{\bm U}_i\widetilde{\bm U}_i^\top\bm a_{j,n}$,
\begin{align*}
 |d_{\varnothing,n}^{L}|^2
 +\sum_{\mathcal I\ne\varnothing}|d_{\mathcal I,n}^{L}|^2
 &\le
 \frac Cp
 \left\{\max_i\sum_j(\widetilde\mM_{k,n})_{ij}^2\right\}
 \left\|\sum_i\bm U_{B,i}\bm U_{B,i}^\top\right\|_{\op}\\
 &\quad\times
 \{\|\mS_{a,n}\|_{\op}+\|\mS_{b,n}\|_{\op}\}
 \tr(\widetilde\mM_{k,n}^2)\\
 &\quad+
 \frac Cn\tr(\widetilde\mM_{k,n}^2)
 \{\tr(\mS_{a,n})+\tr(\mS_{b,n})\}
 =O_{\Pp}(1).
\end{align*}
This display includes the empty-set coefficient and therefore controls
the conditional mean, not only the conditional variance.

For the Hessian part, polarization
$\|\widetilde{\mathcal H}_i(\bm u,\bm v)\|
\le6\|\bm u\|\|\bm v\|$, the explicit factor $p^{-1/2}$, and
\eqref{eq:walsh_product_contraction} yield
\begin{align*}
 |d_{\varnothing,n}^{H}|^2
 +\sum_{\mathcal I\ne\varnothing}|d_{\mathcal I,n}^{H}|^2
 &\le
 \frac Cp
 \left\{\max_i\sum_j(\widetilde\mM_{k,n})_{ij}^2\right\}
 \tr(\widetilde\mM_{k,n}^2)\\
 &\quad\times
 \left\{1+
 \left\|\sum_i\bm U_{B,i}\bm U_{B,i}^\top\right\|_{\op}\right\}
 \{\tr(\mS_{a,n})\}^2
 =O_{\Pp}(1).
\end{align*}
Indeed, \eqref{eq:bulk_sample_cov_bound} gives
$\|\sum_i\bm U_{B,i}\bm U_{B,i}^\top\|_{\op}=O_{\Pp}(n/p)$,
while $\tr(\widetilde\mM_{k,n}^2)=O(n)$ and $n/p=O(1)$.
The two displays exhaust the degree-one and Hessian families in
\eqref{eq:r_circ_walsh}; the residual--residual product has already
been controlled by $\mathcal Q_{k,n}$.  Walsh orthogonality now gives
\[
 |\E_s(\mathcal C_{k,n}\mid\calF_n^0)|=O_{\Pp}(1),
 \qquad
 \Var_s(\mathcal C_{k,n}\mid\calF_n^0)=O_{\Pp}(1),
\]
and hence $\mathcal C_{k,n}=O_{\Pp}(1)$.  Thus
$T_{k,n}^{\circ}=O_{\Pp}(1)$.  Together with
\eqref{eq:EK_exact_circ_F} and \eqref{eq:weighted_trace_neumann}, this
proves \eqref{eq:A_weighted_trace_rate_new}.

For $a=1,2$,
\begin{align*}
 \wh b_{a,n}-b_{a,n}
 ={}&\frac1n\tr\{\mD_w^a(\wh\mA_n-\mA_n)\}
 +\frac1n\sum_i(\mA_n)_{ii}(\wh w_i^a-w_i^a)\\
 &+\frac1n\sum_i\{(\wh\mA_n)_{ii}-(\mA_n)_{ii}\}
 (\wh w_i^a-w_i^a).
\end{align*}
The first term is $O_{\Pp}(n^{-1})$.  The leading conditional
Rademacher part of the second is
\[
 L_{a,n}=\frac{a}{n\sqrt p}\sum_{i,j}
 a_{ii}w_i^{a+1}s_is_j
 \widetilde{\bm U}_i^\top\bm a_{j,n},
 \qquad a_{ii}=(\mA_n)_{ii},
\]
and
\begin{align*}
 |\E_s(L_{a,n}\mid\calF_n^0)|
 &\le\frac{a}{n\sqrt p}\sum_i|a_{ii}|w_i^{a+1}\|\bm a_{i,n}\|
 \le Cp^{-1},\\
 \Var_s(L_{a,n}\mid\calF_n^0)&\le Cn^{-2}.
\end{align*}
Thus $L_{a,n}=O_{\Pp}(n^{-1})$.  The quadratic weight term and the
replacement remainder in Lemma~\ref{lem:weight_estimation} are also
$O_{\Pp}(n^{-1})$.  Finally,
\begin{align*}
 &\frac1n\sum_i|\{(\wh\mA_n)_{ii}-(\mA_n)_{ii}\}
 (\wh w_i^a-w_i^a)|\\
 &\qquad\le\frac1n\|\wh\mA_n-\mA_n\|_{\F}
 \left\{\sum_i(\wh w_i^a-w_i^a)^2\right\}^{1/2}
 =O_{\Pp}(n^{-1}).
\end{align*}
This proves \eqref{eq:kappa_hat_rate_new}--\eqref{eq:b_hat_rate_new}.

Put $\Delta_A=\wh\mA_n-\mA_n$.  Since both companion matrices are
contractions,
\[
 \frac1n\sum_{i\ne j}
 |(\wh\mA_n)_{ij}^2-(\mA_n)_{ij}^2|
 \le\frac1n\|\Delta_A\|_{\F}
 (\|\wh\mA_n\|_{\F}+\|\mA_n\|_{\F})
 =O_{\Pp}(n^{-1/2}).
\]
Together with \eqref{eq:w_max_est_rate} and
$n^{-1}\sum_{i,j}(\mA_n)_{ij}^2\le1$, this proves
\eqref{eq:psi_hat_rate_new}.
\end{proof}

\begin{theorem}\label{thm:rates}
Under Assumption~1 and $H_0$, for every fixed
$\rho\in[\rho_0,\rho_1]$,
\begin{align}
 |\wh e_n-e_n|+|\wh t_n-t_n|&=O_{\Pp}(n^{-1}),
 \label{eq:et_hat_rate_new}\\
 |\wh D_n-D_n|+|\wh\mu_n-\mu_n|&=O_{\Pp}(n^{-1}),
 \label{eq:D_mu_hat_rate_new}\\
 |\wh\sigma_{D,n}^2-\sigma_{D,n}^2|&=O_{\Pp}(n^{-1/2}).
 \label{eq:sigma_hat_rate_new}
\end{align}
\end{theorem}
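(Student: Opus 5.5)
The plan is to assemble Theorem~\ref{thm:rates} from Lemmas~\ref{lem:weight_estimation} and~\ref{lem:weighted_companion_perturbation}. No new probabilistic estimate should be needed: everything reduces to smooth maps of finitely many scalars that are bounded and nondegenerate with probability tending to one.

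First, \eqref{eq:et_hat_rate_new} is exactly \eqref{eq:w_average_est_rate} with $k=1,2$, since $\wh e_n-e_n=n^{-1}\sum_i(\wh w_i-w_i)$ and $\wh t_n-t_n=n^{-1}\sum_i(\wh w_i^2-w_i^2)$.

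Next, the map
\[
 \Phi(e,t,\kappa,b_1,b_2)=(e-b_1)^2+\kappa(t-b_2)
\]
is a polynomial. On the event where all five arguments (both true and estimated) lie in a fixed compact set, its gradient is bounded. Boundedness of the true arguments follows from $w_i\le\bar w$, $0\le(\mA_n)_{ii}\le1$ and $\kappa_n\le1$. The estimated ones are within $O_{\Pp}(n^{-1})$ of these by \eqref{eq:et_hat_rate_new}, \eqref{eq:kappa_hat_rate_new} and \eqref{eq:b_hat_rate_new}. The mean-value theorem then gives $\wh D_n-D_n=O_{\Pp}(n^{-1})$. By \eqref{eq:nondegenerate_DE}, $D_n\ge C_1$, so $\wh D_n\ge C_1/2$ with probability tending to one. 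Writing
\[
 \wh\mu_n-\mu_n=\frac{\wh\kappa_n-\kappa_n}{\wh D_n}-\frac{\kappa_n(\wh D_n-D_n)}{\wh D_nD_n}
\]
gives \eqref{eq:D_mu_hat_rate_new}.

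For the variance, write
\[
 \wh\sigma_{D,n}^2-\sigma_{D,n}^2=(\wh\bg_n-\bg_n)^\top\wh\mGamma_n\wh\bg_n+\bg_n^\top(\wh\mGamma_n-\mGamma_n)\wh\bg_n+\bg_n^\top\mGamma_n(\wh\bg_n-\bg_n).
\]
The vector $\bg_n$ is a rational function of $(e_n,b_{1,n},\kappa_n,D_n)$ with denominator $D_n^2$ bounded below. By the previous step, $\wh\bg_n-\bg_n=O_{\Pp}(n^{-1})$, and $\bg_n,\wh\bg_n=O_{\Pp}(1)$. Each entry of $\mGamma_n$ is a linear combination of the $\psi_{ab,n}$. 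Hence \eqref{eq:psi_hat_rate_new} gives $\|\wh\mGamma_n-\mGamma_n\|=O_{\Pp}(n^{-1/2})$. Boundedness of $\mGamma_n$ follows from $\psi_{ab,n}\le\bar w^{a+b}n^{-1}\sum_{i,j}(\mA_n)_{ij}^2\le\bar w^{a+b}$, and the same bound holds for $\wh\mGamma_n$ because $\wh\mA_n$ is also a contraction and $\max_i\wh w_i\le\bar w+o_{\Pp}(1)$ by \eqref{eq:w_max_est_rate}. The three terms are therefore $O_{\Pp}(n^{-1})$, $O_{\Pp}(n^{-1/2})$ and $O_{\Pp}(n^{-1})$, which proves \eqref{eq:sigma_hat_rate_new}.

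The only delicate point is to work on a single event of probability tending to one. On that event $e_n\ge m_-/2$ (by \eqref{eq:e_lower}), $D_n\ge C_1$, and $\|\bdelta_n\|\le M$, so that $\wh w_i$ is well defined and bounded by $2\bar w$. Securing this uniform control of the estimated arguments, so that the mean-value constants are deterministic, is the main (but mild) obstacle. The heavy lifting, namely the $O_{\Pp}(n^{-1})$ accuracy of the weighted traces, has already been done in Lemma~\ref{lem:weighted_companion_perturbation}.
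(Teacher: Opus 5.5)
Your proposal is correct and follows essentially the same route as the paper. It obtains \eqref{eq:et_hat_rate_new} directly from Lemma~\ref{lem:weight_estimation}, then applies the mean-value theorem on a compact set with $D_n$ bounded below to get the rates for $\wh D_n$ and $\wh\mu_n$, and finally uses the same three-term decomposition of $\wh\sigma_{D,n}^2-\sigma_{D,n}^2$ together with \eqref{eq:psi_hat_rate_new} and $\wh\bg_n-\bg_n=O_{\Pp}(n^{-1})$. Your explicit boundedness checks for $\mGamma_n$ and $\wh\mGamma_n$, via the contraction property of the companion matrices and the bound on $\max_i\wh w_i$, spell out a step the paper leaves implicit.
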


\begin{proof}
Equation \eqref{eq:et_hat_rate_new} follows from Lemma~\ref{lem:weight_estimation}.
Lemma~\ref{lem:weighted_companion_perturbation}, specifically
\eqref{eq:kappa_hat_rate_new} and \eqref{eq:b_hat_rate_new}, together with
\eqref{eq:et_hat_rate_new}, shows that every argument of
\[
 D(e,t,k,b_1,b_2)=(e-b_1)^2+k(t-b_2)
\]
is estimated with error $O_{\Pp}(n^{-1})$.  Proposition~\ref{prop:weighted_DE} places these arguments in a fixed compact set and bounds $D_n$ away from zero.  The mean-value theorem gives
\[
 \wh D_n-D_n=O_{\Pp}(n^{-1}),
 \qquad
 \frac{\wh\kappa_n}{\wh D_n}-\frac{\kappa_n}{D_n}
 =O_{\Pp}(n^{-1}),
\]
which proves \eqref{eq:D_mu_hat_rate_new}.

Lemma~\ref{lem:weighted_companion_perturbation}, specifically \eqref{eq:psi_hat_rate_new}, shows that every entry of $\wh\mGamma_n-\mGamma_n$ is $O_{\Pp}(n^{-1/2})$, and hence
\[
 \|\wh\mGamma_n-\mGamma_n\|_{\op}=O_{\Pp}(n^{-1/2}).
\]
The vector $\bg(e,t,k,b_1,b_2)$ is continuously differentiable on the same compact set, so
\[
 \|\wh\bg_n-\bg_n\|=O_{\Pp}(n^{-1}).
\]
Therefore
\begin{align*}
 |\wh\sigma_{D,n}^2-\sigma_{D,n}^2|
 &\le |(\wh\bg_n-\bg_n)^\top\wh\mGamma_n\wh\bg_n|\\
 &\quad+|\bg_n^\top(\wh\mGamma_n-\mGamma_n)\wh\bg_n|
 +|\bg_n^\top\mGamma_n(\wh\bg_n-\bg_n)|\\
 &=O_{\Pp}(n^{-1/2}),
\end{align*}
which proves \eqref{eq:sigma_hat_rate_new}.
\end{proof}

\begin{proof}[Proof of Theorem~2.2]
By Theorem~2.1,
\[
 \frac{T_n(\rho)-n\mu_n}{\sqrt{n\sigma_{D,n}^2}}
 \dto N(0,1).
\]
Theorem~\ref{thm:rates} gives
\[
 \frac{n(\wh\mu_n-\mu_n)}{\sqrt n}
 =O_{\Pp}(n^{-1/2})=o_{\Pp}(1),
 \qquad
 \frac{\wh\sigma_{D,n}^2}{\sigma_{D,n}^2}-1
 =O_{\Pp}(n^{-1/2}),
\]
where the second relation uses the positive lower bound in
\eqref{eq:nondegenerate_DE}.  Hence
\begin{align*}
 Z_n(\rho)
 &=\frac{T_n(\rho)-n\mu_n}{\sqrt{n\sigma_{D,n}^2}}
 \left(\frac{\sigma_{D,n}^2}{\wh\sigma_{D,n}^2}\right)^{1/2}
 -\frac{\sqrt n(\wh\mu_n-\mu_n)}{\wh\sigma_{D,n}}\\
 &=\frac{T_n(\rho)-n\mu_n}{\sqrt{n\sigma_{D,n}^2}}
 +o_{\Pp}(1).
\end{align*}
Slutsky's theorem proves (2.7).  The positive lower and upper bounds for $\wh\sigma_{D,n}^2$ follow from
Theorem~\ref{thm:rates} and Proposition~\ref{prop:weighted_DE}.
\end{proof}

\section{Proofs under location alternatives}\label{app:power}

Under the shifted model (2.18), denote by
$\wh\btheta_{\varepsilon}$ and $\wh\mR_{\varepsilon,n}$ the spatial median and the sample-median-centered spatial-sign covariance matrix computed from the noise observations $\sqrt p\,R_i\mOmega_p^{1/2}\bm G_i/\norm{\bm G_i}$.

\begin{proposition}\label{prop:translation}
For every deterministic $\bDelta_p$,
\begin{equation*}
\wh\btheta=\btheta_{0,p}+\bDelta_p+\wh\btheta_{\varepsilon},
\qquad
\wh\mR_n=\wh\mR_{\varepsilon,n}.
\end{equation*}
The residual inverse radii $\wh w_i$ and all nuisance estimators in Section~2.3.2 have exactly the same distribution under (2.18) as under $H_0$.
\end{proposition}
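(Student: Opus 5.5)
The plan is to prove translation equivariance of the spatial median directly from its defining optimization problem. Write $\bm\varepsilon_i=\sqrt p\,R_i\mOmega_p^{1/2}\bm G_i/\norm{\bm G_i}$, so that under (2.18) we have $\bm X_i=\btheta_{0,p}+\bDelta_p+\bm\varepsilon_i$. For any $\bm t\in\R^p$, substitute $\bm t=\btheta_{0,p}+\bDelta_p+\bm u$. Then
\[
\sum_{i=1}^n\norm{\bm X_i-\bm t}=\sum_{i=1}^n\norm{\bm\varepsilon_i-\bm u}.
\]
The map $\bm u\mapsto\btheta_{0,p}+\bDelta_p+\bm u$ is a bijection of $\R^p$, so it carries the argmin set of one objective onto the argmin set of the other. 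Uniqueness of the sample spatial median holds almost surely for all large $n$, as noted in Section~2.2. On that event the minimizers correspond exactly:
\[
\wh\btheta=\btheta_{0,p}+\bDelta_p+\wh\btheta_{\varepsilon}.
\]
If one wants an identity that holds surely, the same bijection shows it for any measurable selection rule that is translation equivariant.

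Next, the residuals satisfy $\bm X_i-\wh\btheta=\bm\varepsilon_i-\wh\btheta_{\varepsilon}$ identically. Several consequences follow at once:
\begin{itemize}
\item $\wh{\bm Y}_i=\sqrt p\,\mathbf U(\bm X_i-\wh\btheta)$ coincides with the noise-based sign.
\item $\wh\mR_n=\wh\mR_{\varepsilon,n}$.
\item $\wh w_i=\sqrt p/\norm{\bm X_i-\wh\btheta}$ coincides with its noise-based counterpart.
\end{itemize}
Every nuisance estimator of Section~2.3.2 is a deterministic function of $(\wh{\bm Y}_i,\wh w_i)_{i\le n}$ alone:
\begin{itemize}
\item $\wh\mQ_n$ and $\wh\mA_n$;
\item $\wh\kappa_n,\wh e_n,\wh t_n,\wh b_{a,n}$;
\item $\wh D_n,\wh\mu_n,\wh\psi_{ab,n}$;
\item $\wh\mGamma_n,\wh\bg_n,\wh\sigma_{D,n}^2$.
\end{itemize}
Hence each of them equals, pointwise on the sample space, the same functional evaluated on the noise sample.

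Finally, the noise sample $(\bm\varepsilon_i)_{i\le n}$ has exactly the law of $(\bm X_i-\btheta_{0,p})_{i\le n}$ under $H_0$. The $H_0$ estimators are computed from $\bm X_i=\btheta_{0,p}+\bm\varepsilon_i$, and the same equivariance argument with shift $\btheta_{0,p}$ in place of $\btheta_{0,p}+\bDelta_p$ shows that they are those identical functionals of $(\bm\varepsilon_i)$. Equality in distribution therefore follows.

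There is no genuine obstacle here. The only point needing care is the a.s. uniqueness and measurability of the argmin, which the paper has already asserted; I would state the identities on that probability-one event.
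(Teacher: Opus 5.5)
Your proposal is correct and follows the paper's own proof. Both use the change of variable in the spatial-median objective to obtain $\wh\btheta=\btheta_{0,p}+\bDelta_p+\wh\btheta_\varepsilon$, and then the identity $\bm X_i-\wh\btheta=\bm\varepsilon_i-\wh\btheta_\varepsilon$ to conclude that $\wh\mR_n$, the $\wh w_i$, and every nuisance estimator coincide with their noise-sample versions. You also make explicit two steps the paper leaves implicit: the almost-sure uniqueness of the minimizer, and why pointwise equality as functionals of the noise sample gives equality in law with the $H_0$ estimators.
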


\begin{proof}[Proof of Proposition~\ref{prop:translation}]
For every $\bm t\in\R^p$,
\begin{align*}
\sum_{i=1}^n\norm{\bm X_i-\bm t}
&=\sum_{i=1}^n\norm{\btheta_{0,p}+\bDelta_p
+\sqrt p\,R_i\mOmega_p^{1/2}\bm G_i/\norm{\bm G_i}-\bm t}\\
&=\sum_{i=1}^n\norm{\sqrt p\,R_i\mOmega_p^{1/2}\bm G_i/\norm{\bm G_i}
-\{\bm t-(\btheta_{0,p}+\bDelta_p)\}}.
\end{align*}
The change of variable
$\bm s=\bm t-(\btheta_{0,p}+\bDelta_p)$ proves
\begin{equation*}
\wh\btheta=\btheta_{0,p}+\bDelta_p+\wh\btheta_\varepsilon.
\end{equation*}
Consequently,
\begin{equation*}
\bm X_i-\wh\btheta
=\sqrt p\,R_i\mOmega_p^{1/2}\bm G_i/\norm{\bm G_i}-\wh\btheta_\varepsilon,
\end{equation*}
which proves $\wh\mR_n=\wh\mR_{\varepsilon,n}$ and the equality of all residual inverse-radius and angular nuisance estimators.
\end{proof}

\begin{lemma}\label{lem:scaled_resolvent_power}
Let
\begin{equation*}
\mD_{s,p}=\sqrt p\,\mPi_{F,p}+\mPi_{B,p}.
\end{equation*}
For the noise sample, with probability tending to one,
\begin{align}
\norm{\mPi_{F,p}\wh\mR_{\varepsilon,n}\mPi_{F,p}}_{\op}
&\le Cp,
\label{eq:R_FF_power}\\
\norm{\mPi_{F,p}\wh\mR_{\varepsilon,n}\mPi_{B,p}}_{\op}
&\le C\sqrt p,\notag
\\
\norm{\mPi_{B,p}\wh\mR_{\varepsilon,n}\mPi_{B,p}}_{\op}
&\le C.
\label{eq:R_BB_power}
\end{align}
Hence
\begin{equation}\label{eq:R_scaled_upper}
\wh\mR_{\varepsilon,n}+\rho\I_p
\preceq C\mD_{s,p}^2,
\end{equation}
and
\begin{equation}\label{eq:Q_scaled_lower}
(\wh\mR_{\varepsilon,n}+\rho\I_p)^{-1}
\succeq c\left(p^{-1}\mPi_{F,p}+\mPi_{B,p}\right).
\end{equation}
In particular, for every deterministic $\bm a\in\R^p$,
\begin{equation}\label{eq:signal_scaled_lower_new}
\bm a^\top(\wh\mR_{\varepsilon,n}+\rho\I_p)^{-1}\bm a
\ge c\left\{p^{-1}\norm{\mPi_{F,p}\bm a}^2
+\norm{\mPi_{B,p}\bm a}^2\right\}.
\end{equation}
\end{lemma}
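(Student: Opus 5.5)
By Proposition~\ref{prop:translation}, $\wh\mR_{\varepsilon,n}$ has the same law as $\wh\mR_n$ under $H_0$. I will therefore prove the block bounds for the null-centered SSCM and then carry them over to the noise sample.

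The starting point is the exact decomposition of Theorem~\ref{thm:finite_rank},
\[
\wh\mR_n=\mB_n+\mH_n\mC_n\mH_n^\top+\mF_n+\mE_n .
\]
Each piece is bounded blockwise as follows.
\begin{itemize}
\item $\mB_n=n^{-1}\mY\mY^\top$. The factor--factor block has norm at most $Cp$, because $p^{-1}\mY_{F,n}\mY_{F,n}^\top$ converges to $\diag(a_{a,p})$; this is the bound from the proof of Lemma~\ref{lem:factor_suppression}. The bulk--bulk block is $O_{\Pp}(1)$ by \eqref{eq:YB_operator_factor}. For the cross block, Cauchy--Schwarz on the Gram factorization gives $\|\mPi_F\mB_n\mPi_B\|\le\|\mY_{F,n}\|_{\op}\|\mY_{B,n}\|_{\op}=O_{\Pp}(\sqrt p)$.
\item $\mH_n\mC_n\mH_n^\top$ has operator norm $O_{\Pp}(1)$ (Lemma~\ref{lem:rank_two_inverse}), so it contributes $O_{\Pp}(1)$ to every block.
\item $\mF_n=\mP_{F}\mA_{F,n}\mP_F^\top+\mP_F\mV_{F,n}^\top+\mV_{F,n}\mP_F^\top$ has $O_{\Pp}(1)$ blocks by \eqref{eq:VF_rate}, and its bulk--bulk block is zero.
\item $\mE_n$ lives entirely in the bulk--bulk block and has norm $O_{\Pp}(p^{-1/2})$ by \eqref{eq:E_op_rate_new}.
\end{itemize}
Adding these gives \eqref{eq:R_FF_power}--\eqref{eq:R_BB_power} with probability tending to one, for a suitably large constant $C$. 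The ``$O_{\Pp}$'' bounds become ``$\le C$ w.p.t.1'' only because the leading terms are controlled by deterministic limits: the eigenvalues $a_{a,p}$ and the bulk operator-norm bound. A slightly lazier alternative is to bound the factor block directly by $\tr\wh\mR_n=p$ and use the trivial bound $\|\wh{\bm Y}_i\|^2=p$. Even then, the bulk block still needs the decomposition above.

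Next, \eqref{eq:R_scaled_upper} follows from a $2\times2$ block argument. For $\bm x=\bm x_F+\bm x_B$,
\[
\bm x^\top(\wh\mR+\rho\I)\bm x\le Cp\|\bm x_F\|^2+2C\sqrt p\|\bm x_F\|\|\bm x_B\|+(C+\rho)\|\bm x_B\|^2\le C'(p\|\bm x_F\|^2+\|\bm x_B\|^2),
\]
where the middle term is handled by AM--GM. The right-hand side is exactly $C'\bm x^\top\mD_{s,p}^2\bm x$.

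Operator monotonicity of inversion then gives $(\wh\mR+\rho\I)^{-1}\succeq C'^{-1}\mD_{s,p}^{-2}=C'^{-1}(p^{-1}\mPi_F+\mPi_B)$, which is \eqref{eq:Q_scaled_lower}. Evaluating this at a deterministic $\bm a$ yields \eqref{eq:signal_scaled_lower_new}.

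The main obstacle is the bulk--bulk bound \eqref{eq:R_BB_power}. It needs the estimated-center corrections to stay bounded in the bulk operator norm, even though $\mV_{F,n}$ is not small. I will rely on the cited estimates \eqref{eq:VF_rate} and \eqref{eq:E_op_rate_new} rather than reproving them. The only remaining point is that all blocks can be taken simultaneously on one high-probability event, which is immediate since there are finitely many of them.
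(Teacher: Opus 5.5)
Your proposal is correct and follows essentially the same route as the paper. Both arguments proceed as follows:
\begin{itemize}
\item bound the blocks of the oracle matrix $\mB_n$;
\item use Theorem~\ref{thm:finite_rank} to add the $O_{\Pp}(1)$ rank-two and factor corrections and the small pure-bulk remainder;
\item finish with the same AM--GM block argument and Loewner-order inversion.
\end{itemize}
The only minor difference is in the $\mB_n$ blocks. The paper bounds the factor block via $\tr(\mB_n)=p$, and the cross block via the positive-semidefinite block inequality
\[
\norm{\mPi_{F,p}\mB_n\mPi_{B,p}}_{\op}^2\le\norm{\mPi_{F,p}\mB_n\mPi_{F,p}}_{\op}\norm{\mPi_{B,p}\mB_n\mPi_{B,p}}_{\op},
\]
which is equivalent to your singular-value product bound.
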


\begin{proof}
The oracle matrix satisfies
\begin{align}
\norm{\mPi_{F,p}\mB_n\mPi_{F,p}}_{\op}&=O_{\Pp}(p),
\label{eq:BFF_order_power}\\
\norm{\mPi_{F,p}\mB_n\mPi_{B,p}}_{\op}&=O_{\Pp}(\sqrt p),\notag
\\
\norm{\mPi_{B,p}\mB_n\mPi_{B,p}}_{\op}&=O_{\Pp}(1).
\label{eq:BBB_order_power}
\end{align}
The first relation follows from $\tr(\mPi_F\mB_n\mPi_F)\le\tr(\mB_n)=p$ and fixed $r$.  The third follows from the separable bulk proxy bounds in the proof of Proposition~\ref{prop:weighted_DE}.  For the cross block,
\begin{align*}
\norm{\mPi_F\mB_n\mPi_B}_{\op}^2
&\le
\norm{\mPi_F\mB_n\mPi_F}_{\op}
\norm{\mPi_B\mB_n\mPi_B}_{\op}
=O_{\Pp}(p),
\end{align*}
where the inequality follows from the positive-semidefinite block factorization \citep[Proposition~1.3.2, pp.~13--14]{Bhatia2007Positive}.
Theorem~\ref{thm:finite_rank} and Lemmas~\ref{lem:F_factorization}--\ref{lem:E_bulk} show that, after the
leading rank-two term is included, the remaining factor--factor and
factor--bulk blocks are $O_{\Pp}(1)$, while the pure-bulk block is
$O_{\Pp}\{(\log p/p)^{1/2}\}$.  The rank-two term itself obeys the bounds in
\eqref{eq:R_FF_power}--\eqref{eq:R_BB_power} because
\begin{equation*}
\norm{\mPi_F\bm Z_n}=O_{\Pp}(\sqrt p),
\quad
\norm{\mPi_B\bm Z_n}=O_{\Pp}(\sqrt p),
\quad
\norm{\bm Z_n}/\sqrt n=O_{\Pp}(1),
\end{equation*}
and the same relations hold for $\bm W_n$.  This proves
\eqref{eq:R_FF_power}--\eqref{eq:R_BB_power}.

For $\bm x=\bm x_F+\bm x_B$, the block bounds and $2ab\le a^2+b^2$ give
\begin{align*}
\bm x^\top(\wh\mR_{\varepsilon,n}+\rho\I_p)\bm x
&\le Cp\norm{\bm x_F}^2
+2C\sqrt p\norm{\bm x_F}\norm{\bm x_B}
+C\norm{\bm x_B}^2\\
&\le C_1\{p\norm{\bm x_F}^2+\norm{\bm x_B}^2\}
=C_1\bm x^\top\mD_{s,p}^2\bm x.
\end{align*}
This proves \eqref{eq:R_scaled_upper}.  Inversion reverses the Loewner order and gives \eqref{eq:Q_scaled_lower}; \eqref{eq:signal_scaled_lower_new} follows by taking a quadratic form.
\end{proof}

\begin{lemma}\label{lem:signal_noise_cross_new}
Let
\begin{equation*}
\wh\mQ_{\varepsilon,n}
=(\wh\mR_{\varepsilon,n}+\rho\I_p)^{-1}
\end{equation*}
and let $\wh\btheta_\varepsilon$ be the noise spatial median.  For every deterministic triangular array $\bm a=\bm a_p\in\R^p$,
\begin{equation}\label{eq:cross_rate_weighted_new}
\bm a^\top\wh\mQ_{\varepsilon,n}\wh\btheta_\varepsilon
=O_{\Pp}\left[
\left\{\frac1n\bm a^\top\wh\mQ_{\varepsilon,n}\bm a\right\}^{1/2}
\right].
\end{equation}
\end{lemma}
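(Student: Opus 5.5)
The plan is to reduce \eqref{eq:cross_rate_weighted_new} to a bound in terms of the deterministic scale
$g_p(\bm a)^2=p^{-1}\|\mPi_{F,p}\bm a\|^2+\|\mPi_{B,p}\bm a\|^2$. By Proposition~\ref{prop:translation}, the noise sample is an $H_0$ sample with $\btheta_{0,p}=\0$. Hence Theorem~\ref{thm:median}, Theorem~\ref{thm:finite_rank}, and Lemmas~\ref{lem:E_bulk}--\ref{lem:full_woodbury} all apply with $\wh\mQ_{\varepsilon,n}=(\wh\mR_n+\rho\I_p)^{-1}$. By \eqref{eq:signal_scaled_lower_new} in Lemma~\ref{lem:scaled_resolvent_power}, $g_p(\bm a)^2\le C\,\bm a^\top\wh\mQ_{\varepsilon,n}\bm a$ with probability tending to one. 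It therefore suffices to prove
\[
 \bm a^\top\wh\mQ_{\varepsilon,n}\wh\btheta_\varepsilon=O_{\Pp}\{g_p(\bm a)n^{-1/2}\}.
\]
A direct Cauchy--Schwarz bound in the $\wh\mQ_{\varepsilon,n}$ inner product is useless here. It only gives $O_{\Pp}\{(\bm a^\top\wh\mQ_{\varepsilon,n}\bm a)^{1/2}\}$, because $n\,\wh\btheta_\varepsilon^\top\wh\mQ_{\varepsilon,n}\wh\btheta_\varepsilon=T_n(\rho)\asymp n$. The missing factor $n^{-1/2}$ must come from sign cancellation in the inner product with the fixed vector $\bm a$.

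First insert the expansion \eqref{eq:median_sample_expansion}:
\[
 \wh\btheta_\varepsilon=n^{-1/2}\bigl(e_n^{-1}\bm Z_n+\mG_{F,n}\bm Z_n\bigr)+\bm r_{\theta,B,n},
 \qquad
 \|\bm r_{\theta,B,n}\|=O_{\Pp}(p^{-1}).
\]
For the remainder, weighted Cauchy--Schwarz and $\|\wh\mQ_{\varepsilon,n}\|_{\op}\le\rho_0^{-1}$ give
\[
 |\bm a^\top\wh\mQ_{\varepsilon,n}\bm r_{\theta,B,n}|
 \le(\bm a^\top\wh\mQ_{\varepsilon,n}\bm a)^{1/2}\,O_{\Pp}(p^{-1}).
\]
Since $p^{-1}=O(n^{-1/2})$, this is already of the claimed order. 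It remains to show that $\bm a^\top\wh\mQ_{\varepsilon,n}\bm v=O_{\Pp}\{g_p(\bm a)\}$ for $\bm v\in\{\bm Z_n,\mG_{F,n}\bm Z_n\}$; the scalar $e_n^{-1}$ is $O_{\Pp}(1)$ by \eqref{eq:e_lower}.

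To prove this, peel $\wh\mQ_{\varepsilon,n}$ back to the $\calF_n^0$-measurable resolvent $\mQ_n$ using the three identities already used in Lemma~\ref{lem:E_bulk}:
\begin{align*}
 \wh\mQ_{\varepsilon,n}&=\mQ_{HF,n}-\mQ_{HF,n}\mE_n\wh\mQ_{\varepsilon,n},\\
 \mQ_{HF,n}&=\mQ_{H,n}-\mQ_{H,n}\mF_n\mQ_{HF,n},\\
 \mQ_{H,n}&=\mQ_n-\mQ_n\mH_n\mK_{H,n}\mH_n^\top\mQ_n.
\end{align*}
The $\mE_n$ term is exactly the mixed bilinear form controlled by \eqref{eq:E_mixed_rate_new} with $\bm d=\bm a$, so it is $O_{\Pp}\{g_p(\bm a)\}$. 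The $\mF_n$ term is bounded as in \eqref{eq:remove_F_mixed}, using the factorization of Lemma~\ref{lem:F_factorization} together with \eqref{eq:E_scaled_d} and \eqref{eq:E_factor_random}. For instance,
\[
 \|\bm a^\top\mQ_{H,n}\mP_{F,p}\|=O_{\Pp}\{g_p(\bm a)p^{-1/2}\},
 \qquad
 \|\mV_{F,n}^\top\mQ_{HF,n}\bm v\|=O_{\Pp}(\sqrt p),
\]
and their product is $O_{\Pp}\{g_p(\bm a)\}$. In the Woodbury term, $\|\mH_n^\top\mQ_n\bm a\|=O_{\Pp}\{g_p(\bm a)n^{-1/2}\}$ by conditional Rademacher second moments, while $\|\mH_n^\top\mQ_n\bm v\|=O_{\Pp}(\sqrt p)$ and $\|\mK_{H,n}\|_{\op}=O_{\Pp}(1)$. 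Their product is again $O_{\Pp}\{g_p(\bm a)\}$.

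The leading term is $\bm a^\top\mQ_n\bm v$. Conditionally on $\calF_n^0$, write $\bm v=n^{-1/2}\sum_is_i\bm v_{i,n}$. Then $\bm a^\top\mQ_n\bm v$ is a linear Rademacher form, and
\[
 \E_s(\bm a^\top\mQ_n\bm Z_n)^2=\bm a^\top\mQ_n\mB_n\mQ_n\bm a\le\bm a^\top\mQ_n\bm a\le Cg_p(\bm a)^2,
\]
using $\mQ_n\mB_n\mQ_n\preceq\mQ_n$ and the block orders \eqref{eq:QG_block_orders}. For $\bm v=\mG_{F,n}\bm Z_n$ the same conclusion follows from \eqref{eq:DQGD_bound}--\eqref{eq:DGQGD_bound}, which preserve the scaled factor and bulk orders. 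Conditional Chebyshev then gives $\bm a^\top\mQ_n\bm v=O_{\Pp}\{g_p(\bm a)\}$. Multiplying by $n^{-1/2}$ and combining with the previous paragraph yields \eqref{eq:cross_rate_weighted_new}.

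The step I expect to be the main obstacle is confirming that the mixed-rate bound \eqref{eq:E_mixed_rate_new} and the $\mF_n$-removal bound hold with exactly $g_p(\bm a)$ and no extra $\log p$ factor, uniformly over deterministic arrays $\bm a_p$. If a $\log p$ loss appeared there, the conclusion would have to be weakened, so I would re-check the Walsh coefficient-energy bound \eqref{eq:E_scalar_coefficient_energy} for a deterministic external vector first.
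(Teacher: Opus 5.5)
Your proposal is correct and follows essentially the same route as the paper. You reduce to the scale $g_p(\bm a)$ via \eqref{eq:signal_scaled_lower_new} and insert the spatial-median expansion. You then control $\bm a^\top\wh\mQ_{\varepsilon,n}\bm v$ for $\bm v\in\{\bm Z_n,\mG_{F,n}\bm Z_n\}$ through four pieces: the conditional Rademacher bound on the $\mQ_n$ term, the rank-two Woodbury correction, the factor correction, and \eqref{eq:E_mixed_rate_new} for the pure-bulk remainder. The paper uses that last bound directly as stated, with no $\log p$ loss.

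The differences are cosmetic. You peel the resolvents down from $\wh\mQ_{\varepsilon,n}$ to $\mQ_n$, whereas the paper builds up from $\mQ_n$. You also bound the median remainder by weighted Cauchy--Schwarz rather than by the paper's $\|\bm a\|\le\sqrt p\,g_p(\bm a)$, which is slightly sharper but serves the same purpose.
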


\begin{proof}
Put
\[
 g_{a,p}^2=p^{-1}\norm{\mPi_{F,p}\bm a}^2
 +\norm{\mPi_{B,p}\bm a}^2,
 \qquad
 S_{a,n}=\bm a^\top\wh\mQ_{\varepsilon,n}\bm a.
\]
Lemma~\ref{lem:scaled_resolvent_power} gives
\begin{equation}\label{eq:ga_Sa_cross}
 g_{a,p}^2\le CS_{a,n}.
\end{equation}
The block orders in \eqref{eq:QG_block_orders} imply
$\bm a^\top\mQ_n\bm a\le Cg_{a,p}^2$.  Conditional on
$\calF_n^0=\sigma\{w_i,\widetilde{\bm Y}_i:1\le i\le n\}$,
\begin{align*}
 \E_s\{(\bm a^\top\mQ_n\bm Z_n)^2\mid\calF_n^0\}
 &=\bm a^\top\mQ_n\mB_n\mQ_n\bm a
 \le Cg_{a,p}^2,\\
 \E_s\{(\bm a^\top\mQ_n\bm W_n)^2\mid\calF_n^0\}
 &\le\bar w^2\bm a^\top\mQ_n\mB_n\mQ_n\bm a
 \le Cg_{a,p}^2.
\end{align*}
Thus
\begin{equation}\label{eq:aQH_cross}
 \bm a^\top\mQ_n\mH_n
 =O_{\Pp}(g_{a,p}n^{-1/2}).
\end{equation}

Let $\mQ_{H,n}$ be the rank-two resolvent in
Lemma~\ref{lem:full_woodbury}.  Its two-dimensional middle inverse is
bounded because the denominator in Proposition~\ref{prop:scalar} is
bounded away from zero.  Since
$\mH_n^\top\mQ_n\bm Z_n=O_{\Pp}(p^{1/2})$, the Woodbury identity and
\eqref{eq:aQH_cross} give
\begin{equation}\label{eq:aQHZ_power}
 \bm a^\top\mQ_{H,n}\bm Z_n=O_{\Pp}(g_{a,p}).
\end{equation}
The scaled block calculation in the proof of Theorem~\ref{thm:median}
also gives
\begin{equation*}
 \bm a^\top\mQ_{H,n}\mG_{F,n}\bm Z_n
 =O_{\Pp}(g_{a,p}).
\end{equation*}
For completeness, conditional second moments of the two uncorrected
terms are bounded by
\[
 \bm a^\top\mQ_n\mG_{F,n}\mB_n
 \mG_{F,n}^\top\mQ_n\bm a\le Cg_{a,p}^2,
\]
which follows from \eqref{eq:DQD_bound},
\eqref{eq:DG_right_bound}, and
\eqref{eq:BFF_order_power}--\eqref{eq:BBB_order_power}; the rank-two
correction is then treated exactly as in \eqref{eq:aQHZ_power}.

We next insert the factor correction.  From
\eqref{eq:QH_factor_suppression}, \eqref{eq:VF_rate}, and the scaled
block bounds,
\begin{align}
 \norm{\bm a^\top\mQ_{H,n}\mP_{F,p}}
 &=O_{\Pp}(g_{a,p}p^{-1/2}),
 &\norm{\bm a^\top\mQ_{H,n}\mV_{F,n}}
 &=O_{\Pp}(g_{a,p}),\label{eq:aQH_factor_power}\\
 \norm{\mP_{F,p}^\top\mQ_{HF,n}\bm v_n}
 &=O_{\Pp}(p^{-1/2}),
 &\norm{\mV_{F,n}^\top\mQ_{HF,n}\bm v_n}
 &=O_{\Pp}(p^{1/2}),\notag
\end{align}
for $\bm v_n\in\{\bm Z_n,\mG_{F,n}\bm Z_n\}$.  The second line follows
from the argument leading to \eqref{eq:PQHFZ_sharp}; replacing
$\bm Z_n$ by $\mG_{F,n}\bm Z_n$ leaves the same scaled block orders.
Since every summand of
$\mF_n=\mP_F\mA_{F,n}\mP_F^\top+
\mP_F\mV_{F,n}^\top+\mV_{F,n}\mP_F^\top$
contains at least one factor projection, the resolvent identity and
\eqref{eq:aQH_factor_power} yield
\begin{equation*}
 \bm a^\top\mQ_{HF,n}\bm v_n=O_{\Pp}(g_{a,p}),
 \qquad
 \bm v_n\in\{\bm Z_n,\mG_{F,n}\bm Z_n\}.
\end{equation*}
The pure-bulk estimated-center remainder is handled by
\eqref{eq:E_mixed_rate_new}; hence
\begin{equation}\label{eq:Qhat_Z_GFZ_cross}
 \bm a^\top\wh\mQ_{\varepsilon,n}\bm Z_n=O_{\Pp}(g_{a,p}),
 \qquad
 \bm a^\top\wh\mQ_{\varepsilon,n}\mG_{F,n}\bm Z_n
 =O_{\Pp}(g_{a,p}).
\end{equation}

Finally, Theorem~\ref{thm:median} gives
\[
 \wh\btheta_\varepsilon
 =\frac{1}{e_n\sqrt n}\bm Z_n
 +\frac1{\sqrt n}\mG_{F,n}\bm Z_n
 +\bm r_{\theta,B,n},
 \qquad
 \norm{\bm r_{\theta,B,n}}=O_{\Pp}(p^{-1}).
\]
Since $\norm{\bm a}\le\sqrt p\,g_{a,p}$,
\[
 \abs{\bm a^\top\wh\mQ_{\varepsilon,n}\bm r_{\theta,B,n}}
 \le\rho_0^{-1}\norm{\bm a}\norm{\bm r_{\theta,B,n}}
 =O_{\Pp}(g_{a,p}p^{-1/2}).
\]
Equation \eqref{eq:Qhat_Z_GFZ_cross}, $p/n\asymp1$, and
$e_n^{-1}=O_{\Pp}(1)$ therefore imply
\[
 \bm a^\top\wh\mQ_{\varepsilon,n}\wh\btheta_\varepsilon
 =O_{\Pp}(g_{a,p}n^{-1/2}).
\]
Combining this relation with \eqref{eq:ga_Sa_cross} proves
\eqref{eq:cross_rate_weighted_new}.
\end{proof}

\begin{lemma}\label{lem:bulk_signal_DE}
Let $\bm d_p$ be deterministic, $\mPi_{F,p}\bm d_p=\0$, and $\norm{\bm d_p}\le C$.  Let $(\delta_{p,\rho},\wt\delta_{p,\rho})$ solve \eqref{eq:canonical_delta_finite}--\eqref{eq:canonical_tdelta_finite}, and define
\begin{equation*}
\mQ_{B,p}^{\star}(\rho)
=\mP_{B,p}
\{\rho(\I_{p-r}+\wt\delta_{p,\rho}\mLambda_{B,p})\}^{-1}
\mP_{B,p}^\top.
\end{equation*}
For the noise sample in (2.18), let
\begin{equation*}
\wh\mQ_{\varepsilon,n}
=(\wh\mR_{\varepsilon,n}+\rho\I_p)^{-1}.
\end{equation*}
Then, for every fixed $\rho\in[\rho_0,\rho_1]$,
\begin{equation}\label{eq:bulk_signal_DE_rate}
\bm d_p^\top\wh\mQ_{\varepsilon,n}\bm d_p
-\bm d_p^\top\mQ_{B,p}^{\star}(\rho)\bm d_p
=O_{\Pp}\left(\sqrt{\frac{\log p}{p}}\right).
\end{equation}
If $\calG_p\Rightarrow\calG$ as in (2.13), then
\begin{equation}\label{eq:bulk_signal_limit}
\bm d_p^\top\mQ_{B,p}^{\star}(\rho)\bm d_p
\to q_\rho(\calG).
\end{equation}
\end{lemma}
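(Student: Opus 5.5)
The plan is to peel $\wh\mR_{\varepsilon,n}$ back to the oracle matrix $\mB_n$ using the decomposition of Theorem~\ref{thm:finite_rank}. Since $\bm d_p$ is deterministic and bulk, these steps should reuse the mixed-vector estimates already established. After that, $\bm d_p^\top\mQ_n\bm d_p$ is compared with its separable proxy, and the deterministic-equivalent theory of Appendix~\ref{app:det_equiv} finishes the argument. Here $g_p(\bm d_p)=\norm{\bm d_p}\le C$, so every error term that is $O_{\Pp}\{g_p(\bm d_p)n^{-1/2}\}$ or smaller is harmless.

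\textbf{Step 1: remove the estimated-center corrections.} Write $\wh\mQ_{\varepsilon,n}=\mQ_{HF,n}-\mQ_{HF,n}\mE_n\wh\mQ_{\varepsilon,n}$. For the $\mE_n$ term I would reuse the Walsh-polynomial bound behind \eqref{eq:E_mixed_rate_new}, with $\bm d_p$ on both sides; the scalar coefficient energy there is $Cg_p(\bm d_p)^2/p$. This should give $\bm d_p^\top\mQ_{HF,n}\mE_n\wh\mQ_{\varepsilon,n}\bm d_p=O_{\Pp}(p^{-1/2})$. Next, remove $\mF_n$ through $\mQ_{HF,n}-\mQ_{H,n}=-\mQ_{H,n}\mF_n\mQ_{HF,n}$. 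Every term of $\mF_n$ carries a $\mP_{F,p}$, and \eqref{eq:E_scaled_d} gives $\norm{\bm d_p^\top\mQ_{H,n}\mP_{F,p}}=O_{\Pp}(p^{-1/2})$. Together with \eqref{eq:VF_rate}, this makes the $\mF_n$ contribution $O_{\Pp}(p^{-1/2})$. Finally, the rank-two Woodbury identity of Lemma~\ref{lem:rank_two_inverse} has middle inverse bounded by $O_{\Pp}(1)$, and conditional Rademacher moments give $\norm{\mH_n^\top\mQ_n\bm d_p}=O_{\Pp}(n^{-1/2})$. Hence $\bm d_p^\top(\mQ_{H,n}-\mQ_n)\bm d_p=O_{\Pp}(n^{-1})$.

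\textbf{Step 2: pass from the factor-contaminated oracle to the bulk proxy.} Because $\mPi_{F,p}\bm d_p=\0$, write $\bm d_p=\mP_{B,p}\bm d_B$. Applying the Schur complement to $\mB_n+\rho\I_p$ in the factor/bulk blocks gives $\mP_{B,p}^\top\mQ_n\mP_{B,p}=(\mB_{BB}+\rho\I-\mB_{BF}(\mB_{FF}+\rho\I)^{-1}\mB_{FB})^{-1}$. The correction is a rank-$r$ matrix with operator norm $O_{\Pp}(1)$, by the block orders \eqref{eq:BFF_order_power}--\eqref{eq:BBB_order_power}. A fixed-rank perturbation does not, in general, leave a fixed quadratic form unchanged, so here I would use the structure instead. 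Take $\mY_n=\mY/\sqrt n$ and let $\mY_F$ be its factor rows, so $\mB_{BF}(\mB_{FF}+\rho\I)^{-1}\mB_{FB}=\mY_B\mY_F^\top(\mY_F\mY_F^\top+\rho\I)^{-1}\mY_F\mY_B^\top$. Up to an $O_{\Pp}(p^{-1})$ error this equals $\mY_B\Pi_{\mY_F}\mY_B^\top$, where $\Pi_{\mY_F}$ is projection onto an $r$-dimensional row space in $\R^n$. The bulk block then becomes $n^{-1}\mY_B(\I_n-\Pi)\mY_B^\top+\rho\I$, i.e.\ a Gram matrix of $n-r$ effective columns. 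Since the factor Gaussians are independent of $\mG_B$, the sample directions defining $\Pi$ are independent of $\mG_B$ given $\mD_p^0$. Conditional on them, the quadratic form in the deterministic vector $\bm d_B$ changes only by $O_{\Pp}(n^{-1/2})$, via the isotropic local law (Theorem~1.1 of \citet{HachemLoubatonNajimVallet2013}) applied to a column-deleted separable model. Replacing $\mY_B$ by $\overline\mY_B$ costs $O_{\Pp}(\sqrt{\log p/p})$, using the bound of $\norm{\mK_{B,n}-\overline\mK_{B,n}}_{\op}$ established in the proof of Proposition~\ref{prop:weighted_DE}. This replacement is the source of the stated rate.

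\textbf{Step 3: deterministic equivalent and limit.} The same isotropic bilinear bound gives $\bm d_B^\top\overline\mR_n\bm d_B=\bm d_B^\top\{\rho(\I+\wt\delta^D_{n,\rho}\mLambda_{B,p})\}^{-1}\bm d_B+O_{\Pp}(n^{-1/2})$. Then \eqref{eq:empirical_fixed_point_rate} replaces $\wt\delta^D_{n,\rho}$ by $\wt\delta_{p,\rho}$ at cost $O_{\Pp}(n^{-1/2})$, which yields \eqref{eq:bulk_signal_DE_rate}. For \eqref{eq:bulk_signal_limit}, note that $\bm d_p^\top\mQ^\star_{B,p}\bm d_p=\rho^{-1}\int(1+\wt\delta_{p,\rho}t)^{-1}\dd\calG_p(t)$. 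The integrands are uniformly bounded and equicontinuous on $[\omega_-,\omega_+]$, with $\wt\delta_{p,\rho}\to\wt\delta_\rho$ (shown in the passage to the limits of Section~2.4) and $\calG_p\Rightarrow\calG$, so the limit follows.

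I expect the main obstacle to be Step~2. A fixed-rank factor correction is not negligible for a fixed deterministic quadratic form unless one shows that the factor sample directions are asymptotically orthogonal, in the resolvent metric, to $\bm d_p$. The conditional-independence and column-deletion argument sketched above is my first attempt at this.
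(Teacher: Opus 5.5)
Your plan is sound, but Step~2 takes a genuinely different route from the paper. The paper takes the Schur complement in the opposite order. It inverts the bulk block first, $\mQ_{B,n}=(\mP_{B,p}^\top\mB_n\mP_{B,p}+\rho\I_{p-r})^{-1}$, so that
\[
 \left|\bm d_p^\top\mQ_n\bm d_p-\bm d_{B,p}^\top\mQ_{B,n}\bm d_{B,p}\right|
 \le\norm{\mB_{FB,n}\mQ_{B,n}\bm d_{B,p}}^2\,
 \norm{\mP_{F,p}^\top\mQ_n\mP_{F,p}}_{\op}.
\]
The second factor is $O_{\Pp}(p^{-1})$ by \eqref{eq:QPF}. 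The first is shown to be $O_{\Pp}(1)$, rather than the trivial $O_{\Pp}(p)$, using Sherman--Morrison \eqref{eq:SM_cross_power}: flipping $\bm G_{B,i}$ reverses the $i$th summand and leaves every other summand unchanged, so the summands are exactly uncorrelated. This is precisely the ``asymptotic orthogonality in the resolvent metric'' you flag as the obstacle. The paper gets it by an elementary cancellation that works with the true $q_{i,p}$ and gives an $O_{\Pp}(p^{-1})$ factor error. The separable proxy and the isotropic law are then needed only once, for $\mQ_{B,n}$.

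Your factor-first complement instead turns the factors into a rank-$r$ downdate $\rho(\mY_F^\top\mY_F+\rho\I_n)^{-1}$ of the column weighting. That avoids any cancellation argument. The price is a second application of Theorem~1.1 of \citet{HachemLoubatonNajimVallet2013} and a stability argument for perturbed canonical equations. The resulting $O_{\Pp}(n^{-1/2})$ error is still within the target rate.

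To make your route rigorous, two statements need repair.

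First, the projection $\Pi$ is not independent of $\mG_B$, even given $\mD_p^0$. The factor entries $\sqrt{p\tau_{a,p}}\,G_{ai}\,q_{i,p}^{-1/2}$ depend on the bulk Gaussians through $q_{i,p}$. You must replace $q_{i,p}$ by $Q_{i,p}^0$ in the factor rows as well. Since $s_r(\mY_{F,n})\ge c\sqrt p$ and this replacement has operator norm $O_{\Pp}(\sqrt{\log p})$ by \eqref{eq:uv_max}, Wedin's theorem moves $\Pi$ by $O_{\Pp}(\sqrt{\log p/p})$, which is affordable. Then condition on all factor Gaussians, not only on $\mD_p^0$, so that $\Pi$ is fixed while $\mG_B$ stays standard Gaussian.

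Second, the compressed matrix is not literally a column-deleted separable model. Rotate $\mG_B$ on the right by the eigenvectors of $(\mD_p^0)^{1/2}(\I_n-\Pi)(\mD_p^0)^{1/2}$. This conditionally preserves its law and produces a separable model whose column variances interlace with the $D_{i,p}^0$. Since $x\mapsto x/\{\rho(1+\delta x)\}$ is bounded and monotone on $[0,\mu_-^{-1}]$, the canonical solutions then move by only $O(r/n)$.

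Two smaller points concern Step~1. The Walsh argument for $\mE_n$ is unnecessary, and your coefficient-energy claim $Cg_p(\bm d_p)^2/p$ is not established anywhere. The paper simply uses \eqref{eq:E_op_rate_new}, which gives $|\bm d_p^\top(\wh\mQ_{\varepsilon,n}-\mQ_{HF,n})\bm d_p|\le\rho_0^{-2}\norm{\bm d_p}^2\norm{\mE_n}_{\op}=O_{\Pp}(p^{-1/2})$. For the $\mF_n$ term you also need $\norm{\mP_{F,p}^\top\mQ_{HF,n}\bm d_p}=O_{\Pp}(p^{-1/2})$, not only the $\mQ_{H,n}$ version. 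One summand of $\mF_n$ carries its factor projection on the $\mQ_{HF,n}$ side; the bound follows from \eqref{eq:QH_factor_suppression}.
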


\begin{proof}
Write $\bm d_{B,p}=\mP_{B,p}^\top\bm d_p$ and
\begin{equation*}
\overline{\mQ}_{B,n}
=(\overline{\mB}_{B,n}+\rho\I_{p-r})^{-1},
\end{equation*}
where $\overline{\mB}_{B,n}$ is defined in
\eqref{eq:separable_proxy_main}.  Conditional on
$\mD_p^0=\diag(D_{1,p}^0,\ldots,D_{n,p}^0)$, let
$(\delta_{n,\rho}^{D},\wt\delta_{n,\rho}^{D})$ solve
\begin{align}
\delta_{n,\rho}^{D}
&=c_{B,n}\int\frac{t}{\rho(1+\wt\delta_{n,\rho}^{D}t)}
\,\dd H_{B,p}(t),\label{eq:empirical_delta_power}\\
\wt\delta_{n,\rho}^{D}
&=\frac1n\sum_{i=1}^n
\frac{D_{i,p}^0}{\rho(1+\delta_{n,\rho}^{D}D_{i,p}^0)}.
\label{eq:empirical_tdelta_power}
\end{align}
Define
\begin{equation*}
\overline{\mT}_{B,n}^{D}
=\{\rho(\I_{p-r}+\wt\delta_{n,\rho}^{D}\mLambda_{B,p})\}^{-1}.
\end{equation*}
The bilinear resolvent estimate in Theorem~1.1 of
\citet{HachemLoubatonNajimVallet2013}, applied conditionally on
$\mD_p^0$, gives the explicit second-moment bound
\begin{equation*}
\E_{\mG_B}\left[
\left|
\bm d_{B,p}^\top
(\overline{\mQ}_{B,n}-\overline{\mT}_{B,n}^{D})
\bm d_{B,p}
\right|^2
\middle|\mD_p^0\right]
\le \frac Cp.
\end{equation*}
The constant is uniform in $p$, $\rho$, and
$\norm{\bm d_{B,p}}\le C$ because
\begin{equation*}
\omega_-\le\lambda_{j,p}\le\omega_+,
\qquad
0<D_{i,p}^0\le\mu_-^{-1},
\qquad
\rho_0\le\rho\le\rho_1.
\end{equation*}
Consequently,
\begin{equation}\label{eq:HLNV_conditional_rate_power}
\bm d_{B,p}^\top
(\overline{\mQ}_{B,n}-\overline{\mT}_{B,n}^{D})
\bm d_{B,p}=O_{\Pp}(p^{-1/2}).
\end{equation}

We next replace the empirical column-scale equations
\eqref{eq:empirical_delta_power}--\eqref{eq:empirical_tdelta_power}
by their population versions
\eqref{eq:canonical_delta_finite}--\eqref{eq:canonical_tdelta_finite}.
These are exactly the empirical and deterministic systems compared in
Lemma~\ref{lem:empirical_canonical}.  Hence, without a second
linearization argument,
\begin{equation}\label{eq:empirical_fixed_point_power_rate}
 |\delta_{n,\rho}^{D}-\delta_{p,\rho}|
 +|\wt\delta_{n,\rho}^{D}-\wt\delta_{p,\rho}|
 =O_{\Pp}(n^{-1/2}).
\end{equation}
Since
\[
 \left|
 \frac1{\rho(1+x\lambda)}
 -\frac1{\rho(1+y\lambda)}
 \right|
 \le \frac{\omega_+}{\rho_0}|x-y|,
\]
combining this Lipschitz bound with \eqref{eq:empirical_fixed_point_power_rate}, we obtain
\begin{equation}\label{eq:Tbar_population_power}
\norm{
\overline{\mT}_{B,n}^{D}
-\{\rho(\I_{p-r}+\wt\delta_{p,\rho}\mLambda_{B,p})\}^{-1}
}_{\op}
=O_{\Pp}(n^{-1/2}).
\end{equation}
Equations \eqref{eq:HLNV_conditional_rate_power} and
\eqref{eq:Tbar_population_power} imply
\begin{align}
&\bm d_{B,p}^\top\overline{\mQ}_{B,n}\bm d_{B,p}
-\bm d_{B,p}^\top
\{\rho(\I_{p-r}+\wt\delta_{p,\rho}\mLambda_{B,p})\}^{-1}
\bm d_{B,p}\notag\\
&\hspace{35mm}
=O_{\Pp}(p^{-1/2}).
\label{eq:HLNV_signal_bilinear}
\end{align}

Let
\begin{equation*}
\mQ_{B,n}
=(\mP_{B,p}^\top\mB_n\mP_{B,p}+\rho\I_{p-r})^{-1}.
\end{equation*}
Put
\[
 \mD_q=\diag(q_{1,p}^{-1},\ldots,q_{n,p}^{-1}).
\]
The $p$-dimensional bulk matrices satisfy
\begin{align*}
&\mP_{B,p}^\top\mB_n\mP_{B,p}-\overline{\mB}_{B,n}\\
&\quad=
\frac1n\mLambda_{B,p}^{1/2}\mG_B
(\mD_q-\mD_p^0)\mG_B^\top
\mLambda_{B,p}^{1/2}.
\end{align*}
Lemma~\ref{lem:angular_denominator} and the Gaussian operator-norm bound \citep[Theorem~4.4.5, pp.~90--91]{Vershynin2018}
give
\begin{align*}
\norm{\mD_q-\mD_p^0}_{\op}
&=O_{\Pp}\left(\sqrt{\frac{\log p}{p}}\right),\\
\frac1n\norm{\mG_B}_{\op}^2&=O_{\Pp}(1),
\end{align*}
and hence
\begin{equation*}
\norm{
\mP_{B,p}^\top\mB_n\mP_{B,p}
-\overline{\mB}_{B,n}}
_{\op}
=O_{\Pp}\left(\sqrt{\frac{\log p}{p}}\right).
\end{equation*}
The resolvent identity and
$\norm{\mQ_{B,n}}_{\op},
\norm{\overline{\mQ}_{B,n}}_{\op}\le\rho_0^{-1}$ yield
\begin{equation}\label{eq:bulk_proxy_signal_bound}
\left|
\bm d_{B,p}^\top
(\mQ_{B,n}-\overline{\mQ}_{B,n})
\bm d_{B,p}
\right|
=O_{\Pp}\left(\sqrt{\frac{\log p}{p}}\right).
\end{equation}

In the factor--bulk eigenbasis, write
\begin{equation*}
\mB_n+\rho\I_p=
\begin{pmatrix}
\mB_{FF,n}+\rho\I_r&\mB_{FB,n}\\
\mB_{BF,n}&\mB_{BB,n}+\rho\I_{p-r}
\end{pmatrix}.
\end{equation*}
For a fixed factor coordinate $a\le r$, set
\begin{equation*}
C_{a,n}=\bm e_a^\top\mB_{FB,n}\mQ_{B,n}\bm d_{B,p}
=\frac1n\sum_{i=1}^n
Y_{ai}\bm Y_{B,i}^\top\mQ_{B,n}\bm d_{B,p}.
\end{equation*}
Let $\mQ_{B,n}^{(i)}$ be the resolvent with column $i$ removed.  The
Sherman--Morrison identity gives the exact equality
\begin{equation}\label{eq:SM_cross_power}
\bm Y_{B,i}^\top\mQ_{B,n}\bm d_{B,p}
=
\frac{\bm Y_{B,i}^\top\mQ_{B,n}^{(i)}\bm d_{B,p}}
{1+n^{-1}\bm Y_{B,i}^\top\mQ_{B,n}^{(i)}\bm Y_{B,i}}.
\end{equation}
Conditional on the factor scores and the leave-one-column-out bulk
matrix, the numerator in \eqref{eq:SM_cross_power} is an odd function
of $\bm G_{B,i}$ and has conditional mean zero.  Moreover,
\begin{align*}
&\E\left[
Y_{ai}^2
\{\bm Y_{B,i}^\top\mQ_{B,n}^{(i)}\bm d_{B,p}\}^2
\middle|\mQ_{B,n}^{(i)},(G_{bi})_{b\le r}
\right]\\
&\qquad\le
Cp\,\bm d_{B,p}^\top
(\mQ_{B,n}^{(i)})^2\bm d_{B,p}
\le Cp\rho_0^{-2}\norm{\bm d_{B,p}}^2.
\end{align*}
Define the $i$th summand after applying
\eqref{eq:SM_cross_power} by
\[
 \zeta_{i,a}
 =Y_{ai}
 \frac{\bm Y_{B,i}^\top\mQ_{B,n}^{(i)}\bm d_{B,p}}
 {1+n^{-1}\bm Y_{B,i}^\top\mQ_{B,n}^{(i)}\bm Y_{B,i}}.
\]
For $i\ne j$, change only $\bm G_{B,i}$ to $-\bm G_{B,i}$.  The
quantity $q_{i,p}$, the matrix $\mB_{BB,n}$, every deleted resolvent, and
$\zeta_{j,a}$ are unchanged, whereas $\zeta_{i,a}$ changes sign.
Distributional invariance therefore gives
$\E(\zeta_{i,a}\zeta_{j,a})=0$ exactly.  The denominator is at least
one, and the preceding conditional second-moment bound gives
\begin{equation}\label{eq:cross_coordinate_total_power}
\E|C_{a,n}|^2
=\frac1{n^2}\sum_{i=1}^n\E\zeta_{i,a}^2
\le \frac{C}{n^2}\sum_{i=1}^n p
=C\frac pn\le C.
\end{equation}
Equation \eqref{eq:cross_coordinate_total_power} gives $C_{a,n}=O_{\Pp}(1)$, so no unexpanded leave-one-column replacement remainder is needed.
Since $r$ is fixed,
\begin{equation*}
\norm{\mB_{FB,n}\mQ_{B,n}\bm d_{B,p}}=O_{\Pp}(1).
\end{equation*}
The Schur complement appearing in the inverse of
$\mB_n+\rho\I_p$ satisfies
\begin{equation*}
\norm{
\{\mB_{FF,n}+\rho\I_r
-\mB_{FB,n}\mQ_{B,n}\mB_{BF,n}\}^{-1}}
_{\op}
=\norm{\mP_{F,p}^\top\mQ_n\mP_{F,p}}_{\op}
=O_{\Pp}(p^{-1})
\end{equation*}
by Lemma~\ref{lem:factor_suppression}.  Therefore
\begin{align}
&\left|
\bm d_p^\top\mQ_n\bm d_p
-\bm d_{B,p}^\top\mQ_{B,n}\bm d_{B,p}
\right|\notag\\
&\quad\le
\norm{\mB_{FB,n}\mQ_{B,n}\bm d_{B,p}}^2
\norm{
\{\mB_{FF,n}+\rho\I_r
-\mB_{FB,n}\mQ_{B,n}\mB_{BF,n}\}^{-1}}
_{\op}
=O_{\Pp}(p^{-1}).
\label{eq:factor_update_signal_bound}
\end{align}

It remains to replace the oracle-centered resolvent by the
sample-median-centered resolvent.  Lemma~\ref{lem:full_woodbury}
represents the leading center correction by the columns
$\bm Z_n/\sqrt n$ and $\bm W_n/\sqrt n$.  Conditional on the unsigned
angular observations,
\begin{align*}
\E_s\left[
\left\{\frac1{\sqrt n}
\bm d_p^\top\mQ_n\bm Z_n\right\}^2
\middle|\calF_n^0\right]
&=\frac1n\bm d_p^\top\mQ_n\mB_n\mQ_n\bm d_p
\le\frac1n\bm d_p^\top\mQ_n\bm d_p
=O_{\Pp}(n^{-1}),\\
\E_s\left[
\left\{\frac1{\sqrt n}
\bm d_p^\top\mQ_n\bm W_n\right\}^2
\middle|\calF_n^0,(w_i)_{i=1}^n\right]
&\le\frac{\bar w^2}{n}
\bm d_p^\top\mQ_n\mB_n\mQ_n\bm d_p
=O_{\Pp}(n^{-1}).
\end{align*}
Hence
\begin{equation}\label{eq:dQH_signal_power}
\bm d_p^\top\mQ_n\mH_n=O_{\Pp}(n^{-1/2}).
\end{equation}
The two-dimensional middle inverse for the rank-two update is
$O_{\Pp}(1)$ by Proposition~\ref{prop:scalar} and
\eqref{eq:nondegenerate_DE}; hence \eqref{eq:dQH_signal_power} makes
its contribution to the quadratic form $O_{\Pp}(n^{-1})$.
For the factor correction, Lemmas~\ref{lem:factor_suppression},~\ref{lem:F_factorization}, and~\ref{lem:full_woodbury} give
\begin{align}
 \norm{\bm d_p^\top\mQ_{H,n}\mP_{F,p}}
 +\norm{\mP_{F,p}^\top\mQ_{HF,n}\bm d_p}
 &=O_{\Pp}(p^{-1/2}),\label{eq:dQ_factor_projection_power}\\
 \norm{\bm d_p^\top\mQ_{H,n}\mV_{F,n}}
 +\norm{\mV_{F,n}^\top\mQ_{HF,n}\bm d_p}
 &=O_{\Pp}(1).\label{eq:dQ_factor_bulk_column_power}
\end{align}
The second line cannot generally be improved to $O_{\Pp}(p^{-1/2})$
because $\mV_{F,n}$ is a bulk vector of order one.  Instead, use the
resolvent identity and the factorization
\[
 \mF_n=\mP_{F,p}\mA_{F,n}\mP_{F,p}^\top
 +\mP_{F,p}\mV_{F,n}^\top
 +\mV_{F,n}\mP_{F,p}^\top.
\]
Every summand contains at least one factor projection, and therefore
\begin{align*}
&\left|\bm d_p^\top(\mQ_{HF,n}-\mQ_{H,n})\bm d_p\right|\\
&\quad=\left|\bm d_p^\top\mQ_{H,n}\mF_n\mQ_{HF,n}\bm d_p\right|\\
&\quad\le
 \norm{\bm d_p^\top\mQ_{H,n}\mP_{F,p}}
 \norm{\mA_{F,n}}
 \norm{\mP_{F,p}^\top\mQ_{HF,n}\bm d_p}\\
&\qquad+
 \norm{\bm d_p^\top\mQ_{H,n}\mP_{F,p}}
 \norm{\mV_{F,n}^\top\mQ_{HF,n}\bm d_p}\\
&\qquad+
 \norm{\bm d_p^\top\mQ_{H,n}\mV_{F,n}}
 \norm{\mP_{F,p}^\top\mQ_{HF,n}\bm d_p}
 =O_{\Pp}(p^{-1/2}).
\end{align*}
Thus the combined rank-two and factor corrections are
$O_{\Pp}(n^{-1}+p^{-1/2})=O_{\Pp}(p^{-1/2})$.
Finally, Lemma~\ref{lem:E_bulk} and the resolvent identity imply
\begin{equation}\label{eq:E_signal_bound}
\left|
\bm d_p^\top
(\wh\mQ_{\varepsilon,n}-\mQ_{HF,n})\bm d_p
\right|
\le \rho_0^{-2}\norm{\bm d_p}^2\norm{\mE_n}_{\op}
=O_{\Pp}\left(\sqrt{\frac{\log p}{p}}\right).
\end{equation}
Combining \eqref{eq:HLNV_signal_bilinear},
\eqref{eq:bulk_proxy_signal_bound},
\eqref{eq:factor_update_signal_bound},
\eqref{eq:dQH_signal_power},
\eqref{eq:dQ_factor_projection_power}--\eqref{eq:dQ_factor_bulk_column_power},
and \eqref{eq:E_signal_bound} prove
\eqref{eq:bulk_signal_DE_rate}.

Finally,
\begin{align}
\bm d_p^\top\mQ_{B,p}^{\star}(\rho)\bm d_p
&=\frac1\rho\sum_{j=r+1}^p
\frac{(\bm p_{j,p}^\top\bm d_p)^2}
{1+\wt\delta_{p,\rho}\lambda_{j,p}}\notag\\
&=\frac1\rho\int
\frac1{1+\wt\delta_{p,\rho}t}\,\dd\calG_p(t).
\label{eq:q_finite_power}
\end{align}
The fixed-point variables are contained in a compact subset of
$(0,\infty)^2$.  Every subsequential limit solves
(2.8)--(2.9); uniqueness therefore
gives
\begin{equation*}
\delta_{p,\rho}\to\delta_\rho,
\qquad
\wt\delta_{p,\rho}\to\wt\delta_\rho.
\end{equation*}
Since the integrand in \eqref{eq:q_finite_power} is uniformly bounded
and continuous on $[\omega_-,\omega_+]$, the weak convergence
$\calG_p\Rightarrow\calG$ proves \eqref{eq:bulk_signal_limit}.
\end{proof}

\begin{lemma}\label{lem:power_derivatives}
For $a=0,1,2$, define
\[
 A_{a,\rho}=\int\frac{d^{1+a/2}}{(1+\delta_\rho d)^2}\,\dd F_D(d),
 \qquad
 C_{a,\rho}=\int\frac{d^{2+a/2}}{(1+\delta_\rho d)^3}\,\dd F_D(d).
\]
Let
\[
 u_{3,\rho}=c\int\frac{t^3}{(1+\wt\delta_\rho t)^3}\,\dd H_B(t),
 \qquad
 v_{3,\rho}=\int\frac{d^3}{(1+\delta_\rho d)^3}\,\dd F_D(d),
 \qquad
 S_\rho=\rho^2-u_\rho v_\rho.
\]
Then (2.21) holds and
\begin{align*}
 \kappa_\rho'&=\delta_\rho'A_{0,\rho},
 &b_{a,\rho}'&=\delta_\rho'A_{a,\rho},\quad a=1,2,\\
 u_\rho'&=-2\wt\delta_\rho'u_{3,\rho},
 &v_\rho'&=-2\delta_\rho'v_{3,\rho},\\
 A_{a,\rho}'&=-2\delta_\rho'C_{a,\rho},
 &S_\rho'&=2\rho-u_\rho'v_\rho-u_\rho v_\rho'.
\end{align*}
For $a,b\in\{0,1,2\}$,
\begin{equation}\label{eq:Psi_derivative_ell}
 \Psi_{ab}'(\rho)
 =m_am_b\frac{
 \{u_\rho'A_{a,\rho}A_{b,\rho}
 +u_\rho A_{a,\rho}'A_{b,\rho}
 +u_\rho A_{a,\rho}A_{b,\rho}'\}S_\rho
 -u_\rho A_{a,\rho}A_{b,\rho}S_\rho'}{S_\rho^2}.
\end{equation}
Put $r_{a,\rho}=s_a-b_{a,\rho}$, $a=1,2$, and
\[
 \bm h_\rho=
 \begin{pmatrix}
 m_1^2r_{1,\rho}^2\\
 2\kappa_\rho m_1r_{1,\rho}\\
 \kappa_\rho^2
 \end{pmatrix},
 \qquad
 \bg_\rho=D_\rho^{-2}\bm h_\rho.
\]
Then
\begin{align*}
 D_\rho'
 &=-2m_1^2r_{1,\rho}b_{1,\rho}'
 +m_2\{\kappa_\rho'r_{2,\rho}-\kappa_\rho b_{2,\rho}'\},\\
 \bm h_\rho'
 &=\begin{pmatrix}
 -2m_1^2r_{1,\rho}b_{1,\rho}'\\
 2m_1\{\kappa_\rho'r_{1,\rho}-\kappa_\rho b_{1,\rho}'\}\\
 2\kappa_\rho\kappa_\rho'
 \end{pmatrix},\\
 \bg_\rho'
 &=D_\rho^{-2}\bm h_\rho'
 -2D_\rho^{-3}D_\rho'\bm h_\rho.
\end{align*}
Let $\mGamma_\rho'$ be obtained from (2.4) by replacing every $\psi_{ab,n}$ by $\Psi_{ab}'(\rho)$.  The derivative of the limiting conditional variance is
\begin{equation}\label{eq:sigma_derivative_explicit}
 \{\sigma_E^2(\rho)\}'
 =2(\bg_\rho')^\top\mGamma_\rho\bg_\rho
 +\bg_\rho^\top\mGamma_\rho'\bg_\rho.
\end{equation}
\end{lemma}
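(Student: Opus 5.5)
The plan is to differentiate the canonical system \eqref{eq:limit_delta}--\eqref{eq:limit_tdelta} implicitly in $\rho$, and then obtain every remaining derivative by the chain rule applied to integrals over $F_D$ and $H_B$. The integrands are smooth and uniformly bounded on the relevant supports, because $0<d\le\mu_-^{-1}$, $t\in[\omega_-,\omega_+]$, and $(\delta_\rho,\wt\delta_\rho)$ lies in a compact subset of $(0,\infty)^2$. Differentiation under the integral sign is therefore justified by dominated convergence.

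For the canonical derivatives, I first rewrite the system as
\[
\rho\delta_\rho=c\int t(1+\wt\delta_\rho t)^{-1}\dd H_B(t),
\qquad
\rho\wt\delta_\rho=\int d(1+\delta_\rho d)^{-1}\dd F_D(d).
\]
Differentiating the first relation gives $\delta_\rho+\rho\delta_\rho'=-\wt\delta_\rho'u_\rho$, and the second gives $\wt\delta_\rho+\rho\wt\delta_\rho'=-\delta_\rho'v_\rho$. Together these are exactly \eqref{eq:canonical_derivative_system_main}, i.e.\ (2.21). The coefficient matrix has determinant $S_\rho=\rho^2-u_\rho v_\rho$. The stability inequality, obtained as the limit of \eqref{eq:stability_finite} in Lemma~\ref{lem:canonical_stability}, gives $S_\rho\ge s_0>0$. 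So the linear system is uniquely solvable, and the implicit function theorem justifies differentiability of $\rho\mapsto(\delta_\rho,\wt\delta_\rho)$. This invertibility is the only genuinely non-routine point.

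Next come the scalar integrals. Since $\partial_\delta\lambda(d)=d(1+\delta d)^{-2}$, differentiating gives $\kappa_\rho'=\delta_\rho'A_{0,\rho}$ and $b_{a,\rho}'=\delta_\rho'A_{a,\rho}$. Differentiating $(1+x)^{-2}$ produces the cubes, which yields $u_\rho'=-2\wt\delta_\rho'u_{3,\rho}$, $v_\rho'=-2\delta_\rho'v_{3,\rho}$, and $A_{a,\rho}'=-2\delta_\rho'C_{a,\rho}$. The formula for $S_\rho'$ is immediate. Then \eqref{eq:Psi_derivative_ell} follows from the quotient and product rules applied to \eqref{eq:Psi_limit}. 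The denominator is $S_\rho$, which stays positive.

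Finally, $s_a$ does not depend on $\rho$, so $r_{a,\rho}'=-b_{a,\rho}'$. Differentiating \eqref{eq:D_rho_ell} and $\bm h_\rho$ componentwise gives the displayed formulas for $D_\rho'$ and $\bm h_\rho'$. The formula for $\bg_\rho'$ follows from the product rule applied to $D_\rho^{-2}\bm h_\rho$, using $D_\rho>0$ from the limit of \eqref{eq:nondegenerate_DE}. Since $\mGamma_\rho$ is symmetric and depends linearly on the $\Psi_{ab}$, differentiating $\bg_\rho^\top\mGamma_\rho\bg_\rho$ gives \eqref{eq:sigma_derivative_explicit}. The main obstacle is the implicit differentiation step, which needs the strict bound $S_\rho>0$; everything after that is bookkeeping.
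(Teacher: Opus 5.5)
Your proposal is correct and follows essentially the same route as the paper. You differentiate the canonical equations under the integral sign to obtain (2.21) and use $S_\rho=\rho^2-u_\rho v_\rho\ge s_0>0$ for solvability. You then get the remaining derivatives from the generic identity for $\int d^\alpha(1+\delta_\rho d)^{-j}\dd F_D(d)$ and its $H_B$ analogue, and finish with the quotient and product rules. Your appeal to the implicit function theorem for differentiability of $\rho\mapsto(\delta_\rho,\wt\delta_\rho)$ is a small piece of rigor the paper leaves implicit. To make it airtight, note that uniqueness of the positive solution identifies the local implicit-function branch with $(\delta_\rho,\wt\delta_\rho)$.
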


\begin{proof}
Differentiating the two canonical equations under the integral sign gives
\begin{align*}
 \rho\delta_\rho'+u_\rho\wt\delta_\rho'&=-\delta_\rho,\\
 v_\rho\delta_\rho'+\rho\wt\delta_\rho'&=-\wt\delta_\rho,
\end{align*}
which is (2.21).  By
$S_\rho\ge s_0>0$, the derivative vector is continuous and uniformly bounded on $[\rho_0,\rho_1]$.

For $j\ge1$ and every exponent $\alpha$ appearing above,
\[
 \frac{\dd}{\dd\rho}
 \int\frac{d^\alpha}{(1+\delta_\rho d)^j}\,\dd F_D(d)
 =-j\delta_\rho'
 \int\frac{d^{\alpha+1}}{(1+\delta_\rho d)^{j+1}}\,\dd F_D(d).
\]
The interchange of differentiation and integration follows from
$0<d\le\mu_-^{-1}$ and $\rho\ge\rho_0$.  Applying this identity to
$\kappa_\rho$, $b_{a,\rho}$, $v_\rho$, and $A_{a,\rho}$, and applying the analogous identity to the $H_B$ integral defining $u_\rho$, gives the displayed derivative formulas.  The quotient rule applied to
\[
 \Psi_{ab}(\rho)=m_am_b
 \frac{u_\rho A_{a,\rho}A_{b,\rho}}{S_\rho}
\]
proves \eqref{eq:Psi_derivative_ell}.  Direct differentiation of
$D_\rho$, $\bm h_\rho$, and $\bg_\rho=D_\rho^{-2}\bm h_\rho$ gives the remaining vector formulas.  Finally,
\[
 \frac{\dd}{\dd\rho}
 (\bg_\rho^\top\mGamma_\rho\bg_\rho)
 =2(\bg_\rho')^\top\mGamma_\rho\bg_\rho
 +\bg_\rho^\top\mGamma_\rho'\bg_\rho,
\]
which proves \eqref{eq:sigma_derivative_explicit}.
\end{proof}

\begin{proof}[Proof of Theorem~2.3]
Proposition~\ref{prop:translation} gives the exact identities
\begin{equation*}
\wh\btheta-\btheta_{0,p}
=\bDelta_p+\wh\btheta_\varepsilon,
\qquad
\wh\mQ_n=\wh\mQ_{\varepsilon,n}.
\end{equation*}
It also gives exact equality of the nuisance estimators calculated from
the shifted and noise samples.  Consequently,
\begin{align}
T_n(\rho)-n\wh\mu_n
={}&\Big\{
n\wh\btheta_\varepsilon^\top
\wh\mQ_{\varepsilon,n}\wh\btheta_\varepsilon
-n\wh\mu_n
\Big\}\notag\\
&+n\bDelta_p^\top\wh\mQ_{\varepsilon,n}\bDelta_p
+2n\bDelta_p^\top\wh\mQ_{\varepsilon,n}
\wh\btheta_\varepsilon.
\label{eq:T_alternative_decomp_new}
\end{align}
Define the standardized noise contribution by
\begin{equation*}
Z_{\varepsilon,n}(\rho)
=
\frac{
 n\wh\btheta_\varepsilon^\top
 \wh\mQ_{\varepsilon,n}\wh\btheta_\varepsilon
 -n\wh\mu_n}
{\sqrt{n\wh\sigma_{D,n}^2}}.
\end{equation*}
The noise sample satisfies $H_0$, so Theorem~2.2 yields
\begin{equation}\label{eq:Zeps_null_power}
Z_{\varepsilon,n}(\rho)\dto N(0,1).
\end{equation}

Under (2.14),
$\bDelta_p=n^{-1/4}\bm d_p$.  Hence
\begin{equation}\label{eq:local_signal_exact_scaling}
\frac{n\bDelta_p^\top
\wh\mQ_{\varepsilon,n}\bDelta_p}{\sqrt n}
=\bm d_p^\top\wh\mQ_{\varepsilon,n}\bm d_p.
\end{equation}
Lemma~\ref{lem:bulk_signal_DE} and
\eqref{eq:q_finite_power} give
\begin{align*}
\bm d_p^\top\wh\mQ_{\varepsilon,n}\bm d_p
={}&q_\rho(\calG)
+O_{\Pp}\left(\sqrt{\frac{\log p}{p}}\right)
+r_{G,p}(\rho),
\\
r_{G,p}(\rho)
={}&\frac1\rho\int
\frac1{1+\wt\delta_{p,\rho}t}\,\dd\calG_p(t)
-\frac1\rho\int
\frac1{1+\wt\delta_\rho t}\,\dd\calG(t)
\to0.
\end{align*}

Lemma~\ref{lem:signal_noise_cross_new} gives
\begin{align*}
\left|
\bm d_p^\top\wh\mQ_{\varepsilon,n}
\wh\btheta_\varepsilon
\right|
&=O_{\Pp}\left[
\left\{
\frac1n\bm d_p^\top\wh\mQ_{\varepsilon,n}\bm d_p
\right\}^{1/2}
\right]\\
&=O_{\Pp}(n^{-1/2}),
\end{align*}
where the last equality follows from
\eqref{eq:bulk_signal_DE_rate} and
$\sup_p\norm{\bm d_p}<\infty$.  Therefore
\begin{equation*}
\frac{2n\bDelta_p^\top\wh\mQ_{\varepsilon,n}
\wh\btheta_\varepsilon}{\sqrt n}
=2n^{1/4}\bm d_p^\top\wh\mQ_{\varepsilon,n}
\wh\btheta_\varepsilon
=O_{\Pp}(n^{-1/4}).
\end{equation*}

Let $(\sigma_{E,p}^{\star})^2$ be defined by \eqref{eq:g_sigma_finite}.  Theorem~\ref{thm:rates} and Proposition~\ref{prop:weighted_DE} give
\begin{align*}
 \wh\sigma_{D,n}^2-\sigma_E^2(\rho)
 ={}&\{\wh\sigma_{D,n}^2-\sigma_{D,n}^2\}
 +\{\sigma_{D,n}^2-(\sigma_{E,p}^{\star})^2\}
 +r_{\sigma,p}(\rho),\\
 \wh\sigma_{D,n}^2-\sigma_{D,n}^2
 ={}&O_{\Pp}(n^{-1/2}),\\
 \sigma_{D,n}^2-(\sigma_{E,p}^{\star})^2
 ={}&O_{\Pp}(p^{-1/4}),\\
 r_{\sigma,p}(\rho)
 :={}&(\sigma_{E,p}^{\star})^2-\sigma_E^2(\rho)
 \to0.
\end{align*}
Thus
\[
 \wh\sigma_{D,n}^2
 =\sigma_E^2(\rho)
 +O_{\Pp}(n^{-1/2})
 +O_{\Pp}(p^{-1/4})
 +r_{\sigma,p}(\rho).
\]
By \eqref{eq:nondegenerate_DE} and (2.12),
$\sigma_E(\rho)>0$ for the fixed ridge value under consideration.
The mean-value theorem therefore gives
\begin{align}
 \wh\sigma_{D,n}-\sigma_E(\rho)
 ={}&O_{\Pp}(n^{-1/2})
 +O_{\Pp}(p^{-1/4})
 +O\{|r_{\sigma,p}(\rho)|\},\notag\\
 \frac1{\wh\sigma_{D,n}}-\frac1{\sigma_E(\rho)}
 ={}&O_{\Pp}(n^{-1/2})
 +O_{\Pp}(p^{-1/4})
 +O\{|r_{\sigma,p}(\rho)|\}.
\label{eq:inverse_sigma_rate_power}
\end{align}

Divide \eqref{eq:T_alternative_decomp_new} by
$\sqrt{n\wh\sigma_{D,n}^2}$ and use
\eqref{eq:local_signal_exact_scaling}--
\eqref{eq:inverse_sigma_rate_power}.  We obtain
\begin{align}
Z_n(\rho)
={}&Z_{\varepsilon,n}(\rho)
+\frac{q_\rho(\calG)}{\sigma_E(\rho)}\notag\\
&+O_{\Pp}(n^{-1/4})
+O_{\Pp}(p^{-1/4})
+O_{\Pp}\left(\sqrt{\frac{\log p}{p}}\right)
+r_p(\rho),
\label{eq:local_Z_expansion}
\end{align}
where
\begin{equation*}
|r_p(\rho)|
\le C\{|r_{G,p}(\rho)|+|r_{\sigma,p}(\rho)|\}
\to0.
\end{equation*}
Equations \eqref{eq:Zeps_null_power} and
\eqref{eq:local_Z_expansion} prove
(2.16).  Therefore
\begin{align*}
\lim_{n\to\infty}
\Pp_{H_{1,n}(\calG)}\{Z_n(\rho)>z_{1-\alpha}\}
&=\Pp\{N(\Lambda_\rho(\calG),1)>z_{1-\alpha}\}\\
&=1-\Phi\{z_{1-\alpha}-\Lambda_\rho(\calG)\},
\end{align*}
which proves (2.17).

For the consistency statement, define
\begin{equation*}
S_{n,p}=\bDelta_p^\top
\wh\mQ_{\varepsilon,n}\bDelta_p.
\end{equation*}
Lemma~\ref{lem:scaled_resolvent_power} gives
\begin{equation*}
S_{n,p}
\ge c\left\{
\norm{\mPi_{B,p}\bDelta_p}^2
+p^{-1}\norm{\mPi_{F,p}\bDelta_p}^2
\right\}.
\end{equation*}
Condition (2.19) consequently implies
\begin{equation}\label{eq:sqrt_n_S_diverges_power}
\sqrt nS_{n,p}\pto\infty.
\end{equation}
Lemma~\ref{lem:signal_noise_cross_new} gives
\begin{align*}
\frac{2n\bDelta_p^\top\wh\mQ_{\varepsilon,n}
\wh\btheta_\varepsilon}
{\sqrt{n\wh\sigma_{D,n}^2}}
&=O_{\Pp}(S_{n,p}^{1/2}),
\\
\frac{S_{n,p}^{1/2}}{\sqrt nS_{n,p}}
&=(nS_{n,p})^{-1/2}
\pto0.
\end{align*}
Using \eqref{eq:Zeps_null_power}, \eqref{eq:nondegenerate_DE}, and
\eqref{eq:sqrt_n_S_diverges_power}, we obtain
\begin{align*}
Z_n(\rho)
={}&Z_{\varepsilon,n}(\rho)
+\frac{\sqrt nS_{n,p}}{\wh\sigma_{D,n}}
+O_{\Pp}(S_{n,p}^{1/2})\\
={}&
\frac{\sqrt nS_{n,p}}{\wh\sigma_{D,n}}
\left\{1+O_{\Pp}((nS_{n,p})^{-1/2})\right\}
+O_{\Pp}(1)
\pto\infty.
\end{align*}
This proves the consistency assertion in Theorem~2.3.
\end{proof}

\section{Finite-grid joint limits and Cauchy aggregation}\label{app:cauchy}

Throughout this appendix, $\mathcal R_K=\{\rho^{(1)},\ldots,\rho^{(K)}\}$ is the fixed grid in Section~3.  All maxima over $k$ and $\ell$ are therefore over a fixed finite set.  For compactness, write
\[
 \delta_{p,k}=\delta_{p,\rho^{(k)}},
 \qquad
 \wt\delta_{p,k}=\wt\delta_{p,\rho^{(k)}},
 \qquad
 \delta_k=\delta_{\rho^{(k)}},
 \qquad
 \wt\delta_k=\wt\delta_{\rho^{(k)}}.
\]

For $1\le k,\ell\le K$, define the finite-$p$ cross-ridge quantities
\begin{align*}
 u_{p,k\ell}
 &=c_{B,n}\int
 \frac{t^2}
 {(1+\wt\delta_{p,k}t)(1+\wt\delta_{p,\ell}t)}
 \,\dd H_{B,p}(t),\\
 v_{p,k\ell}
 &=\int
 \frac{d^2}
 {(1+\delta_{p,k}d)(1+\delta_{p,\ell}d)}
 \,\dd F_{D,p}(d),\\
 A_{a,p,k\ell}
 &=\int
 \frac{d^{1+a/2}}
 {(1+\delta_{p,k}d)(1+\delta_{p,\ell}d)}
 \,\dd F_{D,p}(d),
 \qquad a=0,1,2,
\end{align*}
and
\begin{equation}\label{eq:cross_Psi_finite}
 \Psi_{ab,p}^{k\ell}
 =m_am_b
 \frac{u_{p,k\ell}A_{a,p,k\ell}A_{b,p,k\ell}}
 {\rho^{(k)}\rho^{(\ell)}-u_{p,k\ell}v_{p,k\ell}},
 \qquad a,b\in\{0,1,2\}.
\end{equation}
At the limiting level, put
\begin{align*}
 u_{k\ell}
 &=c\int
 \frac{t^2}
 {(1+\wt\delta_kt)(1+\wt\delta_\ell t)}
 \,\dd H_B(t),\\
 v_{k\ell}
 &=\int
 \frac{d^2}
 {(1+\delta_kd)(1+\delta_\ell d)}
 \,\dd F_D(d),\\
 A_{a,k\ell}
 &=\int
 \frac{d^{1+a/2}}
 {(1+\delta_kd)(1+\delta_\ell d)}
 \,\dd F_D(d),
\end{align*}
and define $\Psi_{ab}^{k\ell}$ by replacing the finite-$p$ quantities in \eqref{eq:cross_Psi_finite} with their limits.  Let
\begin{equation}\label{eq:cross_Gamma_limit}
 \mGamma_{k\ell}=
 \begin{pmatrix}
 2\Psi_{00}^{k\ell}&2\Psi_{01}^{k\ell}&2\Psi_{11}^{k\ell}\\
 2\Psi_{01}^{k\ell}&\Psi_{02}^{k\ell}+\Psi_{11}^{k\ell}&2\Psi_{12}^{k\ell}\\
 2\Psi_{11}^{k\ell}&2\Psi_{12}^{k\ell}&2\Psi_{22}^{k\ell}
 \end{pmatrix}.
\end{equation}
Let $\mGamma_{p,k\ell}^{\star}$ have the form in \eqref{eq:cross_Gamma_limit}, with $\Psi_{ab}^{k\ell}$ replaced by $\Psi_{ab,p}^{k\ell}$, and write
\[
 \bg_{p,k}^{\star}=\bg_{p,\rho^{(k)}}^{\star},
 \qquad
 \sigma_{E,p,k}^{\star}=\sigma_{E,p}^{\star}(\rho^{(k)}).
\]
Define the finite-$p$ deterministic correlation
\begin{equation}\label{eq:cross_corr_finite}
 r_{p,k\ell}^{\star}
 =\frac{(\bg_{p,k}^{\star})^\top
 \mGamma_{p,k\ell}^{\star}\bg_{p,\ell}^{\star}}
 {\sigma_{E,p,k}^{\star}\sigma_{E,p,\ell}^{\star}}.
\end{equation}
The notation $\mGamma_{\rho^{(k)},\rho^{(\ell)}}$ in Section~3 means $\mGamma_{k\ell}$.  Finally,
\begin{equation}\label{eq:cross_corr_limit}
 r_{k\ell}
 =\frac{\bg_{\rho^{(k)}}^\top
 \mGamma_{k\ell}\bg_{\rho^{(\ell)}}}
 {\sigma_E(\rho^{(k)})\sigma_E(\rho^{(\ell)})}.
\end{equation}
All factors in \eqref{eq:cross_Psi_finite} are nonnegative and the stability denominator is positive.  Since every component of the two gradient vectors is nonnegative, $r_{p,k\ell}^{\star}\ge0$ and $r_{k\ell}\ge0$.  Positive semidefiniteness and the upper bound one follow from the covariance representation in Lemma~\ref{lem:finite_grid_conditional_clt}.

For the oracle sample quantities, let $\mA_{n,k}$ be the companion matrix $\mA_n$ evaluated at $\rho^{(k)}$, and define
\[
 \psi_{ab,n}^{k\ell}
 =\frac1n\sum_{i\ne j}
 (\mA_{n,k})_{ij}(\mA_{n,\ell})_{ij}w_i^aw_j^b.
\]
Let $\mGamma_n^{k\ell}$ have the form in \eqref{eq:cross_Gamma_limit} with $\Psi_{ab}^{k\ell}$ replaced by $\psi_{ab,n}^{k\ell}$.

\begin{lemma}\label{lem:cross_ridge_DE}
Under Assumption~1, there is a constant $s_\times>0$ such that, for all sufficiently large $p$ and all $1\le k,\ell\le K$,
\begin{equation}\label{eq:cross_stability}
 \rho^{(k)}\rho^{(\ell)}-u_{p,k\ell}v_{p,k\ell}
 \ge s_\times.
\end{equation}
Moreover, for every $a,b\in\{0,1,2\}$,
\begin{equation}\label{eq:cross_psi_DE_rate}
 \max_{k,\ell}
 |\psi_{ab,n}^{k\ell}-\Psi_{ab,p}^{k\ell}|
 =O_{\Pp}(p^{-1/4}),
\end{equation}
and
\[
 \Psi_{ab,p}^{k\ell}\to\Psi_{ab}^{k\ell}.
\]
Consequently,
\begin{equation}\label{eq:cross_cov_DE_rate}
 \max_{k,\ell}
 \left|
 \frac{\bg_{n,k}^\top\mGamma_n^{k\ell}\bg_{n,\ell}}
 {\sigma_{D,n,k}\sigma_{D,n,\ell}}
 -r_{p,k\ell}^{\star}
 \right|
 =O_{\Pp}\!\left(
 p^{-1/4}+\sqrt{\frac{\log p}{p}}
 \right),
\end{equation}
where $\bg_{n,k}=\bg_n(\rho^{(k)})$ and
$\sigma_{D,n,k}=\sigma_{D,n}(\rho^{(k)})$.  In addition,
\[
 \max_{k,\ell}|r_{p,k\ell}^{\star}-r_{k\ell}|\to0.
\]
\end{lemma}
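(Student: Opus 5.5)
The plan is to mirror the single-ridge arguments of Lemma~\ref{lem:canonical_stability} and Proposition~\ref{prop:weighted_DE}, replacing squared quantities by products over two ridge values.

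\textbf{Stability \eqref{eq:cross_stability}.} By Cauchy--Schwarz applied to the defining integrals,
\[
 u_{p,k\ell}\le (u_{p,\rho^{(k)}}^{\star}u_{p,\rho^{(\ell)}}^{\star})^{1/2}
 \quad\text{and}\quad
 v_{p,k\ell}\le (v_{p,\rho^{(k)}}^{\star}v_{p,\rho^{(\ell)}}^{\star})^{1/2}.
\]
Hence
\[
 u_{p,k\ell}v_{p,k\ell}\le \{(u^\star v^\star)_k(u^\star v^\star)_\ell\}^{1/2}.
\]
The proof of Lemma~\ref{lem:canonical_stability} actually gives $u^\star v^\star\le\rho^2(1-\varepsilon_0)$. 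Therefore
\[
 u_{p,k\ell}v_{p,k\ell}\le(1-\varepsilon_0)\rho^{(k)}\rho^{(\ell)},
 \qquad
 s_\times=\varepsilon_0\rho_0^2 .
\]

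\textbf{Cross functionals \eqref{eq:cross_psi_DE_rate}.} I would repeat the proof of \eqref{eq:psi_DE} step by step.

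First, remove the radial variables by Efron--Stein. Changing one $\xi_i$ moves $\psi^{k\ell}_{ab,n}$ by at most
\[
 \frac Cn\sum_j|(\mA_{n,k})_{ij}(\mA_{n,\ell})_{ij}|\le \frac Cn,
\]
by Cauchy--Schwarz and the row bounds \eqref{eq:A_contraction_rows}.

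Second, replace $\ell_{i,p}$ by $(D^0_{i,p})^{1/2}$, and each $\mA_{n,k}$ by its separable proxy $\overline\mA_{n,k}$, using \eqref{eq:proxy_A_rate_new} and \eqref{eq:factor_trace_rank_new} at both ridges. The finite-rank factor error is controlled through $\|\cdot\|_\F$, as before.

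Third, apply the leave-two-out Schur formula \eqref{eq:Qij_leave_two} separately for each ridge, with resolvents $\mR^{(ij)}_k$ and $\mR^{(ij)}_\ell$. The denominators become
\[
 (1+\delta^D_{n,k}D_i)(1+\delta^D_{n,k}D_j)(1+\delta^D_{n,\ell}D_i)(1+\delta^D_{n,\ell}D_j),
\]
and the numerator becomes $q^{k}_{ij}q^{\ell}_{ij}$. The Gaussian conditional mean is now
\[
 \E(q^k_{ij}q^\ell_{ij})=D_iD_j\tr(\mLambda_B\mR^{(ij)}_k\mLambda_B\mR^{(ij)}_\ell).
\]
The Poincar\'e concentration step is unchanged, since the gradient bounds are bilinear in the two resolvents.

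The result is
\[
 \psi^{k\ell}\approx S^{k\ell}_n\,\overline A^{k\ell}_{a,n}\,\overline A^{k\ell}_{b,n},
 \qquad
 S^{k\ell}_n=n^{-1}\tr(\mLambda_B\overline\mR_k\mLambda_B\overline\mR_\ell).
\]

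\textbf{Main obstacle: the deterministic equivalent of $S^{k\ell}_n$.} This is the cross analogue of \eqref{eq:second_order_trace_DE}, and the earlier differentiation trick used a single ridge. I would instead use the resolvent identity
\[
 \overline\mR_k-\overline\mR_\ell=(\rho^{(\ell)}-\rho^{(k)})\overline\mR_k\overline\mR_\ell,
\]
together with a two-parameter perturbation
\[
 \widehat\delta_n(\varepsilon)=n^{-1}\tr\{\mLambda_B(\overline\mB+\rho^{(k)}\I+\varepsilon\mLambda_B)^{-1}\}.
\]
This does not directly give a mixed product. So I would rather derive $S^{k\ell}$ by the standard second-order computation: expand $\overline\mR_\ell$ via the column decomposition, apply Hachem et al.\ bilinear bounds, and obtain the linear equation
\[
 S^{k\ell}=u_{k\ell}/\rho^{(k)}\rho^{(\ell)}+(u_{k\ell}v_{k\ell}/\rho^{(k)}\rho^{(\ell)})S^{k\ell}.
\]
Solving it gives
\[
 S^{k\ell}=u_{p,k\ell}/(\rho^{(k)}\rho^{(\ell)}-u_{p,k\ell}v_{p,k\ell})+O_{\Pp}(p^{-1/4}).
\]
This is solvable precisely because of \eqref{eq:cross_stability}. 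As a consistency check, the diagonal case $k=\ell$ must reproduce \eqref{eq:second_order_trace_DE}. Finally, Bernstein's inequality and \eqref{eq:empirical_fixed_point_rate} replace the empirical $\overline A^{k\ell}$ averages by $A_{a,p,k\ell}$.

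\textbf{Limits and \eqref{eq:cross_cov_DE_rate}.} The convergence $\Psi^{k\ell}_{ab,p}\to\Psi^{k\ell}_{ab}$ follows from $\delta_{p,k}\to\delta_k$ (shown in the passage to the limits), dominated convergence, and the uniform denominator bound. For \eqref{eq:cross_cov_DE_rate}, combine the rates of Proposition~\ref{prop:weighted_DE} for $\bg_{n,k}$ and $\sigma_{D,n,k}$ (each bounded away from zero) with \eqref{eq:cross_psi_DE_rate}, and take a mean-value expansion of the ratio. The maximum over the finite grid costs nothing. Convergence $r^\star_{p,k\ell}\to r_{k\ell}$ follows by continuity.
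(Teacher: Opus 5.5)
Your proposal is correct and follows the paper's proof essentially step for step. The shared steps are the Cauchy--Schwarz argument for the cross-stability bound and the Efron--Stein, proxy-replacement, leave-two-out Schur and Poincar\'e reduction of $\psi_{ab,n}^{k\ell}$ to $L_{n,k\ell}\,\overline A^{k\ell}_{a,n}\overline A^{k\ell}_{b,n}$. Both then solve the same linear self-consistent equation $L=\ell^0+\ell^0vL$ with $\ell^0=u_{k\ell}/(\rho^{(k)}\rho^{(\ell)})$ via the cross-stability bound. The paper obtains that equation by Gaussian integration by parts with leave-one-out identities rather than your column (Sherman--Morrison) expansion, which leads to the same equation and remainder order.
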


\begin{proof}
Cauchy--Schwarz gives
\begin{align*}
 u_{p,k\ell}^2
 &\le u_{p,kk}u_{p,\ell\ell},\\
 v_{p,k\ell}^2
 &\le v_{p,kk}v_{p,\ell\ell}.
\end{align*}
The proof of Lemma~\ref{lem:canonical_stability} supplies one fixed $\varepsilon_0>0$ such that
\[
 u_{p,kk}v_{p,kk}
 \le\{\rho^{(k)}\}^2(1-\varepsilon_0),
 \qquad 1\le k\le K.
\]
Hence
\begin{align*}
 u_{p,k\ell}v_{p,k\ell}
 &\le
 \{u_{p,kk}v_{p,kk}
 u_{p,\ell\ell}v_{p,\ell\ell}\}^{1/2}\\
 &\le \rho^{(k)}\rho^{(\ell)}(1-\varepsilon_0),
\end{align*}
which proves \eqref{eq:cross_stability} with
$s_\times=\rho_0^2\varepsilon_0$.

We next evaluate the cross-ridge off-diagonal sum.  Condition on the diagonal matrix
\[
 \mD_p^0=\diag(D_{1,p}^0,\ldots,D_{n,p}^0).
\]
Retain the separable bulk proxy from \eqref{eq:separable_proxy_main} and put
\[
 \overline\mR_{n,k}
 =(\overline\mB_{B,n}+\rho^{(k)}\I_{p-r})^{-1},
 \qquad
 L_{n,k\ell}
 =\frac1n\tr(\mLambda_{B,p}\overline\mR_{n,k}
 \mLambda_{B,p}\overline\mR_{n,\ell}).
\]
Let $(\delta_{n,k}^{D},\wt\delta_{n,k}^{D})$ be the empirical canonical solution in Lemma~\ref{lem:empirical_canonical} at $\rho^{(k)}$, and define
\begin{align*}
 u_{n,k\ell}^{D}
 &=c_{B,n}\int
 \frac{t^2}
 {(1+\wt\delta_{n,k}^{D}t)
  (1+\wt\delta_{n,\ell}^{D}t)}
 \,\dd H_{B,p}(t),\\
 v_{n,k\ell}^{D}
 &=\frac1n\sum_{i=1}^n
 \frac{(D_{i,p}^0)^2}
 {(1+\delta_{n,k}^{D}D_{i,p}^0)
  (1+\delta_{n,\ell}^{D}D_{i,p}^0)}.
\end{align*}
We record the feedback calculation rather than invoking a
one-line ``two-resolvent expansion.''  Put
\begin{align*}
 \mT_k&=\{\rho^{(k)}(\I+
 \wt\delta_{n,k}^{D}\mLambda_{B,p})\}^{-1},\\
 \gamma_{i,k}&=(1+\delta_{n,k}^{D}D_{i,p}^0)^{-1},\\
 \ell_{n,k\ell}^{0}
 &=\frac1n\tr(\mLambda_{B,p}\mT_k
                 \mLambda_{B,p}\mT_\ell)
 =\frac{u_{n,k\ell}^{D}}
 {\rho^{(k)}\rho^{(\ell)}}.
\end{align*}
Conditionally on $\mD_p^0$, let
$g_{\alpha i}=G_{r+\alpha,i}$ and
$\lambda_{B,\alpha,p}=\lambda_{r+\alpha,p}$ for
$1\le\alpha\le p-r$, and write
$\bm x_i=(D_{i,p}^0)^{1/2}\mLambda_{B,p}^{1/2}\bm g_i$.  Then
$\overline\mB_{B,n}=n^{-1}\sum_i\bm x_i\bm x_i^\top$.  The two
identities used in the Gaussian integration-by-parts calculation are
\begin{align}
 \frac{\partial\overline\mR_{n,k}}
 {\partial g_{\alpha i}}
 &=-\overline\mR_{n,k}
 \frac{\partial\overline\mB_{B,n}}
 {\partial g_{\alpha i}}
 \overline\mR_{n,k},
 \label{eq:cross_resolvent_derivative}\\
 \frac{\partial\overline\mB_{B,n}}
 {\partial g_{\alpha i}}
 &=\frac{\sqrt{D_{i,p}^0\lambda_{B,\alpha,p}}}{n}
 \{\bm e_\alpha\bm x_i^\top+\bm x_i\bm e_\alpha^\top\}.
 \label{eq:cross_covariance_derivative}
\end{align}
Let $\E_G(\cdot)=\E(\cdot\mid\mD_p^0)$.  For each $i$, let
\[
 \overline\mR_{n,k}^{(i)}
 =\left(\frac1n\sum_{j\ne i}\bm x_j\bm x_j^\top
       +\rho^{(k)}\I_{p-r}\right)^{-1}
\]
and define
\begin{align*}
 \widehat\gamma_{i,k}
 &=\left\{1+n^{-1}\bm x_i^\top
       \overline\mR_{n,k}^{(i)}\bm x_i\right\}^{-1},\\
 L_{n,k\ell}^{(i)}
 &=\frac1n\tr(\mLambda_{B,p}\overline\mR_{n,k}^{(i)}
                 \mLambda_{B,p}\overline\mR_{n,\ell}^{(i)}),\\
 \varepsilon_{i,k}
 &=n^{-1}\bm x_i^\top\overline\mR_{n,k}^{(i)}\bm x_i
   -D_{i,p}^0\delta_{n,k}^{D},\\
 \zeta_{i,k\ell}
 &=n^{-1}\bm x_i^\top\overline\mR_{n,k}^{(i)}
        \mLambda_{B,p}\overline\mR_{n,\ell}^{(i)}\bm x_i
   -D_{i,p}^0L_{n,k\ell}^{(i)}.
\end{align*}
The Sherman--Morrison formula gives
\[
 \overline\mR_{n,k}\bm x_i
 =\widehat\gamma_{i,k}\overline\mR_{n,k}^{(i)}\bm x_i,
 \qquad
 |\widehat\gamma_{i,k}-\gamma_{i,k}|
 \le |\varepsilon_{i,k}|.
\]
We now spell out the feedback calculation.  Apply the Gaussian
integration-by-parts identity
$\E_G(g_{\alpha i}F)=\E_G(\partial F/\partial g_{\alpha i})$
to the coordinate representation of
$n^{-1}\tr(\mLambda_{B,p}\overline\mR_{n,k}
\mLambda_{B,p}\overline\mR_{n,\ell})$, and use
\eqref{eq:cross_resolvent_derivative}--\eqref{eq:cross_covariance_derivative}
and the preceding leave-one identities.  After summing over $\alpha$, the
zeroth contraction is $\ell_{n,k\ell}^{0}$.  The two derivatives falling on
the resolvents give, before replacing the leave-one denominators and
traces,
\[
 \frac{\ell_{n,k\ell}^{0}}n\sum_{i=1}^n
 (D_{i,p}^0)^2
 \E_G\{\widehat\gamma_{i,k}\widehat\gamma_{i,\ell}
       L_{n,k\ell}^{(i)}\}.
\]
The remaining contractions contain a centered leave-one quadratic form
$\varepsilon_{i,k}$ or $\varepsilon_{i,\ell}$, the mixed centered form
$\zeta_{i,k\ell}$, or one rank-one deletion error.  Since
\[
 |\widehat\gamma_{i,k}\widehat\gamma_{i,\ell}
   -\gamma_{i,k}\gamma_{i,\ell}|
 \le C(|\varepsilon_{i,k}|+|\varepsilon_{i,\ell}|),
\]
collecting these terms yields
\begin{align}
 \E_G L_{n,k\ell}
 ={}&\ell_{n,k\ell}^{0}
 +\frac{\ell_{n,k\ell}^{0}}n\sum_{i=1}^n
   (D_{i,p}^0)^2\gamma_{i,k}\gamma_{i,\ell}
   \E_G L_{n,k\ell}^{(i)}
 +\mathcal R_{n,k\ell}.
 \label{eq:cross_feedback_leave_one}
\end{align}
More precisely, Cauchy--Schwarz applied to the displayed centered forms and
the resolvent bounds gives
\begin{align}
 |\mathcal R_{n,k\ell}|
 \le C\bigg[&\frac1n\sum_{i=1}^n\E_G\{
   |\varepsilon_{i,k}|+|\varepsilon_{i,\ell}|
   +\varepsilon_{i,k}^2+\varepsilon_{i,\ell}^2
   +|\zeta_{i,k\ell}|\}\notag\\
 &+\max_{i\le n}\E_G|L_{n,k\ell}^{(i)}-L_{n,k\ell}|
 +\frac1n\bigg].
 \label{eq:cross_feedback_remainder_bound}
\end{align}
To verify the order in this bound, condition further on all columns except
$i$.  The Gaussian quadratic-form identity
$\Var(\bm g^\top\mC\bm g)=2\tr(\mC^2)$ and the uniform bounds on
$D_{i,p}^0$, $\mLambda_{B,p}$, and the ridge resolvents give
\begin{align}
 \frac1n\sum_{i=1}^n\E_G\varepsilon_{i,k}^2
 &=O_{\Pp}(n^{-1}),
 &\frac1n\sum_{i=1}^n\E_G\zeta_{i,k\ell}^2
 &=O_{\Pp}(n^{-1}).
 \label{eq:cross_leave_one_moments}
\end{align}
For the first relation, the difference between
$n^{-1}\tr(\mLambda_{B,p}\overline\mR_{n,k}^{(i)})$ and
$\delta_{n,k}^{D}$ is covered by the diagonal bilinear-form estimate in
Theorem~1.1 of \citet{HachemLoubatonNajimVallet2013}; the centered part is
the preceding Gaussian quadratic form.  For the second relation, its
conditional mean is zero by the definition of $L_{n,k\ell}^{(i)}$.
Furthermore, the rank-one resolvent identity gives, deterministically,
\[
 \max_{i,k,\ell}|L_{n,k\ell}^{(i)}-L_{n,k\ell}|\le Cn^{-1}.
\]
Cauchy--Schwarz in \eqref{eq:cross_feedback_remainder_bound}, together with \eqref{eq:cross_leave_one_moments}, therefore
implies
\begin{equation}
 \max_{k,\ell\le K}|\mathcal R_{n,k\ell}|
 =O_{\Pp}(n^{-1/2}).
 \label{eq:cross_feedback_R_rate}
\end{equation}

For completeness, the random mixed trace itself concentrates at a smaller
order.  Direct differentiation gives
\begin{align*}
 \frac{\partial L_{n,k\ell}}{\partial g_{\alpha i}}
 =-\frac{2\sqrt{D_{i,p}^0\lambda_{B,\alpha,p}}}{n^2}
 \bm e_\alpha^\top\big(&\overline\mR_{n,k}\mLambda_{B,p}
 \overline\mR_{n,\ell}\mLambda_{B,p}\overline\mR_{n,k}\\
 &+\overline\mR_{n,\ell}\mLambda_{B,p}
 \overline\mR_{n,k}\mLambda_{B,p}\overline\mR_{n,\ell}\big)\bm x_i.
\end{align*}
Consequently,
\[
 \sum_{\alpha,i}\left|
 \frac{\partial L_{n,k\ell}}{\partial g_{\alpha i}}
 \right|^2
 \le \frac C{n^4}\sum_i\|\bm x_i\|^2.
\]
Conditionally on $\mD_p^0$,
$\sum_i\E_G\|\bm x_i\|^2\le Cnp=O(n^2)$.  The Gaussian
Poincar\'e inequality
\citep[Theorem~3.20, p.~72]{BoucheronLugosiMassart2013} therefore gives
$\Var_G(L_{n,k\ell})\le Cn^{-2}$.  Since the grid is fixed,
\begin{equation}
 \max_{k,\ell\le K}|L_{n,k\ell}-\E_G L_{n,k\ell}|
 =O_{\Pp}(n^{-1}).
 \label{eq:cross_trace_concentration}
\end{equation}
Using the rank-one deletion bound in
\eqref{eq:cross_feedback_leave_one}, then
\eqref{eq:cross_feedback_R_rate}--\eqref{eq:cross_trace_concentration},
we obtain
\begin{equation}\label{eq:cross_feedback_remainder}
 L_{n,k\ell}
 =\ell_{n,k\ell}^{0}
 +\ell_{n,k\ell}^{0}v_{n,k\ell}^{D}L_{n,k\ell}
 +r_{n,k\ell},
 \qquad
 \max_{k,\ell\le K}|r_{n,k\ell}|=O_{\Pp}(n^{-1/2}).
\end{equation}
Since $p\asymp n$, the weaker rate needed below is
\begin{align}
 L_{n,k\ell}
 ={}&\frac{u_{n,k\ell}^{D}}
 {\rho^{(k)}\rho^{(\ell)}}
 +\frac{u_{n,k\ell}^{D}v_{n,k\ell}^{D}}
 {\rho^{(k)}\rho^{(\ell)}}L_{n,k\ell}
 +O_{\Pp}(p^{-1/4}).
\label{eq:cross_trace_feedback}
\end{align}

Equations \eqref{eq:empirical_fixed_point_rate} and Bernstein's inequality \citep[Theorem~2.10 and Corollary~2.11, pp.~36--37]{BoucheronLugosiMassart2013} imply
\begin{align*}
 \max_{k,\ell}|u_{n,k\ell}^{D}-u_{p,k\ell}|
 &=O_{\Pp}(n^{-1/2}),\\
 \max_{k,\ell}|v_{n,k\ell}^{D}-v_{p,k\ell}|
 &=O_{\Pp}(n^{-1/2}).
\end{align*}
The cross-stability bound permits division in \eqref{eq:cross_trace_feedback}, yielding
\begin{equation}\label{eq:cross_second_trace}
 \max_{k,\ell}
 \left|
 L_{n,k\ell}
 -\frac{u_{p,k\ell}}
 {\rho^{(k)}\rho^{(\ell)}-u_{p,k\ell}v_{p,k\ell}}
 \right|
 =O_{\Pp}(p^{-1/4}).
\end{equation}
Deleting one or two columns changes the normalized trace by $O(p^{-1})$, uniformly over the finite grid, by the resolvent identity and the bound $\|\overline\mR_{n,k}\|_{\op}\le\rho_0^{-1}$.

Put $\widetilde w_i=\xi_i(D_{i,p}^0)^{1/2}$ and let
$\overline\mA_{n,k}$ be the proxy companion matrix at $\rho^{(k)}$.  The Efron--Stein inequality \citep[Theorem~3.1, p.~54]{BoucheronLugosiMassart2013}, applied as in \eqref{eq:radial_offdiag_remove} with
$(\mA_n)_{ij}^2$ replaced by
$(\mA_{n,k})_{ij}(\mA_{n,\ell})_{ij}$, gives
\begin{align}
 \psi_{ab,n}^{k\ell}
 ={}&m_am_b\frac1n\sum_{i\ne j}
 (\mA_{n,k})_{ij}(\mA_{n,\ell})_{ij}
 \ell_{i,p}^a\ell_{j,p}^b
 +O_{\Pp}(n^{-1/2}).
\label{eq:cross_radial_remove}
\end{align}
Indeed, changing one $\xi_i$ changes the displayed function by at most
\begin{align*}
 \frac Cn\sum_{j\ne i}
 |(\mA_{n,k})_{ij}(\mA_{n,\ell})_{ij}|
 &\le\frac Cn
 \left\{\sum_j(\mA_{n,k})_{ij}^2\right\}^{1/2}
 \left\{\sum_j(\mA_{n,\ell})_{ij}^2\right\}^{1/2}\\
 &\le\frac Cn.
\end{align*}
Replacing $\ell_{i,p}$ by $(D_{i,p}^0)^{1/2}$ costs
$O_{\Pp}\{(\log p/p)^{1/2}\}$ by \eqref{eq:ell_D_rate}.  For the companion matrices,
\begin{align*}
 &\frac1n\sum_{i,j}
 |(\mA_{n,k})_{ij}(\mA_{n,\ell})_{ij}
 -(\overline\mA_{n,k})_{ij}(\overline\mA_{n,\ell})_{ij}|\\
 &\quad\le\frac1n
 \|\mA_{n,k}-\overline\mA_{n,k}\|_{\F}
 \|\mA_{n,\ell}\|_{\F}
 +\frac1n
 \|\overline\mA_{n,k}\|_{\F}
 \|\mA_{n,\ell}-\overline\mA_{n,\ell}\|_{\F}\\
 &\quad=O_{\Pp}\left(\sqrt{\frac{\log p}{p}}+n^{-1/2}\right),
\end{align*}
where \eqref{eq:proxy_A_rate_new}, \eqref{eq:factor_trace_rank_new}, and the contraction bounds were used.

It remains to evaluate
\[
 T_{ab,n}^{k\ell}
 =\frac1n\sum_{i\ne j}
 (\overline\mA_{n,k})_{ij}(\overline\mA_{n,\ell})_{ij}
 (D_{i,p}^0)^{a/2}(D_{j,p}^0)^{b/2}.
\]
For $i\ne j$, let $\overline\mR_{n,k}^{(ij)}$ be the ridge-$\rho^{(k)}$ resolvent after deleting columns $i$ and $j$.  The two-column Schur complement gives
\[
 (\overline\mA_{n,k})_{ij}
 =\frac{n^{-1}\bm x_i^\top
 \overline\mR_{n,k}^{(ij)}\bm x_j}
 {\{1+\delta_{n,k}^{D}D_{i,p}^0\}
  \{1+\delta_{n,k}^{D}D_{j,p}^0\}}
 +r_{ij,k},
\]
where the averaged squared contribution of $r_{ij,k}$ to
$T_{ab,n}^{k\ell}$ is $O_{\Pp}(n^{-1/2})$.  Put
\[
 q_{ij,k}=\bm x_i^\top\overline\mR_{n,k}^{(ij)}\bm x_j,
 \qquad
 \gamma_{i,k}=(1+\delta_{n,k}^{D}D_{i,p}^0)^{-1}.
\]
After the denominator replacement, define
\begin{align*}
 \widetilde T_{ab,n}^{k\ell}
 &=\frac1{n^3}\sum_{i\ne j}
 (D_{i,p}^0)^{a/2}(D_{j,p}^0)^{b/2}
 \gamma_{i,k}\gamma_{i,\ell}\gamma_{j,k}\gamma_{j,\ell}
 q_{ij,k}q_{ij,\ell},\\
 M_{ab,n}^{k\ell}
 &=\frac1{n^3}\sum_{i\ne j}
 (D_{i,p}^0)^{1+a/2}(D_{j,p}^0)^{1+b/2}
 \gamma_{i,k}\gamma_{i,\ell}\gamma_{j,k}\gamma_{j,\ell}\\
 &\qquad\times
 \tr(\mLambda_{B,p}\overline\mR_{n,k}^{(ij)}
      \mLambda_{B,p}\overline\mR_{n,\ell}^{(ij)}).
\end{align*}
The averaged denominator error stated above gives
$T_{ab,n}^{k\ell}-\widetilde T_{ab,n}^{k\ell}
=O_{\Pp}(n^{-1/2})$.  Conditional on the deleted resolvents and
$\mD_p^0$,
\begin{align*}
 \E_{i,j}(q_{ij,k}q_{ij,\ell})
 =D_{i,p}^0D_{j,p}^0
 \tr(\mLambda_{B,p}\overline\mR_{n,k}^{(ij)}
      \mLambda_{B,p}\overline\mR_{n,\ell}^{(ij)}),
\end{align*}
so the conditional expectation of
$\widetilde T_{ab,n}^{k\ell}-M_{ab,n}^{k\ell}$ is zero.  Differentiating
$q_{ij,k}q_{ij,\ell}$ and the mixed trace with respect to the Gaussian
coordinates gives the same three derivative classes as in the proof of
Proposition~\ref{prop:weighted_DE}, with one ridge-$\rho^{(k)}$ and one
ridge-$\rho^{(\ell)}$ resolvent.  The bounds
$\|\overline\mR_{n,k}^{(ij)}\|_{\op}\le\rho_0^{-1}$ and
$\|\mLambda_{B,p}\|_{\op}\le\omega_+$, followed by the same
Cauchy--Schwarz and Gaussian moment calculation, give
\[
 \E_G\left[
 \sum_{\alpha=1}^{p-r}\sum_{h=1}^n
 \left|\frac{\partial
 (\widetilde T_{ab,n}^{k\ell}-M_{ab,n}^{k\ell})}
 {\partial g_{\alpha h}}\right|^2\right]\le \frac Cn
\]
uniformly over the fixed grid.  Its conditional mean is zero, so the
Gaussian Poincar\'e inequality implies
\begin{equation*}
 \widetilde T_{ab,n}^{k\ell}-M_{ab,n}^{k\ell}
 =O_{\Pp}(n^{-1/2}).
\end{equation*}

Finally, deleting two columns changes the unnormalized mixed trace by at
most a constant.  Hence, uniformly over $i\ne j$ and the fixed grid,
\[
 \tr(\mLambda_{B,p}\overline\mR_{n,k}^{(ij)}
      \mLambda_{B,p}\overline\mR_{n,\ell}^{(ij)})
 =nL_{n,k\ell}+O(1).
\]
Summing the $O(1)$ error contributes $O(n^{-1})$, and restoring the
omitted diagonal contributes the same order.  Thus
\begin{align*}
 T_{ab,n}^{k\ell}
 ={}&L_{n,k\ell}
 \left\{\frac1n\sum_i
 \frac{(D_{i,p}^0)^{1+a/2}}
 {(1+\delta_{n,k}^{D}D_{i,p}^0)
  (1+\delta_{n,\ell}^{D}D_{i,p}^0)}\right\}\\
 &\times
 \left\{\frac1n\sum_j
 \frac{(D_{j,p}^0)^{1+b/2}}
 {(1+\delta_{n,k}^{D}D_{j,p}^0)
  (1+\delta_{n,\ell}^{D}D_{j,p}^0)}\right\}
 +O_{\Pp}(n^{-1/2}).
\end{align*}
Using \eqref{eq:cross_second_trace}, the empirical fixed-point rate, and Bernstein's inequality \citep[Theorem~2.10 and Corollary~2.11, pp.~36--37]{BoucheronLugosiMassart2013} in the two averages gives
\[
 T_{ab,n}^{k\ell}
 =\frac{u_{p,k\ell}A_{a,p,k\ell}A_{b,p,k\ell}}
 {\rho^{(k)}\rho^{(\ell)}-u_{p,k\ell}v_{p,k\ell}}
 +O_{\Pp}(p^{-1/4}).
\]
Together with \eqref{eq:cross_radial_remove}, this proves \eqref{eq:cross_psi_DE_rate}.

Equations \eqref{eq:kappa_DE}--\eqref{eq:D_sigma_DE} give
\begin{align*}
 \max_k\|\bg_{n,k}-\bg_{p,k}^{\star}\|
 &=O_{\Pp}\left(\sqrt{\frac{\log p}{p}}\right),\\
 \max_k|\sigma_{D,n,k}-\sigma_{E,p,k}^{\star}|
 &=O_{\Pp}(p^{-1/4}).
\end{align*}
The finite-$p$ denominators are uniformly bounded away from zero.  Substitution into \eqref{eq:cross_corr_finite} proves \eqref{eq:cross_cov_DE_rate}.

The compact canonical bounds, weak convergence of $H_{B,p}$ and
$F_{D,p}$, and uniqueness of the limiting canonical equations give, for every fixed $k$ and $\ell$,
\[
 \Psi_{ab,p}^{k\ell}\to\Psi_{ab}^{k\ell},
 \qquad
 \bg_{p,k}^{\star}\to\bg_{\rho^{(k)}},
 \qquad
 \sigma_{E,p,k}^{\star}\to\sigma_E(\rho^{(k)}).
\]
Because $K$ is fixed, the convergence is uniform over the grid.  Equation \eqref{eq:cross_corr_finite} therefore gives
$\max_{k,\ell}|r_{p,k\ell}^{\star}-r_{k\ell}|\to0$.
\end{proof}

For each $k$, let
\[
 \bm q_{n,k}=(x_{n,k},y_{n,k},z_{n,k})^\top,
 \qquad
 \bm q_{0,n,k}=(\kappa_{n,k},b_{1,n,k},b_{2,n,k})^\top,
\]
where the quantities in \ref{app:joint} are evaluated at
$\rho^{(k)}$.  The exact sign identities become
\begin{align*}
 x_{n,k}-\kappa_{n,k}
 &=\frac2n\sum_{i<j}a_{ij,k}s_is_j,\\
 y_{n,k}-b_{1,n,k}
 &=\frac1n\sum_{i<j}a_{ij,k}(w_i+w_j)s_is_j,\\
 z_{n,k}-b_{2,n,k}
 &=\frac2n\sum_{i<j}a_{ij,k}w_iw_js_is_j,
\end{align*}
where $a_{ij,k}$ is the $(i,j)$ entry of the unsigned companion matrix at $\rho^{(k)}$.

\begin{lemma}\label{lem:finite_grid_conditional_clt}
Under Assumption~1 and $H_0$, conditionally on
$\calF_n^0$,
\[
 \sqrt n\{\bm q_{n,1}-\bm q_{0,n,1},\ldots,
 \bm q_{n,K}-\bm q_{0,n,K}\}
\]
has a centered Gaussian approximation in probability whose $(k,\ell)$ covariance block is $\mGamma_n^{k\ell}$.  More precisely, for every bounded $\calF_n^0$-measurable triangular array
$\bm c_{n,k}=(c_{1,n,k},c_{2,n,k},c_{3,n,k})^\top$,
\[
 L_n=\sqrt n\sum_{k=1}^K
 \bm c_{n,k}^\top(\bm q_{n,k}-\bm q_{0,n,k})
\]
satisfies
\[
 \frac{L_n}{V_n^{1/2}}\dto N(0,1)
 \quad\text{conditionally in probability}
\]
whenever
\[
 V_n=\sum_{k,\ell=1}^K
 \bm c_{n,k}^\top\mGamma_n^{k\ell}\bm c_{n,\ell}
\]
is bounded away from zero.  If $V_n\to0$, then $L_n\to0$ in conditional probability.
\end{lemma}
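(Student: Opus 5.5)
The plan is to reduce the multi-ridge statement to the single Rademacher quadratic-form argument of Theorem~\ref{thm:joint_clt}. By the exact sign identities displayed before the lemma, $L_n$ is a homogeneous Rademacher chaos of degree two conditionally on $\calF_n^0$. Indeed, since the signs $s_i$ and the matrix $\mS_n$ do not depend on $\rho$, each unsigned companion matrix satisfies $\widetilde\mA_{n,k}=\mS_n\mA_{n,k}\mS_n$ and is $\calF_n^0$-measurable, with $0\preceq\widetilde\mA_{n,k}\preceq\I_n$. Writing
\[
 h_{ij,n,k}=2c_{1,n,k}+c_{2,n,k}(w_i+w_j)+2c_{3,n,k}w_iw_j,
\]
one gets $L_n=\bm s^\top\mC_n\bm s$, where $\mC_n$ is the zero-diagonal symmetric matrix with entries $(\mC_n)_{ij}=(2\sqrt n)^{-1}\sum_{k=1}^K a_{ij,k}h_{ij,n,k}$.

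The first step is the conditional covariance. By the orthogonality of the distinct monomials $s_is_j$ (Lemma~\ref{lem:conditional_signs}), the conditional mean is zero and $\Var_s(L_n\mid\calF_n^0)=2\tr(\mC_n^2)$. Expanding the square gives
\[
 \frac1n\sum_{i<j}\sum_{k,\ell}a_{ij,k}a_{ij,\ell}h_{ij,n,k}h_{ij,n,\ell}.
\]
Repeating the bookkeeping of Lemma~\ref{lem:conditional_covariance_new} with $a_{ij}^2$ replaced by $a_{ij,k}a_{ij,\ell}$ (note that $(\mA_{n,k})_{ij}(\mA_{n,\ell})_{ij}=a_{ij,k}a_{ij,\ell}$ because the sign factors square to one) identifies this as $\sum_{k,\ell}\bm c_{n,k}^\top\mGamma_n^{k\ell}\bm c_{n,\ell}=V_n$. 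The same computation shows that the block matrix $(\mGamma_n^{k\ell})_{k,\ell}$ is a genuine covariance matrix, which gives the positive semidefiniteness promised earlier.

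The second step verifies de Jong's conditions. Each Hadamard-weighted block decomposes, as in the proof of Theorem~\ref{thm:joint_clt}, into $\widetilde\mA_{n,k}$, $\mD_w\widetilde\mA_{n,k}+\widetilde\mA_{n,k}\mD_w$ and $\mD_w\widetilde\mA_{n,k}\mD_w$, up to the diagonal. Since $K$ is fixed and the coefficients are bounded, this gives $\|\mC_n\|_{\op}\le CKn^{-1/2}$. By \eqref{eq:A_contraction_rows} and Cauchy--Schwarz over $k$, $\max_i\sum_j(\mC_n)_{ij}^2\le CK^2/n$, and hence $\sum_{i,j}(\mC_n)_{ij}^4=O_{\Pp}(n^{-1})$ and $\|\mC_n\|_{\F}^2=V_n/2=O_{\Pp}(1)$. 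The fourth-moment expansion from the proof of Theorem~\ref{thm:joint_clt} then gives $|\E_s L_n^4-3V_n^2|=O_{\Pp}(n^{-1})$ and maximal influence $O(n^{-1})/V_n$.

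The third step concludes. On $\{V_n\ge\eta\}$, Theorem~2.1 of \citet{deJong1987} applies uniformly, with the same contradiction argument for uniformity. On $\{V_n<\eta\}$, conditional Cauchy--Schwarz bounds both $L_n$ and $N(0,V_n)$ in $L^1$ by $\sqrt\eta$; letting $\eta\downarrow0$ gives both the standardized statement and the degenerate case $V_n\to0$. Since the $\bm c_{n,k}$ are arbitrary, the conditional Cram\'er--Wold device yields the joint Gaussian approximation with block covariance $\mGamma_n^{k\ell}$.

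I expect the only non-routine point to be the covariance identification in the first step, specifically checking that the off-diagonal cross-ridge blocks $\mGamma_n^{k\ell}$ are in the symmetrized form of \eqref{eq:cross_Gamma_limit}. The $(2,1)$ versus $(1,2)$ entries come from $\psi^{k\ell}_{01}$ against $\psi^{k\ell}_{10}$, which are no longer equal when $k\ne\ell$, so I would carry out the entrywise computation with care. Everything else transfers verbatim from Theorem~\ref{thm:joint_clt}, with constants now depending on $K$.
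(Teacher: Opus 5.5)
Your proposal follows the paper's proof essentially step for step: the sign-chaos representation $L_n=\bm s^\top\mC_n\bm s$, the exact cross-ridge covariance computed from equal index pairs, the operator-norm, row-energy and fourth-moment bounds, de Jong's theorem on $\{V_n\ge\eta\}$, and the Cauchy--Schwarz argument on $\{V_n<\eta\}$. The point you flag as delicate is not actually an issue. Each companion matrix $\mA_{n,k}$ is symmetric, so $(\mA_{n,k})_{ij}(\mA_{n,\ell})_{ij}$ is symmetric in $(i,j)$, and relabelling $i\leftrightarrow j$ gives $\psi_{ab,n}^{k\ell}=\psi_{ba,n}^{k\ell}$, in particular $\psi_{01,n}^{k\ell}=\psi_{10,n}^{k\ell}$. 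Hence every block $\mGamma_n^{k\ell}$ has exactly the symmetric form \eqref{eq:cross_Gamma_limit}, just as in the single-ridge case.
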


\begin{proof}
The exact covariance calculation in Lemma~\ref{lem:conditional_covariance_new} applies to two different ridge values because only equal unordered index pairs survive.  It gives
\[
 n\Cov_s(\bm q_{n,k},\bm q_{n,\ell}\mid\calF_n^0)
 =\mGamma_n^{k\ell}.
\]
For the stated projection, put
\[
 h_{ij,n,k}
 =2c_{1,n,k}+c_{2,n,k}(w_i+w_j)
 +2c_{3,n,k}w_iw_j.
\]
Then
\begin{equation}\label{eq:finite_grid_Rademacher_form}
 L_n=\frac1{\sqrt n}\sum_{i<j}
 \left\{\sum_{k=1}^Ka_{ij,k}h_{ij,n,k}\right\}s_is_j.
\end{equation}
Let $\mC_n$ be symmetric with zero diagonal and
\[
 (\mC_n)_{ij}
 =\frac1{2\sqrt n}\sum_{k=1}^K
 a_{ij,k}h_{ij,n,k},
 \qquad i\ne j.
\]
Since $0<w_i\le\bar w$, $K$ is fixed, and the coefficient vectors are bounded,
$|h_{ij,n,k}|\le C$.  As in \eqref{eq:C_operator_small}, the matrix before removal of its diagonal is a fixed finite sum of
\[
 \widetilde\mA_{n,k},
 \qquad
 \mD_w\widetilde\mA_{n,k}
 +\widetilde\mA_{n,k}\mD_w,
 \qquad
 \mD_w\widetilde\mA_{n,k}\mD_w.
\]
Each companion matrix is a contraction.  Therefore
\begin{align}
 \|\mC_n\|_{\op}&\le Cn^{-1/2},
 \label{eq:finite_grid_Cop}\\
 \max_i\sum_j(\mC_n)_{ij}^2
 &\le\frac{CK}{n}\max_{i,k}\sum_ja_{ij,k}^2
 \le\frac Cn,\notag\\
 \sum_{i,j}(\mC_n)_{ij}^4
 &\le\left\{\max_i\sum_j(\mC_n)_{ij}^2\right\}
 \|\mC_n\|_{\F}^2.
\label{eq:finite_grid_fourth}
\end{align}
The conditional variance of \eqref{eq:finite_grid_Rademacher_form} is
\[
 2\tr(\mC_n^2)
 =\sum_{k,\ell=1}^K
 \bm c_{n,k}^\top\mGamma_n^{k\ell}\bm c_{n,\ell}
 =V_n.
\]
The graph expansion used in the proof of Theorem~\ref{thm:joint_clt}, together with
\eqref{eq:finite_grid_Cop}--\eqref{eq:finite_grid_fourth}, gives
\begin{align*}
 \left|\E_s(L_n^4\mid\calF_n^0)-3V_n^2\right|
 &\le C\tr(\mC_n^4)
 +C\sum_i\left\{\sum_j(\mC_n)_{ij}^2\right\}^2
 +C\sum_{i,j}(\mC_n)_{ij}^4\\
 &\le Cn^{-1}\{1+V_n\}.
\end{align*}
On $\{V_n\ge\eta\}$,
\[
 \frac{\max_i\sum_j(\mC_n)_{ij}^2}{V_n}
 \le\frac{C}{n\eta},
 \qquad
 \left|\frac{\E_s(L_n^4\mid\calF_n^0)}{V_n^2}-3\right|
 \le\frac{C}{n\eta^2}.
\]
Theorem~2.1 of \citet{deJong1987} gives the conditional normal approximation uniformly on this event.  On $\{V_n<\eta\}$,
\[
 \E_s(|L_n|\mid\calF_n^0)\le\sqrt\eta.
\]
Letting $n\to\infty$ and then $\eta\downarrow0$ proves the assertion.
\end{proof}

\begin{proof}[Proof of Theorem~3.1]
For each $k$, define
\[
 \varphi_{n,k}(x,y,z)
 =\frac{x}{e_n^2+t_nx-2e_ny+y^2-xz}.
\]
At $\bm q_{0,n,k}$,
\[
 \nabla\varphi_{n,k}(\bm q_{0,n,k})=\bg_{n,k},
 \qquad
 \varphi_{n,k}(\bm q_{0,n,k})=\mu_{n,k}.
\]
The compact bounds in Proposition~\ref{prop:weighted_DE} hold
simultaneously over the fixed grid.  The exact covariance calculation
and $K<\infty$ give
\begin{align*}
 \max_{1\le k\le K}
 \|\bm q_{n,k}-\bm q_{0,n,k}\|
 &=O_{\Pp}(n^{-1/2}),\\
 \max_{1\le k\le K}\sup_{0\le t\le1}
 |L_k\{\bm q_{0,n,k}+t(\bm q_{n,k}-\bm q_{0,n,k})\}
   -D_{n,k}|
 &=O_{\Pp}(n^{-1/2}),
\end{align*}
where $L_k$ denotes $L(\cdot,e_n,t_n)$ at ridge
$\rho^{(k)}$.  Hence, with probability tending to one, every
intermediate denominator is at least one half of the common positive
lower bound.  Taylor's theorem then gives
\begin{align*}
 \max_k\left|
 \sqrt n\{\varphi_{n,k}(\bm q_{n,k})-\mu_{n,k}
 -\bg_{n,k}^\top(\bm q_{n,k}-\bm q_{0,n,k})\}
 \right|
 &=O_{\Pp}(n^{-1/2}).
\end{align*}
Lemma~\ref{lem:finite_grid_conditional_clt}, applied with the $K$ gradients, shows that the vector
\[
 \left
 \{
 \frac{\sqrt n\{\varphi_{n,k}(\bm q_{n,k})-\mu_{n,k}\}}
 {\sigma_{D,n,k}}
 \right\}_{k=1}^K
\]
has a conditional Gaussian approximation with correlation matrix
\[
 \mC_{K,n}^{0}
 =\left\{
 \frac{\bg_{n,k}^\top\mGamma_n^{k\ell}\bg_{n,\ell}}
 {\sigma_{D,n,k}\sigma_{D,n,\ell}}
 \right\}_{k,\ell=1}^K.
\]
Lemma~\ref{lem:cross_ridge_DE} gives
$\|\mC_{K,n}^{0}-\mC_K\|_{\max}=o_{\Pp}(1)$.

Proposition~\ref{prop:scalar} and the proof of
Theorem~2.1 give, at every grid point,
\[
 T_n(\rho^{(k)})
 =n\varphi_{n,k}(\bm q_{n,k})+R_{T,n,k},
 \qquad
 R_{T,n,k}=O_{\Pp}(1+\log p).
\]
Because $K$ is fixed,
\[
 \max_k\frac{|R_{T,n,k}|}{\sqrt n}
 =O_{\Pp}\{(1+\log p)/\sqrt n\}=o_{\Pp}(1).
\]
Theorem~\ref{thm:rates} also holds simultaneously over the grid and yields
\begin{align*}
 \max_k\sqrt n|\wh\mu_{n,k}-\mu_{n,k}|
 &=O_{\Pp}(n^{-1/2})=o_{\Pp}(1),\\
 \max_k\left|
 \frac{\wh\sigma_{D,n,k}^2}{\sigma_{D,n,k}^2}-1
 \right|&=O_{\Pp}(n^{-1/2}).
\end{align*}
Conditional Slutsky's theorem, followed by integration over
$\calF_n^0$, proves
\[
 (Z_{n,1},\ldots,Z_{n,K})^\top
 \dto N_K(\0,\mC_K).
\]
\end{proof}

\begin{lemma}\label{lem:cauchy_tail}
Let $\bm V\sim N_K(\0,\mC_K)$ and let
\[
 Y_k=\tan[\pi\{\Phi(V_k)-1/2\}],
 \qquad
 S_K=\sum_{k=1}^K\varpi_kY_k.
\]
For every deterministic $\bm a=(a_1,\ldots,a_K)^\top$, the random variable
\[
 S_K(\bm a)
 =\sum_{k=1}^K\varpi_k
 \tan[\pi\{\Phi(V_k+a_k)-1/2\}]
\]
has a continuous distribution with a strictly increasing distribution function.  In addition,
\begin{equation}\label{eq:cauchy_tail_limit}
 \Pp(S_K>t)=\frac{1+o(1)}{\pi t},
 \qquad t\to\infty.
\end{equation}
\end{lemma}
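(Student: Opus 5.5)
The argument splits into three parts: continuity of the law of $S_K(\bm a)$, strict monotonicity of its distribution function, and the Cauchy tail \eqref{eq:cauchy_tail_limit}. Throughout, write $g(v)=\tan[\pi\{\Phi(v)-1/2\}]$. This map is a smooth, strictly increasing bijection from $\R$ onto $\R$. Each $V_k$ has unit variance because $\mC_K$ has unit diagonal, so each $Y_k=g(V_k)$ is exactly standard Cauchy.

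\textbf{Continuity.} Since $\mC_K$ has unit diagonal and the weights are positive, the map
\[
 F(\bm v)=\sum_k\varpi_k\,g(v_k+a_k)
\]
is strictly increasing in each coordinate. Let $\mC_K=\mL\mL^\top$ with rank $m\ge1$, and write $\bm V=\mL\bm N$ for a standard Gaussian vector $\bm N\in\R^m$. Then $S_K(\bm a)=F(\mL\bm N)$.

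Choose a row $\bm l_1$ of $\mL$; it is nonzero because $\norm{\bm l_1}=1$. Move along the direction $\bm u=\bm l_1/\norm{\bm l_1}$ and decompose $\bm N=t\bm u+\bm N_\perp$. For each $k$, the coordinate $v_k(t)$ is affine in $t$ with slope $\bm l_k^\top\bm u$. These slopes need not all be nonnegative, so monotonicity in $t$ is not automatic.

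Instead I would use analyticity. For fixed $\bm N_\perp$, the function $t\mapsto F(\mL(t\bm u+\bm N_\perp))$ is real-analytic. It is nonconstant, because as $t\to\infty$ the term with $k=1$ tends to $+\infty$ and it cannot be cancelled by terms of slower growth. A nonconstant real-analytic function attains any given level $c$ only on a set of Lebesgue measure zero. By Fubini, $\Pp\{S_K(\bm a)=c\}=0$, which gives continuity.

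The unboundedness claim needs a little care. Near $+\infty$, $g(v)\sim 1/\{\pi(1-\Phi(v))\}$ grows like $\exp(v^2/2)$ up to polynomial factors. The coordinate with the largest growth rate $(\bm l_k^\top\bm u)^2$ therefore dominates, and coordinate $1$ attains the maximal rate $1$. Any other coordinate with rate $1$ has $\bm l_k=\bm l_1$, hence the same sign, so there is no cancellation.

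\textbf{Strict monotonicity.} Every nonempty open interval must receive positive mass. Take $\bm N$ ranging over an open ball. The continuous function $F\circ\mL$ is surjective onto $\R$ along the line used above, since its limits at $t\to\pm\infty$ are $\pm\infty$ by the same dominance argument. Hence every value $c$ is attained at some point. By continuity, a small neighbourhood of that point maps into $(c-\epsilon,c+\epsilon)$, and that neighbourhood has positive Gaussian probability.

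\textbf{Tail.} This is the main step and follows Liu and Xie. First, $\Pp(Y_k>t)=\pi^{-1}\arctan(1/t)\sim 1/(\pi t)$. Fix $\delta\in(0,1)$.

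For the lower bound, if $Y_k>t/\varpi_k$ for some $k$ and all other $Y_j>-\delta t$, then $S_K>(1-\delta)t$ up to lower-order terms. Inclusion--exclusion gives
\[
 \Pp(S_K>t)\ge\sum_k\Pp(Y_k>(1+\delta)t/\varpi_k)
 -\sum_{k\ne\ell}\Pp(Y_k>\delta t,\,Y_\ell>\delta t)
 -K\,\Pp(Y>\delta t)\,\Pp(\text{some }Y_j<-\delta t\mid\cdot).
\]

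For the upper bound, on the event that no $Y_k$ exceeds $\delta t$, $S_K\le\delta t$. Otherwise some $Y_k$ is large, and the remaining terms are bounded by $\delta t$ except on pairwise-exceedance events. This yields
\[
 \Pp(S_K>t)\le\sum_k\Pp(Y_k>(1-\delta)t/\varpi_k)+\sum_{k\ne\ell}\Pp(|Y_k|>\delta t,\,|Y_\ell|>\delta t).
\]

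The crux is that pairwise joint exceedances are $o(1/t)$. Bivariate Gaussian tails are asymptotically independent when the correlation is strictly less than one. When $V_k=V_\ell$ almost surely, I would merge those coordinates into a single index with summed weight before running the argument. Then $|Y_k|>\delta t$ corresponds to $|V_k|>z$ with $1-\Phi(z)\asymp 1/t$, and the joint exceedance probability is $o(1/t)$ because $\Pp(V_\ell>z\mid V_k>z)\to0$ when $\mathrm{corr}<1$. Correlation $-1$ is excluded since the entries of $\mC_K$ are nonnegative.

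Summing, $\sum_k\varpi_k\cdot 1/\{\pi(1\pm\delta)t\}=(1\pm\delta)^{\mp1}/(\pi t)$. Letting $\delta\downarrow0$ gives \eqref{eq:cauchy_tail_limit}.
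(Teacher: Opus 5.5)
Your proof is correct. The tail part is essentially the paper's argument: you merge indices with $r_{k\ell}=1$, use the exact standard-Cauchy marginals, show that two-sided pairwise joint exceedances are $o(t^{-1})$ because distinct classes have correlation in $[0,1)$, and sandwich with a slack $\delta$. Your ``at most one $|Y_k|>\delta t$'' upper bound is a clean variant of the paper's max-based bound. Two small slips there: $(1\pm\delta)^{\mp1}$ should read $(1\pm\delta)^{-1}$, and the vague third term in your lower-bound display should be $\sum_{k\ne j}\Pp\{Y_k>(1+\delta)t/\varpi_k,\,Y_j\le-\delta t\}$, which your two-sided pairwise bound already controls.

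The genuine difference is in the continuity and strict-monotonicity step. The paper takes a Perron--Frobenius eigenvector $\bm h\ge\0$ of $\mC_K$ and writes $\bm V=\sqrt{\lambda_\star}\bm hZ+\bm V^\perp$. Given $\bm V^\perp$, $S_K(\bm a)$ is then a strictly increasing bijection of the scalar $Z$, which yields continuity and full support at once. You instead restrict to lines in the direction $\bm l_1$ and combine real-analyticity (isolated level sets), a growth-dominance argument for surjectivity, and full support of the Gaussian. This is valid but heavier than necessary.

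Your remark that the slopes $\bm l_k^\top\bm u$ ``need not all be nonnegative'' is mistaken: $\bm l_k^\top\bm l_1=(\mC_K)_{k1}\ge0$. Along your line every coordinate is therefore nondecreasing, and the first has slope one. Equivalently, with $\bm V=\mC_K\bm e_1V_1+\bm V^\perp$ and $\bm V^\perp$ independent of $V_1$, $S_K(\bm a)$ is already a strictly increasing bijection of $V_1$. Noticing this gives the paper's conclusion without analyticity and without Perron--Frobenius, since one nonnegative column of $\mC_K$ with a unit diagonal entry suffices.
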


\begin{proof}
By \eqref{eq:cross_Psi_finite}--\eqref{eq:cross_corr_limit}, the matrix
$\mC_K$ is symmetric, positive semidefinite, entrywise nonnegative,
and has unit diagonal.  The Perron--Frobenius theorem therefore gives
an eigenvalue $\lambda_\star>0$ and a unit eigenvector
$\bm h=(h_1,\ldots,h_K)^\top\ge\0$.  A Gaussian spectral
decomposition yields
\[
 \bm V=\sqrt{\lambda_\star}\,\bm h\,Z+\bm V^\perp,
 \qquad
 Z\sim N(0,1),
 \qquad
 Z\text{ is independent of }\bm V^\perp.
\]
Write $\psi(x)=\tan[\pi\{\Phi(x)-1/2\}]$.  Conditionally on
$\bm V^\perp=\bm v$, for any deterministic $\bm a$,
\[
 S_K(\bm a)=g_{\bm v,\bm a}(Z),
 \qquad
 g_{\bm v,\bm a}(z)
 =\sum_{k=1}^K\varpi_k
 \psi(v_k+a_k+\sqrt{\lambda_\star}h_kz).
\]
Since
\[
 \psi'(x)
 =\pi\phi(x)\sec^2[\pi\{\Phi(x)-1/2\}]>0,
\]
and at least one $h_k$ is positive,
\[
 g_{\bm v,\bm a}'(z)
 =\sqrt{\lambda_\star}
 \sum_{k=1}^K\varpi_kh_k
 \psi'(v_k+a_k+\sqrt{\lambda_\star}h_kz)>0.
\]
Moreover, $g_{\bm v,\bm a}(z)\to\pm\infty$ as
$z\to\pm\infty$.  Thus its conditional distribution is continuous
and assigns positive probability to every nonempty interval.
Integration over $\bm V^\perp$ proves the first assertion.

It remains to establish the right-tail expansion at $\bm a=\0$.
Every $Y_k$ is standard Cauchy because $\Phi(V_k)$ is uniform on
$(0,1)$.  If $r_{k\ell}=1$, then $V_k=V_\ell$ almost surely.  Merge all indices connected by this relation and replace their weights by the sum of the weights in the class.  This leaves $S_K$ unchanged.  After merging, write the number of classes as $J$, their positive weights as $\omega_1,\ldots,\omega_J$, and the corresponding Gaussian variables as $W_1,\ldots,W_J$.  Then
\[
 \sum_{j=1}^J\omega_j=1,
 \qquad
 0\le\operatorname{Corr}(W_j,W_\ell)<1
 \quad(j\ne\ell).
\]
If $J=1$, then $\omega_1=1$ and $S_K=Y_1$ is standard Cauchy.  Hence
\[
 \Pp(S_K>t)=\frac12-\frac1\pi\arctan(t)
 =\frac{1+o(1)}{\pi t},
\]
which proves \eqref{eq:cauchy_tail_limit} in this case.  We may therefore
assume $J\ge2$ in the remainder of the proof.
Put
\[
 X_j=\omega_j\tan[\pi\{\Phi(W_j)-1/2\}].
\]
The marginal Cauchy tail gives, for every fixed $a>0$,
\[
 \Pp(X_j>at)
 =\frac{\omega_j}{\pi at}+O(t^{-3}),
 \qquad
 \Pp(X_j<-at)
 =\frac{\omega_j}{\pi at}+O(t^{-3}).
\]

We first verify asymptotic separation of distinct extremes.  For each
$j$ and fixed $a>0$, let
$u_{j,t}(a)=\pi^{-1}\operatorname{arccot}(at/\omega_j)$ and
$z_{j,t}(a)=\Phi^{-1}\{1-u_{j,t}(a)\}$.  Mills' ratio gives
\[
 z_{j,t}(a)^2=2\log t+O(\log\log t).
\]
For $j\ne\ell$, put $r=\operatorname{Corr}(W_j,W_\ell)<1$.  The Gaussian exponential bound for an upper orthant yields, for fixed $a,b>0$,
\begin{align*}
 \Pp(X_j>at,X_\ell>bt)
 &\le
 \exp\left[-\frac{z_{j,t}(a)^2-2rz_{j,t}(a)z_{\ell,t}(b)
 +z_{\ell,t}(b)^2}{2(1-r^2)}\right]\\
 &=t^{-2/(1+r)+o(1)}=o(t^{-1}).
\end{align*}
Replacing $W_\ell$ by $-W_\ell$ gives
\begin{equation}\label{eq:joint_opposite_tail}
 \Pp(X_j>at,X_\ell<-bt)
 =t^{-2/(1-r)+o(1)}=o(t^{-1}).
\end{equation}

Fix $\varepsilon\in(0,1/2)$.  If
\[
 X_j>(1+\varepsilon)t,
 \qquad
 X_\ell\ge-\frac{\varepsilon t}{J-1}
 \quad\text{for every }\ell\ne j,
\]
then $\sum_{j=1}^JX_j>t$.  Inclusion--exclusion,
\eqref{eq:joint_opposite_tail}, and the positive joint-tail bound give
\begin{align*}
 \Pp\left(\sum_{j=1}^JX_j>t\right)
 &\ge\sum_{j=1}^J\Pp\{X_j>(1+\varepsilon)t\}
 +o(t^{-1})\\
 &=\frac{1+o(1)}{\pi(1+\varepsilon)t}.
\end{align*}

For the upper bound, suppose
$\sum_jX_j>t$ and $\max_jX_j\le(1-\varepsilon)t$.  The largest summand is larger than $t/J$.  If its index is $j$, the remaining summands have sum exceeding $\varepsilon t$, so at least one
$\ell\ne j$ satisfies $X_\ell>\varepsilon t/(J-1)$.  Hence
\begin{align*}
 \Pp\left(\sum_{j=1}^JX_j>t\right)
 &\le\Pp\{\max_jX_j>(1-\varepsilon)t\}\\
 &\quad+
 \sum_{j\ne\ell}
 \Pp\left\{X_j>t/J,
 X_\ell>\varepsilon t/(J-1)\right\}\\
 &=\sum_{j=1}^J\Pp\{X_j>(1-\varepsilon)t\}
 +o(t^{-1})\\
 &=\frac{1+o(1)}{\pi(1-\varepsilon)t}.
\end{align*}
Letting $\varepsilon\downarrow0$ proves
\eqref{eq:cauchy_tail_limit}.
\end{proof}

\begin{proof}[Proof of Theorem~3.2]
Define the continuous map
\[
 \mathcal T(\bm z)
 =\sum_{k=1}^K\varpi_k
 \tan[\pi\{\Phi(z_k)-1/2\}],
 \qquad \bm z\in\R^K.
\]
Theorem~3.1 and the continuous mapping theorem give
\begin{equation}\label{eq:CCT_weak_limit}
 T_{\mathrm{CC},n}=\mathcal T(Z_{n,1},\ldots,Z_{n,K})
 \dto
 \mathcal T(\bm V)=T_{\mathrm{CC},\infty}^{0}.
\end{equation}
Therefore, at every continuity point $c$,
\[
 \Pp(T_{\mathrm{CC},n}>c)
 \to
 \Pp(T_{\mathrm{CC},\infty}^{0}>c).
\]

Taking $c=c_{\alpha,K}$ in the preceding display gives
\[
 \Pp(T_{\mathrm{CC},n}>c_{\alpha,K})\to\alpha.
\]

For the analytic $p$-value,
\[
 p_{\mathrm{CC},n}\le\alpha
 \quad\Longleftrightarrow\quad
 T_{\mathrm{CC},n}\ge\cot(\pi\alpha).
\]
Equation \eqref{eq:CCT_weak_limit} gives the asserted fixed-$\alpha$ limit.  Lemma~\ref{lem:cauchy_tail} and
\[
 \cot(\pi\alpha)
 =\frac1{\pi\alpha}\{1+O(\alpha^2)\},
 \qquad \alpha\downarrow0,
\]
yield
\begin{align*}
 \lim_{\alpha\downarrow0}\lim_{n\to\infty}
 \frac{\Pp(p_{\mathrm{CC},n}\le\alpha)}{\alpha}
 &=\lim_{\alpha\downarrow0}
 \frac{\Pp\{T_{\mathrm{CC},\infty}^{0}
 >\cot(\pi\alpha)\}}{\alpha}\\
 &=\lim_{\alpha\downarrow0}
 \frac{1}{\pi\alpha\cot(\pi\alpha)}=1.
\end{align*}
\end{proof}

\begin{proof}[Proof of Theorem~3.3]
Under the local alternative (2.14), translation invariance in Proposition~\ref{prop:translation} implies that all nuisance estimators have exactly their null distributions.  Equation \eqref{eq:local_Z_expansion}, evaluated at the fixed grid points, gives
\begin{align*}
 \max_{1\le k\le K}
 \left|
 Z_{n,k}-Z_{\varepsilon,n}(\rho^{(k)})
 -\Lambda_{\rho^{(k)}}(\calG)
 \right|
 &\le O_{\Pp}(n^{-1/4})+O_{\Pp}(p^{-1/4})\\
 &\quad+O_{\Pp}\left(\sqrt{\frac{\log p}{p}}\right)
 +\max_k|r_p(\rho^{(k)})|\\
 &=o_{\Pp}(1),
\end{align*}
where the last equality uses $K<\infty$.  The noise vector
$\{Z_{\varepsilon,n}(\rho^{(k)})\}_{k=1}^K$ has the joint null limit in Theorem~3.1.  Slutsky's theorem therefore gives
\[
 (Z_{n,1},\ldots,Z_{n,K})^\top
 \dto
 N_K\{\bm\Lambda_{\mathcal R}(\calG),\mC_K\}.
\]
The continuous mapping theorem applied to $\mathcal T$ gives convergence of the aggregate.  The shifted continuity assertion in Lemma~\ref{lem:cauchy_tail} shows that both rejection thresholds are continuity points of the limiting distribution, which proves the two stated power formulas.

Under (2.18) and (2.19), the consistency part of Theorem~2.3 gives, for every fixed $k$ and every $M<\infty$,
\[
 \Pp\{Z_n(\rho^{(k)})\le M\}\to0.
\]
Since $K$ is fixed,
\[
 \Pp\left\{\min_{1\le k\le K}Z_n(\rho^{(k)})\le M\right\}
 \le\sum_{k=1}^K
 \Pp\{Z_n(\rho^{(k)})\le M\}
 \to0.
\]
Thus $p_{n,k}\to0$ simultaneously and
\[
 \min_k\tan[\pi\{1/2-p_{n,k}\}]
 \pto\infty.
\]
Because all weights are positive,
\[
 T_{\mathrm{CC},n}
 \ge\min_k\tan[\pi\{1/2-p_{n,k}\}]
 \pto\infty.
\]
Consequently,
\[
 p_{\mathrm{CC},n}
 =\frac12-\frac1\pi\arctan(T_{\mathrm{CC},n})
 \pto0.
\]
The rejection probability tends to one.
\end{proof}


\begin{thebibliography}{50}
\providecommand{\natexlab}[1]{#1}
\providecommand{\url}[1]{\texttt{#1}}
\expandafter\ifx\csname urlstyle\endcsname\relax
  \providecommand{\doi}[1]{doi: #1}\else
  \providecommand{\doi}{doi: \begingroup \urlstyle{rm}\Url}\fi

\bibitem[Arias-Castro et~al.(2011)Arias-Castro, Cand{\`e}s, and
  Plan]{AriasCastroCandesPlan2011}
Ery Arias-Castro, Emmanuel~J. Cand{\`e}s, and Yaniv Plan.
\newblock Global testing under sparse alternatives: {ANOVA}, multiple
  comparisons and the higher criticism.
\newblock \emph{The Annals of Statistics}, 39\penalty0 (5):\penalty0
  2533--2556, 2011.
\newblock \doi{10.1214/11-AOS910}.

\bibitem[Bai and Saranadasa(1996)]{BaiSaranadasa1996}
Zhidong Bai and Hewa Saranadasa.
\newblock Effect of high dimension: by an example of a two sample problem.
\newblock \emph{Statistica Sinica}, 6\penalty0 (2):\penalty0 311--329, 1996.

\bibitem[Bhatia(2007)]{Bhatia2007Positive}
Rajendra Bhatia.
\newblock \emph{Positive Definite Matrices}.
\newblock Princeton University Press, Princeton, NJ, 2007.
\newblock ISBN 978-0-691-12918-1.

\bibitem[Boucheron et~al.(2013)Boucheron, Lugosi, and
  Massart]{BoucheronLugosiMassart2013}
St{\'e}phane Boucheron, G{\'a}bor Lugosi, and Pascal Massart.
\newblock \emph{Concentration Inequalities: A Nonasymptotic Theory of
  Independence}.
\newblock Oxford University Press, Oxford, 2013.
\newblock ISBN 978-0-19-953525-5.

\bibitem[Cai et~al.(2014)Cai, Liu, and Xia]{CaiLiuXia2014}
T.~Tony Cai, Weidong Liu, and Yin Xia.
\newblock Two-sample test of high dimensional means under dependence.
\newblock \emph{Journal of the Royal Statistical Society: Series B (Statistical
  Methodology)}, 76\penalty0 (2):\penalty0 349--372, 2014.
\newblock \doi{10.1111/rssb.12034}.

\bibitem[Chakraborty and Chaudhuri(2017)]{ChakrabortyChaudhuri2017}
Anirvan Chakraborty and Probal Chaudhuri.
\newblock Tests for high-dimensional data based on means, spatial signs and
  spatial ranks.
\newblock \emph{The Annals of Statistics}, 45\penalty0 (2):\penalty0 771--799,
  2017.
\newblock \doi{10.1214/16-AOS1467}.

\bibitem[Chen et~al.(2011)Chen, Paul, Prentice, and
  Wang]{ChenPaulPrenticeWang2011}
Lin~S. Chen, Debashis Paul, Ross~L. Prentice, and Pei Wang.
\newblock A regularized {Hotelling}'s {$T^2$} test for pathway analysis in
  proteomic studies.
\newblock \emph{Journal of the American Statistical Association}, 106\penalty0
  (496):\penalty0 1345--1360, 2011.
\newblock \doi{10.1198/jasa.2011.ap10599}.

\bibitem[Chen and Qin(2010)]{ChenQin2010}
Song~Xi Chen and Ying-Li Qin.
\newblock A two-sample test for high-dimensional data with applications to
  gene-set testing.
\newblock \emph{The Annals of Statistics}, 38\penalty0 (2):\penalty0 808--835,
  2010.
\newblock \doi{10.1214/09-AOS716}.

\bibitem[Chen et~al.(2019)Chen, Li, and Zhong]{ChenLiZhong2019}
Song~Xi Chen, Jun Li, and Ping-Shou Zhong.
\newblock Two-sample and {ANOVA} tests for high dimensional means.
\newblock \emph{The Annals of Statistics}, 47\penalty0 (3):\penalty0
  1443--1474, 2019.
\newblock \doi{10.1214/18-AOS1720}.

\bibitem[de~Jong(1987)]{deJong1987}
Peter de~Jong.
\newblock A central limit theorem for generalized quadratic forms.
\newblock \emph{Probability Theory and Related Fields}, 75\penalty0
  (2):\penalty0 261--277, 1987.
\newblock \doi{10.1007/BF00354037}.

\bibitem[Dempster(1958)]{Dempster1958}
Arthur~P. Dempster.
\newblock A high dimensional two sample significance test.
\newblock \emph{The Annals of Mathematical Statistics}, 29\penalty0
  (4):\penalty0 995--1010, 1958.
\newblock \doi{10.1214/aoms/1177706437}.

\bibitem[Donoho and Jin(2004)]{DonohoJin2004}
David Donoho and Jiashun Jin.
\newblock Higher criticism for detecting sparse heterogeneous mixtures.
\newblock \emph{The Annals of Statistics}, 32\penalty0 (3):\penalty0 962--994,
  2004.
\newblock \doi{10.1214/009053604000000265}.

\bibitem[Feng and Sun(2015)]{FengSun2015}
Long Feng and Fasheng Sun.
\newblock A note on high-dimensional two-sample test.
\newblock \emph{Statistics \& Probability Letters}, 105:\penalty0 29--36, 2015.
\newblock \doi{10.1016/j.spl.2015.05.017}.

\bibitem[Feng and Sun(2016)]{FengSun2016}
Long Feng and Fasheng Sun.
\newblock Spatial-sign based high-dimensional location test.
\newblock \emph{Electronic Journal of Statistics}, 10\penalty0 (2):\penalty0
  2420--2434, 2016.
\newblock \doi{10.1214/16-EJS1176}.

\bibitem[Feng et~al.(2015)Feng, Zou, Wang, and Zhu]{FengZouWangZhu2015}
Long Feng, Changliang Zou, Zhaojun Wang, and Lixing Zhu.
\newblock Two-sample {Behrens--Fisher} problem for high-dimensional data.
\newblock \emph{Statistica Sinica}, 25\penalty0 (4):\penalty0 1297--1312, 2015.
\newblock \doi{10.5705/ss.2014.048}.

\bibitem[Feng et~al.(2016)Feng, Zou, and Wang]{FengZouWang2016}
Long Feng, Changliang Zou, and Zhaojun Wang.
\newblock Multivariate-sign-based high-dimensional tests for the two-sample
  location problem.
\newblock \emph{Journal of the American Statistical Association}, 111\penalty0
  (514):\penalty0 721--735, 2016.
\newblock \doi{10.1080/01621459.2015.1035380}.

\bibitem[Feng et~al.(2017)Feng, Zou, Wang, and Zhu]{FengZouWangZhu2017}
Long Feng, Changliang Zou, Zhaojun Wang, and Lixing Zhu.
\newblock Composite {$T^2$} test for high-dimensional data.
\newblock \emph{Statistica Sinica}, 27\penalty0 (3):\penalty0 1419--1436, 2017.
\newblock \doi{10.5705/ss.202015.0199}.

\bibitem[Feng et~al.(2020)Feng, Zhang, and Liu]{FengZhangLiu2020}
Long Feng, Xiaoxu Zhang, and Binghui Liu.
\newblock A high-dimensional spatial rank test for two-sample location
  problems.
\newblock \emph{Computational Statistics \& Data Analysis}, 144:\penalty0
  106889, 2020.
\newblock \doi{10.1016/j.csda.2019.106889}.

\bibitem[Feng et~al.(2021)Feng, Liu, and Ma]{FengLiuMa2021}
Long Feng, Binghui Liu, and Yanyuan Ma.
\newblock An inverse norm sign test of location parameter for high-dimensional
  data.
\newblock \emph{Journal of Business \& Economic Statistics}, 39\penalty0
  (3):\penalty0 807--815, 2021.
\newblock \doi{10.1080/07350015.2020.1736084}.

\bibitem[Gregory et~al.(2015)Gregory, Carroll, Baladandayuthapani, and
  Lahiri]{GregoryCarrollBaladandayuthapaniLahiri2015}
Karl~B. Gregory, Raymond~J. Carroll, Veerabhadran Baladandayuthapani, and
  Soumendra~N. Lahiri.
\newblock A two-sample test for equality of means in high dimension.
\newblock \emph{Journal of the American Statistical Association}, 110\penalty0
  (510):\penalty0 837--849, 2015.
\newblock \doi{10.1080/01621459.2014.934826}.

\bibitem[Hachem et~al.(2007)Hachem, Loubaton, and
  Najim]{HachemLoubatonNajim2007}
Walid Hachem, Philippe Loubaton, and Jamal Najim.
\newblock Deterministic equivalents for certain functionals of large random
  matrices.
\newblock \emph{The Annals of Applied Probability}, 17\penalty0 (3):\penalty0
  875--930, 2007.
\newblock \doi{10.1214/105051606000000925}.

\bibitem[Hachem et~al.(2013)Hachem, Loubaton, Najim, and
  Vallet]{HachemLoubatonNajimVallet2013}
Walid Hachem, Philippe Loubaton, Jamal Najim, and Pascal Vallet.
\newblock On bilinear forms based on the resolvent of large random matrices.
\newblock \emph{Annales de l'Institut Henri Poincar\'{e}, Probabilit\'{e}s et
  Statistiques}, 49\penalty0 (1):\penalty0 36--63, 2013.
\newblock \doi{10.1214/11-AIHP450}.

\bibitem[Hall and Jin(2010)]{HallJin2010}
Peter Hall and Jiashun Jin.
\newblock Innovated higher criticism for detecting sparse signals in correlated
  noise.
\newblock \emph{The Annals of Statistics}, 38\penalty0 (3):\penalty0
  1686--1732, 2010.
\newblock \doi{10.1214/09-AOS764}.

\bibitem[He et~al.(2020)He, Zhang, Zhang, and Zhou]{HeZhangZhangZhou2020}
Yong He, Mingjuan Zhang, Xinsheng Zhang, and Wang Zhou.
\newblock High-dimensional two-sample mean vectors test and support recovery
  with factor adjustment.
\newblock \emph{Computational Statistics \& Data Analysis}, 151:\penalty0
  107004, 2020.
\newblock \doi{10.1016/j.csda.2020.107004}.

\bibitem[Huang et~al.(2023)Huang, Liu, Zhou, and Feng]{HuangLiuZhouFeng2023}
Xifen Huang, Binghui Liu, Qin Zhou, and Long Feng.
\newblock A high-dimensional inverse norm sign test for two-sample location
  problems.
\newblock \emph{The Canadian Journal of Statistics}, 51\penalty0 (4):\penalty0
  1004--1033, 2023.
\newblock \doi{10.1002/cjs.11731}.

\bibitem[Huang et~al.(2022)Huang, Li, Li, and Yang]{HuangLiLiYang2022}
Yuan Huang, Changcheng Li, Runze Li, and Songshan Yang.
\newblock An overview of tests on high-dimensional means.
\newblock \emph{Journal of Multivariate Analysis}, 188:\penalty0 104813, 2022.
\newblock \doi{10.1016/j.jmva.2021.104813}.

\bibitem[Li et~al.(2020)Li, Aue, Paul, Peng, and Wang]{LiAuePaulPengWang2020}
Haoran Li, Alexander Aue, Debashis Paul, Jie Peng, and Pei Wang.
\newblock An adaptable generalization of {Hotelling}'s {$T^2$} test in high
  dimension.
\newblock \emph{The Annals of Statistics}, 48\penalty0 (3):\penalty0
  1815--1847, 2020.
\newblock \doi{10.1214/19-AOS1869}.

\bibitem[Liu et~al.(2027)Liu, Feng, Zhao, and Wang]{LiuFengZhaoWang2027}
Jixuan Liu, Long Feng, Ping Zhao, and Zhaojun Wang.
\newblock Spatial-sign based maxsum test for high dimensional location
  parameters.
\newblock \emph{Statistica Sinica}, 37\penalty0 (2), 2027.
\newblock \doi{10.5705/ss.202024.0051}.
\newblock forthcoming.

\bibitem[Liu and Xie(2020)]{LiuXie2020}
Yaowu Liu and Jun Xie.
\newblock Cauchy combination test: A powerful test with analytic {$p$}-value
  calculation under arbitrary dependency structures.
\newblock \emph{Journal of the American Statistical Association}, 115\penalty0
  (529):\penalty0 393--402, 2020.
\newblock \doi{10.1080/01621459.2018.1554485}.

\bibitem[Lopes et~al.(2011)Lopes, Jacob, and
  Wainwright]{LopesJacobWainwright2011}
Miles~E. Lopes, Laurent Jacob, and Martin~J. Wainwright.
\newblock A more powerful two-sample test in high dimensions using random
  projection.
\newblock In \emph{Advances in Neural Information Processing Systems 24}, pages
  1206--1214. Curran Associates, Inc., 2011.

\bibitem[Ma et~al.(2015)Ma, Lan, and Wang]{MaLanWang2015}
Yingying Ma, Wei Lan, and Hansheng Wang.
\newblock A high dimensional two-sample test under a low dimensional factor
  structure.
\newblock \emph{Journal of Multivariate Analysis}, 140:\penalty0 162--170,
  2015.
\newblock \doi{10.1016/j.jmva.2015.05.005}.

\bibitem[M{\"o}tt{\"o}nen and Oja(1995)]{MottoneOja1995}
Jyrki M{\"o}tt{\"o}nen and Hannu Oja.
\newblock Multivariate spatial sign and rank methods.
\newblock \emph{Journal of Nonparametric Statistics}, 5\penalty0 (2):\penalty0
  201--213, 1995.
\newblock \doi{10.1080/10485259508832643}.

\bibitem[O'Donnell(2014)]{ODonnell2014}
Ryan O'Donnell.
\newblock \emph{Analysis of Boolean Functions}.
\newblock Cambridge University Press, Cambridge, 2014.
\newblock ISBN 978-1-107-03832-5.
\newblock \doi{10.1017/CBO9781139814782}.

\bibitem[Oja(2010)]{Oja2010}
Hannu Oja.
\newblock \emph{Multivariate Nonparametric Methods with {R}: An Approach Based
  on Spatial Signs and Ranks}, volume 199 of \emph{Lecture Notes in
  Statistics}.
\newblock Springer, New York, 2010.
\newblock \doi{10.1007/978-1-4419-0468-3}.

\bibitem[Paindaveine and Verdebout(2016)]{PaindaveineVerdebout2016}
Davy Paindaveine and Thomas Verdebout.
\newblock On high-dimensional sign tests.
\newblock \emph{Bernoulli}, 22\penalty0 (3):\penalty0 1745--1769, 2016.
\newblock \doi{10.3150/15-BEJ710}.

\bibitem[Park and Ayyala(2013)]{ParkAyyala2013}
Junyong Park and Deepak~N. Ayyala.
\newblock A test for the mean vector in large dimension and small samples.
\newblock \emph{Journal of Statistical Planning and Inference}, 143\penalty0
  (5):\penalty0 929--943, 2013.
\newblock \doi{10.1016/j.jspi.2012.11.001}.

\bibitem[Srivastava(2009)]{Srivastava2009}
Muni~S. Srivastava.
\newblock A test for the mean vector with fewer observations than the dimension
  under non-normality.
\newblock \emph{Journal of Multivariate Analysis}, 100\penalty0 (3):\penalty0
  518--532, 2009.
\newblock \doi{10.1016/j.jmva.2008.06.006}.

\bibitem[Srivastava and Du(2008)]{SrivastavaDu2008}
Muni~S. Srivastava and Meng Du.
\newblock A test for the mean vector with fewer observations than the
  dimension.
\newblock \emph{Journal of Multivariate Analysis}, 99\penalty0 (3):\penalty0
  386--402, 2008.
\newblock \doi{10.1016/j.jmva.2006.11.002}.

\bibitem[Srivastava et~al.(2013)Srivastava, Katayama, and
  Kano]{SrivastavaKatayamaKano2013}
Muni~S. Srivastava, Shota Katayama, and Yutaka Kano.
\newblock A two sample test in high dimensional data.
\newblock \emph{Journal of Multivariate Analysis}, 114:\penalty0 349--358,
  2013.
\newblock \doi{10.1016/j.jmva.2012.08.014}.

\bibitem[Stewart and Sun(1990)]{StewartSun1990}
G.~W. Stewart and Ji-guang Sun.
\newblock \emph{Matrix Perturbation Theory}.
\newblock Computer Science and Scientific Computing. Academic Press, Boston,
  1990.
\newblock ISBN 978-0-12-670230-9.

\bibitem[Thulin(2014)]{Thulin2014}
M\r{a}ns Thulin.
\newblock A high-dimensional two-sample test for the mean using random
  subspaces.
\newblock \emph{Computational Statistics \& Data Analysis}, 74:\penalty0
  26--38, 2014.
\newblock \doi{10.1016/j.csda.2013.12.003}.

\bibitem[Tropp(2012)]{Tropp2012}
Joel~A. Tropp.
\newblock User-friendly tail bounds for sums of random matrices.
\newblock \emph{Foundations of Computational Mathematics}, 12\penalty0
  (4):\penalty0 389--434, 2012.
\newblock \doi{10.1007/s10208-011-9099-z}.

\bibitem[Vershynin(2018)]{Vershynin2018}
Roman Vershynin.
\newblock \emph{High-Dimensional Probability: An Introduction with Applications
  in Data Science}, volume~47 of \emph{Cambridge Series in Statistical and
  Probabilistic Mathematics}.
\newblock Cambridge University Press, Cambridge, 2018.
\newblock ISBN 978-1-108-41519-4.
\newblock \doi{10.1017/9781108231596}.

\bibitem[Visuri et~al.(2000)Visuri, Koivunen, and Oja]{VisuriKoivunenOja2000}
Samuli Visuri, Visa Koivunen, and Hannu Oja.
\newblock Sign and rank covariance matrices.
\newblock \emph{Journal of Statistical Planning and Inference}, 91\penalty0
  (2):\penalty0 557--575, 2000.
\newblock \doi{10.1016/S0378-3758(00)00199-3}.

\bibitem[Wang et~al.(2015)Wang, Peng, and Li]{WangPengLi2015}
Lan Wang, Bo~Peng, and Runze Li.
\newblock A high-dimensional nonparametric multivariate test for mean vector.
\newblock \emph{Journal of the American Statistical Association}, 110\penalty0
  (512):\penalty0 1658--1669, 2015.
\newblock \doi{10.1080/01621459.2014.988215}.

\bibitem[Xu et~al.(2016)Xu, Lin, Wei, and Pan]{XuLinWeiPan2016}
Gongjun Xu, Lifeng Lin, Peng Wei, and Wei Pan.
\newblock An adaptive two-sample test for high-dimensional means.
\newblock \emph{Biometrika}, 103\penalty0 (3):\penalty0 609--624, 2016.
\newblock \doi{10.1093/biomet/asw029}.

\bibitem[Yan et~al.(2025)Yan, Feng, and Zhang]{YanFengZhang2025}
Guowei Yan, Long Feng, and Xiaoxu Zhang.
\newblock High-dimensional {Hettmansperger--Randles} estimator and its
  applications.
\newblock arXiv:2505.01669, 2025.

\bibitem[Zhang and Feng(2024)]{ZhangFeng2024}
Yu~Zhang and Long Feng.
\newblock Adaptive rank-based tests for high dimensional mean problems.
\newblock \emph{Statistics \& Probability Letters}, 214:\penalty0 110226, 2024.
\newblock \doi{10.1016/j.spl.2024.110226}.

\bibitem[Zhao and Feng(2026)]{ZhaoFeng2026}
Ping Zhao and Long Feng.
\newblock Note on high dimensional spatial-sign test for one sample problem.
\newblock arXiv:2601.08736, 2026.

\bibitem[Zhong et~al.(2013)Zhong, Chen, and Xu]{ZhongChenXu2013}
Ping-Shou Zhong, Song~Xi Chen, and Minya Xu.
\newblock Tests alternative to higher criticism for high-dimensional means
  under sparsity and column-wise dependence.
\newblock \emph{The Annals of Statistics}, 41\penalty0 (6):\penalty0
  2820--2851, 2013.
\newblock \doi{10.1214/13-AOS1168}.

\end{thebibliography}
\end{document}